\def\dOi{9(3:26)2013}
\keywords{Wireless Networks, Process Algebra, Testing Preorders,
  Probabilistic Systems}
\newcommand{\calL}{\ensuremath{{\mathcal{L}}}}
\newcommand{\leaveout}[1]{ }
\definecolor{greenish}{rgb}{.24,.5,.26} 
\begin{document}
\title[Modelling Probabilistic Wireless Networks]{Modelling Probabilistic Wireless Networks}

\author[A.~Cerone]{Andrea Cerone\rsuper a}
\address{{\lsuper a}IMDEA Software Institute, Madrid, Spain}
\email{andrea.cerone@imdea.org}
\author[M.~Hennessy]{Matthew Hennessy\rsuper b}
\address{{\lsuper b}School of Computer Science and Statistics\\ Trinity College Dublin\\Ireland}
\email{matthew.hennessy@scss.tcd.ie }   
\thanks{{\lsuper{a,b}}The financial support of SFI is gratefully acknowledged.}

\begin{abstract}
  We propose a process calculus to model high level wireless systems, where the 
  topology of a network is described by a digraph. The calculus enjoys  
  features which are proper of wireless networks, namely broadcast 
  communication and probabilistic behaviour.

  We first focus on the problem of composing wireless networks, then we 
  present a compositional theory based on a probabilistic generalisation of the well 
  known may-testing and must-testing preorders. Also, we define an extensional semantics 
  for our calculus, which will be used to define both simulation and deadlock simulation 
  preorders for wireless networks. We 
  prove that our simulation preorder is sound with respect to 
  the may-testing preorder; similarly, the deadlock simulation preorder is 
  sound with respect to the must-testing preorder, for a large class of networks.\\
  We also provide a counterexample showing that completeness of 
  the simulation preorder, with respect to the may testing one, does not hold. 
  We conclude the paper with an application of our theory to probabilistic 
  routing protocols.
\end{abstract}

\maketitle


\section{Introduction}
\label{sec:intro}
Wireless networks have spread worldwide in the last decades; nowadays they are used in many areas, from domestic appliances to mobile phone networks, to the newer sensor infrastructures.
One of the main problems of wireless networks is 
that of defining and implementing protocols for providing to users the services 
for which the network has been designed; also, because of their distributed nature, 
a more challenging problem is that of ensuring in a rigorous, mathematical way, 
the correct behaviour of a network with respect to some specification.

This problem becomes even more difficult to tackle if we consider that 
often wireless networks run protocols whose behaviour is probabilistic. 
Such protocols are indeed very useful for improving the performance of 
wireless networks, examples being the use of probabilistic \emph{routing 
protocols} \cite{sample} or probabilistic protocols for \emph{collision avoidance} 
at the \emph{MAC-sublayer} of the \emph{TCP/IP} reference model \cite{macsurvey}. 
Further, problems for which there is no solution in   
a deterministic setting can be solved (in unbounded time) by introducing 
probabilistic behaviour in wireless networks 
\cite{BraTou85}.
 
Many different formal frameworks have been developed in the literature 
for defining and \linebreak 
analysing the behaviour of wireless networks 
\cite{nanz,Godskesen07,GwFM10,LaneseS10,omegacalc,merro,gallina2011,songphd,wang}; 
these differ in many details, 
the most important being the level of abstraction used to 
represent a wireless network, the computational power of stations 
of wireless networks and the mathematical structure 
used to represent the topology of wireless networks. 
However, each of these calculi have the following features in common: 
wireless networks are represented as a collection of stations (also called 
nodes, or locations) running code, and local broadcast is used as 
the only way of communication. Roughly speaking in local broadcast 
communication, whenever a node broadcasts 
a message only the nodes in its range of transmission are affected.

In this paper we propose another process calculus for modelling probabilistic 
wireless networks; the main concepts underlying our calculus can be 
summarised as follows:

\begin{enumerate}[label=(\roman*)]
\item The topology of a wireless network is static, that is mobility is not 
considered in our model. 
This restriction has been done to allow a more clear presentation of the 
topics treated in this paper; however, 
we could have used the approach described in \cite{francalanza2008} to 
introduce node mobility.
The network topology is 
described by a digraph $\Gamma$; 
intuitively vertices in this graph represent network locations, while 
an edge from a node to another is used for expressing that the latter is 
in the range of transmission of the former. 

\item A probabilistic process calculus is defined for assigning code to locations. 
The basic constructs allowed in our calculus are messages' broadcast and reception, 
internal actions, matching and process definitions; further, we allow a special clause 
$\omega$ whose role will be presented shortly.  
The mapping that assigns code to locations is partial, meaning that locations can have no code 
assigned. At least informally, such nodes, which will be called \emph{external}, 
can be seen as terminals at which users can place code 
to test the behaviour of the network.

\item Communication between nodes is reliable; a message broadcast by a 
node along a channel $c$ will be received by all the nodes in the sender's 
range of transmission, provided that they are waiting to receive a message 
along such a channel. In other words, our calculus is designed for describing 
wireless networks at the \emph{network layer} of the \emph{TCP/IP reference model}; 
reliable communication is not ensured at lower levels, where 
issues such as the possibility of collisions \cite{macsurvey} and synchronisation between 
nodes \cite{time} arise.
\end{enumerate}

\noindent One of the main goals of the paper is that of defining a compositional behavioural 
theory of wireless networks; given two wireless networks $\calM$ and 
$\calN$, we want to establish whether they can be distinguished by 
an external user. 
To accomplish this task, we need to address several different topics. 
First, it is necessary to define how two wireless networks can be composed 
together. 
This topic has already 
been addressed, for different process calculi, in \cite{merro,gallina2011,CHM12,bugliesi2012,songphd}.
Here we define an asymmetric operator 
$\testP$ which can be used to extend one network with another. 
Despite being asymmetric, we show that the choice of the operator $\testP$ is 
driven by  some natural requirements we require in general 
from a composition operator between networks. 
We remark that our theory 
of composition is restricted to a particular class of networks, 
which we call \emph{well-formed}.

Once we have chosen a suitable composition operator $\testP$, 
we can define a compositional theory for wireless networks. 
In this paper we have chosen to focus on a probabilistic generalisation 
of the well-known \emph{De Nicola and Hennessy's testing preorders}, 
whose theory has been defined in \cite{DGHM09full} for a probabilistic 
version of \emph{CSP}.

Informally speaking, we can test a wireless network $\calM$ via 
another wireless network $\calT$ which can be composed 
 with the former (with respect to the operator $\testP$); 
that is, the network $(\calM \;\testP\; \calT)$ is 
defined. Intuitively, the network 
$\calM \;\testP\; \calT$ can be considered as an experiment 
in which the role of the testing component $\calT$ is that of 
determining whether $\calM$ satisfies some property for which 
the test has been designed for. The success of an experiment is 
denoted by the special construct of our calculus $\omega$ mentioned above. 

Having this in mind, each computation of the network 
$(\calM\;\testP\;\calT)$ induces a success outcome, denoting the probability 
of reaching a configuration in which the special clause $\omega$ is enabled 
in such a computation. This induces a set of success outcomes for the network 
$(\calM\;\testP\;\calT)$ by quantifying over all the possible computations for such a network. 

Knowing how to associate a set of success outcomes to a network, we can compare two 
networks $\calM, \calN$ by quantifying over all possible tests $\calT$, and comparing 
the sets of success outcomes of the experiments $(\calM \;\testP\; \calT)$ and 
$(\calN \;\testP\;\calT)$, provided that they are both defined. This leads to 
the definition of two testing preorders, the \emph{may-testing} preorder $\Mayleq$ 
and \emph{must-testing} preorder $\Mustleq$, according to the way in which the sets 
of success outcomes for the two experiments above are compared. 

It is important to note that determining directly whether the statement  
$\calM \Mayleq \calN$ ($\calM \Mustleq \calN$) 
is true is not easy, due to the quantification over all tests. 
Therefore there is the need to define a proof methodology for establishing 
if two networks can be related via the $\Mayleq$ ($\Mustleq$) preorder. 
This is the main topic of our paper. 
To this end, we define an extensional 
semantics for our calculus of wireless networks; the actions in 
this semantics correspond to activities that can be observed 
by the external nodes. The main idea here is that of defining 
sound coinductive proof methods for the testing preorders, 
based on the extensional behaviour of networks. 

Since our calculus is equipped with local broadcast communication, 
we need to take care of some issues in the development of such proof 
methods; roughly speaking, the broadcast of a message to a 
set of external nodes can be simulated by a multicast of the same message 
which can be detected by the same set of external nodes. This leads 
to a non-standard definition of weak extensional actions, which will 
be used to define two coinductive relations between networks. The first one 
is the well-known \emph{simulation preorder} \cite{DGHM09full}; the second 
one is a novel preorder, called the \emph{deadlock simulation preorder}, 
which is obtained from the previous one by adding sensitivity to deadlock 
configurations. 
The main results of the papers are that, for a large class of networks, 
the simulation preorder is sound with respect to the may-testing preorder, 
while the inverse of the deadlock simulation preorder is sound with 
respect to the must-testing preorder. However, we provide a 
counterexample that shows that such proof methods fail 
to be complete.

The rest of the paper is organised as follows: 
in Section \ref{sec:background} we recall the mathematical 
tools needed for the development of our theory. 

In Section \ref{sec:lang} we define the syntax and intensional 
semantics of our calculus of wireless networks, and we prove 
some basic properties of our calculus. 

 In Section \ref{sec:compositional} we give the formal definition
  of the behavioural preorders between networks. This depends on how tests
  are applied to networks or more generally how networks are composed
  to form larger networks.  So we first define our composition
  operator $\testP$, which is asymmetric, in
  Section~\ref{sec:comp.net} and then use it to develop the
  behavioural preorders $\Mayleq$ and $\Mustleq$ between networks. In
  Section~\ref{sec:justify} we return to our choice of composition
  operator $\testP$, justifying it as the largest one which satisfies 
  three natural requirements. In addition, somewhat surprisingly, we show
  that any symmetric composition operator satisfying the natural requirements
  generates a degenerate behavioural theory. 

In Section \ref{sec:ext.sem} we define the extensional semantics of 
our calculus of wireless networks; here we also give the non-standard 
definition of weak extensional actions and we prove composition and 
decomposition results for them, with respect to the composition operator 
$\testP$.

In Section \ref{sec:soundness} we define the notions of simulation and 
deadlock simulation preorders and we prove the main results of the paper, 
namely that the simulation preorder is sound with respect to the may-testing 
preorder, and the inverse of the deadlock simulation preorder is sound 
with respect to the must-testing preorder. Much of the technical 
development underlying these soundness results is relegated to the separate 
Section \ref{sec:technical}; this may be safely skipped by the uninterested reader.

In Section \ref{sec:completeness} we show that our proof methods fail to be 
complete; we also show the impossibility of defining 
a coinductive relation based on our notion of extensional actions, which characterises 
the may-testing relation. 

In Section \ref{sec:prob.routing} we consider an application of our theory 
by analysing a simple probabilistic, connectionless routing protocol, 
showing that it is behaviourally equivalent to a formal specification. 

We conclude our paper by summarising the topics we have covered and by illustrating 
the related work in Section \ref{sec:conclusions}. The topics 
covered in this paper were also the subject of an extended abstract 
\cite{extabs}.

\section{Background}
\label{sec:background}

In this Section we summarise the mathematical concepts, taken
from \cite{DGHM09full}, that will be needed throughout the paper.
First we introduce some basic concepts from  probability theory;
then we show how these can be used to model concurrent systems
which  exhibit  both probabilistic and non-deterministic
behaviour. 

Let $S$ be a set; a function $\Delta: S \rightarrow [0,1]$ is called a
(probability) sub-distribution over $S$ if $\sum_{s \in S} \Delta(s)
\leq 1$.  This quantity, $\sum_{s \in S} \Delta(s)$, is called the mass of the
sub-distribution, denoted as $\size{\Delta}$. If $\size{\Delta} = 1$,
then we say that $\Delta$ is a (full) distribution. The support of a
sub-distribution $\Delta$, denoted $\support{\Delta}$, is the subset of
$S$ consisting of all those elements which contribute to its mass,
namely $\support{\Delta} = \{s \in S\;|\; \Delta(s) > 0\}$. 

For any set $S$, the empty sub-distribution $\varepsilon \in 
\subdist{S}$ is the only sub-distribution with empty support, 
that is $\support{\varepsilon} = \emptyset$.
For each
$s \in S$, the point distribution $\pdist{s}$ is defined to be the
distribution which takes value $1$ at $s$, and $0$ elsewhere.  The set
of sub-distributions and distributions over a set $S$ are denoted by
$\subdist{S}$ and $\dist{S}$, respectively.

Given a family of sub-distributions $\{\Delta_k\}_{k \in K}$,
 $\left(\sum_{k \in K} \Delta_k\right)$ is the partial 
real-valued function in $S \rightarrow {\mathbbm R}$ defined by
$\left(\sum_{k \in K} \Delta_k\right)(s) \mathrel{:=} \sum_{k \in K} \Delta_k(s)$.
This is a partial operation because for a given $s \in S$ this sum  might not exist;
it is also a partial operation on sub-distributions because even if the sum does exist
it may be greater than $1$.

Similarly, if $p \leq 1$ and $\Delta$ is a sub-distribution , then $p\cdot\Delta$ is 
the sub-distribution over $S$ such that $(p\cdot\Delta)(s) = p \cdot \Delta(s)$.

It is not difficult to show that if $\{p_k\}_{k \in K}$ is a sequence 
of positive real numbers such that $\sum_{k \in K}p_k \leq 1$, 
and $\{\Delta_k\}_{k \in K}$ is a family of sub-distributions 
over a set $S$, then $\left(\sum_{k \in K} p_k \cdot \Delta_k\right)$ 
always defines a sub-distribution over $S$.
Further, if $\sum_{k \in K} p_k = 1$ and each $\Delta_k$ is a distribution, then 
$\left(\sum_{k \in K} p_k \cdot \Delta_k\right)$ is a distribution.

Finally, if $f:X \rightarrow Y$ and $\Delta$ is a sub-distribution
over $X$ then we use $f(\Delta)$ to be the sub-distribution over $Y$
defined by:
\begin{align}\label{eq:expected}
  &f(\Delta)(y) = \sum_{x \in X}\setof{\Delta(x)}{f(x)=y}.
\end{align}
This definition can be generalised to two arguments functions; if  
$f : X_1 \times X_2 \rightarrow Y$ is a function, 
and $\Delta, \Theta$ are two sub-distributions respectively 
over $X_1$ and $X_2$, then $f(\Delta, \Theta)$ denotes 
the sub-distribution over $Y$ defined as
\begin{align}\label{eq:twoargexpected}
  &f(\Delta, \Theta)(y) = \sum_{x_1 \in X_1, x_2 \in X_2}\setof{\Delta(x_1)\cdot\Theta(x_2)}{f(x_1,x_2)=y}.
\end{align}\vspace{2 pt}

\noindent Now we turn our attention to probabilistic concurrent systems. The
formal model we  use to represent them is a 
generalisation to a probabilistic setting of 
Labelled Transition Systems (LTSs) \cite{milner}.
\begin{defi}\rm
A \emph{probabilistic labelled transition system} (pLTS) is a 4-tuple\\
$\langle S, \Act_\tau, \rightarrow, \omega \rangle$, where
\begin{enumerate}[label=(\roman*)] 
\item $S$ is a set of states,
\item $\Act_\tau$ is a set of transition labels with a distinguished label $\tau$,
\item the relation $\rightarrow$ is a subset of $S \times \Act_\tau \times \dist{S}$,
\item $\omega: S \mapsto \{\mbox{ true },\mbox{ false }\}$ is a (success) predicate over the states $S$. 
\end{enumerate}
As usual, we will write $s \ar{\mu} \Delta$ in lieu of $(s,\alpha,\Delta) \in \ar{}$.
\qed
\end{defi}\enlargethispage{9 pt}
Before discussing pLTSs, some definitions first: a pLTS whose state space is finite is said to 
be finite state; further, we say that a pLTS $\langle S, \Act_\tau, \rightarrow, \omega \rangle$ 
is finite branching if, for every $s \in S$, the set 
 $\setof{\Delta}{s \ar{\mu} \Delta \,\mbox{ for some $\mu \in \Act_\tau$}   }$ 
 is finite.
 Finally, a finitary pLTS is one which is both finite 
state and finite branching.

We have included in the definition of a pLTS a success predicate $\omega$ over states, which will
be used when testing processes. Apart from this, the 
only difference between LTSs and pLTSs is given by the definition of the transition relation; 
in the latter this is defined to be a relation (parametric in some action $\mu$) between states 
and distributions of states, thus capturing the concept of probabilistic behaviour. 

However, this modification introduces some difficulties when sequences of transitions performed by a 
given pLTS have to be considered, as the domain and the image of the transition relation do not 
coincide. To avoid this problem, we will focus only on distributions of states by defining 
transitions for them. The following Definition serves to this purpose:
\begin{defi}[Lifted Relations]
\label{def:lift}
Let $\calR \subseteq S \times \subdist{S}$ be a relation from states to 
 sub-distributions. Then
$\lift{R} \;\subseteq\; \subdist{S} \times \subdist{S}$ is the smallest relation which satisfies 
\begin{itemize}
\item $s \,\calR\, \Delta$ implies $\pdist{s} \lift{\calR} \Delta$

\item If $I$ is a finite index set and $\Delta_i \lift{\calR} \Theta_i$ for each $i \in I$ then
      $(\sum_{i \in I} p_i \cdot \Delta_i ) \;\calR\; (\sum_{i \in I} p_i \cdot \Theta_i )$ whenever 
       $\left(\sum_{i \in I} p_i\right) \leq 1$.  \qed
\end{itemize}
\end{defi}
\noindent 
Lifting can also be defined for relations from states to probability distributions, 
by simply requiring $\sum_{i \in I} p_i = 1$ in the last constraint of the definition above.

Sometimes it will be convenient to consider also the lifting 
of relations of the form $\calR \subseteq S \times S$; 
this is defined by first lifting the relation 
$\calR$ to $\calR^e \subseteq S \times \subdist{S}$, 
by letting $s \;\calR^e\; \Theta$ iff $\Theta = \pdist{t}$ for some $t \in S$ such that 
$s\;\calR\;t$. Then we obtain the relation $\lift{\calR^e}$ by 
applying Definition \ref{def:lift} to $\calR^e$.

In a pLTS $\langle S, \Act_\tau, \rightarrow, \omega \rangle$, each transition relation 
$\ar{\mu} \subseteq S \times \dist{S}$ can be lifted to
$\lift{(\ar{\mu})} \;\subseteq\; \dist{S} \times \dist{S}$. 
With an abuse of notation, the latter will still be denoted as $\ar{\mu}$.

Lifted transition relations allow us to reason about the behaviour of pLTSs in terms of 
sequences of transitions; here we are mainly interested in the behaviour of a pLTS in 
the long run; that is, given a pLTS $\langle S, \Act_\tau, \rightarrow, \omega \rangle$ 
and a sub-distribution $\Delta \subseteq \subdist{S}$, we are interested in the sub-distributions 
$\Theta \subseteq \subdist{S}$ which can be reached from $\Delta$ after an unbounded 
number of transitions. 

For the moment we will focus only on internal actions of 
a pLTS, in which case the behaviour of a pLTS in the long run is captured by the 
concept of hyper-derivation:
\begin{defi}[Hyper-derivations]
\label{def:hypder}
  In a  pLTS a hyper-derivation consists of a collection of sub-distributions
$\Delta, \Delta_k^{\rightarrow}, \Delta_k^{\Stop}$, for $k \geq 0$,
with the following properties:
$$
\begin{array}{lclcl}    
  \Delta &=& \Delta_0^{\rightarrow} &+& \Delta_0^\Stop    \notag\\
  \Delta_0^{\rightarrow}  &\ar{\tau}& \Delta_1^{\rightarrow} &+& \Delta_1^\Stop \notag \\
&\vdots \\
  \Delta_k^{\rightarrow}  &\ar{\tau}& \Delta_{k{+}1}^{\rightarrow} &+& \Delta_{k{+}1}^\Stop~ \notag \\
&\vdots    &   \notag  \\&&\notag\\
\end{array}
$$
 If $\omega(s) = \mbox{ false } $ for each  $s \in \support{\Delta_k^{\rightarrow}}$ and $ k \geq 0 $
we call $\Delta' = \sum_{k=0}^\infty \Delta_k^{\Stop}$ a
\emph{hyper-derivative} of $\Delta$, and write
$\Delta\dar{}  \Delta'$.
\qed
\end{defi}
\noindent
We will often write $s \dar{} \Delta$ in lieu of $\pdist{s} \dar{} \Delta$.

\begin{exa}
\label{ex:hyperderivative}
Let us illustrate how hyper-derivations can be inferred in a pLTS via a simple example. 
A central role in hyper-derivations will be played by the empty sub-distribution 
$\varepsilon$. Note that, in any pLTS $\langle S, \Act_\tau, \rightarrow, \omega \rangle$, 
for any action $\alpha \in \Act_{\tau}$ we have that $\varepsilon \ar{\alpha} \varepsilon$.

Let us consider a pLTS whose state space is given by the set $\{h,t\}$, with 
the only transition $s \ar{\tau} 1/2 \cdot \pdist{h} + 1/2 \cdot \pdist{t}$ 
and with $\omega(t) = \ttrue$. 
This pLTS models a probabilistic experiment in which we continuously toss a fair coin 
until we obtain the outcome tail (represented by the state $t$), in which case we decree that the experiment succeeded; 
this last constraint is represented by letting $\omega(t) = \ttrue$.
It is well-known, from elementary probability arguments, 
that the probability of obtaining a success before the coin has been tossed $k$ times
is $\frac{2^{k}-1}{2^k}$, while in the long run the experiment will succeed with probability 
$1$. 
This behaviour can be inferred by using hyper-derivations. For example, 
for any $k \geq 0$ we can consider the infinite sequence of transitions 
\begin{align*}
\Delta_0^{\rightarrow} &=&  \pdist{h} &&\ar{\tau}&& \frac{1}{2}\cdot \pdist{h} &&+&& 
\frac{1}{2} \cdot \pdist{t}\\
 \Delta_1^{\rightarrow} &=& \frac{1}{2} \cdot \pdist{h} &&\ar{\tau}&& \frac{1}{2^2} \cdot \pdist{h} &&+&& \frac{1}{2^2} \cdot \pdist{t}\\
 \vdots&&&&\ar{\tau}&&\vdots&&\vdots\\
 \Delta_{k-2}^{\rightarrow} &=& \frac{1}{2^{k-2}} \cdot 
 \pdist{h} &&\ar{\tau}&& \frac{1}{2^{k-1}} \cdot \pdist{h} &&+&& 
 \frac{1}{2^{k-1}} \cdot \pdist{t}\\
\end{align*}
\begin{align*}
 \Delta_{k-1}^{\rightarrow} &=& \frac{1}{2^{k-1}}\cdot\pdist{h} &&\ar{\tau} && 
 \varepsilon &&+&& \frac{1}{2^{k}} \cdot \pdist{h} + \frac{1}{2^{k}} \cdot \pdist{t}\\
 \Delta_{k}^{\rightarrow} &=& \varepsilon &&\ar{\tau} && \varepsilon &&+&& \varepsilon\\
  \vdots&&&&\ar{\tau}&&\vdots&&\vdots\\
  \Delta_{k+n}^{\rightarrow} &=& \varepsilon &&\ar{\tau} && \varepsilon &&+&& \varepsilon\\
    \vdots&&&&\ar{\tau}&&\vdots&&\vdots
\end{align*}
Note that the sequence of transitions above models a situation in which the 
experiment is stopped after the coin has been tossed $k$ times. 
This is done by letting $\Delta_k^\rightarrow = \varepsilon$; 
at least informally this means that the computation proceeds with 
probability $0$ after the $k$-th $\tau$-transition has been performed.
The sequence of transitions above leads to the hyper-derivation 
\begin{eqnarray*}
h &\dar{\;}& \left(\sum_{i=1}^{k-1} \frac{1}{2^{i}} \cdot \pdist{t}\right) 
+ \left(\frac{1}{2^k} \pdist{h} + \frac{1}{2^k} \pdist{t}\right) 
+ \left(\sum_{i=k+1}^{\infty} \varepsilon\right) =\\
&=& \frac{1}{2^k} \cdot \pdist{h} + \left(\sum_{i=1}^k \frac{1}{2^i} \cdot \pdist{t}\right) =\\
&=& \frac{1}{2^k} \cdot \pdist{h} + \frac{2^{k}-1}{2^k} \cdot \pdist{t}
\end{eqnarray*}
That is, after $k$ transitions have been performed the probability of having 
successfully terminated the experiment is $(2^{k}-1)/2^k$.

Further, note that we can use hyper-derivations to describe the limiting behaviour of 
the experiment. In fact, we can consider the infinite sequence of 
transitions 
\begin{align*}
\pdist{h} &&\ar{\tau}&& \frac{1}{2} \cdot \pdist{h} &&+&& \frac{1}{2} \cdot \pdist{t}\\
 \frac{1}{2} \cdot \pdist{h} &&\ar{\tau}&& \frac{1}{2^2} \cdot \pdist{h} &&+&& \frac{1}{2^2} \cdot \pdist{t}\\
 \vdots&&\ar{\tau}&&\vdots&&&&\vdots\\
 \frac{1}{2^{k}} \cdot 
 \pdist{h} &&\ar{\tau}&& \frac{1}{2^{k+1}} \cdot \pdist{h} &&+&& \frac{1}{2^{k+1}} \cdot \pdist{t}\\
  \vdots&&\ar{\tau}&&\vdots&&&&\vdots\\
\end{align*}
which leads to the hyper-derivation 
\[
h \dar{} \left(\sum_{i=1}^{\infty} \frac{1}{2^{i}} \cdot \pdist{t}\right) = \pdist{t}
\]
\end{exa}\medskip

\noindent Hyper-derivations can be seen as the probabilistic counterpart of the weak $\dar{\tau}$ action in 
LTSs; see \cite{DGHM09full} for a detailed discussion. 
Intuitively speaking, they represent fragments of computations obtained by performing only internal 
actions. The last constraint in Definition \ref{def:hypder} is needed since we introduced a success 
predicate in our model; as we see pLTSs as nondeterministic, probabilistic 
experiments, we require that a computation stops  
when the experiment succeeds, that is when a state $s$ such that  
$\omega(s) = \mbox{true}$ has been reached.
States in which the predicate $\omega(\cdot)$ 
is true are called $\omega$-successful.

Further, we are mainly interested in maximal computations of distributions. That is, we require a computation 
to proceed as long as some internal activity can be performed.
To this end, we  say that $\Delta\darE{} \Delta'$ if
\begin{itemize}
\item $\Delta \dar{\;\;\;} \Delta'$,
\item for every  
$s \in \support{\Delta'}$, $s \ar{\tau} $ implies $\omega(s) = \mbox{true}$. 
\end{itemize}
\noindent
This is a mild generalisation of the notion of \emph{extreme derivative} from \cite{DGHM09full}.
Note that the last constraint models exactly the requirement of performing some internal activity 
whenever it is possible;  In other words extreme derivatives correspond to a probabilistic
version of maximal computations.

\begin{exa}
Consider again the pLTS of Example \ref{ex:hyperderivative}. Here we have that 
the hyper-derivation $h \dar{\;} \Theta = \frac{1}{2^k} \cdot \pdist{h} + \frac{2^{k}-1}{2^k} 
 \cdot \pdist{t}$, where $k \geq 0$, is not an extreme derivation, since $\omega(h) = \ffalse$ and $h \ar{\tau}$. 
On the other hand, the hyper-derivation $h \dar{} \pdist{t}$ is also an extreme derivation, 
since $\omega(t) =\ttrue$; therefore $h \darE{} \pdist{t}$.
\end{exa}

\begin{thm}\label{thm:hyper}
  In an arbitrary pLTS
  \begin{enumerate}[label=(\roman*)]
  \item $\dar{}$ is reflexive and transitive,
  \item if $\Delta \dar{} \Delta'$ and $\Delta' \darE{} \Delta''$, then $\Delta \darE{} \Delta''$; 
   this is a direct consequence of the previous statement, and the definition of extreme derivatives,
  \item suppose $\Delta = \left(\sum_{i \in I} p_i \cdot \Delta_i\right)$, where $I$ is an index set and 
  $\sum_{i \in I} p_i \leq 1$. If for any 
  $i \in I, \Delta_i \dar{} \Theta_i$ for some $\Theta_i$, then $\Delta \dar{}\Theta$, where 
  $\Theta = \left(\sum_{i \in I} p_i \cdot \Theta_i\right)$,
  \item for all sub-distributions $\Delta$, there exists a sub-distribution $\Theta$ such that 
   $\Delta \darE{} \Theta$.
  \end{enumerate}
\end{thm}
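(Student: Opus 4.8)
The plan is to treat the four items by first isolating the two linearity principles for $\dar{}$ and then deriving everything else, with reflexivity being immediate from the definition. For reflexivity I would exhibit, for a given $\Delta$, the trivial hyper-derivation with $\Delta_0^\Stop = \Delta$ and $\Delta_k^{\rightarrow} = \varepsilon$ for all $k \geq 0$ (so $\Delta_k^\Stop = \varepsilon$ for $k \geq 1$); since $\varepsilon \ar{\tau} \varepsilon$ and $\support{\varepsilon} = \emptyset$, the side conditions hold vacuously and $\sum_k \Delta_k^\Stop = \Delta$. For item (iii) (forward linearity) I would take for each $i \in I$ a hyper-derivation of $\Delta_i \dar{} \Theta_i$ with components $\Delta_{i,k}^{\rightarrow}, \Delta_{i,k}^\Stop$ and assemble the scaled sums $\Delta_k^{\rightarrow} := \sum_i p_i\,\Delta_{i,k}^{\rightarrow}$ and $\Delta_k^\Stop := \sum_i p_i\,\Delta_{i,k}^\Stop$. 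The transition $\Delta_k^{\rightarrow} \ar{\tau} \Delta_{k+1}^{\rightarrow} + \Delta_{k+1}^\Stop$ then follows from the second clause of the lifting in Definition~\ref{def:lift}; each $\Delta_k^{\rightarrow}$ is $\omega$-false since $\support{\Delta_k^{\rightarrow}} \subseteq \bigcup_i \support{\Delta_{i,k}^{\rightarrow}}$; and $\sum_k \Delta_k^\Stop = \sum_i p_i\,\Theta_i$ by interchanging two non-negative sums. Item (ii) is then immediate, as the author notes: $\Delta' \darE{} \Delta''$ unfolds to $\Delta' \dar{} \Delta''$ together with a support condition on $\Delta''$ alone, so transitivity of $\dar{}$ gives $\Delta \dar{} \Delta''$ while that support condition is inherited verbatim.

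The substantial work is transitivity. Given $\Delta \dar{} \Delta'$ with components $A_k^{\rightarrow}, A_k^\Stop$ (so $\Delta' = \sum_k A_k^\Stop$) and $\Delta' \dar{} \Delta''$, the obstacle is that the pieces $A_k^\Stop$ of $\Delta'$ become available only at successive stages $k$, whereas the second hyper-derivation processes $\Delta'$ as a whole. I would first establish a \emph{decomposition lemma}, the backward counterpart of (iii): if $\sum_k \Lambda_k \dar{} \Xi$ then there are $\Xi_k$ with $\Lambda_k \dar{} \Xi_k$ and $\Xi = \sum_k \Xi_k$. This I would obtain from the fact --- readily derived from Definition~\ref{def:lift} --- that a lifted step $\Phi \ar{\tau} \Phi'$ with $\Phi = \sum_k \Lambda_k$ splits as $\Phi' = \sum_k \Phi'_k$ with $\Lambda_k \ar{\tau} \Phi'_k$, applied stage by stage and threaded coherently through all stages (equivalently, by running the given hyper-derivation in a copy of the pLTS in which each state carries an inert tag recording which $\Lambda_k$ it descends from, and then projecting onto the tags). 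Applying the lemma to $\Delta' = \sum_k A_k^\Stop \dar{} \Delta''$ yields $A_k^\Stop \dar{} \Theta_k$ with $\Delta'' = \sum_k \Theta_k$, say with components $C_{k,j}^{\rightarrow}, C_{k,j}^\Stop$. I would then build the combined hyper-derivation of $\Delta \dar{} \Delta''$ by a diagonal (shift) construction, setting $\Delta_n^{\rightarrow} := A_n^{\rightarrow} + \sum_{k=0}^{n} C_{k,\,n-k}^{\rightarrow}$ and $\Delta_n^\Stop := \sum_{k=0}^{n} C_{k,\,n-k}^\Stop$, so that the $C$-derivation of each piece $A_k^\Stop$ is started at stage $k$. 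A direct check, using single-step lifting linearity and the identity $A_{n+1}^\Stop = C_{n+1,0}^{\rightarrow} + C_{n+1,0}^\Stop$, verifies the required transitions and the $\omega$-freeness of each $\Delta_n^{\rightarrow}$, and re-summing the diagonal gives $\sum_n \Delta_n^\Stop = \sum_k \Theta_k = \Delta''$. I expect this decomposition lemma to be the main obstacle: the single-step splitting is routine, but making the splits at successive stages cohere into hyper-derivations of the individual summands --- across infinitely many stages, with masses bookkept correctly --- is where the care lies, and it is precisely what the tagging reformulation streamlines.

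For (iv) I would give a greedy construction of an extreme derivative. Writing, for any sub-distribution $\Gamma$, $\Gamma^{\bullet}$ for its restriction to the states $s$ with $\omega(s) = \ffalse$ and $s \ar{\tau}$, and $\Gamma^{\circ}$ for the restriction to the complementary states, I set $\Delta_0^{\rightarrow} = \Delta^{\bullet}$, $\Delta_0^\Stop = \Delta^{\circ}$, and inductively, having $\Delta_k^{\rightarrow}$ (which by construction consists only of $\omega$-false states able to perform $\tau$), choose one $\tau$-transition for each such state, form the lifted transition $\Delta_k^{\rightarrow} \ar{\tau} \Xi_{k+1}$, and split $\Delta_{k+1}^{\rightarrow} = \Xi_{k+1}^{\bullet}$, $\Delta_{k+1}^\Stop = \Xi_{k+1}^{\circ}$. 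Each $\Delta_k^{\rightarrow}$ is $\omega$-false by design, so this is a genuine hyper-derivation. Because the base transitions land in full distributions, the lifted $\tau$-steps preserve mass, whence $\sum_{k\leq n}\size{\Delta_k^\Stop} = \size{\Delta} - \size{\Delta_n^{\rightarrow}} \leq \size{\Delta} \leq 1$; the partial sums are therefore bounded and monotone, so $\Theta := \sum_k \Delta_k^\Stop$ is a well-defined sub-distribution with $\Delta \dar{} \Theta$. Finally $\Theta$ is extreme, since every state placed in some $\Delta_k^\Stop$ lies in the relevant $\Gamma^{\circ}$, i.e.\ is either $\omega$-successful or unable to perform $\tau$ --- which is exactly the condition defining $\darE{}$.
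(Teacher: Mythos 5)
Your proposal is correct and follows essentially the route of the proof the paper relies on: the paper itself gives no argument for Theorem~\ref{thm:hyper} but defers to \cite{DGHM09full}, where reflexivity is exactly the trivial derivation you describe, transitivity is obtained by decomposing the second hyper-derivation along the pieces $A_k^{\Stop}$ of the first and re-interleaving with the same diagonal index shift, (iii) follows from linearity of lifting, and (iv) is the same greedy live/stopped construction with bounded monotone partial sums. The only point to keep in mind is that Definition~\ref{def:lift} as stated closes lifting under \emph{finite} sub-convex combinations only, so your appeals to lifted $\tau$-steps from sub-distributions of possibly infinite support, and to item (iii) with an infinite index set, implicitly use the standard (and easily proved) fact that lifting is also closed under countable sub-convex combinations.
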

\begin{proof}
  See \cite{DGHM09full} for detailed proofs. 
\end{proof}

The last definition we need is that of convergent pLTSs. 
Intuitively these are pLTSs whose infinite computations 
have a negligible probability.
\begin{defi}[Convergence]
A pLTS $\langle S, \mbox{Act}_{\tau}, \ar{}, \omega \rangle$ 
is said to be convergent if $s \dar{} \varepsilon$ 
for no state $s \in S$. 
\end{defi}
\noindent
At least informally, $s \dar{} \varepsilon$ means that 
there exists a computation rooted in $s$ which contains 
only probability sub-distributions which 
can always perform a $\tau$-action. 
See \cite{DGHM09full}, Section \textbf{6} 
for a detailed discussion on divergence in pLTSs.
The main property we will require from convergent pLTSs 
	is the following:

\begin{prop}
\label{prop:convergent.moves}
Let $\Delta$ be a distribution in a convergent pLTS. 
If $\Delta \dar{} \Theta$ then $\size{\Theta} = 1$. 
\end{prop}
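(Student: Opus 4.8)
The plan is to reduce the statement to a purely quantitative claim about the masses of the running sub-distributions $\Delta_k^{\rightarrow}$, and then to extract that claim from convergence. First I would record that the lifted relation $\ar{\tau}$ is mass-preserving: since every base transition targets a \emph{full} distribution, and a lifted move $\Delta_k^{\rightarrow}\ar{\tau}\Phi$ must decompose \emph{all} of $\Delta_k^{\rightarrow}$ into point masses that each move, we get $\size{\Phi}=\size{\Delta_k^{\rightarrow}}$. Applied to each line of the hyper-derivation this gives $\size{\Delta_k^{\rightarrow}}=\size{\Delta_{k+1}^{\rightarrow}}+\size{\Delta_{k+1}^{\Stop}}$. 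Telescoping from $\size{\Delta}=\size{\Delta_0^{\rightarrow}}+\size{\Delta_0^{\Stop}}=1$ yields $\sum_{k=0}^{N}\size{\Delta_k^{\Stop}}=1-\size{\Delta_N^{\rightarrow}}$, and letting $N\to\infty$ gives $\size{\Theta}=1-\lim_{N}\size{\Delta_N^{\rightarrow}}$. Hence it suffices to show that the running masses vanish, i.e. $\lim_{N}\size{\Delta_N^{\rightarrow}}=0$.

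Next I would isolate a consequence of convergence that needs no finiteness: if $\Phi\neq\varepsilon$ and $\Phi\dar{}\varepsilon$, then $s\dar{}\varepsilon$ for some state $s$. Indeed, $\Phi\dar{}\varepsilon$ forces every $\Phi_k^{\Stop}=\varepsilon$, so the hyper-derivation collapses to a single mass-preserving chain $\Phi=\Phi_0^{\rightarrow}\ar{\tau}\Phi_1^{\rightarrow}\ar{\tau}\cdots$ in which each $\support{\Phi_k^{\rightarrow}}$ avoids $\omega$-successful states and every state in it has a $\tau$-move into $\support{\Phi_{k+1}^{\rightarrow}}$. Choosing $s_0\in\support{\Phi}$ and following these moves I would build point-distribution iterates $\pdist{s_0}=\Xi_0\ar{\tau}\Xi_1\ar{\tau}\cdots$ with $\support{\Xi_k}\subseteq\support{\Phi_k^{\rightarrow}}$, each step again mass-preserving and $\omega$-avoiding; this is exactly $s_0\dar{}\varepsilon$. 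Consequently, in a convergent pLTS no non-empty sub-distribution hyper-derives to $\varepsilon$, so it is enough to manufacture such a $\Phi$ from the assumption $\lim_{N}\size{\Delta_N^{\rightarrow}}=c>0$.

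The \textbf{main obstacle} is precisely this manufacture of a divergent $\Phi$ out of the surviving mass $c$. Each $\Delta_N^{\rightarrow}$ has mass $a_N\geq c$, its support consists of transition-enabled non-$\omega$ states, and $\Delta_N^{\rightarrow}\ar{\tau}\Delta_{N+1}^{\rightarrow}+\Delta_{N+1}^{\Stop}$ with stopped masses $a_N-a_{N+1}\to 0$. I would pass to a subsequence along which $\Delta_N^{\rightarrow}$ converges to some $\Phi\in\subdist{S}$ with $\size{\Phi}=c>0$, using compactness of $\subdist{S}$; because the stopped mass disappears in the limit and the lifted $\ar{\tau}$ is closed under limits, a diagonal argument over all shifts produces a full-mass, $\omega$-avoiding chain $\Phi\ar{\tau}\Phi^{(1)}\ar{\tau}\Phi^{(2)}\ar{\tau}\cdots$, that is $\Phi\dar{}\varepsilon$, contradicting convergence by the previous step and forcing $c=0$. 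The delicate point is exactly this limit extraction: it rests on compactness of the state space together with finite branching, as in the finitary setting of \cite{DGHM09full}. Note that the extraction is genuinely needed here and cannot be bypassed by tracing a single infinite $\tau$-path, since $s\dar{}\varepsilon$ requires \emph{all} of the mass of $\pdist{s}$ to keep moving; this is why the argument is quantitative rather than path-based, and why the finiteness properties standing in our setting are what let the last step go through.
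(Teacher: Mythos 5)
Your proof is correct in the finitary setting, but it takes a genuinely different route from the paper: the paper does not argue the proposition at all, it simply declares it an immediate consequence of the \emph{Distillation of Divergence} theorem (Theorem 6.20 of \cite{DGHM09full}), used as a black box. You instead reconstruct a self-contained proof of exactly the instance of that theorem that is needed: (i) mass-preservation of the lifted $\ar{\tau}$ (sound, because a lifted move decomposes its source into point masses each of which moves to a \emph{full} distribution) plus telescoping, reducing the claim to $\lim_N\size{\Delta_N^{\rightarrow}}=0$; (ii) the reduction showing that any non-empty $\Phi$ with $\Phi\dar{}\varepsilon$ yields a divergent \emph{state}, by tracing supports down the collapsed chain and re-lifting the chosen moves; and (iii) a compactness-plus-diagonalisation argument that turns a surviving mass $c>0$ into such a $\Phi$. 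Steps (i) and (ii) are elementary and correct as written; step (iii) is where the substance of the cited theorem lives, and your sketch of it is sound for finitary pLTSs, where $\subdist{S}$ is compact, the lifted relation $\lift{(\ar{\tau})}$ is closed under limits (finite branching), and the stopped masses vanish, so the limit points along all shifts form an $\omega$-avoiding chain of constant mass $c$. What the citation buys the paper is brevity and the full strength of distillation; what your argument buys is transparency about exactly which finiteness hypotheses are consumed, and where.

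One caveat you flag but could state more forcefully: the finitary restriction is not an artefact of your method, and it is not in the proposition as stated (``an arbitrary convergent pLTS''). Some such restriction is unavoidable, because the statement is false in general; the paper's own proof is subject to the same restriction, since Theorem 6.20 of \cite{DGHM09full} is proved only for finitary systems. For a counterexample, take two states $s,d$ with $d$ deadlocked, $\omega$ false everywhere, and infinitely many transitions $s\ar{\tau} p_k\cdot\pdist{s}+(1-p_k)\cdot\pdist{d}$ with each $p_k<1$ and $\prod_k p_k>0$. Every lifted move from $\pdist{s}$ strands strictly positive mass on $d$, which can then neither move nor be discarded, so no state hyper-derives to $\varepsilon$ and the pLTS is convergent; yet choosing transition $k$ at stage $k$ gives $\pdist{s}\dar{}\bigl(1-\prod_k p_k\bigr)\cdot\pdist{d}$, whose mass is strictly less than $1$. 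Since the paper only ever applies this proposition to finitary networks, your proof covers the scope in which the statement is actually used — but, like the paper's citation, not the scope in which it is literally stated.
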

\begin{proof}
This is an immediate consequence of \emph{Distillation of 
Divergence}, Theorem \textbf{6.20} of \cite{DGHM09full}.
\end{proof}

\section{The calculus}
\label{sec:lang}
In this Section we introduce our calculus for modelling wireless networks. 
In this calculus, a wireless network is modelled as a pair of the form 
$(\Gamma \with M)$, where $\Gamma$ is a digraph representing the 
topology of a wireless network and $M$ is a term which assigns code to 
nodes. 

The syntax of our calculus is presented in Section \ref{sec:syntax}; 
here we also give some basic examples of wireless networks. 
In Section \ref{sec:int.sem} we formalise how networks evolve by 
introducing an intensional semantics for our calculus; 
finally we prove some basic properties of our calculus in 
Section \ref{sec:properties}.

\subsection{Syntax}
\label{sec:syntax}
The calculus we present is designed to model broadcast systems, particularly wireless 
networks, at a high level. We do not deal with low level issues, such as collisions of 
broadcast messages or multiplexing mechanisms \cite{tanenbaum}; instead, 
we assume that network nodes use protocols  at the \emph{MAC level} \cite{macsurvey} 
to achieve a reliable communication 
between nodes.

Basically, the language will contain both primitives for sending and receiving messages and 
will enjoy the following features:
\begin{enumerate}[label=(\roman*)]
\item communication can be obtained through the use of different channels; although the physical medium 
for exchanging messages in wireless networks is unique, it is reasonable to assume that network nodes use 
some multiple access technique, such as \emph{TDMA} or \emph{FDMA} \cite{tanenbaum}, to setup and 
communicate through virtual channels,
\item communication is broadcast; whenever a node in a given network sends a message, it can be detected by 
all nodes in its range of transmission,
\item communication is reliable; whenever a node broadcasts 
a message and a neighbouring node (that is, 
a node in the sender's range of transmission) is waiting 
to receive a message on the same channel, then the 
message will be delivered to the receiver. This is not 
ensured if low level issues are considered, 
as problems such as message collisions \cite{macsurvey}
and nodes synchronisation \cite{time} arise .
\end{enumerate}

\begin{figure}[t]
\rule{\linewidth}{0.5mm}
  \begin{align*}
&
\begin{array}{lcll}
  M,\;N  &::=&                 &\textbf{Systems} \\
         &  &  \Cloc{s}{n}    & \text{Nodes} \\
         &  & M \Cpar N       &  \text{Composition}\\
         &  & \Cnil           & \text{Identity}
\\\\
  p,\;q  &::=&                 &\textbf{(probabilistic) Processes}\\
         &  & s               &\\
         &  &   p \probc{p} q &\text{probabilistic choice} \\
\\
  s,\;t  &::= &               &\textbf{States}\\
         &     &c!\pc{e}.p           &\text{broadcast}\\
         &    &c?\pa{x}.p            &\text{receive}\\
         &    &\omega.\Cnil            &\text{test}\\
         &    &  s + t              &\text{choice} \\
         &    &\Cmatch{b}{s}{t}     &\text{branch}\\
         &    &  \tau.p             &\text{preemption}\\
         &     & A\pa{\tilde{x}}    &\text{definitions}\\
         &    & \Cnil               &\text{terminate}
\end{array}
\end{align*}

  \caption{Syntax \label{fig:syntax}}
 \rule{\linewidth}{0.5mm} 
\end{figure}

\noindent The language for system terms, ranged over by $M,\,N,\cdots$ is given in
Figure~\ref{fig:syntax}.  Basically a system consists of a collection
of named nodes at each of which there is some running code.  The
syntax for this code is a fairly straightforward instance of a standard
process calculus, augmented by a probabilistic choice; code
descriptions have the usual constructs for channel based communication,
with input $c?\pa{x}.p$ being the unique binder. 
This gives rise to the standard notions of alpha-conversion, 
free and bound occurrences of variables in system terms and 
closed system terms.

We only consider the 
sub-language of well-formed system terms in which all node names have
at most one occurrence.  We use $\sys$ to range over all closed
well-formed terms.  A (well-formed) system term can be viewed as a
mapping that assigns to node names the code they are executing. A
sub-term $\Cloc{s}{n}$ appearing in a system term $M$ represents node
$n$ running code $s$.
In the following we make use of standard conventions; we omit trailing 
occurrences of $\Cnil$ and we use $\prod_{i=1}^n \Cloc{s_i}{m_i}$ to 
denote the system term $\Cloc{s_1}{m_1} \Cpar \cdots \Cpar \Cloc{s_n}{m_n}$.

Additional information such as the
connectivity between nodes of a network is needed to formalise
communications between nodes. 
Network connectivity is represented by a graph $\Gamma = 
\pair{\Gamma_V}{\Gamma_E}$; here $\Gamma_V$ is a finite set of nodes
and $\Gamma_E \subseteq (\Gamma_V \times \Gamma_V)$ is an irreflexive relation 
between nodes in $\Gamma_V$. 
Intuitively, $(m,n) \in \Gamma_E$ models the possibility for node $n$ to detect 
broadcasts fired by $m$.
 
We use the more graphic notation $\Gamma \vdash v$ to mean $v \in \Gamma_V$ and 
$\Gamma \vdash \rconn{m}{n}$ for $(m,n) \in \Gamma_E$. 
Similarly we use $\Gamma \vdash \lconn{n}{m}$ to denote
$\Gamma \vdash \rconn{m}{n}$. 
Sometimes we also use the notations $\Gamma \vdash \isconn{m}{n}$ for 
$\{(n,m), (m,n)\} \subseteq \Gamma_E$ and 
$\Gamma \vdash \someconn{m}{n}$ to denote either $\Gamma \vdash \rconn{m}{n}$ 
or $\Gamma \vdash \lconn{m}{n}$.

A \emph{network} consists of a pair $(\Gamma \with M)$,
representing the system $M$, from $\sys$, executing relative to the connectivity graph
$\Gamma$.  All nodes occurring in $M$, $\nodes{M}$, will appear in
$\Gamma$ and the effect of  running  the code at $n \in \nodes{M}$
will depend on the connectivity of $n$ in $\Gamma$.  But in general
there will be nodes in $\Gamma$ which do not occur in $M$; let
$\type{\Gamma \with M} = \Gamma_V \setminus \nodes{M}$;  we call this set 
 the \emph{interface}
of the network $\Gamma \with M$, and its elements are called \emph{external nodes}.  Intuitively
these are nodes which may be used to compose the network $\Gamma
\with M$ with other networks, or to place code for testing the
behaviour of $M$.  

In the following we use the meta-variables $\calM, \calN, \cdots$ to 
range over networks. Also, the notation introduced for system 
terms and connectivity graphs is extended to networks in 
the obvious way; for example, if $\calM = (\Gamma \with M)$, 
$\nodes{\calM} = \nodes{M}$, $\calM_V = \Gamma_V$ and 
$\calM \vdash \rconn{m}{n}$ denotes $\Gamma \vdash \rconn{m}{n}$.

\begin{figure}[t]
  
                         
\begin{align*}
     \begin{tikzpicture}
          \node[state](m){$m$}; 
          \node[state](n)[below left=of m]{$n$}; 
          \node[state](i)[left=of m]{$i$}; 
           \node[state](e)[below =of n]{$e$}; 
           \node[state](o1)[right=of m]{$o_1$};
           \node[state](o2)[below right=of m]{$o_2$};
 \path[to]
       (m) edge [thick] (o1)
       (m) edge[thick] (o2)
       (e) edge[thick] (n)
       (n) edge[thick] (i)
       (i) edge[thick] (m);
   \begin{pgfonlayer}{background}
    \node [background,fit=(m) (n) (i)] {};
    \end{pgfonlayer}
    \end{tikzpicture}
&&  
     \begin{tikzpicture}
          \node[state](m){$m$}; 
          \node[state](n)[below left=of m]{$n$}; 
          \node[state](i)[left=of m]{$i$}; 
           \node[state](e)[below =of n]{$e$}; 
           \node[state](k)[below =of m]{$k$}; 
           \node[state](o1)[right=of m]{$o_1$};
           \node[state](o2)[right=of k]{$o_2$};
 \path[to]
       (m) edge [thick] (o1)
       (m) edge[thick] (k)
       (k) edge[thick] (o2)
       (e) edge[thick] (n)
       (n) edge[thick] (i)
       (i) edge[thick] (m);
   \begin{pgfonlayer}{background}
    \node [background,fit=(m) (n) (k)] {};
    \end{pgfonlayer}
    \end{tikzpicture}
\\
\calM = \Gamma_M \with \Cloc{A_n}{n} \Cpar \Cloc{A_i}{i} \Cpar \Cloc{A_m}{m} 
&&
\calN = \Gamma_N \with \Cloc{A_n}{n} \Cpar \Cloc{A_i}{i} \Cpar \Cloc{A_k}{m}  \Cpar \Cloc{A_m}{k}
\end{align*}

  \caption{Example networks}
  \label{fig:ex1}
\end{figure}

\begin{exa}\label{ex:ex1}
Consider $\calM$ described in Figure~\ref{fig:ex1}. 
There are six nodes, three occupied by code $n$, $i$ and $m$, and three in the interface
$\type{\calM}$ ,
$e, o_1$ and $o_2$. Here, and in future examples,  we  differentiate between the interface
and the occupied (internal) nodes using shading. Suppose the code at nodes is given by
\begin{align*}
  & A_n \Leftarrow c?\pa{x}.d!\pc{x}.\Cnil
  &&
  & A_i \Leftarrow d?\pa{x}.d!\pc{f(x)}.\Cnil
  &&
  &A_m \Leftarrow d?\pa{x}.(d!\pc{x}.\Cnil \,\probc{0.8}\, \Cnil)
\end{align*}
Then $\calM$ can receive input from node $e$ at its interface along the channel
$c$; this is passed on to the internal node $i$ using channel $d$, where it is transformed
in some way, described by the function $f$\footnote{For example, 
if we assume the set of closed values to be $\mathbb{Z}$, f could be the mapping 
$f: z \mapsto z^2$.}, and then forwarded to node $m$, where $80\%$ of the
time it is  broadcast to the external nodes $o_1$ and $o_2$. The remainder of the
time the message is lost. 

The network $\calN$ has the same interface as $\calM$, but has an extra internal node
$k$ connected to $o_2$,
and $m$ is only connected to one interface node $o_1$ and the internal node $k$. The nodes $i$ and $n$ 
have the same code running as in $\calM$, while nodes $m$ and $k$ will run the code
\[
A_k \Leftarrow d?\pa{x}.(d!\pc{x}.\Cnil \probc{0.9} \Cnil)
\]
Intuitively, the behaviour of $\calN$ is more complex than that of $\calM$; indeed, there is the 
possibility for a computation of $\calN$ to deliver a value only to one between the external nodes 
$o_1$ and $o_2$, while this is not possible in $\calN$. However, $81\%$ of the times this message 
will be delivered to both these nodes, and thus it is more reliable than $\calM$.
Suppose now that we change the code at the intermediate node $m$ in \calM, 
\begin{align*}
  &\calM_1 = \Gamma_M \with \ldots   \Cpar \Cloc{B_m}{m}
  &&
  &\text{where}\;\; B_m \Leftarrow d?\pa{x}.(\tau.(d!\pc{x}.\Cnil \probc{0.5} \Cnil ) + \tau.d!\pc{x}.\Cnil)
\end{align*}
In $\calM_1$ the behaviour at the node $m$ is non-deterministic; it may act like a perfect forwarder,
or one which is only 50\% reliable.  Optimistically it could be more reliable than $\calM$, or pessimistically it 
could be less reliable than the latter. Further, there is no possibility for the network $\calM_1$ to forward 
the message  to only one of the external nodes $o_1$, $o_2$, so that its behaviour is somewhat less complex than 
 that of $\calN$.

As a further variation let $\calM_2$ be the result of replacing the code at $m$ with 
\begin{align*}
  C_m &\Leftarrow d?\pa{x}. D \\
  D   &\Leftarrow \tau.(d!\pc{x}.\Cnil \,\probc{0.5}\, \tau.D)
\end{align*}
Here the behaviour is once more deterministic, with the probability that the message will be
eventually transmitted successfully through node $k$ approaching $1$ in the limit. Thus, this 
network is as reliable as $\calM_1$,  when the latter is viewed optimistically.
\end{exa}
\subsection{Intensional Semantics}
\label{sec:int.sem}
We now turn our attention on the operational semantics of networks.  
Following \cite{DGHM09full}, processes will be interpreted as probability 
distributions of states; such an interpretation is encoded by the function $\interprP{\cdot}$ 
defined below: 
\begin{eqnarray*}
  \interprP{s} &=& \pdist{s} \\
  \interprP{p_{1} \probc{p} p_{2}} &=& p \cdot \interprP{p_{1}} + (1 - p) \cdot \interprP{p_{2}}.
\end{eqnarray*}
Sometimes we will need to consider the probability distribution associated to system terms; 
this is done by letting 
\begin{eqnarray*}
  \interprP{\Cnil} &=& \pdist{\Cnil}\\
  \interprP{\Cloc{p}{n}} &=& \Cloc{\interprP{p}}{n}\\
  \interprP{M_1 \Cpar M_2} &=& \interprP{M_1} \Cpar \interprP{M_2}
\end{eqnarray*}
where $\Cloc{\interprP{p}}{n}$ represents a distribution over $\sys$, 
obtained by a direct application of Equation \eqref{eq:expected} 
to the function $\Cloc{\cdot}{n}$ which maps states into system terms. 
Similarly, $\interprP{M_1 \Cpar M_2}$ is obtained by 
applying Equation \eqref{eq:twoargexpected} to the function 
$(\cdot \Cpar \cdot): \sys \times \sys \rightarrow \sys$.

The intensional semantics of networks is defined incrementally. 
We first define a pre-semantics for states, which is then used 
for giving the judgements of (state based) networks.

\begin{figure}[t]
\rule{\linewidth}{0.5mm}
  \begin{alignat*}{2}
    &\linferSIDE[\Slts{Snd}]
       {}
       {c!\pc{e}.p \ar{c!\interpr{e}} \interprP{p}\Us}
       {}
    &\qquad
    &
 \linferSIDE[\Slts{Rcv}]
       {}
       {c?\pa{x}.p \ar{c?v} \interprP{p\{v/x\}}\Us}
       {}
\\
  &\linferSIDE[\Slts{\tau}]
       {}
       {\tau.p \ar{\tau} \interprP{p}}
       {}
    &\qquad
    & \linfer[\Slts{Suml}]
       {s \ar{\alpha} \Delta}
       {s+t \ar{\alpha} \Delta}
\\
  &\linfer[\Slts{SumR}]
        {t \ar{\alpha} \Delta}
        {s+t \ar{\alpha} \Delta}
   &\qquad
   &\linfer[\Slts{then}]
         {s \ar{\alpha} \Delta}
         {\Cmatch{b}{s}{t} \ar{\alpha} \Delta}
         {\interpr{b} = \mbox{true}}
\\
\\  
 &\linfer[\Slts{else}]
          {t \ar{\alpha} \Delta}
          {\Cmatch{b}{s}{t} \ar{\alpha} \Delta}
          {\interpr{b} = \mbox{false}}
  &
  &\linferSIDE[\Slts{Call}]
         {s\{\tilde{e}/\tilde{x}\} \ar{\mu} \Delta}
         {A\pa{\tilde{e}} \ar{\mu} \Delta}
         {A\pa{\tilde{x}} \Leftarrow s}
  \end{alignat*}
  \caption{Pre-semantics of states}
  \label{fig:stsem}

\rule{\linewidth}{0.5mm}
\end{figure}

The pre-semantics for states takes the form
\begin{align*}
  s \ar{\alpha} \Delta
\end{align*}
where $s$ is a closed state, that is containing no free occurrences of a variable,  
$\Delta$  is a distribution of states and $\mu$ can take one of the forms $c!v,\,c?v$ or $\tau$. 
The deductive rules for inferring these judgements are given in Figure \ref{fig:stsem}
and should be self-explanatory. It assumes some mechanism for evaluating closed data-expressions
$e$ to values $\interpr{e}$. Note that we assume that definitions have the form $A\pa{\tilde{x}} \Leftarrow s$, 
where $s$ is a state; this is because actions for definitions are inherited by the state that is 
associated to them, and judgements are not defined for (probabilistic) processes. Also, note that 
we have a special state $\omega$ for which no rule has been defined. The role of 
this construct will become clear in Section \ref{sec:compositional}.

\begin{figure}[t]
\rule{\linewidth}{0.5mm}
  \begin{alignat*}{2}
    &\linfer[\Rlts{broad}]
       {s \ar{c!v} \Delta}
       {\Gamma \with \Cloc{s}{n}  \ar{n.c!v} \Cloc{\Delta}{n}\Us}
       {}
    &\qquad
    &\linferSIDE[\Rlts{rec}]
       {s \ar{c?v} \Delta}
       {\Gamma \with \Cloc{s}{n}  \ar{m.c?v} \Cloc{\Delta}{n}\Us}
       {\Gamma \vdash \rconn{m}{n}}
\\
  &\linferSIDE[\Rlts{deaf}]
       {s \nar{c?v} }
       {\Gamma \with \Cloc{s}{n}  \ar{m.c?v} \pdist{\Cloc{s}{n}}\Us}
       {}
    &\qquad
    &\linferSIDE[\Rlts{disc}]
       {}
       {\Gamma \with \Cloc{s}{n}  \ar{m.c?v} \pdist{\Cloc{s}{n}}\Us}
       {\Gamma \vdash \notrconn{m}{n}}
\\
   &\linfer[\Rlts{\Cnil}]
        {}
        {\Cnil  \ar{m.c?v} \;\pdist{\Cnil}\Us}
\\
   &\linfer[\Rlts{\tau}]
        {s \ar{\tau} p}
        {\Gamma \with \Cloc{s}{n}  \ar{n.\tau} \Cloc{\Delta}{n}\Us}
        {\interprP{p} = \Delta}
   &\qquad
   &\linfer[\Rlts{\tau.prop}]
           {\Gamma \with M \ar{n.\tau} \Delta }
           {\Gamma \with M \Cpar N \ar{n.\tau} \Delta \Cpar \pdist{N}\Us}
\\
 &\linfer[\Rlts{prop}]
         {\Gamma \with M \ar{m.c?v} \Delta,\;\Gamma \with N \ar{m.c?v} \Theta }
         {\Gamma \with M \Cpar N \ar{m.c?v} \Delta \Cpar \Theta\Us}
 &\qquad
 &
\linfer[\Rlts{sync}]
         {\Gamma \with M \ar{m.c!v} \Delta,\;\Gamma \with N \ar{m.c?v} \Theta }
         {\Gamma \with M \Cpar N \ar{m.c!v} \Delta \Cpar \Theta\Us}
  \end{alignat*}
  \caption{Intensional semantics of networks}
  \label{fig:opsem}

\rule{\linewidth}{0.5mm}
\end{figure}

Judgements in the intensional semantics of networks take the form
\begin{align*}
  & \Gamma \with M \ar{\mu} \Delta
\end{align*}
where  $\Gamma$ is a network connectivity, $M$ is a system from $\sys$, 
and $\Delta$ is a distribution over $\sys$; 
intuitively this means that relative to the  connectivity $\Gamma$ the 
system $M$ can perform 
the action $\mu$, and with probability $\Delta(N)$ be transformed into 
the system $N$, for every  $N \in \support{\Delta}$. 
 The action labels can take the form
\begin{enumerate}[label=(\roman*)]
\item receive, $n.c?v$, meaning that the value $v$ is detected on 
channel $c$ by all nodes in $\nodes{M}$ which are reachable from $n$ in $\Gamma$,

\item broadcast, $n.c!v$: meaning the node $n$ (occurring in
  $\nodes{M}$, and therefore in $\Gamma$) broadcasts the value $v$ on
  channel $c$ to all nodes directly connected to $n$ in $\Gamma$

\item internal activity, $n.\tau$, meaning some internal activity performed 
by node $n$.
\end{enumerate}

\noindent The rules for inferring judgements are given in
Figure~\ref{fig:opsem}. Here we have omitted the 
symmetric counterparts of rules $\Rlts{Sync}$ and 
$\Rlts{TauProp}$.
Rule $\Rlts{broad}$ models the capability for a node to broadcast a value $v$ through 
channel $c$, assuming the code running there
is capable of broadcasting along $c$. 
Here the distribution $\Delta$ is in turn obtained from the residual of the state $s$ after the broadcast action. 
\begin{exa}
  Consider the simple network $\Gamma \with \Cloc{s}{n}$ where $\Gamma$ is 
  an arbitrary connectivity graph and the code $s$ has the form 
$c!\pc{v}. (s_1 \probc{\nicefrac{1}{4}} s_2)$. 

The pre-semantics of states determines that $s \ar{c!v} 
\interprP{s_1 \probc{\nicefrac{1}{4}} s_2} = \nicefrac{1}{4} \cdot \pdist{s_1} + \nicefrac{3}{4} \cdot \pdist{s_2}$, 
using Rule $\Rlts{Snd}$
Thus according to the rule  
\Rlts{broad} we have the judgement
\begin{align*}
  \Gamma \with \Cloc{s}{n}  \,\ar{n.c!v}\,  \frac{1}{4} \cdot  \Cloc{s_1}{n}  + \frac{3}{4} \cdot \Cloc{s_2}{n}
\end{align*}
\end{exa}\vspace{6 pt}

\noindent Rules $\Rlts{rec}, \Rlts{deaf}$ and $\Rlts{disc}$ express how a node reacts 
when a message is broadcast; the first essentially models the capability of a node which is 
listening to a channel $c$, and which appears in the sender's range of transmission, 
to receive the message correctly. The other two rules model situations in which a node is 
not listening to the channel used to broadcast a message, or it is not in the range of transmission of the sender;
 In both cases this node cannot detect the transmission at all.

The rules $\Rlts{\tau}$ and $\Rlts{\tau.prop}$ model internal activities performed 
by some node of a system term; the latter (together with its symmetric 
counterpart) expresses the inability for a node which 
performs an internal activity to affect other nodes in a system term. Here again, $\Delta \Cpar \Theta$ 
is a distribution over $\sys$, this time obtained by instantiating Equation \eqref{eq:twoargexpected} to 
the function $(\cdot \Cpar \cdot) : \sys \times \sys \rightarrow \sys$.

Finally, rules $\Rlts{sync}$ and $\Rlts{prop}$ describe how communication 
between nodes of a network is handled; here the result of a synchronisation between 
an output and an input is again an output, thus modelling broadcast 
communication \cite{Prasad95}.


%
%

\subsection{Properties of the Calculus}
\label{sec:properties}
We conclude this section by summarising the main properties enjoyed by the 
intensional semantics of our calculus.

Here (and in the rest of the paper) 
it will be convenient to identify networks 
and distributions of networks
up to a structural congruence relation $\equiv$. This is first 
defined for states as the smallest equivalence relation which is 
a commutative monoid with 
respect to $\;+\;$ and $\Cnil$, and which satisfies the equations 
$\Cmatch{b}{s}{t} \equiv s$ if $\interpr{b} = \ttrue$, 
$\Cmatch{b}{s}{t} \equiv t$ if $\interpr{b} = \ffalse$ and 
$A\pa{\tilde{e}} \equiv s\{\tilde{e}/\tilde{x}\}$ if 
$A\pa{\tilde{x}} \Leftarrow s$.
For system terms, we let $\equiv$ 
be the smallest equivalence relation which is a commutative 
monoid with respect to $(\cdot\Cpar\cdot)$ and $\Cnil$, and which satisfies the 
equation $s \equiv t$ implies $\Cloc{s}{m} \equiv \Cloc{t}{m}$ 
for any node $m$. Finally, we let $(\Gamma_M \with M) \equiv 
(\Gamma_N \with N)$ iff $\Gamma_1 = \Gamma_2$ and $M \equiv N$. 
Structural congruence is also defined for distributions of 
networks via the lifted relation $\lift{\equiv^e}$. 
With an abuse of notation, the latter is still 
denoted as $\equiv$.

The properties that we prove in this section give an explicit 
form to the structure of a network $(\Gamma \with M)$ and 
a distribution $\Delta$ in the case that an action $(\Gamma \with M) 
\ar{\mu} \Delta$ can be inferred in the intensional semantics 
presented in Section \ref{sec:int.sem}.

\begin{prop}[Tau-actions]
\label{prop:tau}
Let $\Gamma \with M$ be a network; then 
$\Gamma \with M \ar{m.\tau} \Delta$ if and only if 
\begin{enumerate}[label=(\roman*)]
\item $M \equiv \Cloc{(\tau.p) + s}{m} \Cpar N$,
\item $\Delta \equiv \interprP{\Cloc{p}{m}} \Cpar \pdist{N}$
\end{enumerate}
\end{prop}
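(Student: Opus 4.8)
The plan is to prove the two implications separately, reducing the network-level statement to a characterisation of the internal moves available in the pre-semantics of states (Figure~\ref{fig:stsem}). Throughout, I rely on the routine fact that both transition relations are preserved by structural congruence: if $M \equiv M'$ and $\Gamma \with M \ar{m.\tau} \Delta$, then $\Gamma \with M' \ar{m.\tau} \Delta'$ for some $\Delta' \equiv \Delta$, and analogously at the state level. This is checked axiom by axiom for $\equiv$, and it is precisely what lets both conditions of the statement be phrased up to $\equiv$.

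The key auxiliary step is a state-level lemma: $s \ar{\tau} \Theta$ is derivable in the pre-semantics if and only if $s \equiv (\tau.p) + s'$ with $\Theta = \interprP{p}$, for some process $p$ and state $s'$. I prove the forward direction by rule induction on the derivation of $s \ar{\tau} \Theta$. The base case is the $\tau$-prefix axiom, where $s = \tau.p$ and $\Theta = \interprP{p}$, so $s \equiv (\tau.p) + \Cnil$. For the two sum rules, the induction hypothesis exhibits a $\tau$-summand inside one argument of the sum, which the commutative-monoid laws for $+$ bring to the front of the whole term. The matching rules and the recursion rule are immediate, since there $s$ is structurally congruent to the subject of the premise (using $\Cmatch{b}{s_1}{s_2} \equiv s_i$ and $A\pa{\tilde{e}} \equiv s_0\{\tilde{e}/\tilde{x}\}$), so the induction hypothesis transfers along $\equiv$. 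The backward direction is the one-line derivation $\tau.p \ar{\tau} \interprP{p}$ followed by the left-sum rule, transported along $\equiv$.

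For the implication from (i)--(ii) to the transition, I build the move on the canonical representative and then transport along $\equiv$. From $\tau.p \ar{\tau} \interprP{p}$ and the left-sum rule I obtain $(\tau.p) + s \ar{\tau} \interprP{p}$; the node-level $\tau$-rule then gives $\Gamma \with \Cloc{(\tau.p)+s}{m} \ar{m.\tau} \interprP{\Cloc{p}{m}}$; and the $\tau$-propagation rule inserts the context $N$, yielding $\Gamma \with \Cloc{(\tau.p)+s}{m} \Cpar N \ar{m.\tau} \interprP{\Cloc{p}{m}} \Cpar \pdist{N}$. Since $M \equiv \Cloc{(\tau.p)+s}{m} \Cpar N$ and $\Delta \equiv \interprP{\Cloc{p}{m}} \Cpar \pdist{N}$, preservation under $\equiv$ delivers $\Gamma \with M \ar{m.\tau} \Delta$.

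For the converse I use rule induction on the derivation of $\Gamma \with M \ar{m.\tau} \Delta$; only the node-level $\tau$-rule and the $\tau$-propagation rule (with its symmetric counterpart) can produce an $m.\tau$ label. In the base case $M = \Cloc{s}{m}$ and $\Delta = \interprP{\Cloc{p}{m}}$ with $s \ar{\tau} \interprP{p}$, so the auxiliary lemma rewrites $s \equiv (\tau.p') + s'$ with $\interprP{p'} = \interprP{p}$, and conditions (i)--(ii) hold with $N = \Cnil$ (the unit of $\Cpar$ up to $\equiv$). In the inductive case $M = M_1 \Cpar M_2$ and $\Delta = \Delta_1 \Cpar \pdist{M_2}$ with $\Gamma \with M_1 \ar{m.\tau} \Delta_1$; the induction hypothesis gives $M_1 \equiv \Cloc{(\tau.p)+s}{m} \Cpar N_1$ and $\Delta_1 \equiv \interprP{\Cloc{p}{m}} \Cpar \pdist{N_1}$, and reassociating $\Cpar$ (a commutative monoid up to $\equiv$) together with $\pdist{N_1} \Cpar \pdist{M_2} = \pdist{N_1 \Cpar M_2}$ (from the definition of $\Cpar$ on distributions, Equation~\eqref{eq:twoargexpected}) closes the case with $N = N_1 \Cpar M_2$. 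I expect the only real difficulty to be bookkeeping: keeping the $\equiv$-rearrangements of $+$ and $\Cpar$ honest and handling the interplay between point distributions and the lifted parallel operator. All the conceptual content sits in the state-level lemma, where matching, recursion, and the monoid structure of choice are absorbed into structural congruence.
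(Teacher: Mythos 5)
Your proposal is correct and follows essentially the same route as the paper: the same state-level characterisation of internal moves ($s \ar{\tau} \Theta$ iff $s \equiv \tau.p + s'$ with $\interprP{p} = \Theta$) proved by rule induction on the pre-semantics, followed by rule induction on the derivation of $\Gamma \with M \ar{m.\tau} \Delta$ with base case $\Rlts{\tau}$ and inductive case $\Rlts{\tau.prop}$ (plus its symmetric counterpart). The only divergence is in the direction from (i)--(ii) to the transition: the paper performs a rule induction directly on the derivation of the equivalence $M \equiv \Cloc{\tau.p + s}{m} \Cpar N$, whereas you construct the transition on the canonical representative $\Cloc{(\tau.p)+s}{m} \Cpar N$ and transport it along $\equiv$ via a general preservation-under-congruence lemma; these amount to the same inductive work, with yours packaged as a reusable lemma. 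The one point to be careful about is non-circularity: your preservation lemma must indeed be proved independently (axiom by axiom on $\equiv$, as you indicate), since the paper's corresponding statement, Corollary \ref{cor:reduction}, is derived downstream from Propositions \ref{prop:tau}--\ref{prop:broadcast} and therefore cannot be cited at this point.
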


\begin{proof}[Outline of the Proof]
We first need to prove a similar statement for states. 
Let $t$ be a state; then $t \ar{\tau} \Theta$ if and 
only if $t \equiv \tau.p + s$ for some $p, s$ such that 
$\interprP{p} = \Theta$. The two implications 
of this statement are proved separately. 


To prove Proposition \ref{prop:tau} suppose first 
that $\Gamma \with M \ar{m.\tau} \Delta$ for 
some distribution $\Delta$. We show that 
$M \equiv \Cloc{\tau.p + s}{m} \Cpar N$, $\Delta 
\equiv \interprP{\Cloc{p}{m}} \Cpar \pdist{N}$ 
by structural induction on the proof of the derivation above. 

If the last rule applied is $\Rlts{\tau}$, then $M = \Cloc{t}{m}$ 
for some $t$ such that $t \ar{\tau} \Theta$ , $\Delta = \Cloc{\Theta}{m}$. 

Since $t \ar{\tau} \Theta$ then $t \equiv \tau.p + s$ for some 
process $p$ such that $\interprP{p} = \Theta$, which also gives 
$\Delta = \interprP{\Cloc{p}{m}}$. 
By definition of structural congruence $M \equiv \Cloc{\tau.p + s}{m} 
\equiv \Cloc{\tau.p + s}{m} \Cpar \Cnil$. Further, $\Delta 
\equiv \interprP{\Cloc{p}{m}} \Cpar \pdist{\Cnil}$, and there 
is nothing left to prove.

If the last rule applied is $\Rlts{\tau.prop}$, then 
$M \equiv N \Cpar L$ for some $N, L$ such that 
$N \ar{m.\tau} \Delta_N$; further, $\Delta = \Delta_N \Cpar \pdist{L}$. 
By inductive hypothesis we have that $N \equiv \Cloc{\tau.p + s}{m} \Cpar N'$ 
and $\Delta_N \equiv \interprP{\Cloc{p}{m}} \Cpar \pdist{N'}$. 
By performing some simple calculations we find that 
$M \equiv N \Cpar L \equiv \Cloc{\tau.p + s}{m} \Cpar (N' \Cpar L)$ 
and $\Delta \equiv \Delta_N \Cpar \pdist{L} \equiv \interprP{\Cloc{p}{m}} \Cpar 
( \pdist{N'} \Cpar \pdist{L} ) \equiv \interprP{\Cloc{p}{m}} \Cpar (\pdist{N' \Cpar L})$.

Conversely, suppose that $M \equiv \Cloc{\tau.p + s}{m} \Cpar N$; in 
this case it suffices to perform a rule induction on the proof of 
the equivalence above to show that $\Gamma \with M \ar{m.\tau} 
\Delta$, where $\Delta \equiv \interprP{\Cloc{p}{m}} \Cpar \pdist{N}$.
\end{proof}

\begin{prop}[Input]
\label{prop:input}
For any network $\Gamma \with M$ we have that 
$\Gamma \with M \ar{m.c?v} \Delta$ iff
\begin{enumerate}[label=(\roman*)]
\item $m \notin \nodes{M}$, 
\item $M \equiv \prod_{i \in I} \Cloc{(c?\pa{x}.p_i) + s_i}{n_i} 
\Cpar \prod_{j \in J} \Cloc{s_j}{n_j}$, 
\item for any $i \in I$, $\Gamma \vdash \rconn{m}{n_i}$, 
\item for any $j \in J$, either $\Gamma \vdash \notrconn{m}{n_j}$ 
or $s_j \ar{c?v}\;\hspace{-15pt}{\not\;}{\hspace{15pt}}$, 
\item $\Delta \equiv \interprP{\prod_{i \in I} \Cloc{p_i\{v/x\}}{n_i}} 
\Cpar \prod_{j \in J} \pdist{\Cloc{s_j}{n_j}}$.
\qed
\end{enumerate}

\end{prop}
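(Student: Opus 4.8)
The plan is to follow the template of Proposition~\ref{prop:tau}: reduce the claim to a characterisation at the level of states and then transfer it to networks, using induction on the derivation for the left-to-right implication and induction on the congruence proof for the converse. At the state level I would establish two facts. The \emph{positive} one is that $s \ar{c?v} \Theta$ holds if and only if $s \equiv (c?\pa{x}.p) + t$ for some process $p$ and state $t$ with $\Theta = \interprP{p\{v/x\}}$; this is read off the pre-semantics, whose only generative rule for a $c?v$ action is $\Slts{Rcv}$, the rules $\Slts{Suml}$, $\Slts{SumR}$, $\Slts{then}$, $\Slts{else}$ and $\Slts{Call}$ being absorbed into $\equiv$. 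The \emph{negative} one, which is what clause~(iv) needs, is that $s \nar{c?v}$ holds if and only if $s$ has no such input summand, i.e.\ $s \not\equiv (c?\pa{x}.p) + t$ for every $p$ and $t$; since matching and recursive unfolding are already quotiented by $\equiv$, both facts amount to inspecting the $+$-normal form of $s$.

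For the forward implication I would show, by structural induction on the derivation of $\Gamma \with M \ar{m.c?v} \Delta$, that $M$ splits as in clause~(ii) with the side conditions (iii)--(v). The base cases are the single-node rules. Rule $\Rlts{rec}$ contributes a node to the listening family indexed by $I$: by its side condition this node is connected, by the positive state lemma its code is $\equiv (c?\pa{x}.p_i) + s_i$, and its residual is $\interprP{\Cloc{p_i\{v/x\}}{n_i}}$. Rules $\Rlts{disc}$ and $\Rlts{deaf}$ contribute a node to the idle family indexed by $J$, supplying respectively the first disjunct $\Gamma \vdash \notrconn{m}{n_j}$ and the second disjunct $s_j \nar{c?v}$ of clause~(iv), with point-distribution residual $\pdist{\Cloc{s_j}{n_j}}$; rule $\Rlts{\Cnil}$ contributes only the unit $\pdist{\Cnil}$ of $\Cpar$. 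The single inductive rule is $\Rlts{prop}$, where I would take the two decompositions delivered by the induction hypothesis and merge their $I$- and $J$-families, checking that the product residual $\Delta \Cpar \Theta$ is exactly the one named in clause~(v). Clause~(i) records the intended reading that the source $m$ of an input label is external; I would carry it as the standing assumption on the label and note that $m$ therefore occurs among neither the $n_i$ nor the $n_j$.

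The converse is proved by induction on the derivation of the structural form~(ii), closing under $\equiv$. Given that form, I would apply $\Rlts{rec}$ at each node of $I$ (licensed by clause~(iii) together with the positive state lemma), apply whichever of $\Rlts{disc}$ or $\Rlts{deaf}$ is made available by the corresponding disjunct of clause~(iv) at each node of $J$ (the second disjunct being exactly the negative state lemma), and combine all the components with repeated applications of $\Rlts{prop}$, so that the assembled residual is the product of clause~(v). I expect the main obstacle to be the merge in the inductive $\Rlts{prop}$ step of the forward direction. One must first argue that the family assignment is forced, so that the two subdecompositions can be recombined consistently: a node that is both connected and listening can use neither $\Rlts{deaf}$ nor $\Rlts{disc}$ and so must lie in $I$, whereas every other node lies in $J$ under one of the two disjuncts of clause~(iv). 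Propagating this disjunctive condition correctly across the merge, by appeal to the negative state lemma, is the delicate point; the surrounding reasoning is the same routine manipulation of $\equiv$ and of the product operation on distributions already used in Proposition~\ref{prop:tau}.
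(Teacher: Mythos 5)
Your overall strategy is the one the paper intends: Proposition~\ref{prop:input} is left unproved in the paper precisely because it is meant to be established exactly like Proposition~\ref{prop:tau}, i.e.\ by a state-level characterisation of the $c?v$ transitions, a structural induction on the derivation for the left-to-right direction, and a rule induction on the proof of the structural congruence for the converse; your case analysis ($\Rlts{rec}$ populating $I$, $\Rlts{deaf}$ and $\Rlts{disc}$ populating $J$ through the two disjuncts of clause (iv), $\Rlts{\Cnil}$ contributing the unit, $\Rlts{prop}$ merging) is the right one. Two small remarks: the merge under $\Rlts{prop}$ is more routine than you suggest, since clauses (iii)--(v) are universally quantified statements over the families $I$ and $J$ and are therefore inherited verbatim under the (disjoint, by well-formedness of system terms) unions of the two sub-decompositions; and your ``negative state lemma'' is never actually needed, because the second disjunct of clause (iv) is literally the premise $s_j \nar{c?v}$ of $\Rlts{deaf}$, so no translation into structural form is required in either direction.

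The point that needs more care than your proposal gives it is clause (i). You ``carry it as the standing assumption on the label'', but in the forward direction of an ``iff'' it is a proof obligation, and it is worth recording that it cannot be discharged from the rules as printed: $\Rlts{rec}$ does force $m \neq n$ (its side condition $\Gamma \vdash \rconn{m}{n}$ together with irreflexivity of $\Gamma_E$), but $\Rlts{deaf}$ and $\Rlts{disc}$ place no constraint whatsoever between $m$ and the node name. Indeed, since $\Gamma_E$ is irreflexive we have $\Gamma \vdash \notrconn{m}{m}$, so $\Rlts{disc}$ derives $\Gamma \with \Cloc{s}{m} \ar{m.c?v} \pdist{\Cloc{s}{m}}$ for every state $s$, contradicting clause (i). Hence a literal inductive proof of clause (i) would fail; the proposition is true only under the convention you adopt, namely that input labels $m.c?v$ are only considered for $m$ not occurring in the system term, which is how such labels are generated by $\Rlts{sync}$ and consumed in the extensional semantics (Definition~\ref{def:sea} requires $n \in \inp{\Gamma \with M}$). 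Your repair is the right one, but it should be stated explicitly as an implicit side condition on $\Rlts{deaf}$ and $\Rlts{disc}$ (or as a global convention on input labels), not silently folded into the statement being proved.
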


\begin{prop}[Broadcast]
\label{prop:broadcast}
Let $\Gamma \with M$ be a network; then $\Gamma \with M \ar{m.c!v} 
\Delta$ for some $\Delta$ iff 
\begin{enumerate}[label=(\roman*)]
\item $M \equiv \Cloc{(c!\pc{e}.p + s)}{m} \Cpar N$, 
where $\interpr{e} = v$, 
\item $\Gamma \with N \ar{m.c?v} \Theta$, 
\item $\Delta \equiv \interprP{\Cloc{p}{m}} \Cpar \Theta$.
\qed
\end{enumerate}

\end{prop}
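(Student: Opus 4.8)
The plan is to mirror the proof of Proposition \ref{prop:tau}, splitting the biconditional into its two implications and relying on an auxiliary characterisation of broadcast at the level of states. First I would establish the state-level lemma: for a closed state $t$, we have $t \ar{c!v} \Theta$ iff $t \equiv c!\pc{e}.p + s$ for some $e, p, s$ with $\interpr{e} = v$ and $\interprP{p} = \Theta$. The forward direction of this lemma is a routine rule induction over Figure \ref{fig:stsem}: the only base rule that emits an output is $\Slts{Snd}$, and the cases $\Slts{Suml}$, $\Slts{SumR}$, $\Slts{then}$, $\Slts{else}$, $\Slts{Call}$ are absorbed into the congruence equations for $+$, matching and definitions. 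The converse is immediate by composing $\Slts{Snd}$ with $\Slts{Suml}$.

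For the ``only if'' direction I would proceed by rule induction on the derivation of $\Gamma \with M \ar{m.c!v} \Delta$. The only rules that can conclude a broadcast label $m.c!v$ are $\Rlts{broad}$ and $\Rlts{sync}$ (together with the symmetric counterpart of the latter). In the base case $\Rlts{broad}$ we have $M = \Cloc{s}{m}$ with $s \ar{c!v} \Delta'$ and $\Delta = \Cloc{\Delta'}{m}$; the state lemma rewrites $s \equiv c!\pc{e}.p + s'$ and identifies $\Delta = \interprP{\Cloc{p}{m}}$, so choosing $N = \Cnil$ and $\Theta = \pdist{\Cnil}$, which is available by $\Gamma \with \Cnil \ar{m.c?v} \pdist{\Cnil}$ from rule $\Rlts{\Cnil}$, discharges all three clauses. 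In the inductive $\Rlts{sync}$ case we have $M \equiv M_1 \Cpar M_2$ with $\Gamma \with M_1 \ar{m.c!v} \Delta_1$, $\Gamma \with M_2 \ar{m.c?v} \Theta_2$ and $\Delta = \Delta_1 \Cpar \Theta_2$; applying the induction hypothesis to $M_1$ yields $M_1 \equiv \Cloc{c!\pc{e}.p + s}{m} \Cpar N_1$ with a receive transition $\Gamma \with N_1 \ar{m.c?v} \Theta_1$ and $\Delta_1 \equiv \interprP{\Cloc{p}{m}} \Cpar \Theta_1$. The decisive move is to recombine the two receive transitions: from $\Gamma \with N_1 \ar{m.c?v} \Theta_1$ and $\Gamma \with M_2 \ar{m.c?v} \Theta_2$, rule $\Rlts{prop}$ gives $\Gamma \with (N_1 \Cpar M_2) \ar{m.c?v} \Theta_1 \Cpar \Theta_2$, so with $N = N_1 \Cpar M_2$ clause (ii) holds, and commutativity and associativity of $\Cpar$ modulo $\equiv$ yield clauses (i) and (iii).

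For the ``if'' direction I would read the rules top-down. Given the three clauses, the state lemma supplies $c!\pc{e}.p + s \ar{c!v} \interprP{p}$, whence $\Rlts{broad}$ yields $\Gamma \with \Cloc{c!\pc{e}.p + s}{m} \ar{m.c!v} \interprP{\Cloc{p}{m}}$; combining this with the hypothesised receive transition $\Gamma \with N \ar{m.c?v} \Theta$ via $\Rlts{sync}$ produces exactly $\Gamma \with \Cloc{c!\pc{e}.p + s}{m} \Cpar N \ar{m.c!v} \interprP{\Cloc{p}{m}} \Cpar \Theta$. A final appeal to the fact that the intensional transition relation is preserved by $\equiv$, so that $M \equiv M'$ and $M' \ar{\mu} \Delta'$ entail $M \ar{\mu} \Delta$ with $\Delta \equiv \Delta'$, transfers the judgement back to the original $M$ and $\Delta$.

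I expect the main obstacle to be the synchronisation case of the forward direction, specifically the bookkeeping needed to reassemble a single receive transition for the residual network $N$ from the receive transitions of its two parts. Here well-formedness is what makes the step sound: the broadcasting name $m$ occurs only in the output component, so $m \notin \nodes{N}$ and the actions $\Gamma \with N_1 \ar{m.c?v} \Theta_1$ and $\Gamma \with M_2 \ar{m.c?v} \Theta_2$ are genuinely defined and composable via $\Rlts{prop}$. All remaining manipulations are routine rearrangements modulo $\equiv$, exactly as in the proof of Proposition \ref{prop:tau}.
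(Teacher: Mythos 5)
Your proposal is correct and follows essentially the route the paper intends: the paper leaves Proposition~\ref{prop:broadcast} unproved, but its outline for Proposition~\ref{prop:tau} — a state-level characterisation lemma, rule induction on the transition derivation for the forward direction (with the synchronisation case handled by recombining the two receive transitions via $\Rlts{prop}$), and transfer along structural congruence for the converse — is exactly the strategy you mirror. The only point to be careful about is that the final "transitions are preserved by $\equiv$" step must be established directly by rule induction on the derivation of the congruence (as the paper does in the proof of Proposition~\ref{prop:tau}), not quoted from Corollary~\ref{cor:reduction}, since that corollary is itself a consequence of Propositions~\ref{prop:tau}--\ref{prop:broadcast}.
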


\noindent An immediate consequence of the results above is 
that actions in the intensional semantics 
are preserved by structurally congruent networks.
\begin{cor}
\label{cor:reduction}
Let $\calM, \calN$ be two networks 
such that $\calM \equiv \calN$. 
If $\calM \ar{\mu} \Delta$ then 
$\calN \ar{\mu} \Theta$ for some 
$\Theta$ such that $\Delta \equiv \Theta$.
\qed
\end{cor}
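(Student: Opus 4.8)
The plan is to proceed by a case analysis on the shape of the action label $\mu$, which by the definition of the intensional semantics must take one of the three forms $m.\tau$, $m.c?v$ or $m.c!v$. The crucial observation is that each of these three kinds of action has already been given a \emph{structural} characterisation --- Propositions \ref{prop:tau}, \ref{prop:input} and \ref{prop:broadcast} respectively --- and that in every case the conditions appearing on the right-hand side of the biconditional refer only to the connectivity graph $\Gamma$, the structural congruence class of the system term, the set of its node names, and the action label, together with a description of the target distribution up to $\equiv$. None of these data distinguish between structurally congruent networks, so the corollary will fall out almost immediately once the characterisations are in hand.

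First I would unpack the hypothesis $\calM \equiv \calN$. By the definition of $\equiv$ on networks this means that $\calM$ and $\calN$ share the same connectivity graph, say $\Gamma$, and that their system terms, say $M$ and $N$, satisfy $M \equiv N$. I would also record the auxiliary fact that $\equiv$ on system terms preserves node names, so that $\nodes{M} = \nodes{N}$; this is immediate from the generating clauses of $\equiv$, since reordering parallel components, absorbing $\Cnil$, and rewriting the state located at a fixed node all leave the underlying set of names unchanged. This fact is needed for the input case, whose characterisation mentions the condition $m \notin \nodes{M}$.

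The core of the argument is then uniform across the three cases. Taking the broadcast case as representative, suppose $\mu = m.c!v$. Applying the ``only if'' direction of Proposition \ref{prop:broadcast} to $\calM \ar{m.c!v} \Delta$ extracts witnesses: a decomposition $M \equiv \Cloc{(c!\pc{e}.p + s)}{m} \Cpar M'$ with $\interpr{e} = v$, an input judgement $\Gamma \with M' \ar{m.c?v} \Theta'$, and the identity $\Delta \equiv \interprP{\Cloc{p}{m}} \Cpar \Theta'$. Since $N \equiv M$ and $\equiv$ is transitive, $N \equiv \Cloc{(c!\pc{e}.p + s)}{m} \Cpar M'$, so $N$ satisfies condition (i) with exactly the same residual $M'$, while conditions (ii) and (iii) are statements about $M'$ and $\Theta'$ that are already known to hold. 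The ``if'' direction of Proposition \ref{prop:broadcast} applied to $\calN$ therefore yields $\Gamma \with N \ar{m.c!v} \Theta$ for some $\Theta \equiv \interprP{\Cloc{p}{m}} \Cpar \Theta'$, whence $\Theta \equiv \Delta$ by transitivity. The cases $\mu = m.\tau$ and $\mu = m.c?v$ follow the identical pattern, invoking Propositions \ref{prop:tau} and \ref{prop:input} in place of \ref{prop:broadcast}.

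There is no serious obstacle here: the genuine work has already been discharged in establishing the three characterisation propositions, which were deliberately phrased modulo $\equiv$. The only points requiring care are bookkeeping ones --- ensuring the case split on $\mu$ is exhaustive, using the ``only if'' direction to obtain the structural witnesses and the ``if'' direction to reconstruct the transition for $\calN$, and noting explicitly that $\equiv$ preserves node sets so that the side condition of the input characterisation transfers unchanged.
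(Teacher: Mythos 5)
Your proof is correct and follows exactly the route the paper intends: the corollary is stated there without explicit proof as an immediate consequence of Propositions \ref{prop:tau}, \ref{prop:input} and \ref{prop:broadcast}, and your case analysis on $\mu$ --- using the ``only if'' direction of each characterisation to extract structural witnesses (which transfer to $\calN$ by transitivity of $\equiv$) and the ``if'' direction to rebuild the transition, yielding a target $\Theta \equiv \Delta$ --- is precisely that argument spelled out. Your observation that $\equiv$ preserves node sets, needed for the side condition $m \notin \nodes{M}$ in the input case, is the one piece of bookkeeping the paper leaves implicit.
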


Another trivial consequence that follows from 
the results above is that the intensional 
semantics does not change the structure 
of a network. 

\begin{defi}[Stable distributions]
A (node)-stable sub-distribution $\Delta \in 
\subdist{\sys}$ is one for which 
whenever $M, N \in \support{\Delta}$ 
it follows that $\nodes{M} = \nodes{N}$.
A distribution over networks is said to 
be (node)-stable if it has the form 
$\Gamma \with \Delta$, and $\Delta$ 
is a stable sub-distribution in 
$\subdist{\sys}$.
\end{defi}
\begin{cor}
\label{cor:stable.dist}

Whenever $\Gamma\with M \ar{\mu} \Delta$ 
then $\Delta$ is node-stable; further, 
for any $N \in \support{\Delta}$ 
we have that $\nodes{M} = \nodes{N}$.\qed
\end{cor}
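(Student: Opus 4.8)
The plan is to prove the corollary by case analysis on the shape of the action $\mu$, invoking in each case the corresponding structural characterisation from Propositions \ref{prop:tau}, \ref{prop:input} and \ref{prop:broadcast}. Since the node-extraction operation $\nodes{\cdot}$ is manifestly invariant under the structural congruence $\equiv$, and since the claim concerns only the support of $\Delta$, it suffices to show that every system term in $\support{\Delta}$ has node set exactly $\nodes{M}$; node-stability of $\Delta$ then follows at once, as does the second clause of the statement.

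The single observation driving every case is that the probabilistic interpretation preserves node structure. Concretely, for any process $p$ and any node $m$, every state in $\support{\interprP{p}}$ gives, under the map $\Cloc{\cdot}{m}$, a system term with node set $\{m\}$, because probabilistic choice affects only the code running at a location and never the location itself. Hence every term in $\support{\interprP{\Cloc{p}{m}}}$ has node set $\{m\}$. Combining this with the fact that node sets distribute as unions over $\Cpar$ — which is read off from the definition of composition on distributions via Equation \eqref{eq:twoargexpected} — one sees that the support of any distribution built from $\interprP{\cdot}$-images and point distributions by $\Cpar$ consists of terms all sharing the same node set, namely the union of the component node sets.

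With this in hand each case is mechanical. For $\mu = m.\tau$, Proposition \ref{prop:tau} gives $M \equiv \Cloc{(\tau.p)+s}{m} \Cpar N$ and $\Delta \equiv \interprP{\Cloc{p}{m}} \Cpar \pdist{N}$, so every term of $\support{\Delta}$ has node set $\{m\} \cup \nodes{N} = \nodes{M}$. For $\mu = m.c?v$, Proposition \ref{prop:input} gives $\Delta \equiv \interprP{\prod_{i \in I} \Cloc{p_i\{v/x\}}{n_i}} \Cpar \prod_{j \in J} \pdist{\Cloc{s_j}{n_j}}$, whose support terms all have node set $\{n_i : i \in I\} \cup \{n_j : j \in J\} = \nodes{M}$. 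For $\mu = m.c!v$, Proposition \ref{prop:broadcast} gives $M \equiv \Cloc{(c!\pc{e}.p+s)}{m} \Cpar N$ together with a judgement $\Gamma \with N \ar{m.c?v} \Theta$ and $\Delta \equiv \interprP{\Cloc{p}{m}} \Cpar \Theta$; applying Proposition \ref{prop:input} to this input judgement yields $\nodes{N'} = \nodes{N}$ for every $N' \in \support{\Theta}$, and composing with $\interprP{\Cloc{p}{m}}$ gives node set $\{m\} \cup \nodes{N} = \nodes{M}$ throughout $\support{\Delta}$.

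There is no genuine obstacle here; the only point demanding care is the bookkeeping of node sets across the probabilistic interpretation and the composition operator, which is entirely isolated in the observation of the second paragraph. The essential content is simply that neither probabilistic choice nor any intensional transition rule ever creates or destroys a node — they only rewrite the code at fixed locations — and this is precisely what Propositions \ref{prop:tau}, \ref{prop:input} and \ref{prop:broadcast} make explicit.
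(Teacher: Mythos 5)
Your proof is correct and follows exactly the route the paper intends: the corollary is stated there as an immediate consequence of Propositions \ref{prop:tau}, \ref{prop:input} and \ref{prop:broadcast}, and your case analysis on $\mu$, together with the observation that $\interprP{\cdot}$ and $\Cpar$ never create or destroy node names, is precisely the argument being left implicit. Nothing is missing.
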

%

\section{Compositional Reasoning for Networks}
\label{sec:compositional}

The aim of this Section is to  
develop preorders of the form
\begin{align}\label{eq:behaviour}
  \calM \behavPre \calN
\end{align}
Intuitively this means that the network $\calM$ can be replaced by
$\calN$, as a part of a larger overall network, without any loss of
behaviour. The intention is that the internal structure of the
  networks $\calM,\, \calN$ should play no role in this comparison; the
  names used to identify their internal stations and the their
  communication topology should not be important. Intuitively the only
  behaviour to be taken into account in this extensional comparison is
  the reception of values at the their interface, the values
  subsequently broadcast at the interface.  

To formalise this concept 
we need to  say how networks are composed to form larger networks.
In  
Section~\ref{sec:comp.net} we propose a specific composition  operator, $\testP$
for this purpose, and briefly discuss  its properties.  We then use
this operator in Section~\ref{sec:ts} to say how network behaviour is determined. 
In Section~\ref{sec:maypreord}, we  give 
the formal definition of the behavioural preorders, in a relatively standard manner following 
\cite{dnh}; this section also treats some  examples.
The nature of these preorders depends on our particular choice of
composition  operator $\testP$. In Section~\ref{sec:justify} we return
to this point and offer a justification for our choice; this section may
be safely ignored by the reader who is uninterested in this subtlety.

However first let us reconsider the informal requirements of the
proposed behavioural preorder (\ref{eq:behaviour}) above. 
We have already mentioned that it should not depend on the internal 
structure of $\calN, \calM$. But equally well $\calN, \calM$ 
should not be able to  make any assumptions about the topology 
of their external  environment. 
\begin{figure}
\begin{align*}
     \begin{tikzpicture}
           \node[state](m){$m$}; 
           \node[state](o1)[above right=of m]{$o_1$};
           \node[state](o2)[below right=of m]{$o_2$};
 \path[to]
       (m) edge[thick] (o1)
       (m) edge[thick] (o2);
   \begin{pgfonlayer}{background}
    \node [background,fit=(m)] {};
    \end{pgfonlayer}
    \end{tikzpicture}
&&  
     \begin{tikzpicture}
           \node[state](m){$m$}; 
           \node[state](o1)[above right=of m]{$o_1$};
           \node[state](o2)[below right=of m]{$o_2$};     
 \path[to]
       (m) edge[thick] (o1)
       (m) edge[thick] (o2);
  \path[tofrom]
  			(o1) edge[thick] (o2);
   \begin{pgfonlayer}{background}
    \node [background,fit=(m)] {};
    \end{pgfonlayer}
    \end{tikzpicture}\\
\calM = \Gamma_M \with \Cloc{c!\pc{v}}{m}
&&
\calN = \Gamma_N \with \Cloc{c!\pc{v}}{m}
\end{align*}

\caption{A well-formed and an ill-formed network}
\label{fig:wellformed}
\end{figure}

\begin{exa}
\label{ex:wellformed}
Consider the networks $\calM$ and $\calN$ depicted in 
Figure \ref{fig:wellformed}. At least intuitively the extensional 
behaviour of these two networks is the same: a broadcast of 
value $v$ along channel $c$ can be detected by the external nodes 
$o_1$ and $o_2$. However $\calN$ makes an assumption about the
external environment, namely that there is a connection between
the external nodes $o_1$ and $o_2$. This slight difference 
can be  exploited to distinguish between them behaviourally.
 Suppose that we place the code $c?\pa{x}.c!\pc{w}$ at 
node $o_1$ and the code $c?\pa{x}.c?\pa{y}.\omega$ at 
node $o_2$ to test the behaviour of both networks. 
In practice, let $T = \Cloc{c?\pa{x}.c!\pc{w}}{o_1} \Cpar 
\Cloc{c?\pa{x}.c?\pa{y}.\omega}{o_2}$, and consider the 
networks $\calM' = \Gamma_M \with \Cloc{c!\pc{v}}{m} \Cpar T$ and 
$\calN' = \Gamma_N \with \Cloc{c!\pc{v}}{m} \Cpar T$. 
In the first network the node $o_2$ can detect both the 
broadcast fired by node $m$ and node $o_1$, leading to 
a state in which the special action $\omega$ is enabled. 
However, the same is not possible in $\calN'$, since there 
is no connection between the external nodes $o_1$ and $o_2$. 
That is, node $o_2$ can only detect the broadcast fired by 
node $m$, ending in a state in which the special action 
$\omega$ remains guarded by an input. As we will see later, 
the clause $\omega$ plays a crucial role in distinguishing 
networks.
\end{exa}

The problem in  Example \ref{ex:wellformed} is caused 
by the presence of a connection between the two 
external nodes in the network $\calN$ 
of Figure \ref{fig:wellformed};
intuitively this represents an assumption of 
$\calN$ about its external environment. To avoid this 
problem, we focus on a specific 
class of networks in which connections between 
external nodes are not allowed. Also, we 
require external nodes to have at least a connection 
with some internal node.

\begin{defi}[Well-Formed Networks]
\label{def:well.formed}
A network $\calM$ is well-formed iff 
\begin{enumerate}[label=(\roman*)]
\item whenever $\calM \vdash \someconn{m}{n}$ then 
either $m \in \nodes{\calM}$ or $n \in \nodes{\calM}$,
\item whenever $\calM \vdash m$ and $\calM \vdash \someconn{m}{n}$ 
for no $n \in (\calM)_V$, then $m \in \nodes{\calM}$. 
\qed
\end{enumerate}
\end{defi}
\noindent 
Henceforth we will only focus on well-formed networks, unless stated otherwise. 
We denote the set of well-formed networks as $\nets$.

Finally let us provide some definitions which will be useful in the sequel.
Let $\calM$ be a (well-formed) network. We say that a 
node $i \in \type{\calM}$ is an \emph{input node} if $\calM \vdash 
\lconn{m}{i}$ for some $m$\footnote{Note that by well-formedness this implies 
$m \in \nodes{\calM}$.}; conversely, if a node $o \in \type{\calM}$ is 
such that $\Gamma \vdash \rconn{m}{o}$ we say 
that $o$ is an \emph{output node}. If we let $\inp{\calM} = 
\{i \;|\; i \mbox{ is an input node in  } \calM\}$ and $\outp{\calM} = 
\{o \;|\; o \mbox{ is an output node in } \calM\}$ it is easy to check that 
$\type{\calM} = \inp{\calM} \cup \outp{\calM}$.


\subsection{Composing Networks}
\label{sec:comp.net}
One use of composition operators is to enable compositional reasoning. For example the task of establishing
\begin{align}\label{eq:behav}
  {\mathcal N}_1 \behavPre {\mathcal N}_2
\end{align}
can be simplified if we can discover a common component, that is some ${\mathcal N}$ such that 
\begin{math}
   {\mathcal N}_1 =  {\mathcal M}_1 \interleave  {\mathcal N}
\end{math}
and 
\begin{math}
   {\mathcal N}_2 =  {\mathcal M}_2 \interleave  {\mathcal N}
\end{math}
for some composition operator $\interleave$; then (\ref{eq:behav}) can be reduced to establishing 
\begin{align*}
  {\mathcal M}_1 \behavPre {\mathcal M}_2,
\end{align*}
assuming that the behavioural preorder in question, $\;\behavPre$, is preserved by the composition operator $\interleave$. 

However another use of a composition operator is in the definition of the behavioural preorder $\behavPre$ itself.
Intuitively we can define 
\begin{align}
  {\mathcal N}_1 \behavPre {\mathcal N}_2
\end{align}
to be true if for every component ${\mathcal T}$ which can be composed
with both $ {\mathcal N}_1$ and ${\mathcal N}_2$, the external
observable behaviour of the composite networks ${\mathcal N}_1
\interleave {\mathcal T}$ and ${\mathcal N}_2 \interleave {\mathcal
  T}$ are related in some appropriate way. ${\mathcal T}$
is a \emph{testing network} which is probing ${\mathcal N}_1$ and
${\mathcal N}_2$ for behavioural differences, along the lines used informally in 
Example~\ref{ex:wellformed}.  Intuitively this should
be \emph{black-box testing}, in which the tester, namely $\mathcal T$,
should have no access to the internal stations of the networks being
tested, namely ${\mathcal N}_1$ and ${\mathcal N}_2$. All it can do is
place code at their external interfaces,  to transmit values and
examine the subsequent effects, as seen again at the interfaces.  

\begin{defi}[Composing networks]\label{def:comp.nets}
For any two networks $\calM = \Gamma_M \with M$ and $\calP = \Gamma_P
\with P$
let $\calM \testP \calP$ be given by:
\begin{align*}
    &(\Gamma_M \with M ) \testP (\Gamma_P \with P) = 
    \begin{cases}
        (\Gamma_M \cup \Gamma_P) \with (M \Cpar P), & \text{if}\;  \nodes{\calM} \cap (\calP)_V =
  \emptyset\\
        \qquad\qquad\text{undefined},                          &\text{otherwise}
    \end{cases} 
\end{align*}
The composed connectivity graph $\Gamma_M \cup \Gamma_P$ 
is defined by letting $(\Gamma_M \cup \Gamma_P)_V = (\Gamma_M)_V 
\cup (\Gamma_P)_V$ and  $(\Gamma_M \cup \Gamma_N)_E = (\Gamma_M)_E \cup (\Gamma_P)_E$.
\qed 
\end{defi}
\noindent
The intuition here is that the composed network $\calM \testP \calP$
is constructed by \emph{extending} the network under test, $\calM$,
allowing code to be placed at its interface, and allowing completely
fresh stations to be added. These fresh stations can be used by the
tester to compute the results of probes made on $\calM$. 

\begin{prop}
\label{prop:testp.assoc}
Suppose $\calM, \calN, \calP  \in \nets$. 
Then 
\begin{enumerate}
\item $\calM \testP \calP  \in
  \nets$, whenever it is defined 

\item $(\calM \testP \calN)  \testP \calP  = 
 \calM \testP (\calN  \testP \calP )$, whenever both are defined 
\end{enumerate}
\end{prop}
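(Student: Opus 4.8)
The plan is to prove the two statements of Proposition~\ref{prop:testp.assoc} separately, both by unfolding Definition~\ref{def:comp.nets} and checking the conditions using the characterisations of well-formedness from Definition~\ref{def:well.formed}. For part~(1), I would assume $\calM \testP \calP$ is defined, so that $\nodes{\calM} \cap (\calP)_V = \emptyset$ and the composite is $(\Gamma_M \cup \Gamma_P) \with (M \Cpar P)$. I first observe that $\nodes{M \Cpar P} = \nodes{M} \cup \nodes{P}$ and that the node names in $M \Cpar P$ remain distinct (using the disjointness hypothesis together with the fact that $M, P$ are individually well-formed system terms), so that $M \Cpar P \in \sys$. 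Then I would verify the two clauses of Definition~\ref{def:well.formed} for the composite: given an edge $(\Gamma_M \cup \Gamma_P) \vdash \someconn{m}{n}$, this edge comes from either $\Gamma_M$ or $\Gamma_P$, and in each case well-formedness of the corresponding summand supplies an internal endpoint in $\nodes{\calM}$ or $\nodes{\calP}$, hence in $\nodes{M \Cpar P}$; the second clause about isolated nodes is handled similarly by locating the node in one of the two graphs.

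The genuinely delicate point in part~(1) is the first well-formedness clause, and this is where I expect the main obstacle to lie. The subtlety is that a node which is \emph{external} to $\calM$ (i.e.\ in $\type{\calM}$) may become \emph{internal} in the composite because $P$ places code there, and conversely an edge that was harmless in one network must not create a forbidden external-to-external connection in the union. Concretely, one must rule out the possibility that an edge $\someconn{m}{n}$ with both endpoints external to the composite survives: since each contributing edge had at least one endpoint internal to its own summand, I must argue that an endpoint internal to $\calM$ or $\calP$ is still internal to $\calM \testP \calP$, which follows because $\nodes{M \Cpar P} = \nodes{M} \cup \nodes{P}$ only grows the set of internal nodes. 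Care is needed to track how the interface sets $\type{\cdot}$ shrink under composition, and to confirm that no edge internal to one graph accidentally connects two nodes that are both external in the union.

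For part~(2), I would proceed purely at the level of the definitions. Writing $\calM = \Gamma_M \with M$, $\calN = \Gamma_N \with N$, $\calP = \Gamma_P \with P$, both sides unfold to the network $(\Gamma_M \cup \Gamma_N \cup \Gamma_P) \with (M \Cpar N \Cpar P)$, using associativity of set union for the vertex and edge sets of the composed connectivity graph, and associativity of $(\cdot \Cpar \cdot)$ on system terms (up to the structural congruence $\equiv$, which we have agreed to identify). The real content is matching the definedness side-conditions: I must show that the left-hand side $(\calM \testP \calN) \testP \calP$ is defined exactly when the right-hand side is. The left side requires $\nodes{\calM} \cap (\calN)_V = \emptyset$ and $\nodes{\calM \testP \calN} \cap (\calP)_V = \emptyset$, whereas the right side requires $\nodes{\calN} \cap (\calP)_V = \emptyset$ and $\nodes{\calM} \cap (\calN \testP \calP)_V = \emptyset$.

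Reconciling these definedness conditions is the second place where care is required, since $\testP$ is asymmetric and the two nestings impose superficially different constraints. Using $\nodes{\calM \testP \calN} = \nodes{\calM} \cup \nodes{\calN}$ and $(\calN \testP \calP)_V = (\calN)_V \cup (\calP)_V$, I would expand both conjunctions of side-conditions and show, by elementary set manipulation, that each is equivalent to the single symmetric requirement that $\nodes{\calM}$, $\nodes{\calN}$ and $(\calP)_V$ are pairwise suitably disjoint in the way dictated by $\testP$; the only genuine asymmetry, that internal nodes of an earlier component must avoid all vertices of a later one, distributes correctly across both associativity groupings. Once definedness is seen to coincide, the equality of the resulting networks is immediate from associativity of $\cup$ and $\Cpar$, and I would appeal to Corollary~\ref{cor:reduction} if any reasoning is carried out only up to $\equiv$.
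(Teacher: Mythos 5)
Your proposal is correct and follows essentially the same route as the paper's own proof: part (1) by establishing $M \Cpar P \in \sys$ from the disjointness hypothesis and then verifying the two clauses of Definition~\ref{def:well.formed} by tracing each edge or isolated node back to $\Gamma_M$ or $\Gamma_P$, and part (2) by showing that the two nestings' definedness side-conditions reduce to the same pairwise disjointness requirements, with equality then immediate from associativity of $\cup$ and $\Cpar$. The only differences are cosmetic: the paper also spells out the trivial checks that $(\Gamma_M \cup \Gamma_P)_E$ is irreflexive and that $\nodes{M \Cpar P} \subseteq (\Gamma_M \cup \Gamma_P)_V$, and it organizes part (2) as two directed implications rather than your symmetric expansion to a common set of conditions.
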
 
\begin{proof}
The two statements are proved separately. The proofs are given 
in a  separate appendix, Appendix \ref{sec:operator.results}; 
see pages \pageref{proof:testP.closed} and \pageref{proof:testP.assoc}.
\end{proof}

\begin{figure}[t]
\begin{align*}
\begin{tikzpicture}
            \node[state](m){$m$};
           \node[state](o1)[above right=of m]{$o_1$};
           \node[state](o2)[below right=of m]{$o_2$}; 
 \path[to]
       (m) edge [thick] (o1)
       (m) edge[thick] (o2);
   \begin{pgfonlayer}{background}
    \node [background,fit=(m)] {};
    \end{pgfonlayer}
  \end{tikzpicture}
  &&
\begin{tikzpicture}
            \node[state](o1){$o_1$};
            \node[state](e)[below right = of o1]{$e$};
           \node[state](o2)[below left=of e]{$o_2$}; 
 \path[tofrom]
       (o1) edge [thick] (o2);
\path[to]
       (o1) edge[thick] (e)
       (o2) edge[thick] (e); 
   \begin{pgfonlayer}{background}
    \node [background,fit=(o1) (o2) ] {};
    \end{pgfonlayer}
  \end{tikzpicture} 
  &&
  \begin{tikzpicture}
            \node[state](m){$m$};
           \node[state](o1)[above right=of m]{$o_1$};
           \node[state](o2)[below right=of m]{$o_2$}; 
           \node[state](e)[below right=of o1]{$e$}; 
 \path[tofrom]
       (o1) edge [thick] (o2);
\path[to]
       (m) edge[thick] (o1)
       (m) edge[thick] (o2)
       (o1) edge[thick] (e)
       (o2) edge[thick] (e); 
   \begin{pgfonlayer}{background}
    \node [background,fit=(m) (o1) (o2)] {};
    \end{pgfonlayer}
  \end{tikzpicture}
  \\
  \calM
  &&
  \calN
  &&
  (\calM \testP \calN)
  \end{align*}
  \caption{Network composition via $\testP$\label{fig:notsymmetric}}
\end{figure}
There is an inherent asymmetry in the definition of our
composition operator; in $\calM \testP \calP$ we allow $\calP$ to
place code at the interface nodes of $\calM$ but not the converse. As
a result the operator is not in general symmetric, as can be seen from
Example \ref{ex:notsymmetric}. 
\begin{exa}[$\testP$ is asymmetric]
\label{ex:notsymmetric}
Let $\calM, \calN$ be the networks depicted in Figure~\ref{fig:notsymmetric}. 
Here the network $\calM \testP \calN$ is well-defined and 
depicted on the right of Figure \ref{fig:notsymmetric}. Intuitively, this is 
the network obtained by extending $\calM$ with the information provided 
by $\calN$; these include the code running at nodes $o_1, o_2$, a connection 
between such nodes, and a connection from $o_1, o_2$, respectively, to a 
fresh node $e$, whose running code is left unspecified. 
The network $\calM \testP \calN$ is well-defined because 
none of the connections specified in $\calN$ involve the node $m \in \nodes{\calM}$; that is, when extending $\calM$ with $\calN$ it is ensured 
that the latter can interact with the node $m$ only via the nodes $o_1,o_2$. 

On the other hand, the composition $\calN \testP \calM$ is not defined. Intuitively,  
in $\calN$ nodes $o_1$ and $o_2$ can interact with the external environment only 
via the external node $e$. This is in contrast 
with  the definition of $\calM$, where node $m$ can be used to broadcast messages 
to such nodes.
\end{exa}

One might wonder if a symmetric composition operator could be used in place of our 
$\testP$; this point is discussed at length in Section~\ref{sec:justify}.
Nevertheless $\testP$ is a natural operator, and the next result shows that it can be
used to construct all non-degenerate networks starting from single nodes. 
Let $\mathbb{G}$ be the collection of networks which contain exactly one occupied node, 
that is $\mathbb{G} = \{(\Gamma \with G) \;|\; G \equiv \Cloc{s}{m} \mbox{ for some } s, m\}$. 

\begin{prop}\label{prop:network.decomp}
  Suppose ${\mathcal M}\in \nets$ is a network such that 
  $\nodes{\calM}$ is not empty.  Then ${\mathcal M} \equiv {\mathcal N} \testP
    {\mathcal G}$ for some$\calN \in \nets$ and ${\mathcal G} \in \mathbb{G}$.
\end{prop}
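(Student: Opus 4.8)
The plan is to peel a single occupied node off $\calM$ to form $\calG$, leave the remaining nodes in $\calN$, and split the connectivity graph $\Gamma$ between the two pieces so that both are well-formed and their union recovers $\Gamma$ exactly. Since $\nodes{\calM}$ is nonempty and $M$ is well-formed, I can write $M \equiv \Cloc{s}{m} \Cpar M'$ for some node $m$, state $s$, and system term $M'$ with $\nodes{M'} = \nodes{M} \setminus \{m\}$; the node $m$ becomes the sole occupied node of $\calG$, and $M'$ is the code of $\calN$. Since $M' \Cpar \Cloc{s}{m} \equiv M$ by the monoid laws for $\Cpar$, the system-term side of the equivalence is immediate, and all the real work lies in choosing the graphs.

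The delicate part is the partition of the edges of $\Gamma$, and this is where I would focus the argument. I would send to $\calG$ exactly the edges of $\Gamma$ joining $m$ to an interface node, setting $(\Gamma_G)_E = \{(m,x),(x,m) \in \Gamma_E \mid x \in \type{\calM}\}$ and $(\Gamma_G)_V = \{m\} \cup \{x \in \type{\calM} \mid \calM \vdash \someconn{m}{x}\}$. All remaining edges go to $\calN$, so $(\Gamma_N)_E = \Gamma_E \setminus (\Gamma_G)_E$, and I take $(\Gamma_N)_V$ to be $\nodes{M'}$ together with every endpoint occurring in $(\Gamma_N)_E$. Two constraints force this split. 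First, any edge between $m$ and an occupied node of $M'$ must stay in $\calN$: placing it in $\calG$ would put an element of $\nodes{\calN}$ into $(\Gamma_G)_V$, destroying the disjointness condition $\nodes{\calN} \cap (\calG)_V = \emptyset$ that Definition~\ref{def:comp.nets} requires for $\calN \testP \calG$ to be defined. Second, an edge between $m$ and an interface node must go to $\calG$: in $\calN$ the node $m$ is unoccupied, so such an edge would connect two external nodes of $\calN$, violating clause~(i) of well-formedness. Correspondingly, interface nodes of $\calM$ attached only to $m$ must be kept out of $(\Gamma_N)_V$, which is exactly why I define that vertex set from the $\calN$-edges rather than as $\Gamma_V \setminus \{m\}$.

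With the split fixed, the rest is routine verification. That $\calG \in \mathbb{G}$ and is well-formed is immediate, since its only occupied node is $m$ and every edge of $\Gamma_G$ is incident to $m$. For $\calN \in \nets$ I would check clause~(i) using that every edge of $\Gamma_N$ — whether inherited from an edge of $\Gamma$ avoiding $m$ or an $m$-to-$M'$ edge — has an endpoint in $\nodes{M'}$, which follows from well-formedness of $\calM$; clause~(ii) holds because $(\Gamma_N)_V$ contains only $\nodes{M'}$ and genuine endpoints of $\Gamma_N$-edges. Disjointness for the composition follows since $(\Gamma_G)_V \setminus \{m\} \subseteq \type{\calM}$ and $m \notin \nodes{M'}$. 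Finally I would verify $(\Gamma_N)_V \cup (\Gamma_G)_V = \Gamma_V$ — here clause~(ii) of well-formedness of $\calM$ is essential, since it guarantees every interface node is attached to some internal node and hence is covered by one of the pieces — together with $(\Gamma_N)_E \cup (\Gamma_G)_E = \Gamma_E$ by construction, so that $\Gamma_N \cup \Gamma_G = \Gamma$ and $\calN \testP \calG = (\Gamma \with M' \Cpar \Cloc{s}{m}) \equiv \calM$. The main obstacle is therefore not any single hard lemma but getting the edge/vertex partition exactly right so that well-formedness of both components and the disjointness side-condition all hold simultaneously; once the partition above is chosen, every remaining check is a direct unwinding of the definitions.
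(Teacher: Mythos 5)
Your proof is correct, and at the top level it mirrors the paper's: both arguments peel a single occupied node $m$ off $\calM$, put its code into $\calG$, and split $\Gamma$ between the two components, verifying well-formedness and the disjointness side-condition. The genuinely different choice is the edge partition, and it is not cosmetic. The paper's proof assigns to $\Gamma_G$ \emph{every} edge of $\Gamma$ incident to $m$, so that $(\Gamma_G)_V$ contains all of $m$'s neighbours, including occupied nodes of $\calN$; it then verifies $m \notin (\Gamma_N)_V$, which is the definedness condition for $\calG \testP \calN$, i.e.\ for the composition with the operands in the opposite order to the one in the statement. Because the definedness side-condition of $\testP$ is asymmetric, that construction leaves $\calN \testP \calG$ undefined whenever $m$ has an occupied neighbour (for instance, peeling node $i$ off the network $\calM$ of Figure~\ref{fig:ex1} puts the occupied nodes $n$ and $m$ into $(\Gamma_G)_V$). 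Your partition -- only the $m$-to-$\type{\calM}$ edges go to $\Gamma_G$, the $m$-to-$\nodes{M'}$ edges stay in $\Gamma_N$, and $(\Gamma_N)_V$ is generated from $\nodes{M'}$ and the endpoints of $\Gamma_N$-edges -- is exactly what makes $\nodes{\calN} \cap (\calG)_V = \emptyset$, hence makes $\calN \testP \calG$ itself defined, which is the form the proposition asserts and the form in which it is invoked later (e.g.\ the rewriting $\calL \equiv \calL' \testP \calG$ followed by associativity in the proof of Corollary~\ref{cor:composition}). So what the paper's choice buys is a marginally simpler description of $\calG$ as $m$ together with its full neighbourhood, at the cost of establishing the mirror-image composition; what your choice buys is the statement exactly as written, at the cost of the finer bookkeeping about which edges and vertices go where -- and that bookkeeping, as you present it, checks out.
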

%
\begin{proof}
See Appendix \ref{sec:operator.results}, Page \pageref{proof:generators}.
\end{proof}
\noindent 
We use the term \emph{generating networks} to refer to elements of $\mathbb{G}$. 
The import of Proposition \ref{prop:network.decomp} is that all 
non-trivial well-formed networks can be constructed from basic generating 
networks, using $\testP$. 

\leaveout{
The last property we prove for the extension operator $\testP$ is that 
it is deeply connected to the symmetric operator $\comp$ defined in Example \ref{ex:mas.op}. 
In fact, it turns out that every network of the form $\calM \testP \calN$ 
can be rewritten as $\calM \comp \calN'$, where $\calN'$ is determined 
completely by the topological structure of $\calM$ and $\calN$. Conversely, if 
$\calM \comp \calN$ is defined, we can rewrite it as $\calM \testP \calN''$ 
for some network $\calN''$.

\begin{defi} 
Let $\calM, \calN$ be two networks. 
\begin{enumerate}[label=(\roman*)]
\item The \emph{symmetric counterpart} of $\calN$ 
with respect to $\calM$,   
is the network $\symm{\calM}{\calN} = \calN'$ such that 
\begin{eqnarray*}
(\calN')_V &=& (\calN)_V \cup \{m : \calM \vdash \someconn{m}{n} \mbox{ for some } n \in \nodes{\calN}\}\\
(\calN')_E &=& (\calN)_E \cup \{(m,n)\;:\;n \in \nodes{\calN}, \calM \vdash \rconn{m}{n}\} \cup\\
&\cup& \{(m,n)\;:\;m \in \nodes{\calN}, \calM \vdash \rconn{m}{n}\}
\end{eqnarray*}
\item The \emph{extension counterpart} of $\calN$ with respect to 
$\calM$ is the network $\ext{\calM}{\calN} = \calN''$ such that 
\begin{eqnarray*}
(\calN'')_V &=& (\calN)_V \setminus \nodes{\calM}\\
(\calN'')_E &=& \{(m,n) \;:\; (\calN'') \vdash m,n \;,\; \calN \vdash \rconn{m}{n}\}
\end{eqnarray*}
\end{enumerate}
\qed
\end{defi}

\begin{prop}
\label{prop:change}
Let $\calM, \calN$ be two well-formed networks. 
\begin{enumerate}
\item if $\nodes{\calM} \cap \nodes{\calN} = \emptyset$, then both 
$\symm{\calM}{\calN}$ and $\ext{\calM}{\calN}$ are defined and well-formed,
\item if $\mathscr{P}_e(\calM, \calN)$ then $\ext{\calM}{\symm{\calM}{\calN}} = \calN$, 
\item if $\mathscr{P}_s(\calM, \calN)$ then $\symm{\calM}{\ext{\calM}{\calN}} = \calN$, 
\item $(\calM \testP \calN) = (\calM \comp \symm{\calM}{\calN})$,
\item $(\calM \comp \calN) = (\calM \testP \ext{\calM}{\calN})$.
\end{enumerate}
\end{prop}

\begin{proof}
See \cite{phdthesis}, propositions \textbf{4.2.3}, \textbf{4.2.4} and \textbf{4.2.5}.
\end{proof}

\begin{figure}
\begin{align*}
     \begin{tikzpicture}
          \node[state](m){$m$}; 
          \node[state](n)[above right=of m]{$n$};
          \node[state](l)[below right=of m]{$l$};
          \node[state](o)[above=of m]{$o$};
          \node[state](i)[below=of m]{$i$}; 
 \path[to]
       (m) edge [thick] (o)
       (i) edge[thick] (m);    
 \path[tofrom]
       (m) edge [thick] (n)
       (m) edge [thick] (l);
    \begin{pgfonlayer}{background}
    \node [background,fit=(m)] {};
    \end{pgfonlayer}
    \end{tikzpicture}
&&  
     \begin{tikzpicture}
          \node[state](m){$m$};
          \node[state](n)[above right=of m]{$n$}; 
          \node[state](e)[below right= of n]{$e$};
          \node[state](l)[below left=of e]{$l$};
 \path[tofrom]
       (m) edge [thick] (n)
       (m) edge [thick] (l)
       (n) edge [thick] (e)
       (l) edge [thick] (e);
    \begin{pgfonlayer}{background}
    \node [background,fit=(n) (l)] {};
    \end{pgfonlayer}
    \end{tikzpicture}
&&
     \begin{tikzpicture}
          \node[state](n){$n$};
          \node[state](e)[below right =of m]{$e$}; 
          \node[state](l)[below left= of e]{$l$};
 \path[tofrom]
       (n) edge [thick] (e)
       (l) edge [thick] (e);
    \begin{pgfonlayer}{background}
    \node [background,fit=(n) (l)] {};
    \end{pgfonlayer}
    \end{tikzpicture}
\\
\calM = \Gamma_M \with M 
&&
\calN = \Gamma_N \with N
&&
\calN' = \Gamma'_N \with N
\end{align*}

\caption{Three networks $\calM, \calN$ and $\calN'$, such that 
$\calN = \symm{\calM}{\calN'}$, $\calN' = \ext{\calM}{\calN}$.}
\label{fig:change}
\end{figure}

\begin{exa}[Change of connectivity graphs]
Consider the networks $\calM, \calN$ and $\calN'$ depicted in 
Figure \ref{fig:change}. 
Here it is easy to show that $\calN' = \ext{\calM}{\calN}$; conversely, 
$\calN = \symm{\calM}{\calN'}$. In practice, $\symm{\calM}{\calN'}$ is 
defined by adding to $\calN$ the connections between internal nodes 
of $\calM$ and internal nodes of $\calN$ which are defined in the former network. 
On the other hand, $\ext{\calM}{\calN}$ is defined by removing in 
$\calN$ all the nodes that are internal in $\calM$, together with the 
associated connections. 

This ensures that $(\calM \comp \calN)$ and $(\calM \testP \calN')$ are 
defined; further, by Proposition \ref{prop:change} 
they are also equivalent.
\end{exa}

The technical result below will be useful in the following.
\begin{lem}
\label{lem:input.struct}
Let $\calM$, $\calN$, $\calG$ be respectively two networks and 
a generating network such that $\calM \comp \calG$, $\calN \comp \calG$ 
are defined. Suppose also that $\outp{\calM} = \outp{\calN}$.
If $\inp{\symm{\calM}{\calG}} = \emptyset$ then $\inp{\symm{\calN}{\calG}} = \emptyset$.
\end{lem}

\begin{proof}
Let $\calM = \Gamma_M \with M$, $\calG \equiv \Gamma_G \with \Cloc{s}{n}$; by straightforward calculations 
we can check that $\inp{\symm{\calM}{\calG}} = \emptyset$ 
if and only if $n \notin \outp{\calM}$. 
Then if $\symm{\calM}{\calG} = \emptyset$ it follows that $n \notin \outp{\calM}$, 
which by hypothesis gives that $n \notin \outp{\calN}$. But this is equivalent 
to $\inp{\symm{\calN}{\calG}} = \emptyset$.
\end{proof}


}
\subsection{Testing Structures}
\label{sec:ts}

The introduction of the composition operator $\testP$ allows the development of 
a behavioural theory based on a probabilistic 
generalisation of  the  \textit{De Nicola} and \textit{Hennessy}  testing preorders \cite{dnh}. In order to develop 
such a framework, we will exploit the mathematical tools introduced in Section \ref{sec:background}; our aim is 
 to be able to relate networks with different connection topologies.

In our framework, as has already been indicated, testing can be summarised as follows: the network to be tested  is 
composed with another one, usually called a \emph{testing network}. 
The composition of these two networks is then isolated from the external environment, 
in the sense that no external agent (in our case nodes in the interface of the composed network) 
can interfere with its behaviour; we will shortly present how such a task can be 
accomplished. The composition of the two networks isolated from the external 
environment takes the name of \emph{experiment}.

Once these two operations (composition with a test and isolation from the external environment) 
have been performed, the behaviour of the resulting experiment is analysed to check whether 
there exists a computation that yields a state which is successful. This task can be 
accomplished by relying on testing structures, which will be presented shortly.

At an informal level, successful states in our language coincide with those associated with networks 
where at least the code running at one node has the special \emph{action} $\omega$ enabled. 
Since networks have probabilistic behaviour, each computation will be associated with 
the probability  of reaching a successful state; thus, every experiment will be 
associated with a set of success probabilities, by quantifying over all its computation.

Let us now look at how the procedure explained above can be formalised; the topic 
of composing networks has already been addressed in detail in Section \ref{sec:comp.net}, 
in which we defined the operator $\testP$ and proved basic properties for it. To 
model experiments and their behaviour, we rely on the following mathematical structure.

\begin{defi}\label{def:testing}
  A \emph{Testing Structure} (TS) is a triple 
$\langle S, \red, \omega \rangle$ where 
\begin{enumerate}[label=(\roman*)]
\item $S$ is a set of states,

\item the relation $ \red$ is a subset of $S \times \dist{S}$,

\item $\omega$ is a success predicate over $S$, that is $\omega: S
  \rightarrow \sset{\ttrue,\ffalse}$. \qed
\end{enumerate}
\end{defi}
\noindent
Testing structures can be seen as (degenerate) pLTSs where the only
possible action corresponds to the internal activity $\tau$, and the
transition $\ar{\tau}$ is defined to coincide with the reduction
relation $\red$. 
 Conversely every pLTS automatically determines a testing structure,
by concentrating on the relation $\ar{\tau}$.

Our goal is to turn a network into a testing structure. This amounts to defining, 
for networks, a reduction relation and the success predicate $\omega$.
As we have mentioned in the beginning of this section, when converting a 
network into a testing structure, we want to  make it 
isolated from the external environment. 

When considering simpler process languages, like \emph{CCS} or \emph{CSP} 
(and, more generally, their probabilistic counterparts), 
processes are converted into testing structures by identifying the reduction 
relation with the internal activity $\tau$; that is, processes are 
not allowed to synchronise with some external agent via a visible action. 

Networks, however, are more complicated objects; here the nature of 
broadcast is non-blocking, meaning that a broadcast can be 
fired by a node in a network without requiring any synchronisation 
with a (possibly external) node. Thus we expect broadcast 
actions to induce reductions when converting a network into a testing structure.

On the other hand, input actions always originate from non-internal 
nodes. Hence they can be seen as external activities which can influence 
 the behaviour of a 
network. Therefore, input actions should not be included in the definition 
of the reduction relation for networks. 

Finally, the success predicate $\omega$ is defined to be true for exactly 
those networks in which the success clause $\omega$ is enabled in 
at least one node.

\begin{exa}\label{ex:ts}
The main example of a TS is given by
\begin{align*}
  \langle \nets, \red, \omega \rangle
\end{align*}
where 
\begin{enumerate}[label=(\roman*)]
\item $(\Gamma \with M) \red (\Gamma \with \Delta)$ whenever
  \begin{enumerate}[label=(\alph*)]
  \item $\Gamma \with M \ar{m.\tau} \Delta$ for some $m \in \nodes{M}$

  \item or, $\Gamma \with M \ar{m.c!v} \Delta$ for some value $v$, node name $m$ and channel $c$
  \end{enumerate}
  
\item ~
\begin{align*}
  \omega(M) = 
  \begin{cases}
    \ttrue, &  \text{if } M \equiv M' \Cpar \Cloc{\omega + s}{n}, \;\;\text{for some } s,n, M'\\
    \ffalse, & \text{otherwise}
  \end{cases}
\end{align*}
\end{enumerate}
If $\omega(M) = \ttrue$ for some system term $M$, we say that a network $\Gamma \with M$, where 
$\Gamma$ is an arbitrary connectivity graph, is $\omega$-successful, or simply successful. 
Note that when recording an $\omega$-success we do not take into account the node involved.
  \end{exa}
\noindent
As TSs can be seen as pLTSs, we can use in an arbitrary TS the various constructions
introduced in Section~\ref{sec:background}. Thus the reduction relation 
 $\red$ can be lifted to 
$\subdist{\sys} \times \subdist{\sys}$ and 
we can make use of the concepts of 
hyper-derivatives and extreme-derivatives, introduced in Section \ref{sec:background}, to 
model fragments of executions and maximal executions of a testing structure, respectively.
Hyper-derivations in testing structures are denoted with the symbol $\Red$, while 
we use the symbol $\RedE$ for extreme derivations.

Below we provide two simple examples that show how to reason about the behaviour of the testing 
structures presented in Example \ref{ex:ts}.

\begin{exa}
\label{ex:hypder}
Consider the testing structure associated with the network $\calN$ in
the center of Figure \ref{fig:counterex}, where the code $q$ is given
by the definition $q \Leftarrow q \probc{0.5} c!\pc{v}.\Cnil$. We can
show that, in the long run, this network will broadcast message $v$ to
the external location $o$ by exhibiting a hyper derivation for it
which terminates in the point distribution $\pdist{\Gamma_N \with
  \Cloc{c!\pc{v}.\Cnil}{k}}$.  If we let $\calN_1$ denote the configuration $ \Gamma_N \with
\Cloc{c!\pc{v}.\Cnil}{k}$,  we have the following 
hyper-derivation:
$$
\begin{array}{lclcl}
  \pdist{\calN} &=& \frac{1}{2} \;\cdot\; \pdist{\calN} &+ &
  \frac{1}{2} \;\cdot\;\pdist{\mathcal{N}_1}\\
  \frac{1}{2} \;\cdot\;\pdist{\calN}  &\red& \frac{1}{2^2}\;\cdot\; \pdist{\calN} &+ &
  \frac{1}{2^2}\;\cdot\; \pdist{\mathcal{N}_1} \\
&\vdots& \\
  \frac{1}{2^n}\;\cdot\; \pdist{\calN}  &\red& \frac{1}{2^{n+1}}\;\cdot\; \pdist{\calN} &+ &
  \frac{1}{2^{n+1}}\;\cdot\; \pdist{\calN_1} \\
&\vdots    &   \\
\end{array}
$$
\noindent
Let $\Delta' = \sum_{n=1}^\infty \ \frac{1}{2^{n}}\;\cdot\; \pdist{\calN_1}$. 
It is straightforward to check  that
\begin{math}
\Delta' =
\pdist{\calN_1}
\end{math}
and therefore we have the hyper-derivation $\calN \Red{\;\;\;} \pdist{\calN_1} $. 
\end{exa}

An arbitrary network $\calN$ can be tested by another (testing) network $\calT$ provided 
$\calN \testP \calT$ is well-defined.
Executions of the resulting testing structure will then be checked to establish whether 
the network $\calM$ satisfies a property the test was designed for; in such a case, the testing component 
of an experiment will reach a $\omega$-successful state. 

Executions, or maximal computations, correspond to
extreme derivatives in the testing structure associated with $(\calN \testP \calT)$, 
as defined in Section~\ref{sec:background}. 
Since the framework is probabilistic, each execution (that is  extreme derivative)  will be associated with 
a probability value, representing the probability that it will lead to an $\omega$-successful state.
Since the framework is also nondeterministic the possible results of this test application is given by a
non-empty set of probability values. 
\begin{defi}[Tabulating results]\label{def:resultsets}
 The \emph{value} of a sub-distribution in a TS is given by the function   
$\ValN : \subdist{S} \rightarrow [0,1]$,  defined by
    $\Val{\Delta} = \sum\setof{ \Delta(s)}{ \omega(s) = \ttrue}$.
Then the set of possible results from a sub-distribution $\Delta$  is defined by  
$\Results{\Delta} = \setof{\Val{\Delta'}}{ \Delta \RedE{} \Delta'}$. 
\qed
\end{defi}

  
\begin{exa}

Let $\calN$ be the network from Example \ref{ex:hypder} and 
consider the testing network $\calT$ given in 
Figure \ref{fig:counterex}, where the code is determined by
$t \Leftarrow c?\pa{x}.\omega.\Cnil$. 
It is easy to check that $\calN \testP \calT$ is well-defined and is equal to 
$\Gamma \with \Cloc{q}{k} \Cpar \Cloc{t}{o}$, where $\Gamma$ is the connectivity 
graph containing the three nodes $k, o,l$ and having the connections from $\Gamma 
\vdash \rconn{k}{o}$ and $\Gamma \vdash \lconn{o}{t}$.
So consider the testing structure associated with it; recall that we have the definition 
$q \Leftarrow q \probc{0.5} c!\pc{v}.\Cnil$. 
For convenience  let $\calN_1 = \Gamma_N \with \Cloc{c!\pc{v}.\Cnil}{k}$ as in the previous example, 
$\calN_2 = \Gamma_N \with \Cloc{\Cnil}{k}$ and $\calT_{\omega} = \Gamma_T \with \Cloc{\omega.\Cnil}{o}$.
Then we have the following  hyper-derivation for $\calN \testP \calT$:
$$
  \begin{array}{lclcl}
    \pdist{\calN \testP \calT} &\red& (\frac{1}{2}\cdot \pdist{\calN \testP \calT} + \frac{1}{2} \cdot \pdist{\calN_1 \testP \calT})
    &+& \varepsilon\\
     \frac{1}{2}\cdot \pdist{\calN \testP \calT} + \frac{1}{2} \cdot \pdist{\calN_1 \testP \calT} &\red &
    (\frac{1}{2^2}\cdot \pdist{\calN \testP \calT} + \frac{1}{2^2} \cdot \pdist{\calN_1 \testP \calT}) &+&
    \frac{1}{2} \cdot \pdist{\calN_2 \testP \calT_{\omega}}\\
     \vdots&    &\vdots& &\vdots\\
    \frac{1}{2^n}\cdot \pdist{\calN \testP \calT} + \frac{1}{2^n} \cdot \pdist{\calN_1 \testP \calT} & \red&
    (\frac{1}{2^{n+1}}\cdot \pdist{\calN \testP \calT} + \frac{1}{2^{n+1}} \cdot \pdist{\calN_1 \testP \calT}) & + &
    \frac{1}{2^n} \cdot \pdist{\calN_2 \testP \calT_{\omega}}\\
     \vdots& &\vdots&  &\vdots    
  \end{array}
$$
  were we recall that $\varepsilon$ denotes the empty sub-distribution, that is the one with $\support{\varepsilon} = \emptyset$.
  We have therefore the hyper-derivation 
  \begin{equation*}
    \pdist{\calN \testP \calT} \Red{} \varepsilon + 
    \sum_{n = 1}^{\infty} \frac{1}{2^n} \pdist{\calN_2 \testP \calT_{\omega}}
    = \pdist{\calN_2 \testP \calT_\omega}.
  \end{equation*}
  Further, the above hyper-derivation satisfies the constraints required by $\RedE{}$, defined in Section~\ref{sec:background}, and 
  therefore we have the extreme derivative $\pdist{\calN \testP \calT} \RedE{} \pdist{\calN_2 \testP \calT_{\omega}}$. 
  Since $\Val{ \pdist{\calN_2 \testP \calT_{\omega}}} = 1$    we can therefore deduce that $1 \in \Results{\calN \testP \calT}$.
  \end{exa}

\subsection{The behavioural preorders}
\label{sec:maypreord}

We now combine the concepts of the previous two sections to obtain our behavioural preorders.
We have seen how to associate a non-empty set of probabilities, tabulating 
the possible outcomes from applying the test $\calT$ to the network $\calN$. 
As explained in \cite{DGHM09full}
there are two natural ways to compare such sets, optimistically or pessimistically. 

\begin{defi}[Relating sets of outcomes]
\label{def:relsets}
Let ${\mathcal O}_1$, ${\mathcal O}_2$ be two sets of values in 
$[0,1]$.
\begin{enumerate}[label=(\roman*)]
\item The \emph{Hoare's Preorder} is defined by letting 
${\mathcal O}_1 \sqsubseteq_{H} {\mathcal O}_2$ whenever 
for any $p_1 \in {\mathcal O}_1$ there exists $p_2 \in {\mathcal O}_2$ 
such that $p_1 \leq p_2$.
\item The \emph{Smith's Preorder} is defined by letting 
${\mathcal O}_1 \sqsubseteq_{S} {\mathcal O}_2$ if 
for any $p_2 \in {\mathcal O}_2$ there exists $p_1 \in {\mathcal O}_1$ 
such that $p_1 \leq p_2$.\qed
\end{enumerate}
\end{defi}
%
%
\noindent Given two networks $\calM, \calN$ we can relate their behaviour, 
when extended with a testing network $\calT$, by comparing the success 
outcomes of $\calM \testP \calT$ and $\calN \testP \calT$ (provided both these 
networks are defined) via Definition \ref{def:relsets}. 
We can go further and consider what is the relationship between such sets of 
outcomes with respect to all possible tests $\calT$ which can be used 
to extend the networks $\calM, \calN$.
 
\begin{defi}[Testing networks]
\label{def:maytest}
 For ${\mathcal M}_1,\,{\mathcal M}_2 \in \nets$ 
  such that
  $\inp{{\mathcal M}_1} = \inp{{\mathcal M}_2}$, 
  $\outp{\calM_1} = \outp{\calM_2}$, we write 
  \begin{enumerate}
  
  \item $\calM_1 \Mayleq \calM_2$ iff 
for every (testing) network $\mathcal T \in \nets$ such that both 
        ${\mathcal M}_1 \testP {\mathcal T}$ and  ${\mathcal M}_2 \testP {\mathcal T}$ are defined,
        \begin{math}
           \Results{{\mathcal M}_1 \testP {\mathcal T}}     \sqsubseteq_{H} \Results{{\mathcal M}_2 \testP {\mathcal T} }.   
        \end{math} 
  \item $\calM_1 \Mustleq \calM_2$ iff for every (testing network) $\calT \in \nets$ such that 
  both $\calM_1 \testP \calT$ and $\calM_2 \testP \calT$ are defined, 
  \begin{math}
  \Results{\calM_1 \testP \calT} \sqsubseteq_{S} \Results{\calM_2 \testP \calT}
  \end{math}
  \end{enumerate}
We use ${\mathcal M}_1 \Mayeq {\mathcal M}_2$ as an abbreviation for ${\mathcal M}_1 \Mayleq {\mathcal M}_2$  and 
${\mathcal M}_2 \Mayleq {\mathcal M}_1$. The relation $\Musteq$ is defined similarly. 
Finally, we say that $\calM_1 \sqsubseteq \calM_2$ iff both $\calM_1 \Mayleq \calM_2$ and 
$\calM_1 \Mustleq \calM_2$ hold, and $\calM_1 \simeq \calM_2$ iff $\calM_1 \sqsubseteq \calM_2$ 
and $\calM_2 \sqsubseteq \calM_1$.
~\qed
\end{defi}

Some explanation is necessary for the requirement on the interface of networks we have 
placed in Definition \ref{def:maytest}. 
This constraint establishes that two networks $\calM$ and $\calN$ are always distinguished 
if the sets of their input or output nodes differ. As we already mentioned, external nodes 
can be seen as terminals that can be accessed by the external environment to interact with 
the network. Roughly speaking, the constraint we have placed corresponds to the intuition 
that the external environment can distinguish two networks 
$\calM$ and $\calN$ by simply looking at the terminals that it can use to interact with these 
two networks. 


\begin{figure}
\begin{tikzpicture}
            \node[state](o1){$o_1$};
           \node[state](o2)[below=of o1]{$o_2$};
           \node[state](e)[left  =of o2]{$e$}; 
           \node[state](o)[right =of o1]{$o$}; 
 \path[tofrom]
       (o1) edge [thick] (o2)
       (o1) edge[thick] (o)
       (o2) edge[thick] (o); 
   \begin{pgfonlayer}{background}
    \node [background,fit=(o1) (o2) (o) (e)] {};
    \end{pgfonlayer}
  \end{tikzpicture}

\caption{ A test} 
\label{fig:testex}
\end{figure}

\begin{exa}\label{ex:testing} 
  Consider the testing network 
\begin{align*}
  &{\mathcal T} = \Gamma_T \with \Cloc{c!\pc{0}}{e} \Cpar \Cloc{d?\pa{x}.c!\pc{x}}{o_1} 
                       \Cpar \Cloc{d?\pa{x}.c!\pc{x}}{o_2} 
                       \Cpar \Cloc{c?\pa{x}.c?\pa{y}.\omega}{o} 
\end{align*}
where the connectivity is described in Figure~\ref{fig:testex}. This can be used to test the networks 
$\calM, \calN$ from Example~\ref{ex:ex1} in the testing structure of Example~\ref{ex:ts}. 
 
Intuitively the test sends the value $0$ along the channel $c$ at the node $e$, awaits for results along the channel
$d$ at the nodes $o_1$ and $o_2$. These results are processed at node $o$, where success might be announced. 

The combined network
$(\calM \testP {\mathcal T})$ is deterministic in this TS, although probabilistic, and so has only one extreme
derivative; $\Results{ \calM \testP {\mathcal T}} = \sset{0.8}$.
 
A similar calculation shows that  $\Results{ \calN \testP {\mathcal T}} = \sset{0.81}$;  it 
therefore follows that $\calN \not \Mayleq \calM$ and $\calM \not\Mustleq \calN$. 

Consider now the networks $\calM_1, \calM_2$ from Example \ref{ex:ex1}. 
Here $(\calM_1 \testP {\mathcal T})$ is both probabilistic and nondeterministic, and
 $\Results{ \calM_1 \testP {\mathcal T}} = \setof{p}{ 0.5 \leq p \leq 1}$. Moreover 
 we have 
\begin{math}
   \Results{ \calM \testP {\mathcal T}} \sqsubseteq_{H}  \Results{\calM_1 \testP {\mathcal T}}. 
\end{math}

The combined network $(\calM_2 \testP {\mathcal T})$ is also deterministic, although it has 
\emph{limiting behaviour}; $\Results{ \calM_2 \testP {\mathcal T}} = \sset{1}$. Thus, in this case we 
have both $\Results{ \calM \testP {\mathcal T}} \sqsubseteq_{H} \Results{ \calM_2 \testP {\mathcal T}}$ and 
$\Results{ \calM_1 \testP {\mathcal T}} \sqsubseteq_{H} \Results{ \calM_2 \testP {\mathcal T}}$. 
Further, we have that 
$\Results{\calM_1 \testP \calT} \sqsubseteq_{S} \Results{\calM_2 \testP \calT}$, but 
$\Results{\calM_2 \testP \calT} \not\sqsubseteq_{S} \Results{\calM_1 \testP \calT}$.
 \end{exa}

\begin{figure}
\begin{align*}
     \begin{tikzpicture}
          \node[state](m){$m$}; 
          \node[state](n)[below=of m]{$n$};
          \node[state](o1)[right=of m]{$o_1$}; 
          \node[state](o2)[right=of n]{$o_2$};  
 \path[tofrom]
       (m) edge[thick] (o1)
       (n) edge[thick] (o2);
   \begin{pgfonlayer}{background}
    \node [background,fit=(m) (n)] {};
    \end{pgfonlayer}
    \end{tikzpicture}
&&
      \begin{tikzpicture}
          \node[state](m){$m$}; 
          \node[state](o1)[above right=of m]{$o_1$}; 
          \node[state](o2)[below right=of m]{$o_2$};
 \path[tofrom]
       (m) edge[thick] (o1)
       (m) edge[thick] (o2);
   \begin{pgfonlayer}{background}
    \node [background,fit=(m)] {};
    \end{pgfonlayer}
    \end{tikzpicture}
\\
\calM = \Gamma_M \with \Cloc{c!\pc{v}.\Cnil}{m} \Cpar \Cloc{c!\pc{v}.\Cnil}{n}
&&
\calN = \Gamma_N \with \Cloc{c!\pc{v}}{m}
\end{align*}

\caption{Broadcast vs Multicast}
\label{fig:bcast}
\end{figure}

 \begin{exa}[Broadcast vs Multicast]\label{ex:bcast1}
   Consider the networks $\calM$ and $\calN$ in Figure~\ref{fig:bcast}. 
Intuitively in $\calN$ the value $v$ is (simultaneously) broadcast to both nodes $o_1$
and $o_2$ while in $\calM$ there is a multicast. More specifically $o_1$ receives $v$ from
mode $m$ while in an independent broadcast $o_2$ receives it from $n$. 

This difference in behaviour can be detected (when we compare the networks optimistically) by the testing network
\begin{eqnarray*}
  \calT &=& \Gamma_T \with \Cloc{c?\pa{x}.c!\pc{w}.\Cnil}{o_1} \Cpar \Cloc{c?\pa{x}.c?\pa{y}. \Cmatch{y=0}{\Cnil}{\omega}}{o_2}
\end{eqnarray*}
assuming $v$ is different than $w$; here we assume $\Gamma_T$ is the simple network which connects $o_1$ to 
$o_2$.  
Both $\calM \testP \calT$ and $\calN \testP \calT$ are well-formed and note that
they are both non-probabilistic. 

Because $\calN$ simultaneously broadcasts to $o_1$ and $o_2$ the second value received by
$o_2$ is always $w$ and therefore the test never succeeds; $\Results{\calN \testP \calT} = \sset{0}$. 
On the other-hand there is a possibility for the test succeeding when applied to $\calM$, 
$ 1 \in \Results{\calM \testP \calT}$.  This is because in $\calM$ node $m$ might first transmit $v$ to 
$o_1$ after which $n$ transmits $w$ to $o_2$; now node $n$ might transmit the value $v$ to $o_2$ 
and assuming it is different than $w$ we reach a success state. It follows that 
$\calM \not\Mayleq \calN$. 

Note that we can slightly modify the test $\calT$ to show that we 
also have $\calN \not\Mustleq \calN$. To this end, let 
\[
\calT' = \Gamma_T \with \Cloc{c?\pa{x}.c!\pc{0}.\Cnil}{o_1} \Cpar \Cloc{c?\pa{x}.c?\pa{y}. \Cmatch{y=0}{\omega}{\Cnil}}{o_2}
\]
In this case we have that $\Results{\calM \testP \calT'} = \{0,1\}$, while 
$\Results{\calN \testP \calT'} = \{1\}$, and by definition 
$\Results{\calN \testP \calT'} \not\sqsubseteq_{S} \Results{\calM \testP \calT'}$.

One might also think it possible to use the difference between broadcast and multicast to design a test 
$\calT''$ for which 
$\Results{\calN \testP \calT''} \not\sqsubseteq_{H} \Results{\calM \testP \calT''}$ 
and $\Results{\calM \testP \calT''} \not\sqsubseteq_{S} \Results{\calN \testP \calT''}$. 

For example, if we let $\calT'' = \calT'$ we obtain that $1 \in \Results{\calN \testP \calT'}$. 
This is because because in $\calN \testP \calT'$ the second value received by $o_2$ is always $w$. 
However we also have that $1 \in \Results{\calM \testP \calT'}$, since the
simultaneous broadcast in $\calN$ can be simulated by a multicast in $\calM$, 
by node $m$ first broadcasting to $o_1$ followed by
$n$ broadcasting to $o_2$. As this line of reasoning is independent from the test $\calT'$, 
it also applies to all those networks that can be used to test the behaviour of $\calM$ and $\calN$; 
this leads to the intuition that $\calN \Mayleq \calM$, which will be proved formally later 
as a consequence of Example \ref{ex:bcast} and Theorem \ref{thm:may.sound}. 
Similarly, Theorem \ref{thm:must.sound} shows that $\calM \Mustleq \calN$.

\end{exa}

One pleasing property of the behavioural preorders is that they allow 
compositional reasoning over networks.
\begin{prop}[Compositionality]
\label{prop:compmay}
Let $\calM_1, \calM_2$ be two networks such that $\calM_1 \Mayleq \calM_2$ 
($\calM_1 \Mustleq \calM_2$), and 
let $\calN$ be another network such that both $(\calM_1 \testP \calN)$ and $(\calM_2 \testP \calN)$ 
are defined. Then $(\calM_1 \testP \calN) \Mayleq (\calM_2 \testP \calN$) 
($(\calM_1 \testP \calN) \Mustleq (\calM_2 \testP \calN)$).
\end{prop}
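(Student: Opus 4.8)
The plan is to prove the statement by absorbing the common component $\calN$ into the test, exploiting the associativity of $\testP$ established in Proposition~\ref{prop:testp.assoc}. I treat the $\Mayleq$ case explicitly; the $\Mustleq$ case is word-for-word identical, replacing $\sqsubseteq_{H}$ by $\sqsubseteq_{S}$ throughout and invoking the corresponding clause of Definition~\ref{def:maytest}.

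First I would discharge the interface precondition implicit in Definition~\ref{def:maytest}: for $(\calM_1 \testP \calN) \Mayleq (\calM_2 \testP \calN)$ to be a meaningful instance of the preorder at all, I must check that $\inp{\calM_1 \testP \calN} = \inp{\calM_2 \testP \calN}$ and $\outp{\calM_1 \testP \calN} = \outp{\calM_2 \testP \calN}$. Starting from $\inp{\calM_1} = \inp{\calM_2}$ and $\outp{\calM_1} = \outp{\calM_2}$ (which hold since $\calM_1 \Mayleq \calM_2$), I would first observe that the node set of the interface is $\type{\calM_i \testP \calN} = (\type{\calM_i} \setminus \nodes{\calN}) \cup \type{\calN}$, using the definition of $\testP$ together with the definedness requirement $\nodes{\calM_i} \cap (\calN)_V = \emptyset$; since $\type{\calM_1} = \type{\calM_2}$ this set is the same for $i=1,2$. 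Then, for each surviving external node $x$, a short case analysis on whether $x$ lies in $\type{\calM_i}$, in $\type{\calN}$, or in both shows that $x$ is an input (respectively output) node of the composite iff it belongs to $\inp{\calM_i} \cup \inp{\calN}$ (respectively $\outp{\calM_i} \cup \outp{\calN}$); the key point, that no edge of $\Gamma_{M_i}$ joins two external nodes of $\calM_i$, is exactly clause~(i) of well-formedness (Definition~\ref{def:well.formed}). As these characterisations depend on $\calM_i$ only through its already-equal input and output sets, the two composites have the same interface.

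The heart of the argument is then short. Fix an arbitrary testing network $\calT \in \nets$ such that both $(\calM_1 \testP \calN) \testP \calT$ and $(\calM_2 \testP \calN) \testP \calT$ are defined. By associativity (Proposition~\ref{prop:testp.assoc}, clause~2) each of these equals $\calM_i \testP (\calN \testP \calT)$, and by closure (Proposition~\ref{prop:testp.assoc}, clause~1) the network $\calT' = \calN \testP \calT$ lies in $\nets$; moreover both $\calM_1 \testP \calT'$ and $\calM_2 \testP \calT'$ are defined, precisely because the original composites are. I may therefore instantiate the hypothesis $\calM_1 \Mayleq \calM_2$ at the test $\calT'$, obtaining $\Results{\calM_1 \testP \calT'} \sqsubseteq_{H} \Results{\calM_2 \testP \calT'}$. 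Rewriting $\calM_i \testP \calT' = (\calM_i \testP \calN) \testP \calT$ back via associativity yields $\Results{(\calM_1 \testP \calN) \testP \calT} \sqsubseteq_{H} \Results{(\calM_2 \testP \calN) \testP \calT}$. Since $\calT$ was arbitrary, Definition~\ref{def:maytest} gives $(\calM_1 \testP \calN) \Mayleq (\calM_2 \testP \calN)$.

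The main obstacle is not the associativity re-bracketing, which is immediate, but the bookkeeping it conceals: I must verify that the definedness side-conditions genuinely transfer across the rewrite, so that $\calT'$ is a \emph{legitimate} test for both $\calM_1$ and $\calM_2$, and I must establish the interface lemma of the second paragraph, whose case analysis is where well-formedness and the disjointness built into $\testP$ do the real work. Everything else is a direct appeal to Proposition~\ref{prop:testp.assoc} and the definitions.
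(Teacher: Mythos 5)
Your proposal is correct and takes essentially the same route as the paper: the paper's entire proof reads ``a direct consequence of $\testP$ being both associative and interface preserving,'' which is precisely the argument you spell out---associativity (Proposition~\ref{prop:testp.assoc}) to absorb $\calN$ into the test $\calT' = \calN \testP \calT$, and interface preservation to discharge the precondition of Definition~\ref{def:maytest}. Your explicit treatment of the definedness transfer and the interface computation merely fills in bookkeeping the paper leaves implicit.
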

\begin{proof}
A direct consequence of $\testP$ being both associative and interface preserving.
\end{proof}
\noindent
We end this section with an 
application of this compositionality result.

\begin{figure}[t]
                                
\begin{align*}
     \begin{tikzpicture}
          \node[state](m){$m$}; 
          \node[state](n)[below=of m]{$n$}; 
          \node[state](k)[right=of m]{$k$}; 
           \node[state](o)[right=of k]{$o$}; 
 \path[to]
       (m) edge [thick] (n)
       (m) edge[thick] (k)
       (n) edge[thick] (k)
       (k) edge[thick] (o);
   \begin{pgfonlayer}{background}
    \node [background,fit=(m) (n) (k)] {};
    \end{pgfonlayer}
    \end{tikzpicture}
&&  
      \begin{tikzpicture}
          \node[state](m){$m$}; 
          \node[state](k)[right=of m]{$k$}; 
           \node[state](o)[right=of k]{$o$}; 
 \path[to]
       (m) edge[thick] (k)
       (k) edge[thick] (o);
   \begin{pgfonlayer}{background}
    \node [background,fit=(m) (k)] {};
    \end{pgfonlayer}
    \end{tikzpicture}
\\
\calM = \Gamma_M \with \Cloc{A_m}{m} \Cpar  \Cloc{A_n}{n} \Cpar \Cloc{A_k}{k} 
&&
\calN = \Gamma_N \with \Cloc{B_m}{m} \Cpar  \Cloc{A_k}{k}
\end{align*}
  \caption{Two networks with a common sub-network\label{fig:similar}}

\begin{align*}
     \begin{tikzpicture}
          \node[state](m){$m$}; 
          \node[state](n)[below=of m]{$n$}; 
          \node[state](k)[right=of m]{$k$};  
 \path[to]
       (m) edge [thick] (n)
       (m) edge[thick] (k)
       (n) edge[thick] (k);
   \begin{pgfonlayer}{background}
    \node [background,fit=(m) (n)] {};
    \end{pgfonlayer}
    \end{tikzpicture}
&&
&&
      \begin{tikzpicture}
          \node[state](m){$m$}; 
          \node[state](k)[right=of m]{$k$}; 
 \path[to]
       (m) edge[thick] (k);
   \begin{pgfonlayer}{background}
    \node [background,fit=(m)] {};
    \end{pgfonlayer}
    \end{tikzpicture}
&&
&&
     \begin{tikzpicture}
          \node[state](k){$k$}; 
           \node[state](o)[right=of k]{$o$}; 
 \path[to]
        (k) edge[thick] (o);
   \begin{pgfonlayer}{background}
    \node [background,fit=(k)] {};
    \end{pgfonlayer}
    \end{tikzpicture}
\\
\calM_1 = \Gamma_1 \with \Cloc{A_m}{m} \Cpar \Cloc{A_n}{n}
&&
&&
\calN_1 = \Gamma_2 \with \Cloc{B_m}{m}
&&
&&
\calK = \Gamma_K \with \Cloc{A_k}{k}
&&
\end{align*}
   \caption{Decomposition of the networks $\calM$ and $\calN$\label{fig:decomp}  }
\end{figure}
\begin{exa}
Consider the networks $\calM$ and $\calN$ in Figure \ref{fig:similar}, where the codes at the various
nodes are given by
\begin{eqnarray*}
  A_m &\Leftarrow& c!\pc{v}.\Cnil\\
  A_n &\Leftarrow&c?\pa{x}.d!\pc{w}.\Cnil\\
  A_k &\Leftarrow& c?\pa{x}.d?\pa{y}.e!\pc{u}.\Cnil\\
  B_m &\Leftarrow& c!\pc{v}.d!\pc{w}.\Cnil
\end{eqnarray*}
 It is 
possible to write both of them respectively as $\calM_1 \testP \calK$ and 
$\calN_1 \testP \calK$, where the networks $\calM_1, \calN_1$ and $\calK$ 
are depicted in Figure \ref{fig:decomp}. In order to prove that $\calM 
\Mayleq \calN$ ($\calN \Mustleq \calM$), it is therefore sufficient to focus on their 
respective sub-networks $\calM_1$ and $\calN_1$,  and prove
$\calM_1 \Mayleq \calN_1$ ($\calN_1 \Mustleq \calM_1$). The equivalence of the two 
original networks will  then follow from a direct application 
of Proposition \ref{prop:compmay}.
\end{exa}

\subsection{Justifying the operator $\testP$}
\label{sec:justify}

Here we revisit Definition~\ref{def:comp.nets} and in particular investigate the possibility
of using alternative composition operators. The remainder of the paper is independent of this
section and so it may be safely skipped by the uninterested reader. 

Here we take a more general approach to composition; rather than give a particular operator we discuss
natural properties we would expect of such operators. Let us just presuppose a \emph{consistency predicate}
$\mathscr{P}$ on pairs of networks determining when their composition should be defined. The only requirement
on $\calM  \,\mathscr{P}\, \calN$ is that whenever it is  defined the resulting composite network is well-formed. 
Since the composite network should be  determined by that of its components this amounts to requiring that 
\begin{align}\label{eq:iscons}
  & \mathscr{P}(\calM, \calN) \;\;\text{implies}\;\; \nodes{\calM} \cap \nodes{\calN} = \emptyset 
   \qquad\qquad \text{for any $\calM, \calN \in \nets$}
\end{align}

\noindent Given a consistency predicate satisfying (\ref{eq:iscons}) we can now generalise Definition~\ref{def:comp.nets} to 
give a range of different composition operators. 
\begin{defi}[General composition of  networks]\label{def:gen.comp.nets}
  Let $\mathscr{P}$ be a consistency  predicate on networks in \nets satisfying 
  (\ref{eq:iscons}). Then we define the associated partial composition
  relation by:
\begin{align*}
    &(\Gamma_M \with M ) \interleave_{\mathscr{P}} (\Gamma_N \with N) = 
    \begin{cases}
        (\Gamma_M \cup \Gamma_N ) \with (M \Cpar N), & \text{if}\;  \mathscr{P}((\Gamma_M \with M ), (\Gamma_N \with N)) \\
        \qquad\qquad\text{undefined},                          &\text{otherwise}
    \end{cases} 
\end{align*}
The connectivity graph $\Gamma_M \cup \Gamma_N$ is defined as in Definition~\ref{def:comp.nets}.
\qed 
\end{defi}

\begin{exa}\label{ex:mas.op}
  Let $\mathscr{P}_s$ be the partialss binary predicate defined by letting 
$\mathscr{P}_s(\calM, \calN)$ whenever
\begin{itemise}
\item $\nodes{\calM} \cap \nodes{\calN} = \emptyset$
\item   $\calM \vdash \rconn{m}{n}$ if and only if 
 $\calN \vdash \rconn{m}{n}$, for every $m \in \nodes{\calM}$ and $n \in \nodes{\calN}$, 
 \item $\calM \vdash \lconn{m}{n}$ if and only if 
 $\calN \vdash \lconn{m}{n}$, for every $m \in \nodes{\calM}$ and $n \in \nodes{\calN}$,
\end{itemise}

\noindent By definition this satisfies the requirement (\ref{eq:iscons}) above,
and intuitively it only allows the composition whenever the two
individual networks agree on the interconnections between internal and external nodes.

For notational convenience we denote the operator $\interleave_{\mathscr{P}_s}$ with 
$\comp$. It is easy to check that this operator 
is both associative and commutative.
\end{exa}

\begin{figure}[t]                                 
\begin{align*}
     \begin{tikzpicture}
          \node[state](l){$l$}; 
          \node[state](n)[right=of l]{$o$}; 
 \path[to]
       (l) edge [thick](n);
   \begin{pgfonlayer}{background}
    \node [background,fit=(l) ] {};
    \end{pgfonlayer}
    \end{tikzpicture}
&&  
&& 
     \begin{tikzpicture}
          \node[state](l){$k$}; 
          \node[state](n)[right=of l]{$o$}; 
 \path[to]
       (l) edge [thick](n);
   \begin{pgfonlayer}{background}
    \node [background,fit=(l) ] {};
    \end{pgfonlayer}
    \end{tikzpicture}  
&&
&&
     \begin{tikzpicture}
          \node[state](l){$o$}; 
          \node[state](n)[right=of l]{$l$}; 
 \path[to]
       (n) edge [thick](l);
   \begin{pgfonlayer}{background}
    \node [background,fit=(l) ] {};
    \end{pgfonlayer}
    \end{tikzpicture}  
\\
{\mathcal M} = \Gamma_M \with \Cloc{p}{l}
&&
&&
{\mathcal N} = \Gamma_N \with \Cloc{q}{k} 
&&
&&
{\mathcal T} = \Gamma_T \with \Cloc{t}{o}                                
\end{align*}

  \caption{A problem with the  composition operator}
\label{fig:counterex} 
 
\end{figure}

Thus a priori this composition operator $\comp$ could equally well 
be used to develop the testing theory in Section~\ref{sec:maypreord}. 
Unfortunately the resulting theory would be degenerate.
\begin{exa}[Example~\ref{ex:mas.op} continued]
  Consider the networks $\calM, \calN$ in Figure~\ref{fig:counterex}, 
  where we choose $p = c!\pc{v}$ and $q = c!\pc{w}$ for two different 
  values $v,w$.
  Then intuitively ${\mathcal M}$ and ${\mathcal N}$ 
  should have different observable behaviour, observable by placing 
  a test at the node $o$. However if the operator $\comp$ is used 
  to combine a test with the network being observed they are 
  indistinguishable. 

   This is because if there is a  network $\mathcal T$  such that
  ${\mathcal M} \comp {\mathcal T}$ and ${\mathcal N} \comp
  {\mathcal T}$ are well-defined then $o$ can not be in
  $\nodes{\mathcal T}$.  For if $o$ were in $\nodes{\mathcal T}$, then
  since $\calM \vdash \rconn{l}{o}$ the definition of the operator
  implies that $\calT \vdash \rconn{l}{o}$.  This in turn implies
  that $\calN \vdash \rconn{l}{o}$, which is not true.

  Now since no testing network which can be applied to both ${\mathcal
    M}$ and ${\mathcal N}$ can place any code at $o$, no difference
  can be discovered between them.
 \end{exa}

The question now naturally arises about which consistency predicate $\mathscr{P}$ 
lead to reasonable composition operators  $\interleave_{\mathscr{P}_s}$, in the sense 
that at least the resulting testing theories are not degenerate. 
We want to be able to compare networks with different connectivity graphs, and
possibly different nodes, such as $\calM$ and $\calN$ in Figure \ref{fig:counterex}. 
We also should not be able to change the
connectivity of the internal nodes of a network when we test it; we wish to implement
\emph{black-box} testing, where the nodes containing running code cannot 
be accessed directly.  

These informal requirements can be formulated as natural requirements on composition
operators. 
The
first says that the composed network is completely determined by the
components:
\begin{description}
\item[(I) Merge] the operator $\interleave$ should be determined by
  some predicate $\mathscr{P}$ using
  Definition~\ref{def:comp.nets}.
\end{description}

Intuitively the interface of a network is how their external behaviour
is to be observed. Since our aim is to enable 
compositional reasoning over networks, we would expect composition to preserve
interfaces:
\begin{description}
\item[(II) Interface preservation] If $\inp{{\mathcal M}} = \inp{{\mathcal
      N}}$, $\outp{\calM} = \outp{\calN}$ and ${\mathcal T}$ can be composed with both,  that is both 
       ${\mathcal M} \interleave {\mathcal T}$ and ${\mathcal N} \interleave {\mathcal T}$
are well-defined,  then $\inp{{\mathcal
      M} \interleave {\mathcal T}} = \inp{{\mathcal N} \interleave {\mathcal
      T}}$, 
      $\outp{{\mathcal M} \interleave {\mathcal T}} = \outp{{\mathcal N} \interleave {\mathcal
      T}}$.
\end{description}

The final requirement captures the intuitive idea that reorganising
the internal structure of a network should not affect the ability to
perform a test; in fact the reorganisation is simply a renaming of
nodes. Let $\sigma$ be a permutation of node names. We use $(\Gamma
\with M)\sigma$ to denote the result of applying $\sigma$ to the node
names in $M$ and in the connectivity graph $\Gamma$.
\begin{description}
\item[(III) Renaming] Suppose ${\mathcal M} \interleave {\mathcal T}$ is
  defined. Then ${\mathcal M}\sigma
  \interleave {\mathcal T}$ is also defined, provided $\sigma$ is a node permutation
  which satisfies
\begin{itemise}
  \item $\sigma(e)= e$ for every $e \in \type{{\mathcal M}}$
  \item no $n \in \nodes{{\mathcal T}}$ appears in the range of $\sigma$; that is
    $n \in \nodes{{\mathcal T}}$ implies $\sigma(n) = n$. 
\end{itemise}
\end{description}

\begin{exa}
The operator $\comp$ does not satisfy \textbf{(III)}, as can be
seen using the simple networks in Figure~\ref{fig:counterex}; ${\mathcal M} \comp {\mathcal T}$
is obviously well-defined. However, consider the renaming $\sigma$ which swaps node names $l$ to $k$, which is valid with respect to 
$\mathcal T$; the network ${\mathcal M\sigma} \comp {\mathcal T}$ is not defined, as $\Gamma_T \not\vdash \rconn{k}{o}$.
A slight modification will demonstrate that interfaces are also not preserved by this operator. 
\end{exa}

\begin{prop}\label{prop:I2III}
  Suppose $\interleave$ satisfies the conditions (I) - (III) above. Then 
  $\nodes{\mathcal M} \cap \type{{\mathcal N}} = \emptyset$
  whenever  ${\mathcal  M} \interleave {\mathcal N}$ is defined. 

\end{prop}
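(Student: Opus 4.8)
The plan is to argue by contradiction. Suppose that $\calM \interleave \calN$ is defined and yet there exists a node $n_0 \in \nodes{\calM} \cap \type{\calN}$. The idea is to \emph{relocate} $n_0$ out of $\calM$ by a renaming that is legal for the test $\calN$, exploiting condition \textbf{(III)}, and then to observe that this relocation turns $n_0$ from an internal node of the composite into an interface node of the composite, contradicting interface preservation \textbf{(II)}.

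First I would fix a name $n_0'$ that is \emph{fresh}, i.e.\ occurring in neither $\calM$ nor $\calN$, and let $\sigma$ be the transposition swapping $n_0$ and $n_0'$. I then check the side-conditions of \textbf{(III)} taking $\calT := \calN$. Since $n_0 \in \nodes{\calM}$ we have $n_0 \notin \type{\calM}$, and $n_0'$ is fresh, so $\sigma$ fixes $\type{\calM}$ pointwise; since $n_0 \in \type{\calN}$ we have $n_0 \notin \nodes{\calN}$, and again $n_0'$ is fresh, so $\sigma$ fixes $\nodes{\calN}$ pointwise. Hence, as $\calM \interleave \calN$ is defined, \textbf{(III)} guarantees that $\calM\sigma \interleave \calN$ is also defined.

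Next I would note that $\calM$ and $\calM\sigma$ have identical interfaces: because $\sigma$ fixes $\type{\calM}$ and merely relabels internal nodes of $\calM$, we get $\type{\calM\sigma} = \type{\calM}$, and the input/output status of each external node is preserved (a connection between an interface node and an internal node is carried to a connection between the same interface node and the renamed internal node), so $\inp{\calM} = \inp{\calM\sigma}$ and $\outp{\calM} = \outp{\calM\sigma}$. Applying \textbf{(II)} to the two networks $\calM,\ \calM\sigma$ and the common test $\calN$ (composable with both) then yields $\inp{\calM \interleave \calN} = \inp{\calM\sigma \interleave \calN}$ and $\outp{\calM \interleave \calN} = \outp{\calM\sigma \interleave \calN}$. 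Now I track $n_0$. In $\calM \interleave \calN$ the node $n_0$ lies in $\nodes{\calM}$, hence is internal and belongs to neither $\inp{\calM \interleave \calN}$ nor $\outp{\calM \interleave \calN}$. In $\calM\sigma \interleave \calN$, however, $n_0$ has been vacated by $\calM\sigma$ (it now carries the label $n_0'$) while $\calN$ still lists $n_0$ as an unoccupied node, so $n_0$ is external in the composite; since the operator produces well-formed networks when defined (by \textbf{(I)}), the identity $\type{\cdot} = \inp{\cdot} \cup \outp{\cdot}$ gives $n_0 \in \inp{\calM\sigma \interleave \calN} \cup \outp{\calM\sigma \interleave \calN}$. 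This contradicts the equalities just obtained from \textbf{(II)}.

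I expect the only delicate step to be the bookkeeping of the third paragraph: verifying that renaming an internal node leaves $\type{\calM}$, $\inp{\calM}$ and $\outp{\calM}$ untouched, checking the two side-conditions of \textbf{(III)} exactly, and confirming that $n_0$ genuinely becomes external in $\calM\sigma \interleave \calN$ (i.e.\ it is neither occupied by $\calM\sigma$, which now uses $n_0'$, nor by $\calN$, which has it in $\type{\calN}$). Once these invariants are in hand, the contradiction follows by a direct application of the three axioms.
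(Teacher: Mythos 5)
Your proof is correct and follows essentially the same route as the paper's: a contradiction argument that swaps the offending node $n_0$ with a fresh name via a transposition, invokes condition \textbf{(III)} to keep the composition with $\calN$ defined, and then uses interface preservation \textbf{(II)} to contradict the fact that $n_0$ is internal in $\calM \interleave \calN$ but external in $\calM\sigma \interleave \calN$. If anything, your bookkeeping is slightly more careful than the paper's, which only records $\type{\calM} = \type{\calM\sigma}$ even though \textbf{(II)} formally requires the equalities $\inp{\calM} = \inp{\calM\sigma}$ and $\outp{\calM} = \outp{\calM\sigma}$ that you verify explicitly.
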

\begin{proof}
By contradiction; let ${\mathcal M} = \Gamma_{M} \with M$ and 
${\mathcal N} = \Gamma_{N} \with N$. Assume that $m$ is a node included in 
$\nodes{M} \cap \type{\calN}$, and that 
${\mathcal M} \interleave \mathcal{N}$ is defined. 
Finally, let $\sigma$ be an arbitrary permutation such that 
$\calM\sigma \interleave \calN$ is defined. 
Note that the following statements are true: 
\begin{enumerate}
\item $m \notin \nodes{\calN}$,
\item $\calN \vdash m$,
\item $\type{\calM} = \type{\calM \sigma}$.
\item $m \notin \type{\calM \interleave \calN}$.
\end{enumerate}
\noindent For proving the last statement just note that 
$m \in \nodes{\calM}$ by hypothesis, hence 
$m \in \nodes{\calM \interleave \calN}$. 
By definition of interface it follows that 
$m \notin \type{\calM \interleave \calN}$.

Let $l$ be a node which is not contained in $\Gamma_M$, nor in $\Gamma_N$. 
Consider the permutation $\sigma$ which swaps nodes $m$ and $l$; that is 
$\sigma(m) = l, \sigma(l) = m$ and $\sigma(k) = k$ for all $k\neq m,l$. 
Since $\calM \not\vdash l$ we also have that $l \notin \nodes{\calM}$, 
so that $\sigma(l)= m \notin \nodes{{\mathcal M}\sigma}$.
Further, the permutation $\sigma$ is consistent with condition (III), renaming, when 
applied to networks $\mathcal M$ and $\mathcal N$; 
therefore ${\calM}\sigma \interleave \calN$ is 
defined.\\ 

Since $m \notin \nodes{\calM\sigma}$, by (1) above it follows 
that $m \notin \nodes{{\mathcal M\sigma}\interleave {\mathcal N}}$;  
by (2), we obtain that $(\Gamma_M)\sigma \cup \Gamma_N \vdash m$. These two statements ensure 
that $m \in \type{{\mathcal M}\sigma \interleave {\mathcal{N}}}$.\\
As a direct consequence of (3) and condition (II), interface preservation, we 
also have that $m \in \type{{\mathcal{M}} \interleave {\mathcal{N}}}$, 
but this contradicts (4).
\end{proof}

\begin{cor}
Let $\interleave$ be any 
symmetric composition operator which satisfies the conditions (I) - (III). 
Suppose $ \calM_1 \interleave \calM_2$ is well-defined, and of the form 
$\Gamma \with M$. Then $\Gamma \vdash \notsomeconn{m_1}{m_2}$ whenever $m_i \in 
\nodes{\calM_i}$. 
\end{cor}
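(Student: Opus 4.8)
The plan is to argue by contradiction and reduce everything to Proposition~\ref{prop:I2III}, applied in \emph{both} directions by exploiting symmetry. First I would fix notation, writing $\calM_i = \Gamma_i \with M_i$ for $i = 1,2$. Since $\interleave$ satisfies the Merge condition \textbf{(I)}, it is determined by a consistency predicate $\mathscr{P}$ as in Definition~\ref{def:gen.comp.nets}; hence whenever $\calM_1 \interleave \calM_2$ is defined its connectivity graph $\Gamma$ is the union $\Gamma_1 \cup \Gamma_2$, so its edge set is $(\Gamma_1)_E \cup (\Gamma_2)_E$, and by requirement~\eqref{eq:iscons} we have $\nodes{\calM_1} \cap \nodes{\calM_2} = \emptyset$. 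I would then suppose, towards a contradiction, that $m_1 \in \nodes{\calM_1}$, $m_2 \in \nodes{\calM_2}$ and $\Gamma \vdash \someconn{m_1}{m_2}$. By definition of $\someconn{\cdot}{\cdot}$ one of the ordered pairs $(m_1,m_2)$, $(m_2,m_1)$ lies in $(\Gamma_1)_E \cup (\Gamma_2)_E$, hence in $(\Gamma_1)_E$ or in $(\Gamma_2)_E$, and I would treat these two cases separately.

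Consider first the case where the connecting edge belongs to $(\Gamma_1)_E$. Both of its endpoints are then vertices of $\Gamma_1$, so in particular $m_2 \in (\calM_1)_V$. Since $m_2 \in \nodes{\calM_2}$ and the two node-sets are disjoint, $m_2 \notin \nodes{\calM_1}$, and therefore $m_2 \in \type{\calM_1}$ by definition of the interface. This is the point at which symmetry is essential: because $\interleave$ is symmetric, $\calM_2 \interleave \calM_1$ is also defined, so Proposition~\ref{prop:I2III}, applied with the roles of the two networks exchanged, yields $\nodes{\calM_2} \cap \type{\calM_1} = \emptyset$. But $m_2$ belongs to both sets, a contradiction.

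The remaining case, where the edge belongs to $(\Gamma_2)_E$, is handled by the mirror-image argument: its endpoints are vertices of $\Gamma_2$, so $m_1 \in (\calM_2)_V$, and disjointness of the node-sets forces $m_1 \in \type{\calM_2}$. Applying Proposition~\ref{prop:I2III} directly to $\calM_1 \interleave \calM_2$ gives $\nodes{\calM_1} \cap \type{\calM_2} = \emptyset$, again contradicting the fact that $m_1$ lies in both. Since no connecting edge can survive in either component graph, I conclude $\Gamma \vdash \notsomeconn{m_1}{m_2}$. I expect the only genuinely delicate point to be recognising that symmetry of $\interleave$ is precisely what licenses invoking Proposition~\ref{prop:I2III} in the direction needed for the first case; the rest is routine bookkeeping about which component graph an edge comes from and the attendant vertex-membership facts.
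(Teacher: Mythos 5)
Your proof is correct and is essentially the paper's own argument: the paper dispatches this corollary as ``a simple consequence of the previous result,'' i.e.\ of Proposition~\ref{prop:I2III}, and your case analysis—locating the offending edge in $(\Gamma_1)_E$ or $(\Gamma_2)_E$, then applying that proposition to $\calM_1 \interleave \calM_2$ in one case and, via symmetry, to $\calM_2 \interleave \calM_1$ in the other—is exactly that consequence spelled out. You also correctly isolate the one non-routine point, namely that symmetry of $\interleave$ is what licenses the second application of Proposition~\ref{prop:I2III}; this is precisely why the corollary fails for the asymmetric operator $\testP$.
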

\begin{proof}
  A simple consequence of the previous result.
\end{proof}
\noindent
What this means is that if we use such a symmetric operator when applying a test to
a network, as in Definition~\ref{def:testing}, then the resulting testing preorder will
be degenerate; it will not distinguish between any pair of nets. 
In some sense this result is unsurprising. For $\mathcal T$ to test $\mathcal
M$ in ${\mathcal M}\interleave {\mathcal T}$ it must have code running at the
interface of $\mathcal M$. But, as we have seen, condition (III) more or less forbids $\mathcal
T$ to have code running at the interface of $\mathcal M$.

Thus we have ruled out the possibility of basing our testing theory on a symmetric 
composition operator. The question now remains what composition operator is the most appropriate? 
We have already stated that conditions (I)-(III) are natural, 
and since there are no further obvious requirements we
could choose  the operator with greatest expressive power among those that satisfy conditions (I)-(III). 
Here an operator $\interleave_{\mathscr{P}_1}$ is more expressive than another 
$\interleave_{\mathscr{P}_2}$ if, whenever $\calM \interleave_{\mathscr{P}_1} \calN$ is defined 
for any two arbitrary networks $\calM, \calN$, 
then so is $\calM \interleave_{\mathscr{P}_2} \calN$, and the result of the two compositions 
above is the same. 
The next Lemma shows that the operator which we are looking for is exactly $\testP$.
\begin{lem}
  Let $\mathscr{P}$ be any consistency predicate satisfying  (\ref{eq:iscons}) above. Then 
 if  $\calM  \interleave_{\mathscr{P}}  \calN $ is defined so is $\calM \testP \calN$ and 
 moreover $\calM  \interleave_{\mathscr{P}}  \calN \,=\,\calM \testP \calN$.  
\end{lem}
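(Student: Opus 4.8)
The plan is to reduce the statement to a single inclusion between the domains of definition of the two operators and then to appeal to Proposition~\ref{prop:I2III}. First I would observe that both $\interleave_{\mathscr{P}}$ and $\testP$ are instances of the general composition scheme of Definition~\ref{def:gen.comp.nets}: writing $\calM = \Gamma_M \with M$ and $\calN = \Gamma_N \with N$, each of them, \emph{when defined}, returns the very same network $(\Gamma_M \cup \Gamma_N) \with (M \Cpar N)$. Consequently the identity $\calM \interleave_{\mathscr{P}} \calN = \calM \testP \calN$ is automatic once both sides are known to be defined, and the entire content of the lemma is the implication
\[
  \calM \interleave_{\mathscr{P}} \calN \ \text{defined} \quad\Longrightarrow\quad \calM \testP \calN \ \text{defined}.
\]

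Next I would unfold the two definedness conditions. By Definition~\ref{def:gen.comp.nets}, $\calM \interleave_{\mathscr{P}} \calN$ is defined exactly when $\mathscr{P}(\calM,\calN)$ holds, whereas by Definition~\ref{def:comp.nets} the composite $\calM \testP \calN$ is defined exactly when $\nodes{\calM} \cap (\calN)_V = \emptyset$. Splitting $(\calN)_V = \nodes{\calN} \cup \type{\calN}$, it therefore suffices to establish, under the hypothesis $\mathscr{P}(\calM,\calN)$, the two disjointness facts $\nodes{\calM} \cap \nodes{\calN} = \emptyset$ and $\nodes{\calM} \cap \type{\calN} = \emptyset$.

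The first of these is immediate from the defining property (\ref{eq:iscons}) of a consistency predicate. The second is precisely the conclusion of Proposition~\ref{prop:I2III}, applied to the defined composition $\calM \interleave_{\mathscr{P}} \calN$; since $\interleave_{\mathscr{P}}$ satisfies conditions (I)--(III), that proposition yields $\nodes{\calM} \cap \type{\calN} = \emptyset$. Combining the two gives $\nodes{\calM} \cap (\calN)_V = \emptyset$, so $\calM \testP \calN$ is defined and equals $(\Gamma_M \cup \Gamma_N) \with (M \Cpar N) = \calM \interleave_{\mathscr{P}} \calN$, as required.

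I do not expect any genuine obstacle here: the one substantive ingredient --- the passage from disjointness of the internal node sets to disjointness of $\nodes{\calM}$ from the \emph{interface} $\type{\calN}$ --- is already packaged in Proposition~\ref{prop:I2III}, whose proof (via the swapping permutation together with interface preservation) carries all the real work. The only point requiring care is to remember that definedness of $\testP$ demands avoiding all of $(\calN)_V$, and not merely $\nodes{\calN}$; it is exactly for this extra portion $\type{\calN}$ that Proposition~\ref{prop:I2III}, and hence the renaming and interface-preservation requirements behind it, must be invoked.
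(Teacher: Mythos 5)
Your argument is not the paper's argument, because the paper offers essentially no argument: its entire proof is the sentence that the claim is obvious from the definition of $\testP$ in Definition~\ref{def:comp.nets}. Your write-up supplies the content that this one-liner hides and, crucially, isolates the genuine subtlety: hypothesis (\ref{eq:iscons}) yields only $\nodes{\calM} \cap \nodes{\calN} = \emptyset$, whereas definedness of $\calM \testP \calN$ requires the stronger condition $\nodes{\calM} \cap (\calN)_V = \emptyset$; the missing portion is exactly $\nodes{\calM} \cap \type{\calN} = \emptyset$, which you extract from Proposition~\ref{prop:I2III}. Your reduction of the equality claim to the definedness claim (both operators return $(\Gamma_M \cup \Gamma_N) \with (M \Cpar N)$ whenever defined) is also correct.

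The caveat is that Proposition~\ref{prop:I2III} applies only to operators satisfying conditions (I)--(III), and the lemma as printed assumes of $\mathscr{P}$ nothing beyond (\ref{eq:iscons}); condition (I) holds automatically for $\interleave_{\mathscr{P}}$, but (II) and (III) do not follow from (\ref{eq:iscons}). Indeed, read literally the lemma is false, by the paper's own material: the predicate $\mathscr{P}_s$ of Example~\ref{ex:mas.op} satisfies (\ref{eq:iscons}), and for the networks of Figure~\ref{fig:counterex} the composition $\calM \comp \calT$ is defined (the paper states this explicitly when showing $\comp$ violates condition (III)), yet $\calM \testP \calT$ is undefined because $l \in \nodes{\calM} \cap (\calT)_V$. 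So your step ``since $\interleave_{\mathscr{P}}$ satisfies conditions (I)--(III)'' is not a harmless remark but an additional hypothesis that the stated lemma does not grant --- and it is the only reading under which the lemma holds. It is also the reading the surrounding text intends, since the lemma is supposed to exhibit $\testP$ as the maximal operator \emph{among those satisfying (I)--(III)}. To make your proof self-contained you should state that assumption explicitly at the outset; with it, your argument is complete and is strictly more informative than the paper's proof, which, taken at face value, asserts the obviousness of a false statement.
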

\begin{proof}
  Obvious from the definition of $\testP$ in Definition~\ref{def:comp.nets}. 
\end{proof}

\section{Extensional Semantics}
\label{sec:ext.sem}
As explained in papers such as \cite{RS08-dbtm,dpibook}, contextual
equivalences and preorders are determined by so-called \emph{extensional actions},
which consist of the observable activities which a system can
have with its external environment. 
We present an extensional Semantics for probabilistic 
wireless networks in Section \ref{sec:ext.act}. 
The final two sections develop technical properties of these actions,
which will be used in the later soundness proofs. They may be safely skipped
at first reading. 
Section \ref{sec:comp.decomp} is devoted to 
basic decomposition and composition results for 
extensional actions, while in 
Section \ref{sec:relating} we relate 
the extensional actions we introduced 
with the reduction relation of 
the testing structures associated with 
networks.

\subsection{Extensional Actions}
\label{sec:ext.act}

Here we design a pLTS whose set of actions can be detected 
(hence tested) by the external environment. 
The intensional semantics in Section~\ref{sec:lang}
already provides a pLTS and it is instructive to see why this is not
appropriate.

Consider $\calM$ and $\calN$ from
Figure~\ref{fig:counterex}, and suppose further that the code $p$ and
$q$, running at $l$ and $k$ respectively, is identical,
  $c!\pc{v}$.  Then we would expect $\calM$ and
  $\calN$ to be behaviourally indistinguishable.  However 
$\calM$ will have an output action, labelled $l.c!v$, which is
not possible for $\calN$.  So output actions cannot record
their source node. What  turns out to be important is the set of target
nodes.  For example if in $\calM$ we added a new output node $m$ to
the interface, with a connection from $l$ then we would be able to
distinguish $\calM$ from $\calN$; the required test
would simply place some appropriate testing code at the new node $m$.

Now we present an extensional semantics for networks; here the
visible actions consist of activities which can be detected (hence
tested) by placing code at the interface of a network. In this
semantics we have internal, input and output actions.

\begin{defi}[Extensional actions]\label{def:sea}   
The actions of the extensional semantics are defined as follows:
\begin{enumerate}

\item \textbf{internal}, $(\Gamma \with M ) \extar{\tau} (\Gamma \with \Delta)$; some internal activity reduces the system
$M$, relative to the connectivity $\Gamma$, to some system $N$, where $N \in \support{\Delta}$. Here the internal 
activity of a network coincides either with some node performing a silent move $m.\tau$ or broadcasting a value 
which cannot be detected by any node in the interface of the network itself. 

Formally, $(\Gamma \with M ) \extar{\tau} (\Gamma \with \Delta)$ whenever $M \not\equiv \Cloc{\omega + s}{m} 
\Cpar N$ and 
\begin{enumerate}
\item $\Gamma \with M  \ar{m.\tau} \Delta$
\item or $\Gamma \with M  \ar{n.c!v} \Delta$ for some value $v$, channel $c$ and node name $n$
satisfying $\Gamma \vdash \rconn{n}{m}$ implies $m \in \nodes{M}$
 \end{enumerate}

\noindent Note that  we are using the notation given in Section \ref{sec:lang}
for defining distributions.  Here $\Delta$ is a distribution over
$\sys$ and so $(\Gamma \with \Delta)$ is a distribution over
networks; however all networks in its support use the same network connectivity
$\Gamma$.

\item \textbf{input}, $(\Gamma \with M) \extar{n.c?v} (\Gamma \with \Delta)$; an observer
  placed at node $n$ can send the value $v$ along the channel $c$ to the network
$(\Gamma \with M)$. For the observer to be able to place the code at node $n$ we must have
$n \in \inp{\Gamma \with M}$. 

Formally $(\Gamma \with M) \extar{n.c?v} (\Gamma \with \Delta)$ whenever 
$M \not\equiv \Cloc{\omega + s}{m} \Cpar N$ and 
 \begin{enumerate}
 \item 
$\Gamma \with M  \ar{n.c?v} \Delta$
 \item $n \in \inp{\Gamma \with M}$
\end{enumerate}

\item \textbf{output}, $(\Gamma \with M) \extar{\eout{c!v}{\eta}} (\Gamma \with
    \Delta)$, where $\eta$ is a non-empty set of nodes; an observer
    placed at any node $n \in \eta$ can receive the value $v$ along the
    channel $c$.  For this to happen each node $n \in \eta$ must be in
    $\outp{\Gamma \with M}$, and there must be some code running at
    some node in $M$ which can broadcast along channel $c$ to each
    such $n$.

Formally, $(\Gamma \with M) \extar{\eout{c!v}{\eta}} (\Gamma \with
    \Delta)$ whenever $M \not\equiv \Cloc{\omega + s}{m} \Cpar N$

 \begin{enumerate}[label=(\roman*)]
 \item  $(\Gamma \with M) \ar{m.c!v} \Delta $ for some node $m$
 \item  $\eta = \setof{n \in \outp{\Gamma \with M}}{\Gamma \vdash \rconn{m}{n}}  \not = \emptyset$. \qed
 \end{enumerate}
\end{enumerate}
\end{defi}

\noindent In the following we will use the metavariable $\lambda$ to range 
over extensional actions. These actions endow the set of networks with the
structure of a pLTS. Thus the terminology used for pLTSs is extended to networks, 
so that in the following we will use terms such as finitary networks or finite branching networks; 
we use the symbol $\extAr{\;\;\;}$ to denote hyper-derivations in the 
extensional pLTS of networks, and $\extArE{\;\;\;}$ to denote extreme derivations.
Also note that we allow an extensional actions to 
be performed only in the case that a network is not $\omega$-successful. 
As we have already stated, we see pLTSs as non-deterministic probabilistic experiments 
whose success is obtained by reaching an $\omega$-successful state. When an 
$\omega$-successful state is reached, we require the experiment to terminate.

A trivial application of Corollary \ref{cor:reduction} ensures that extensional actions 
are preserved by structurally congruent networks.
Further, they do not change the topological structure of a network.

\begin{prop}
\label{prop:static.topology}
Suppose that $\Gamma \with M \extar{\lambda} \Delta$, 
$\Delta \in \dist{\nets}$. 
Then $\Delta = \Gamma \with \Theta$, and $\Theta \in \dist{\sys}$ is 
node stable. Further, for any $\calN \in \support{\Delta}$ 
we have that $\nodes{\calN} = \nodes{\calM}$.
\end{prop}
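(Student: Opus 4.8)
The plan is to exploit the fact that each of the three clauses of Definition \ref{def:sea} defines the extensional transition $\Gamma \with M \extar{\lambda} (\Gamma \with \Theta)$ on top of an underlying \emph{intensional} transition of the form $\Gamma \with M \ar{\mu} \Theta$ that delivers exactly the same system distribution $\Theta \in \dist{\sys}$, with the same connectivity graph $\Gamma$ carried unchanged into the support. Once this is made explicit, the two assertions reduce immediately to Corollary \ref{cor:stable.dist}, which already establishes node-stability and preservation of the node set for intensional transitions.

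Concretely, I would first unpack $\extar{\lambda}$ by cases on the shape of $\lambda$. If $\lambda = \tau$, then by Definition \ref{def:sea} either $\Gamma \with M \ar{m.\tau} \Theta$ or $\Gamma \with M \ar{n.c!v} \Theta$ for some $m$, or some $n, c, v$, and in either case the resulting network distribution is $\Gamma \with \Theta$. If $\lambda = n.c?v$, then $\Gamma \with M \ar{n.c?v} \Theta$. If $\lambda = \eout{c!v}{\eta}$, then $\Gamma \with M \ar{m.c!v} \Theta$ for some node $m$. Thus in every case there is a single intensional judgement $\Gamma \with M \ar{\mu} \Theta$ witnessing the extensional action, and the network distribution reached is, by definition, $\Gamma \with \Theta$ with one common connectivity graph $\Gamma$ throughout its support; taking this $\Theta$ as the witness already yields $\Delta = \Gamma \with \Theta$, the first half of the first claim.

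It then remains to apply Corollary \ref{cor:stable.dist} to the intensional transition $\Gamma \with M \ar{\mu} \Theta$: it gives at once that $\Theta$ is node-stable (completing the first claim) and that $\nodes{N} = \nodes{M}$ for every $N \in \support{\Theta}$. Since each network in $\support{\Delta}$ has the form $\Gamma \with N$ with $N \in \support{\Theta}$, I conclude $\nodes{\calN} = \nodes{N} = \nodes{M} = \nodes{\calM}$, which is the second claim. I do not expect any genuine obstacle here, as the result is essentially a corollary of Corollary \ref{cor:stable.dist}; the only care needed is the bookkeeping between the system distribution $\Theta \in \dist{\sys}$ produced by the intensional rules and the network distribution $\Gamma \with \Theta \in \dist{\nets}$ over which the extensional semantics is phrased, and the verification that every defining clause of $\extar{\lambda}$ reuses the distribution delivered by the intensional transition rather than altering it, so that the conclusions transfer verbatim.
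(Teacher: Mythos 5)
Your proposal is correct and matches the paper's own argument: the paper likewise observes that every clause of Definition~\ref{def:sea} is built directly on an intensional transition $\Gamma \with M \ar{\mu} \Theta$, so that $\Delta = \Gamma \with \Theta$ by definition, and then invokes Corollary~\ref{cor:stable.dist} for node-stability and preservation of $\nodes{\calM}$. Your version merely spells out the case analysis on $\lambda$ that the paper leaves implicit.
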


\begin{proof}
The definition of extensional actions ensures that 
whenever $\Gamma \with M \extar{\lambda} \Delta$ 
then $\Delta = \Gamma \with \Theta$ for some 
$\Theta \in \dist{\sys}$. 
The fact that for any $N \in \support{\Theta}$ 
we have that $\nodes{M} = \nodes{N}$ 
follows from Corollary \ref{cor:stable.dist}.
\end{proof}
Note that, if two distributions $\Delta, \Theta \in \dist{\nets}$ 
are node-stable, if $\calM \testP \calN$ is defined for some  
$\calM \in \support{\Delta}, \calN \in \support{\Theta}$, then 
$\calM' \testP \calN'$ is defined for any $\calM' \in \support{\Delta}, 
\calM' \in \support{\Theta}$. Thus, for node-stable distributions 
of networks it makes sense to lift the operator $\testP$ 
to distributions of networks, defined directly via an application 
of Equation \ref{eq:twoargexpected}. A similar argument holds 
for the symmetric operator $\comp$.

In the following we will need \emph{weak} versions of extensional actions, which
abstract from internal activity, 
provided by the relation $\extar{\tau}$. Internal activity can be modelled by 
the hyper-derivation relation $\extAr{\;\;\;}$, which is a probabilistic generalisation of 
the more standard weak internal relation $\extar{\tau}^{\ast}$. 

\begin{defi}[Weak extensional actions]\label{def:wea}
  \begin{enumerate}
  \item Let ${\mathcal M} \extAr{\tau} \Delta$ whenever we have the hyper-derivation 
        ${\mathcal M} \extAr{\;\;\;} \Delta$

  \item  ${\mathcal M} \extAr{n.c?v} \Delta$ whenever  ${\mathcal M} \extAr{\;\;\;}  \extar{n.c?v} \extAr{\;\;\;}  \Delta$

  \item Let ${\mathcal M} \extAr{\eout{c!v}{\eta}} \Delta$ be the least relation satisfying:
    \begin{enumerate}[label=(\alph*)]
    \item ${\mathcal M} \extAr{\;\;\;} \extar{\eout{c!v}{\eta}} \extAr{\;\;\;} \Delta$ implies ${\mathcal M} \extAr{\eout{c!v}{\eta}} \Delta$

   \item  ${\mathcal M} \extAr{\eout{c!v}{\eta_1}} \Delta'$,  $\Delta' \extAr{\eout{c!v}{\eta_2}} {\Delta}$, where 
            $\eta_1 \cap \eta_2 = \emptyset$, implies 
             ${\mathcal M} \extAr{\eout{c!v}{(\eta_1 \cup \eta_2)}} \Delta$  \qed
    \end{enumerate}
  \end{enumerate}
\end{defi}
\noindent
These weak actions endow the set of networks \nets with the structure
of another pLTS, called the \emph{extensional pLTS} and denoted by
$\pLTSnets$. 

%
%

Some explanation is necessary for the non-standard definition of
output actions in Definition \ref{def:wea}(3).  Informally speaking, the definition of
weak extensional output actions expresses the capability of simulating
broadcast through multicast, which has already been observed 
in Example \ref{ex:bcast1}. A single (weak) broadcast action
detected by a set of nodes $\eta$ can be matched by a sequence of weak
broadcast actions , detected
respectively by $\eta_1, \cdots, \eta_i \subseteq \eta$, provided that
the collection $\{\eta_1,\cdots,\eta_i\}$ is a partition of
$\eta$. This constraint is needed to ensure that
\begin{enumerate}[label=(\roman*)]
\item every node in $\eta$ will detect the transmitted value and
\item no node in $\eta$ will detect the value more than once.
\end{enumerate}
As we will see in Section \ref{sec:soundness}, the ability of a multicast to 
simulate a broadcast is captured by the testing preorders. Roughly speaking, 
in a generic network $\calM$ a broadcast can be converted into a multicast\footnote{
Note that this operation would require a change in the internal topology of $\calM$.} 
leading to a network $\calN$ such that $\calM \Mayleq \calN$. 
Conversely, in the must-testing setting a multicast in a network $\calM$ 
can be replaced by a broadcast leading to a network $\calN$ such that 
$\calM \Mustleq \calN$.

%
%

\subsection{Composition and Decomposition Results}
\label{sec:comp.decomp}
In this Section we prove decomposition and composition results for 
the extensional actions introduced in the previous section. 
In its most general form, the results we want to develop can be 
summarised as follows: given a network $\calM \testP \calN$, 
\begin{description}
\item[Strong Decomposition] for actions of the form $(\calM \testP \calN) \extar{\lambda} \Lambda$ 
we want to determine two actions of the form $\calM \extar{\lambda_1} \Delta$ and 
$\calN \extar{\lambda_2} \Theta$, where $(\Delta \testP \Theta) = \Lambda$,
\item[Weak Composition] conversely, given two actions of the form 
$\calM \extAr{\lambda_1} \Delta$ and $\calN \extAr{\lambda_2} \Theta$, 
we want to determine an action of the form $(\Delta \testP \Theta) 
\extAr{\lambda} (\Delta \interleave \Theta)$.
\end{description}\smallskip

\noindent Unfortunately, the following example shows that this task cannot 
be achieved by relying solely on the extensional semantics. 
\begin{figure}
\begin{align*}
     \begin{tikzpicture}
           \node[state](m){$m$}; 
           \node[state](o)[right=of m]{$o$};
 \path[to]
       (m) edge[thick] (o);
   \begin{pgfonlayer}{background}
    \node [background,fit=(m)] {};
    \end{pgfonlayer}
    \end{tikzpicture}
&&  
     \begin{tikzpicture}
           \node[state](o){$o$};   
   \begin{pgfonlayer}{background}
    \node [background,fit=(m)] {};
    \end{pgfonlayer}
    \end{tikzpicture}\\
\calM = \Gamma_M \with \Cloc{c!\pc{v}}{m}
&&
\calN = \Gamma_N \with \Cloc{c?\pa{x}.\Cnil}{o}
\end{align*}

\caption{A problem with decomposition of actions with respect to $\testP$}
\label{fig:testpdec}
\end{figure}

\begin{exa}
\label{ex:testpdec}
Consider the networks $\calM, \calN$ of Figure \ref{fig:testpdec}. 
It is straightforward to note that $\calM \testP \calN$ is 
defined; further, $(\calM \testP \calN) \extAr{\tau} \pdist{(\calM' \testP \calN')}$, 
where $\calM' = (\Gamma_M \with \Cloc{\Cnil}{m})$, $\calN' = (\Gamma_N \with 
\Cloc{\Cnil}{o})$. 

One could wish to be able to infer this action from the broadcast action 
of the form $\calM \extar{\eout{c!v}{\{o\}}} \pdist{\calM'}$ and an input action 
of the form $\calN \extar{m.c?v} \pdist{\calN'}$. Unfortunately, this last action 
cannot be inferred, since in $\calN$ node $o$ cannot detect the broadcasts 
performed by node $m$.
\end{exa}

The problem of Example \ref{ex:testpdec} arises because the connection between 
from node $m$ to node $n$, which is present in $\calM \testP \calN$, is not 
present in the right hand side of the composition $\calN$. 
As a consequence, the action $\calN \extar{m.c?v} \calN'$ cannot be derived. 
However, note that we could still decompose the transition $\calM \testP \calN \extar{\tau} 
\calM' \testP \calN'$ if we were to focus on the code run by node $o$ in $\calN$, 
rather than on the network $\calN$ itself.

\begin{exa}
\label{ex:testpdec2}
Consider again the networks $\calM,\calN$ of Figure \ref{fig:testpdec}, and 
recall that $\calN = \Gamma_N \with \Cloc{c?\pa{x}.\Cnil}{o}$. Given the action 
$\calM \testP \calN \extar{\tau} \pdist{\calM' \testP \calN'}$, where 
we recall that $\calN' = \Cloc{\Cnil}{o}$, we can 
now infer the 
extensional transition $\calM \extar{\eout{c!v}{\{o\}}} \pdist{\calM'}$ 
and the process transition $c?\pa{x}.\Cnil \ar{c?v} \Cnil$.

The first transitions says that some node in $\calM$ performs 
a broadcast which affects node $o$, while the second one says that 
how the code which node $o$ is running reacts to a broadcast. 
Therefore, it is not surprising to note that the two 
transitions $\calM \extar{\eout{c!v}{\{o\}}} \pdist{\calM'}$ and 
$P \ar{c?v} \Cnil$ can be combined together to obtain 
the original transition $\calM \testP \calN \extar{\tau} \pdist{\calM' \testP \calN'}$.
\end{exa}

Example \ref{ex:testpdec2} leads us to the intuition that composition and decomposition 
results of extensional actions can be developed if we focus on composed networks of the 
form $\calM \testP \calG$, where $\calG \in \mathbb{G}$. Intuitively, 
such results can be obtained by reasoning on the 
extensional transitions of $\calM$ and the 
process transitions performed by the only internal node in $\calG$. 



\begin{prop}[Strong decomposition in \pLTSnets]\label{prop:decomp}
  Let $\calM \in \nets, \calG \in \mathbb{G}$ be two networks such that $\calM \testP \calG$ 
  is defined. Further, let $\Gamma_N, n, s$ be such 
  that $\calG = (\Gamma_N \with \Cloc{s}{n})$; 
  Suppose $\calM \testP\calG \extar{\lambda} \Lambda$, $\Lambda \in \dist{\nets}$. 
  Then $\Lambda = \Delta \testP (\Gamma_N \with \Cloc{\Theta}{n}$ for some 
  $\Delta, \Theta \in \dist{\nets}$ such that 
  
  \begin{enumerate}
		\item if $\lambda = \tau$ then either 
		\begin{enumerate}[label=(\roman*)]
			\item $\calM \extar{\tau} \Delta$, $\Theta = \pdist{s}$, or
			\item $\Delta = \pdist{\calM}$ and $s \ar{\tau} \Theta$, or 
			\item $\calM \extar{\eout{c!v}{\{n\}}} \Delta$, 
			$s \ar{c?v} \Theta$, or
			\item $\calM \extar{\eout{c!v}{\{n\}}} \Delta$, $s \nar{c?v}$ and 
			$\Theta = \pdist{s}$, or
			\item $\outp{\calG} = \emptyset$, 
			$\calM \extar{n.c?v} \Delta, s \ar{c!v} \Theta$, or 
			\item $\outp{\calG} = \emptyset$, 
			$n \notin \inp{\calM}$, $\Delta = \pdist{\calM}$ and 
			$s \ar{c!v} \Theta$.
		\end{enumerate}
		
		\item if $\lambda = \eout{c!v}{\eta}$ then either
		 \begin{enumerate}[label=(\roman*)]
		 \item $\calM \extar{\eout{c!v}{\eta}} \Delta$, 
		 $n \notin \eta$ and $\Theta = \pdist{s}$, or  
		 \item $\calM \extar{\eout{c!v}{\eta \cup \{n\}}} \Delta$, 
		 $n \notin \eta$ and $s \ar{c?v} \Theta$, or 
		 \item $\calM \extar{\eout{c!v}{\eta \cup \{n\}}} \Delta$, 
		 $n \notin \eta$, $s \nar{c?v}$ and $\Theta = \pdist{s}$, or
		 \item $\calM \extar{n.c?v} \Delta$, $s \ar{c!v} \Theta$ 
		 and $\eta = \outp{\calG}$, or 
		 \item $n \notin \inp{\calM}$, $\Delta = \pdist{\calM}$, 
		 $s \ar{c!v} \Theta$ and $\eta = \outp{\calG}$,
		 \end{enumerate}
		
		\item if $\lambda = m.c?v$, where $m\neq n$, then either 
		 \begin{enumerate}[label=(\roman*)]
		 \item $\calM \extar{m.c?v} \Delta$, $\Theta = \pdist{s}$ 
		 and $m \notin \inp{\calG}$, or
		 \item $\Delta = \pdist{\calM}$, $m \notin \inp{\calM}$, $s \ar{c?v} \Theta$, or 
		 \item $\Delta = \pdist{\calM}$, $m \notin \inp{\calM}$, $s \nar{c?v}$ 
		 and $\Theta = \pdist{s}$, or
		 \item $\calM \extar{m.c?v} \Delta$, $s \ar{c?v} \Theta $ 
		 and $m \in \inp{\calG}$, or  
		 \item $\calM \extar{m.c?v} \Delta$, $s \nar{c?v}$ and $\Theta = \pdist{s}$.
		 \end{enumerate}
		\end{enumerate}
\end{prop}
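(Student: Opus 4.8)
The plan is to reduce each extensional action on the composite network to an intensional one, and then apply the structural characterisations of intensional actions from Section~\ref{sec:properties}. Writing $\calM = \Gamma_M \with M$, by Definition~\ref{def:comp.nets} we have $\calM \testP \calG = (\Gamma_M \cup \Gamma_N) \with (M \Cpar \Cloc{s}{n})$, and by Definition~\ref{def:sea} the hypothesis $\calM \testP \calG \extar{\lambda} \Lambda$ is witnessed by an intensional judgement $(\Gamma_M \cup \Gamma_N) \with (M \Cpar \Cloc{s}{n}) \ar{\mu} \Lambda$, where $\mu$ has the form $m'.\tau$ or $m'.c!v$ (when $\lambda = \tau$), $m'.c!v$ (when $\lambda = \eout{c!v}{\eta}$), or $m.c?v$ (when $\lambda = m.c?v$), together with the side conditions on reached interface nodes. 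First I would dispatch the well-definedness bookkeeping: since $\calM \testP \calG$ is defined, $n \notin \nodes{\calM}$, so every node name occurs at most once in $M \Cpar \Cloc{s}{n}$, and $n$ is an interface node of $\calM$; by well-formedness its only $\Gamma_M$-neighbours are internal nodes of $\calM$, while its $\Gamma_N$-neighbours are exactly $\inp{\calG} \cup \outp{\calG}$. Crucially, $\Gamma_N$ contains no edge incident to $\nodes{\calM}$, so the connectivity seen by the internal nodes of $\calM$ is identical in $\Gamma_M$ and in $\Gamma_M \cup \Gamma_N$; this lets me transfer any sub-action of $M$ relative to the composite graph back to a genuine action of $\calM$.

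Next, for each form of $\mu$ I would invoke the matching structural result — Proposition~\ref{prop:tau} for $m'.\tau$, Proposition~\ref{prop:broadcast} for $m'.c!v$, and Proposition~\ref{prop:input} for $m.c?v$ — applied to the parallel term $M \Cpar \Cloc{s}{n}$. Because node names are unique, each decomposition isolates a distinguished node: for an internal or broadcasting node $m'$, either $m' \in \nodes{\calM}$ (the activity originates in $\calM$) or $m' = n$ (it originates in $\calG$), and this dichotomy is exactly what splits each top-level case into its sub-cases. When the source lies in $\nodes{\calM}$ I read the $M$-part of the decomposition back as an extensional action of $\calM$ alone, and the residual of $\Cloc{s}{n}$ as a process transition $s \ar{c?v} \Theta$ (reception), a deaf step $s \nar{c?v}$ with $\Theta = \pdist{s}$, or no change $\Theta = \pdist{s}$; symmetrically, when $m' = n$ the broadcast $s \ar{c!v} \Theta$ of the code at $n$ is combined either with a reception by $\calM$, yielding $\calM \extar{n.c?v} \Delta$, or with $\calM$ being unaffected, $\Delta = \pdist{\calM}$.

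The delicate point is the reclassification of activity at node $n$: although $n$ is internal in $\calM \testP \calG$, it is an interface node of $\calM$, so activity that is invisible (i.e. $\tau$) at the composite level may be visible to $\calM$ at $n$. Concretely, a broadcast of $\calM$ that reaches $n$ is recorded as the extensional output $\calM \extar{\eout{c!v}{\eta \cup \{n\}}} \Delta$ with $n \notin \eta$, and whether the composite action is $\tau$ or $\eout{c!v}{\eta}$ is decided by whether any output node of $\calM$ other than $n$ is reached; dually, a broadcast fired at $n$ is seen by $\calM$ as an input $n.c?v$ precisely when $n \in \inp{\calM}$, and its escape to the composite interface is governed by $\outp{\calG}$, which is why the conditions $\outp{\calG} = \emptyset$ (for $\lambda = \tau$) and $\eta = \outp{\calG}$ (for $\lambda = \eout{c!v}{\eta}$) appear. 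For the input case I would further split on whether the detecting node $m$ lies in $\inp{\calG}$, forcing $\Cloc{s}{n}$ to react, and on whether $m \in \inp{\calM}$, matching sub-cases (i)--(v). In every sub-case the residual distribution delivered by the structural characterisation already has the shape $\Delta' \Cpar \Cloc{\Theta'}{n}$; by node-stability (Proposition~\ref{prop:static.topology} and Corollary~\ref{cor:stable.dist}) this lifts to the required factorisation $\Lambda = \Delta \testP (\Gamma_N \with \Cloc{\Theta}{n})$.

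I expect the main obstacle to be exactly this connectivity bookkeeping rather than any single hard inference. One must verify, for each of the sixteen listed sub-cases, that the set of composite-interface nodes actually reached coincides with the $\eta$ demanded by Definition~\ref{def:sea}, that the partition of $n$'s neighbours into internal $\calM$-nodes, $\inp{\calG}$ and $\outp{\calG}$ is faithful, and that the non-success side condition $M \Cpar \Cloc{s}{n} \not\equiv \Cloc{\omega + s'}{m''} \Cpar N'$ transfers correctly both to $\calM$ and to $s$. The hypothesis $\calG \in \mathbb{G}$ — a single occupied node — is what keeps this finite and tractable: it guarantees that at most one of the two components is ever the source of a broadcast or a $\tau$-step, so no genuinely simultaneous activity of two occupied nodes on the $\calG$ side needs to be considered.
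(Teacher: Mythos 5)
Your plan follows essentially the same route as the paper's proof: unfold the composite's extensional action to an intensional judgement, apply the structural characterisations (Propositions~\ref{prop:tau}, \ref{prop:input}, \ref{prop:broadcast}) to $M \Cpar \Cloc{s}{n}$, split on whether the source node lies in $\nodes{\calM}$ or equals $n$, and use the disjointness of $\Gamma_N$ from $\nodes{\calM}$ to transfer sub-actions back to $\Gamma_M$ --- a fact the paper isolates as explicit Weakening/Strengthening lemmas proved by rule induction, which is the only packaging difference. Your identification of the delicate points (reclassification of activity at $n$ via $\outp{\calG}$ and $\inp{\calM}$, and the bookkeeping of the reached $\eta$-sets) matches exactly the case analysis carried out in Appendix~\ref{sec:decomposition.results}.
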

\begin{proof}
  The proof of this Proposition is quite technical, and it is 
  therefore relegated to Appendix \ref{sec:decomposition.results}, Page \pageref{proof:decomp}.
\end{proof}

Next we consider how the weak actions performed by a stable sub-distribution of the form $\Delta 
\testP (\Gamma_N \with \Cloc{s}{n})$, can be inferred by a weak action performed by 
the node-stable distribution $\Delta$ and a process transition performed by the 
state $s$, respectively. 
For our purposes it will suffice to combine a weak extensional transition with 
a strong process transition.

\begin{prop}[Weak/Strong composition in \pLTSnets]
\label{prop:wea.comp}
	Let $\Delta \in \subdist{\nets}$ be a 
	node-stable sub-distributions of networks. 
	Let also $\Gamma_N, n, s$ be such that $\Delta \testP (\Gamma_N \with 
	\Cloc{s}{n})$ is well-defined. Further, 
	let $\calG \in \support{\Gamma_N \with \Cloc{s}{n}}$; 
	here note that the sets $\outp{\calG}$ and $\inp{\calG}$ are 
	completely determined by the connectivity graph $\Gamma_N$.
	\begin{enumerate}
	\item Composition resulting in internal activity:
	\begin{enumerate}[label=(\roman*)]
\item
    $\Delta \extAr{\tau} \Delta'$ 
implies 
 $\Delta \testP (\Gamma_N \with \Cloc{\pdist{s}}{n}) \extAr{\tau}
 \Delta' \testP (\Gamma_N \with \Cloc{\pdist{s}}{n})$,

\item $s \ar{\tau} \Theta$ implies 
$\Delta \testP (\Gamma_N \with \Cloc{\pdist{s}}{n}) 
\extAr{\tau} \Delta \testP (\Gamma_N \with \Cloc{\Theta}{n})$, 

\item $\Delta \extAr{\eout{c!v}{\{n\}}} \Delta'$ and 
$s \ar{c?v} \Theta$ implies 
$\Delta \testP (\Gamma_N \with \Cloc{\pdist{s}}{n}) 
\extAr{\tau} \Delta' \testP (\Gamma_N \with \Cloc{\Theta'}{n}) $, 

\item $\Delta \extAr{\eout{c!v}{\{n\}}} \Delta'$ and 
$s \nar{c?v}$ implies 
$\Delta \testP (\Gamma_N \with \Cloc{\pdist{s}}{n}) 
\extAr{\tau} \Delta' \testP (\Gamma_N \with \Cloc{\pdist{s}}{n}) $, 

\item $\Delta \extAr{n.c?v} \Delta'$, 
$\outp{\calG} = \emptyset$ 
and $s \ar{c!v} \Theta$, then , 
then $\Delta \testP 
(\Gamma_N \with \Cloc{\pdist{s}}{n}) \extAr{\tau} \Delta' \testP 
(\Gamma_N \with \Cloc{\Theta}{n})$,

\item for any $\calM \in \support{\Delta}, n \notin \inp{\calM}$, 
$\outp{\calG} = \emptyset$ and $s \ar{c!v} \Theta$ implies 
$\Delta \testP (\Gamma_N \with \Cloc{\pdist{s}}{n}) 
\extAr{\tau} \Delta \testP (\Gamma_N \with \Cloc{\Theta}{n})$. 
\end{enumerate}

\item Composition resulting in an extensional output:
\begin{enumerate}[label=(\roman*)]
\item  $\Delta \extAr{\eout{c!v}{\eta}} \Delta'$, $n \notin \eta$ 
  implies 
  $\Delta \testP (\Gamma_N \with \Cloc{\pdist{s}}{n}) 
  \extAr{\eout{c!v}{\eta}} \Delta' \testP (\Gamma_N \with \Cloc{\pdist{s}}{n})$,

\item $\Delta \extAr{\eout{c!v}{\eta}} \Delta'$, $ 
 \{n\} \subset \eta$  
  and $s \ar{c?v} \Theta$ 
  implies 
  $\Delta \testP (\Gamma_N \with \Cloc{\pdist{s}}{n}) 
  \extAr{\eout{c!v}{\eta \setminus \{n\}}} 
  \Delta' \testP (\Gamma_N \with \Cloc{\Theta}{n})$,

\item $\Delta \extAr{\eout{c!v}{\eta}} \Delta'$, $ 
 \{n\}  \subset \eta$  
  and $s \nar{c?v}$ 
  implies 
  $\Delta \testP (\Gamma_N \with \Cloc{\pdist{s}}{n}) 
  \extAr{\eout{c!v}{\eta \setminus \{n\}}} 
  \Delta' \testP (\Gamma_N \with \Cloc{\pdist{s}}{n})$,

\item $s \ar{c!v} \Theta$, $\outp{\calG} \neq \emptyset$, 
$\Delta \extAr{n.c?v} \Delta'$ implies that 
$\Delta \testP (\Gamma_N \with \Cloc{\pdist{s}}{n}) 
\extAr{\eout{c!v}{\outp{\calG}}} \Delta' \testP (\Gamma_N \with \Cloc{\Theta}{n})$, 

\item $n \notin \inp{\calM}$ for any $\calM \in \support{\Delta}$ 
and $s \ar{c!v} \Theta'$, $\outp{\calG} \neq \emptyset$  implies that 
$\Delta \testP (\Gamma_N \with \Cloc{\pdist{s}}{n}) 
\extAr{\eout{c!v}{\outp{\calG}}} \Delta \testP (\Gamma_N \with \Cloc{\Theta}{n})$. 
\end{enumerate}

\item Composition resulting in an input:
\begin{enumerate}[label=(\roman*)]
\item $\Delta\extAr{m.c?v} \Delta'$ and 
	$m \notin \inp{\calG}$ implies 
	$\Delta \testP (\Gamma_N \with \Cloc{\pdist{s}}{n}) \extAr{m.c?v} 
	\Delta' \testP (\Gamma_N \with \Cloc{\pdist{s}}{n}$, 
	
\item $m \notin \inp{\calM}$ for $\calM \in \support{\Delta}$, 
$m \in \inp{\calG}$ and $s \ar{c?v} \Theta$ 
	implies $\Delta \testP (\Gamma_N \with \Cloc{\pdist{s}}{n}) \extAr{m.c?v} 
	\Delta \testP (\Gamma_N \with \Cloc{\Theta}{n}$,

\item $m \notin \inp{\calM}$ for$\calM \in \support{\Delta}$, 
$m \in \inp{\calG}$ and $s \nar{c?v}$ 
	implies $\Delta \testP (\Gamma_N \with \Cloc{\pdist{s}}{n}) \extAr{m.c?v} 
	\Delta \testP (\Gamma_N \with \Cloc{\pdist{s}}{n}$,

\item $\Delta \extAr{m.c?v} \Delta'$, 
	$m \in \inp{\calG}$ and 
      $s \ar{c?v} \Theta$ 
      implies 
      $\Delta \testP (\Gamma_N \with \Cloc{\pdist{s}}{n}) \extAr{m.c?v} 
	\Delta' \testP (\Gamma_N \with \Cloc{\Theta}{n}$, 

\item $\Delta \extAr{m.c?v} \Delta'$, 
	$m \in \inp{\calG}$ and 
      $s \nar{c?v} $ 
      implies 
      $\Delta \testP (\Gamma_N \with \Cloc{\pdist{s}}{n}) \extAr{m.c?v} 
	\Delta' \testP (\Gamma_N \with \Cloc{\Theta}{n}$.
	\end{enumerate}
  \end{enumerate}
\end{prop}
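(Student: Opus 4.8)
The plan is to reduce every clause to two ingredients: the composition of a \emph{single} strong extensional action with a strong process move, and the stability of the freshly added node $n$ under internal activity. First I would isolate a single-step \textbf{strong composition lemma}, the converse of Proposition~\ref{prop:decomp}: given $\calM \in \nets$ and $\calG = \Gamma_N \with \Cloc{s}{n}$ with $\calM \testP \calG$ defined, and given a strong extensional action of $\calM$ together (where required) with a strong process transition $s \ar{\mu} \Theta$, one reads off the matching strong extensional action of $\calM \testP \calG$. Each sub-case is verified directly from the rules of Figure~\ref{fig:opsem} --- $\Rlts{\tau.prop}$ for internal moves, $\Rlts{sync}$ for an output of $\calM$ synchronising with a reception at $n$, and $\Rlts{prop}$/$\Rlts{rec}$/$\Rlts{deaf}$/$\Rlts{disc}$ for inputs --- after which one checks the extensional side-conditions of Definition~\ref{def:sea}. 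The case analysis is dual to that of Proposition~\ref{prop:decomp}; for instance, when $\eta = \{n\}$ the broadcast of $\calM$ reaches, in the composite, only the now-internal node $n$, so by Definition~\ref{def:sea}(1) it becomes an internal action, which is exactly clause 1(iii).

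The second ingredient, and the workhorse for the weak clauses, is clause 1(i) itself. The key observation is a \textbf{stability invariant}: along any hyper-derivation $\Delta \extAr{\;\;\;} \Delta'$ the component $\Cloc{\pdist{s}}{n}$ is never touched. Indeed an internal extensional step is either an $m.\tau$ move, which leaves $n$ fixed, or an internal broadcast that by Definition~\ref{def:sea}(1) reaches only internal nodes of the networks in $\support{\Delta}$; since the broadcasting node lies in $\nodes{\Delta}$ and hence, by the disjointness condition of Definition~\ref{def:comp.nets}, has no edge to $n$ in $\Gamma_N$, and since $n$ is external to those networks, $n$ receives nothing. Using the internal sub-case of the single-step lemma, lifted to sub-distributions through Definition~\ref{def:lift} and the bilinearity of $\testP$ supplied by Equation~\eqref{eq:twoargexpected}, together with the linear-combination property of hyper-derivations (Theorem~\ref{thm:hyper}(iii)), I would replay the entire hyper-derivation of $\Delta$ step by step against the frozen node to obtain $\Delta \testP (\Gamma_N \with \Cloc{\pdist{s}}{n}) \extAr{\tau} \Delta' \testP (\Gamma_N \with \Cloc{\pdist{s}}{n})$, which is clause 1(i).

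With these two tools the remaining clauses assemble uniformly. Each weak hypothesis $\Delta \extAr{\lambda_1} \Delta'$ expands, by Definition~\ref{def:wea}, into a hyper-derivation prefix, a single strong action, and a hyper-derivation suffix; I apply clause 1(i) to the prefix and suffix (freezing $n$ at $s$, then at $\Theta$), and the single-step strong composition lemma to the middle strong action together with the given $s \ar{\mu}\Theta$, finally stitching the three weak fragments using Definition~\ref{def:wea} and the transitivity of $\extAr{}$ (the analogue for $\dar{}$ is Theorem~\ref{thm:hyper}(i)). The output clauses 1(iii)--(iv) and 2 that involve $n$ in the detecting set are handled through clause (b) of Definition~\ref{def:wea}(3): one writes $\eta$ as the disjoint union $\{n\}\uplus(\eta\setminus\{n\})$, so that the part of the broadcast reaching $n$ (which becomes internal, or is absorbed into the synchronisation with $s \ar{c?v}\Theta$) is separated from the genuinely external part $\eta\setminus\{n\}$, and then recombines the two output fragments. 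The input clauses 3(i)--(v) are selected according to whether $m \in \inp{\calG}$, whether $m \in \inp{\calM}$, and whether $s \ar{c?v}$, each determining which of $\Rlts{rec}$, $\Rlts{deaf}$, $\Rlts{disc}$ governs the reaction at $n$.

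I expect the main obstacle to be the bookkeeping of output sets in clauses 1(iii)--(iv) and 2(ii)--(iii): one must guarantee that the partition constraint behind Definition~\ref{def:wea}(3b) --- every node in the final detecting set is reached exactly once --- is preserved when $n$'s reception is spliced into the derivation, so that no external node is double-counted and none is dropped. A secondary, more routine, difficulty is discharging the non-successfulness side-conditions $M \not\equiv \Cloc{\omega+s'}{m}\Cpar N$ of Definition~\ref{def:sea} at every step of the constructed derivations, and confirming that the ``only internal nodes affected'' character of an internal broadcast genuinely survives the enlargement of the connectivity from $\Gamma_M$ to $\Gamma_M \cup \Gamma_N$ --- which is precisely where the disjointness hypothesis of $\testP$ and the node-stability of $\Delta$ are essential.
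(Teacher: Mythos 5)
Your two building blocks are exactly the ones the paper uses: the single-step strong composition lemma dual to Proposition~\ref{prop:decomp} (the paper proves its $\tau$-instance as Lemma~\ref{lem:tau.strong} and leaves the output and input instances to the reader), and the replay of a hyper-derivation against the frozen node $n$, lifted through Definition~\ref{def:lift} and the linearity of $\testP$, which is precisely the paper's proof of clause 1(i). These do carry the internal clauses, the input clauses 3(i)--(v) (weak inputs genuinely expand, by Definition~\ref{def:wea}(2), into prefix, one strong action, and suffix), and even the singleton-target output clauses 1(iii)--(iv), because a weak output whose target is $\{n\}$ can only be derived by rule (a) of Definition~\ref{def:wea}(3): rule (b) always produces a target set with at least two elements.

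The gap is in your assembly of the remaining output clauses 2(i)--(iii). A general weak output $\Delta \extAr{\eout{c!v}{\eta}} \Delta'$ does \emph{not} expand into prefix + single strong action + suffix --- Definition~\ref{def:wea}(3) is an inductive definition whose rule (b) chains several strong broadcasts with disjoint target sets, and that is the entire point of the non-standard definition. Nor may you re-factor the given action along $\eta = \{n\} \cup (\eta \setminus \{n\})$ as you propose: rule (b) is an introduction rule only, and the factorisation it would hand you is false. Concretely, let $\Delta = \pdist{\calM}$ where $\calM$ has a single internal node $m$ connected to both $n$ and an external node $o$, running $c!\pc{v}$; then $\calM$ has the weak action $\extAr{\eout{c!v}{\{n,o\}}}$ arising from one strong broadcast, but it has no weak output with target $\{n\}$ alone or $\{o\}$ alone, since every broadcast it can perform reaches both nodes simultaneously --- this is exactly the broadcast-versus-multicast asymmetry of Example~\ref{ex:bcast1}. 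The correct assembly, and the one the paper uses, is induction on the derivation of the weak output. In the base case (rule (a)) the single strong broadcast has a target set containing $n$, and your single-step lemma (the dual of Proposition~\ref{prop:decomp}(2)(ii)--(iii)) peels $n$ off, turning the target into $\eta' \setminus \{n\}$ while $s$ moves to $\Theta$ (or stays put when $s \nar{c?v}$). In the inductive case (rule (b)) disjointness places $n$ in exactly one of $\eta_1, \eta_2$; you apply the inductive hypothesis, with the synchronisation, on that side, the already-established clause 2(i) on the other side, and recombine with rule (b). Under this induction the partition bookkeeping you flag as the main obstacle takes care of itself, because the partition of $\eta \setminus \{n\}$ is inherited from the given derivation rather than re-invented.
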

\begin{proof}
  See Appendix \ref{sec:decomposition.results}, Page \pageref{proof:composition}.
\end{proof}

%
%

\subsection{Relating Extensional Actions and Reductions}
\label{sec:relating}
In this section we investigate the relationship between 
extensional actions and reductions, defined in Section 
\ref{sec:ts}. 

It follows immediately, from the definition of the 
reduction relation $\red$, that for any 
network $\calM$ we have $\calM \red \Delta$ 
if and only if either $\calM \extar{\tau} \Delta$ 
or $\calM \extar{\eout{c!v}{\eta}} \Delta$. 
However, this result is not true anymore if we 
focus on the extensional transitions performed 
by a distribution of networks. 
\begin{figure}
\begin{center}
\begin{tikzpicture}
          \node[state](m){$m$}; 
          \node[state](o)[right=of m]{$o$}; 
  \path[to]
       (m) edge [thick] (o);
   \begin{pgfonlayer}{background}
    \node [background,fit=(m)] {};
    \end{pgfonlayer}
    \end{tikzpicture}
\end{center}
\caption{Relating reductions with extensional actions.}
\label{fig:red.ext}
\end{figure}

\begin{exa}[Reductions and Extensional Actions] 
\label{ex:red.ext}
Consider the network distribution $\Gamma \with \Delta$, 
where $\Gamma$ is depicted in Figure \ref{fig:red.ext} and 
\[
\Delta = \nicefrac{1}{2} \cdot \pdist{\Cloc{\tau.\Cnil}{m}} + 
\nicefrac{1}{2} \cdot \pdist{\Cloc{c!\pc{v}.\Cnil}{m}}.
\]
\noindent
Note that we have the reductions $\Gamma \with \Cloc{\tau.\Cnil}{m} 
\red \pdist{\Gamma \with \Cloc{\Cnil}{m}}$, and 
$\Gamma \with \Cloc{c!\pc{v}.\Cnil}{m} \red 
\pdist{\Gamma \with \Cloc{\Cnil}{m}}$. These two 
reductions can be combined together to infer  
$\Gamma \with \Delta \red \pdist{\Gamma \with \Cloc{\Cnil}{m}}$.

However, there exists no extensional action of the form $\Gamma \with \Delta 
\extar{\lambda} \Gamma \with \pdist{\Cloc{\Cnil}{m}}$; in fact, 
the only possibility for the network $\Gamma \with \Cloc{\tau.\Cnil}{m}$ is 
to perform a $\tau$-extensional action, while the network $\Gamma \with 
\Cloc{c!\pc{v}.\Cnil}{m}$ can only perform an output action of 
the form $\eout{c!v}{\{o\}}$. In order to infer the action 
$\Gamma \with \Delta \extar{\lambda} \Gamma \with \pdist{\Cloc{\Cnil}{m}}$ 
we require that every network in $\support{\Gamma \with \Delta}$ 
performs the same action $\lambda$; but as we have just noted, 
this is not true for $\Gamma \with \Delta$.
\end{exa}

The problem in Example \ref{ex:red.ext} arises because reductions 
have been identified with two different activities of 
networks; internal actions and broadcasts of messages. 
However, it is possible to avoid this problem if we 
modify a network $\calM$ by removing the nodes in its 
interface. As a consequence, the only activities allowed 
for such a network would be internal actions of the 
form $\tau$.

\begin{defi}[Closure of a Network]
\label{def:net.closure}
Let $\calM$ be a network; we define its closure 
$\closure{\calM} = \calM'$ by letting 
\begin{eqnarray*}
(\calM')_V &=& \calM_V \setminus \type{\calM}\\
\calM' \vdash m, n, \calM \vdash \rconn{m}{n} &\mbox{implies}& \calM' \vdash \rconn{m}{n} 
\end{eqnarray*}
\end{defi}
\noindent 
Obviously the operator $\closure{\cdot}$ preserves well-formed networks.

The actions of a network of the form $\closure{\calM}$ are completely 
determined by those performed by $\calM$, as the following result shows. 
\begin{prop}
\label{prop:ext.closure}
Suppose that $\calM \extar{\lambda} \Delta$, where 
either $\lambda = \tau$ or $\lambda = \eout{c!v}{\eta}$; 
then $\closure{\calM} \extar{\tau} \closure{\Delta}$. 

Conversely, if $\closure{\calM} \extar{\lambda} \Theta$, then 
$\lambda = \tau$ and either 
\begin{enumerate}[label=(\roman*)]
\item $\calM \extar{\tau} \Delta$ and $\Theta = \closure{\Delta}$, 
\item or $\calM \extar{\eout{c!v}{\eta}} \Delta$ and $\Theta = \closure{\Delta}$.
\end{enumerate}
\end{prop}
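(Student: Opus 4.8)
The plan rests on a single observation: the intensional transitions underlying $\tau$- and output-actions involve only internal nodes, and are therefore insensitive to the removal of the interface performed by $\closure{\cdot}$. Write $\calM = \Gamma \with M$, so that $\closure{\calM} = \Gamma^{\circ} \with M$ where $\Gamma^{\circ}$ is the restriction of $\Gamma$ to $\nodes{M}$; in particular $\inp{\closure{\calM}} = \outp{\closure{\calM}} = \emptyset$. First I would record an invariance lemma: for every $\Delta$, (a) $\Gamma \with M \ar{m.\tau} \Delta$ iff $\Gamma^{\circ} \with M \ar{m.\tau} \Delta$, and (b) for $n \in \nodes{M}$, $\Gamma \with M \ar{n.c!v} \Delta$ iff $\Gamma^{\circ} \with M \ar{n.c!v} \Delta$. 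Part (a) is immediate from Proposition~\ref{prop:tau}, whose characterisation mentions only $M$. For (b) I would use Proposition~\ref{prop:broadcast}: the broadcast decomposes as $M \equiv \Cloc{c!\pc{e}.p + s}{n} \Cpar N$ with a reception $\Gamma \with N \ar{n.c?v} \Theta$; by Proposition~\ref{prop:input} this reception is fixed by the connections $\rconn{n}{n_i}$, $\notrconn{n}{n_j}$ among the nodes of $N$, all of which are internal and hence preserved verbatim by $\Gamma^{\circ}$. Thus $\Theta$, and so $\Delta$, is the same over both graphs. By Proposition~\ref{prop:static.topology} the target of an extensional action shares a single connectivity, so $\closure{(\Gamma \with \Theta)} = \Gamma^{\circ} \with \Theta$ makes pointwise closure of distributions well defined.

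For the forward direction, suppose $\calM \extar{\lambda} \Delta$ with $\lambda = \tau$ or $\lambda = \eout{c!v}{\eta}$. In every case the underlying intensional step is $\Gamma \with M \ar{m.\tau} \Delta'$ or $\Gamma \with M \ar{n.c!v} \Delta'$ for some internal $n$, and $M \not\equiv \Cloc{\omega + s}{m} \Cpar N$. By the invariance lemma the same step holds over $\Gamma^{\circ}$. Since $\Gamma^{\circ}$ has no interface node, every node reachable from $n$ in $\Gamma^{\circ}$ lies in $\nodes{M}$, so clause (b) of the $\tau$-action definition is satisfied trivially; hence $\closure{\calM} \extar{\tau} (\Gamma^{\circ} \with \Delta') = \closure{\Delta}$.

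For the converse, assume $\closure{\calM} \extar{\lambda} \Theta$. Because $\inp{\closure{\calM}} = \outp{\closure{\calM}} = \emptyset$, no input or output action is possible, so $\lambda = \tau$ and the step rests on an intensional $\Gamma^{\circ} \with M \ar{m.\tau} \Theta'$ or $\Gamma^{\circ} \with M \ar{n.c!v} \Theta'$, with $M \not\equiv \Cloc{\omega + s}{m} \Cpar N$ and $\Theta = \Gamma^{\circ} \with \Theta'$. Applying the invariance lemma in the reverse direction, the same intensional step holds over $\Gamma$. In the $\tau$-case this yields $\calM \extar{\tau} (\Gamma \with \Theta')$, giving alternative (i) with $\Delta = \Gamma \with \Theta'$ and $\Theta = \closure{\Delta}$. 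In the broadcast case I would inspect $\eta = \setof{l \in \outp{\calM}}{\Gamma \vdash \rconn{n}{l}}$: if $\eta = \emptyset$ the broadcast is again a $\tau$-action of $\calM$ (alternative (i)), while if $\eta \neq \emptyset$ it is precisely the output action $\calM \extar{\eout{c!v}{\eta}} (\Gamma \with \Theta')$ (alternative (ii)); in both cases $\closure{(\Gamma \with \Theta')} = \Theta$.

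The only delicate point — the main obstacle — is the invariance of the reception sub-computation under closure, that is, checking that passing from $\Gamma$ to $\Gamma^{\circ}$ leaves the distribution $\Theta$ produced by $\Gamma \with N \ar{n.c?v} \Theta$ unchanged. This is exactly where Proposition~\ref{prop:input} does the work: its partition of $\nodes{N}$ into receivers and non-receivers depends only on the connectivity among internal nodes and on whether the local code can receive, neither of which is affected by deleting the interface. Everything else is bookkeeping with Propositions~\ref{prop:tau} and~\ref{prop:broadcast} and the definition of extensional actions.
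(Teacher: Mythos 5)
Your proposal is correct and follows essentially the same route as the paper's own proof: both transfer the underlying intensional step between $\Gamma$ and its restriction via Propositions~\ref{prop:tau}, \ref{prop:input} and~\ref{prop:broadcast} (your ``invariance lemma'' packages exactly the point the paper makes in its footnote, that the resulting distribution is literally unchanged because reception depends only on connections among internal nodes), and both use emptiness of the interface of $\closure{\calM}$ to force $\lambda = \tau$ in the converse, splitting on whether $\eta$ is empty to decide between alternatives (i) and (ii).
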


\begin{proof}
Let $\calM = \Gamma \with M$. 
Suppose that $\Gamma \with M \extar{\eout{c!v}{\eta}} \Gamma \with \Delta$ 
for some $\Gamma, M$ and $\Delta \in \dist{\sys}$. 
By definition of extensional actions there exists a node $m \in \nodes{M}$ 
such that $\Gamma \with M \ar{m.c!v} \Delta$, and 
$\{o\;|\; \Gamma \vdash \rconn{m}{o}\} = \eta$. 
By Proposition \ref{prop:broadcast} it follows that 
$M \equiv \Cloc{c!e.p + q}{m} \Cpar N$, with $\interpr{e} = v$. 
By definition of the operator $\closure{\cdot}$, 
$\closure{\Gamma \with M} = \Gamma' \with M$, where 
 $\Gamma' \vdash \rconn{m}{n}$ whenever 
$\Gamma \vdash \rconn{m}{n}$ and $m,n \in \nodes{M}$.
By an application of propositions \ref{prop:broadcast} and 
\ref{prop:input} we obtain that $\Gamma' \with M 
\ar{m.c!v} \Delta$\footnote{In reality these propositions ensure that 
$\Gamma' \with M \ar{m.c!v} \Delta'$, where $\Delta' \equiv \Delta$. 
However, since the system term $M$ is not changed when performing 
the operation $\closure{\Gamma \with M}$, 
it can be proved that $\Delta = \Delta'$.}. By 
Definition \ref{def:sea}(1) we get 
that $\Gamma' \with M \extar{\tau} \Gamma' \with \Delta$. 
But $\Gamma' \with \Delta$ is exactly $\closure{\Gamma \with \Delta}$.
The case $\Gamma \with M \extar{\tau} \Gamma \with \Delta$ is 
treated similarly.

Conversely, suppose that $\Gamma' \with M \extar{\lambda}  
\Gamma' \with \Theta$. Since $\type{\Gamma' \with M} = \emptyset$, 
it follows that it cannot be $\lambda = m.c?v$, nor $\lambda = \eout{c!v}{\eta}$. 
Therefore the only possibility is that $\Gamma' \with M 
\extar{\tau} \Gamma' \with \Theta$. By Definition 
\ref{def:sea}(1) there are two possible cases. 

\begin{enumerate}
\item $\Gamma' \with M \ar{m.\tau} \Theta$. In this 
case we have that $M \equiv \Cloc{\tau.p + q}{m} \Cpar N$ 
and $\Theta \equiv \interprP{\Cloc{p}{m}} \Cpar \pdist{N}$. 
It follows that $\Gamma \vdash M \ar{m.\tau} \Theta$, 
hence $\Gamma \with M \extar{m.\tau} \Gamma \with \Theta$.

\item $\Gamma' \with M \ar{m.c!v} \Theta$; in this 
case we have that $\Gamma \with M \ar{m.c!v} \Theta$, 
since whenever $\Gamma' \vdash \rconn{m}{n}$ it also 
follows that $\Gamma \vdash \rconn{m}{n}$. Let 
$\eta = \{o \in \outp{\Gamma \with M}\;|\; \Gamma \vdash \rconn{m}{o}\}$. 
If $\eta = \emptyset$, by Definition \ref{def:sea}(1) 
we obtain that $\Gamma \with M \extar{\tau} \Gamma \with \Theta$, 
otherwise Definition \ref{def:sea}(2) ensures that 
$\Gamma \with M \extar{\eout{c!v}{\eta}} \Gamma \with \Theta$.\qedhere
\end{enumerate}

\end{proof}

\noindent Thus only $\tau$-actions are allowed in networks of the form 
$\closure{\calM}$, and the extensional outputs performed 
by $\calM$ are converted in $\tau$-actions in $\closure{\calM}$. 
This relationship can be used to relate the reductions performed 
by the network $\calM$ with the extensional actions of the 
network $\closure{\calM}$

\begin{prop}
Let $\Gamma' \with M = \closure{\Gamma \with M}$; then 
$\Gamma \with M \red \Gamma \with \Delta$ if and 
only if $\Gamma' \with M \red \Gamma' \with \Delta$.
\end{prop}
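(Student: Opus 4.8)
The plan is to unfold the reduction relation on both sides through its definition in Example~\ref{ex:ts}, thereby reducing the claim to a statement purely about intensional transitions, and then to check that closure disturbs neither kind of intensional transition underlying a reduction.

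First I would record that, by definition of $\red$, we have $\Gamma \with M \red \Gamma \with \Delta$ precisely when $\Gamma \with M \ar{m.\tau} \Delta$ for some internal node $m$, or $\Gamma \with M \ar{m.c!v} \Delta$ for some $m,c,v$; the analogous statement holds for $\Gamma' \with M$. Since closure leaves the system term $M$ untouched and merely restricts the connectivity graph (Definition~\ref{def:net.closure}), it suffices to prove, for each intensional label $\mu \in \{m.\tau,\, m.c!v\}$, that $\Gamma \with M \ar{\mu} \Delta$ if and only if $\Gamma' \with M \ar{\mu} \Delta$, with the \emph{same} $\Delta$ in both cases.

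For the silent case this is immediate: Proposition~\ref{prop:tau} characterises $\Gamma \with M \ar{m.\tau} \Delta$ entirely in terms of the shape of $M$ (namely $M \equiv \Cloc{(\tau.p)+s}{m} \Cpar N$ and $\Delta \equiv \interprP{\Cloc{p}{m}} \Cpar \pdist{N}$), with no mention of the connectivity graph, so it holds under $\Gamma$ iff it holds under $\Gamma'$. For the broadcast case I would combine Propositions~\ref{prop:broadcast} and~\ref{prop:input}: a broadcast $\Gamma \with M \ar{m.c!v} \Delta$ amounts to $M \equiv \Cloc{(c!\pc{e}.p+s)}{m} \Cpar N$ together with an input transition $\Gamma \with N \ar{m.c?v} \Theta$, and the latter refers to the graph only through the edges $\rconn{m}{n_i}$ at its receivers and the non-edges $\notrconn{m}{n_j}$ (or failure to listen) at its non-receivers. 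The key observation is that the broadcaster $m$ lies in $\nodes{M}$ and every $n_i,n_j$ lies in $\nodes{N}\subseteq\nodes{M}$, so all these (non-)edges are between internal nodes. By Definition~\ref{def:net.closure}, $\Gamma$ and $\Gamma'$ agree exactly on the edges joining two internal nodes, in both directions; hence the very same partition of $N$ into receivers and non-receivers witnesses the input transition under $\Gamma'$, yielding the same $\Theta$ and therefore the same $\Delta$.

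The one point demanding care — and the crux of the argument — is that the external nodes deleted by closure really play no part in a reduction. This holds because $M$ is built solely from internal nodes: the nodes of $\type{\Gamma \with M}$ never occur in $M$, so they never appear in the residual $\Delta$, and their potential detection of a broadcast is never recorded by the intensional semantics. Removing them and all incident edges, as closure does, therefore leaves every $\tau$- and broadcast-transition of $\Gamma \with M$ in place and creates no new ones, which gives both implications. One could alternatively read the result off Proposition~\ref{prop:ext.closure} together with the equivalence between reductions and $\extar{\tau}$- or $\extar{\eout{c!v}{\eta}}$-actions noted at the start of this section; but that route requires separately handling $\omega$-successful networks, for which the extensional actions are suppressed, whereas the transition-level argument above is uniform.
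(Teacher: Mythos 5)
Your proof is correct, but it takes a genuinely different route from the paper's. The paper dispatches this proposition in one line: it invokes Proposition~\ref{prop:ext.closure} together with the correspondence, stated at the start of Section~\ref{sec:relating}, between reductions and the extensional actions $\extar{\tau}$ and $\extar{\eout{c!v}{\eta}}$. You instead work one level down, at the intensional semantics: you unfold $\red$ into its defining clauses and show, via Propositions~\ref{prop:tau}, \ref{prop:broadcast} and~\ref{prop:input}, that closure preserves exactly the intensional $m.\tau$- and $m.c!v$-transitions, because every node mentioned in their derivations is internal and $\Gamma$ and $\Gamma'$ agree on edges between internal nodes. What the paper's route buys is brevity and reuse of already-established extensional machinery (indeed, the proof of Proposition~\ref{prop:ext.closure} performs essentially the same intensional analysis you carry out, so the mathematical core is shared). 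What your route buys is uniformity on a point the paper glosses over: the reduction relation of Example~\ref{ex:ts} carries no $\omega$-guard, whereas extensional actions are suppressed for $\omega$-successful networks (e.g.\ $\Gamma \with \Cloc{\omega.\Cnil + \tau.p}{m}$ reduces but has no extensional action), so the correspondence the paper cites is literally true only for non-successful networks and the one-line proof silently needs a separate argument for the successful case --- your transition-level argument needs no such case split. Your only loose end is the claim that the residual $\Delta$ is the \emph{same} rather than merely structurally congruent, since Propositions~\ref{prop:tau} and~\ref{prop:broadcast} characterise transitions only up to $\equiv$; but this is the same benign point the paper itself flags in the footnote to Proposition~\ref{prop:ext.closure}, and it is resolved exactly as there, by observing that closure leaves $M$ (hence the inference of its transitions) untouched.
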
 

\begin{proof}
Follows immediately from Proposition \ref{prop:ext.closure} 
and the definition of reductions.
\end{proof}

\begin{cor}
\label{cor:extreme.reductions}
Let $\Gamma' \with M = \closure{\Gamma \with M}$; then 
$\Gamma \with M \Red{} \Gamma \with \Delta$ if 
and only if $\Gamma' \with M \Red{} \Gamma' \with \Delta$. 
Further, $\Gamma \with M \RedE{} \Gamma \with \Delta$ 
if and only if $\Gamma' \with M \RedE{} \Gamma \with \Delta$. 
\qed
\end{cor}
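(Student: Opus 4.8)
The plan is to derive Corollary~\ref{cor:extreme.reductions} directly from the preceding Proposition by lifting the single-step correspondence through the hyper-derivation machinery of Definition~\ref{def:hypder}. The essential observation is that the operator $\closure{\cdot}$ alters only the connectivity graph of a network and leaves its system term untouched; in particular, since the success predicate $\omega$ of Example~\ref{ex:ts} is defined purely in terms of system terms, we have $\omega(\Gamma \with N) = \omega(\closure{\Gamma \with N})$ for every $N$. Moreover, by Corollary~\ref{cor:stable.dist} reductions do not change the connectivity graph, so every sub-distribution occurring in a hyper-derivation rooted at $\Gamma \with M$ has the form $\Gamma \with (-)$, and applying $\closure{\cdot}$ uniformly amounts to replacing $\Gamma$ by $\Gamma'$ throughout.

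First I would lift the preceding Proposition from networks to distributions, claiming that $\Gamma \with \Theta_1 \red \Gamma \with \Theta_2$ holds iff $\Gamma' \with \Theta_1 \red \Gamma' \with \Theta_2$ does. For the forward direction, unwind Definition~\ref{def:lift} to obtain a decomposition $\Gamma \with \Theta_1 = \sum_i p_i \cdot \pdist{\calN_i}$ with $\calN_i \red \Lambda_i$ and $\Gamma \with \Theta_2 = \sum_i p_i \cdot \Lambda_i$. By the preceding Proposition each reduction $\calN_i \red \Lambda_i$ transports to $\closure{\calN_i} \red \closure{\Lambda_i}$, the system-term content being unchanged. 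Since $\closure{\cdot}$ merely swaps $\Gamma$ for $\Gamma'$ on the node-stable distributions involved, it commutes with convex combination, so $\Gamma' \with \Theta_1 = \sum_i p_i \cdot \pdist{\closure{\calN_i}} \red \sum_i p_i \cdot \closure{\Lambda_i} = \Gamma' \with \Theta_2$. The converse is symmetric, using the other implication of the Proposition.

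With the distribution-level correspondence in hand, the claim for $\Red{}$ is bookkeeping. Given a hyper-derivation $\Gamma \with M \Red{} \Gamma \with \Delta$ with witnessing sub-distributions $\Delta_k^{\rightarrow}, \Delta_k^{\Stop}$, I would replace $\Gamma$ by $\Gamma'$ throughout: each defining step $\Delta_k^{\rightarrow} \red \Delta_{k+1}^{\rightarrow} + \Delta_{k+1}^{\Stop}$ is preserved by the lifted correspondence, and the side condition that $\omega$ is false on the support of each $\Delta_k^{\rightarrow}$ is preserved since $\omega$ is graph-independent. Hence $\sum_k \Delta_k^{\Stop}$, now read relative to $\Gamma'$, witnesses $\Gamma' \with M \Red{} \Gamma' \with \Delta$; the converse is identical.

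For the extreme-derivative statement I would add the maximality condition. Recall that $\Gamma \with M \RedE{} \Gamma \with \Delta$ requires, beyond the hyper-derivation, that every $\calN \in \support{\Gamma \with \Delta}$ with $\calN \red$ satisfy $\omega(\calN) = \ttrue$. By the preceding Proposition $\Gamma \with N$ can reduce iff $\Gamma' \with N$ can, and $\omega$ agrees on the two, so this maximality condition transfers verbatim between the $\Gamma$- and $\Gamma'$-versions; combined with the $\Red{}$-correspondence this yields the equivalence for $\RedE{}$. The only point demanding genuine care is the commutation of $\closure{\cdot}$ with the convex combinations of Definition~\ref{def:lift} employed when lifting the reduction to distributions; this is where one must check that the decomposition witnessing a lifted reduction can be carried through closure unchanged, which holds precisely because $\closure{\cdot}$ acts as a uniform graph substitution on the node-stable distributions produced by reductions (Corollary~\ref{cor:stable.dist}).
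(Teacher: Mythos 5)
Your proposal is correct and follows exactly the route the paper intends: the paper states this corollary with no written proof, treating it as immediate from the preceding single-step Proposition, and your argument simply fills in the routine details of that lifting---through Definition~\ref{def:lift} to distributions, then through the hyper-derivation and extreme-derivation definitions, using node-stability (Corollary~\ref{cor:stable.dist}) to see that $\closure{\cdot}$ acts as a uniform replacement of $\Gamma$ by $\Gamma'$ and that $\omega$ is graph-independent. No gaps; this is the intended proof written out in full.
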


An important consequence of Corollary \ref{cor:extreme.reductions} 
is that the operator $\closure{\cdot}$ does not affect the 
set of outcomes of a network. 
\begin{cor}
\label{cor:closure.results}
For any network $\calM$, $\Results{\calM} = \Results{\closure{\calM}}$.
\end{cor}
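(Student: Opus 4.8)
The plan is to reduce the statement to two independent observations: that $\calM$ and $\closure{\calM}$ have \emph{the same} extreme derivatives, up to their (fixed) connectivity graph, and that the value function $\ValN$ is insensitive to that graph. The first is already available as Corollary~\ref{cor:extreme.reductions}; the second will follow directly from the definition of the success predicate.

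First I would fix $\calM = \Gamma \with M$ and write $\closure{\calM} = \Gamma' \with M$, noting that both share the same system term $M$ and that reductions never alter the connectivity graph, so every extreme derivative of $\calM$ has the form $\Gamma \with \Delta$ and every extreme derivative of $\closure{\calM}$ has the form $\Gamma' \with \Delta$, with $\Delta \in \subdist{\sys}$. By Corollary~\ref{cor:extreme.reductions} the assignment $\Delta \mapsto \Delta$ is then a bijection between the two families: $\calM \RedE{} \Gamma \with \Delta$ holds if and only if $\closure{\calM} \RedE{} \Gamma' \with \Delta$.

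The crux is to show $\Val{\Gamma \with \Delta} = \Val{\Gamma' \with \Delta}$ for every such $\Delta$. Here I would unfold Definition~\ref{def:resultsets}, writing $\Val{\Gamma \with \Delta} = \sum \setof{(\Gamma \with \Delta)(\calN)}{\omega(\calN) = \ttrue}$. Since $(\Gamma \with \Delta)(\Gamma \with N) = \Delta(N)$ and, by the definition of $\omega$ in Example~\ref{ex:ts}, $\omega(\Gamma \with N)$ is true exactly when $N \equiv N' \Cpar \Cloc{\omega + s}{n}$ — a condition on the system term $N$ alone, independent of the connectivity graph — we obtain $\Val{\Gamma \with \Delta} = \sum \setof{\Delta(N)}{\omega(N) = \ttrue}$, whose right-hand side does not mention $\Gamma$. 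The identical computation applied to $\Gamma'$ yields the same quantity, so $\Val{\Gamma \with \Delta} = \Val{\Gamma' \with \Delta}$.

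Combining the two, each element $\Val{\Gamma \with \Delta}$ of $\Results{\calM}$ equals $\Val{\Gamma' \with \Delta} \in \Results{\closure{\calM}}$, giving $\Results{\calM} \subseteq \Results{\closure{\calM}}$, and the reverse inclusion follows symmetrically. The argument presents no real obstacle; the only point requiring care is confirming that $\ValN$ depends solely on the system-term part of a network, i.e. that the $\omega$-predicate is oblivious to connectivity, which is immediate from its definition.
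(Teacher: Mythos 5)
Your proposal is correct and follows essentially the same route as the paper's own proof: transfer extreme derivatives between $\calM$ and $\closure{\calM}$ via Corollary~\ref{cor:extreme.reductions} and observe that the resulting values coincide. The only difference is that you spell out explicitly why $\Val{\Gamma \with \Delta} = \Val{\Gamma' \with \Delta}$ (the $\omega$-predicate depends only on the system term), a point the paper leaves implicit in its notation.
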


\begin{proof}
Let $\calM = \Gamma \with M$, $\closure{\calM} = \Gamma' \with M$. 
Suppose that $p \in \Results{\calM}$. Then 
$\Gamma \with M \RedE{} \Gamma \with \Delta$, and 
$\sum_{\calN \in \support{\Delta}} \Val{\Delta} = p$. 
By Corollary \ref{cor:extreme.reductions} we have 
that $\Gamma' \with M \RedE{} \Gamma' \with \Delta$, 
hence $p = \sum_{\calN \in \support{\Delta}} \Val{\Delta} \in 
\Results{\Gamma' \with M}$. 

The converse implication is proved analogously.
\end{proof}

\noindent Thus the operator $\closure{\cdot}$ allows to relate weak 
extensional actions and reductions without affecting the 
set of outcomes of a network. As we will see in 
Section \ref{sec:soundness}, this operator is very 
helpful when exhibiting sound proof methods for the 
testing preorders.

\section{A Sound Proof Method for the Testing Preorders}

\label{sec:soundness}
In this Section we present the main results of the paper. 
Following \cite{DGHM09full} we introduce the notion of 
simulation between networks $\calM \forsim \calN$, and 
we prove that it is a sound proof technique for the 
may-testing preorder. This topic is addressed in Section 
\ref{sec:may.sound}.

In Section \ref{sec:must.sound} we give a similar 
result for the must testing relation. We introduce the 
concepts deadlocked network and terminal distributions. 
These will be used to define a novel coinductive 
relation for sub-distribution of networks, the \emph{deadlock 
simulation} $\deadsim$. We prove that the inverse of this 
relation is sound with respect to the must-testing preorder 
$\Mustleq$. Here the use of sub-distributions is necessary, 
since the must-testing preorder is sensitive to divergence.

Finally, in Section \ref{sec:convergent} we focus on convergent 
networks; We show that for such networks a slight 
variation of deadlock simulations can be used as 
a sound proof method for both the may and must testing 
preorders.

\subsection{The May Case}
\label{sec:may.sound}
We begin this section by reviewing the standard definition 
of simulations for probabilistic systems, applied to 
our calculus of probabilistic networks.
\begin{defi}[Simulation preorder]\rm\label{def:sim}
In  $\pLTSnets$ we let 
 $\forsim$ denote the largest relation in $\nets\times \nets$
such that if $\calM \forsim \calN$ then:
\begin{itemize}
\item $\inp{\calM} = \inp{\calN}$, $\outp{\calM} = \outp{\calN}$,
\item 
if $\omega(\calM) = \ttrue$, then 
 $\calN \dar{\tau} \Theta'$ such that for every $\calL \in \support{\Theta'}, \omega(\calL) = \ttrue$,

\item otherwise, whenever $\calM \extAr{\lambda}\Delta$, for $\lambda \in \Act_\tau$, then
  there is a $\Theta \in \dist{\nets}$ such that 
  $\calN \extAr{\lambda}\Theta$ and $\Delta \lift{(\forsim)^e}\Theta$.
\qed
\end{itemize}

\end{defi}
\noindent
This is a mild generalisation of the corresponding definition in \cite{DGHM09full} where we factor in the 
presence of the success predicate $\omega(\,)$ and we compare only networks with the 
same input and output nodes. 

Our aim in this Section is to prove the following theorem:
\begin{thm}[Soundness for May-testing]
\label{thm:may.sound}
Suppose ${\mathcal N}_1,\,{\mathcal N}_2$ are finitary networks.
Then  ${\mathcal N}_1 \forsim  {\mathcal N}_2$ in  $\pLTSnets$ 
implies ${\mathcal N}_1 \Mayleq  {\mathcal N}_2$.
\end{thm}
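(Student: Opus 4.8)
The plan is to reduce the statement to two facts: that $\forsim$ is preserved by composition with an arbitrary test, and that $\forsim$ forces a Hoare-ordering of the associated result sets once the networks are isolated from their environment. Fix a test $\calT \in \nets$ for which $\calN_1 \testP \calT$ and $\calN_2 \testP \calT$ are both defined; by Definition~\ref{def:maytest} it then suffices to prove $\Results{\calN_1 \testP \calT} \sqsubseteq_{H} \Results{\calN_2 \testP \calT}$.

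First I would establish the \emph{composition lemma}: $\calN_1 \forsim \calN_2$ implies $\calN_1 \testP \calT \forsim \calN_2 \testP \calT$. Writing $\calT$ as a composition of generating networks via Proposition~\ref{prop:network.decomp} and using associativity of $\testP$ (Proposition~\ref{prop:testp.assoc}), this reduces to composing a single generator $\calG \in \mathbb{G}$. I would then verify that $\calR = \setof{(\calN_1 \testP \calG, \calN_2 \testP \calG)}{\calN_1 \forsim \calN_2}$ satisfies the clauses of Definition~\ref{def:sim}; the interface conditions are preserved because $\testP$ is interface preserving. Given a move $\calN_1 \testP \calG \extAr{\lambda} \Lambda_1$, Strong Decomposition (Proposition~\ref{prop:decomp}) splits it into an extensional move of $\calN_1$ and a process move of the unique internal node of $\calG$; the hypothesis $\calN_1 \forsim \calN_2$ supplies a matching extensional move of $\calN_2$, and Weak/Strong Composition (Proposition~\ref{prop:wea.comp}) reassembles these into a weak move $\calN_2 \testP \calG \extAr{\lambda} \Lambda_2$ with $\Lambda_1 \lift{(\forsim)^e} \Lambda_2$.

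This composition lemma is where I expect the main difficulty to lie. The delicate point is the non-standard definition of weak output actions (Definition~\ref{def:wea}(3)): an output of $\calN_1 \testP \calG$ detected at a set $\eta$ of external nodes may have to be matched by a \emph{sequence} of weak outputs of $\calN_2 \testP \calG$ whose target sets partition $\eta$, reflecting the simulation of broadcast by multicast observed in Example~\ref{ex:bcast1}. Aligning the many cases of Proposition~\ref{prop:decomp} with those of Proposition~\ref{prop:wea.comp} while keeping the lifted relation $\lift{(\forsim)^e}$ intact is the technical heart of the argument.

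It then remains to pass from the composed simulation to the ordering of result sets. By Proposition~\ref{prop:ext.closure} the closure operator $\closure{\cdot}$ turns every output action into a $\tau$ action and leaves $\tau$ actions unchanged, so it preserves $\forsim$; hence $\closure{\calN_1 \testP \calT} \forsim \closure{\calN_2 \testP \calT}$. In these closed networks the only extensional action is $\tau$, so the extensional pLTS coincides with the underlying testing structure and $\forsim$ reduces to the probabilistic simulation of \cite{DGHM09full}. Invoking the may-soundness of that simulation --- the step at which the finitary hypothesis on $\calN_1,\calN_2$ is used --- gives $\Results{\closure{\calN_1 \testP \calT}} \sqsubseteq_{H} \Results{\closure{\calN_2 \testP \calT}}$, and Corollary~\ref{cor:closure.results} identifies $\Results{\calN_i \testP \calT}$ with $\Results{\closure{\calN_i \testP \calT}}$ for $i = 1,2$. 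Since $\calT$ was arbitrary, this is exactly $\calN_1 \Mayleq \calN_2$.
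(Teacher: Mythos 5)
Your two-pillar architecture is the paper's own: a compositionality theorem for $\forsim$ (proved one generating network at a time via Propositions~\ref{prop:network.decomp} and~\ref{prop:testp.assoc}, with Propositions~\ref{prop:decomp} and~\ref{prop:wea.comp} doing the work for a single generator), followed by outcome preservation established after applying the closure operator and Corollary~\ref{cor:closure.results}. The gap lies inside your composition lemma, and it is not a detail: you apply Strong Decomposition to the wrong kind of move. The simulation clause of Definition~\ref{def:sim}, which your relation $\calR = \{(\calN_1 \testP \calG,\, \calN_2 \testP \calG) \;|\; \calN_1 \forsim \calN_2\}$ must satisfy, quantifies over \emph{weak} moves $\calN_1 \testP \calG \extAr{\lambda} \Lambda_1$, whereas Proposition~\ref{prop:decomp} decomposes only \emph{strong} transitions $\extar{\lambda}$. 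A weak move is a hyper-derivation interleaving unboundedly many strong steps with probabilistic branching (plus, for outputs, the partition rule), and no decomposition result for such moves is available; the paper explicitly singles this out as the difficulty to be avoided.

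If instead you read your argument as playing the game only on strong moves of $\calN_1 \testP \calG$ --- which is what your use of Proposition~\ref{prop:decomp} amounts to --- then what you have shown is that $\calR$ is a \emph{simple} simulation in the sense of Definition~\ref{def:sims}, and you still owe the inclusion $\sforsim\, \subseteq\, \forsim$. That inclusion is the alternative characterisation, Theorem~\ref{thm:altchar}; it is genuinely nontrivial (an induction over the structure of weak actions, whose $\tau$ case relies on properties of hyper-derivations that hold only in finitary pLTSs), and it is precisely where the hypothesis that $\calN_1, \calN_2$ are finitary is consumed. Your proposal omits this bridge entirely and instead charges finitariness to the outcome-preservation step; but Theorem~\ref{thm:results} requires no finitary assumption --- Lemma~\ref{lem:results} and Corollary~\ref{cor:results} use only Theorem~\ref{thm:hyper}, the simulation clauses, and the closure construction. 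A smaller slip in the same passage: after recomposition the residuals are related by $\lift{\calR^e}$, the lifting of the candidate relation being defined, not by $\lift{(\forsim)^e}$, which is what the lemma is trying to establish.
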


Before proving Theorem \ref{thm:may.sound}, let us review, this time with 
formal arguments, why our definition 
of weak extensional actions had to be so complicated, and why 
we decided to focus on well-formed networks.

\begin{exa}[On Well-formed Networks]
\label{ex:sound.wellformed}
Consider again the networks $\calM, \calN$ of 
Example \ref{ex:wellformed}. Here we recall 
that $\calN$ is not well-formed, as it has 
a connection between the two external nodes 
$o_1, o_2$. 

Let $(\forsim)'$ be the largest relation over 
(possibly non well-formed) networks which satisfies the requirements 
of Definition \ref{def:sim}.
It is immediate to show that $\calM (\forsim)' \calN$, 
However, $\calM \not\Mayleq \calN$  
In fact, it suffices to consider the test 
$\calT = \Gamma_T \with \Cloc{c?\pa{x}.c!\pc{w}}{o_1} \Cpar 
\Cloc{c?\pa{x}.c?\pa{y}.\omega}{o_2}$, where 
$\Gamma_T \vdash o_1, o_2$ and $\Gamma_T \vdash \notrconn{o_1}{o_2}$ 
to distinguish these two networks. Specifically 
$\Results{\calM \testP \calT} = \{0\}$, 
$\Results{\calN \testP \calT} = \{1\}$, 
and $\{0\} \not\sqsubseteq_{H} \{1\}$. 
Therefore, Theorem \ref{thm:may.sound} cannot 
be extended to non well-formed networks.
\end{exa}

\begin{exa}
\label{ex:bcast}
Consider the networks $\calM$ and $\calN$ in Figure \ref{fig:bcast},
discussed already in Example~\ref{ex:bcast1}.  It is easy to show that
both of them can perform the weak extensional action
$\extAr{\eout{c!v}{\{o_1,o_2\}}}$. However, the inference of the
action is different for the individual networks; while in network
$\calN$ it is implied by the execution of a single broadcast action,
detected by both nodes $o_1$ and $o_2$ simultaneously, in $\calM$ this
is implied by a sequence of weak extensional actions $\calM
\extAr{\eout{c!v}{\{o_1\}}}\extAr{\eout{c!v}{\{o_2\}}}$.

It is therefore possible to exhibit a simulation between $\calN$ and
$\calM$, thus showing that $ \calN \forsim \calM$; 
Theorem~\ref{thm:may.sound} shows that this implies
$\calN \Mayleq \calM$.
\end{exa}

\begin{figure}[t]
                                 
\begin{align*}
     \begin{tikzpicture}
          \node[state](k1){$m$}; 
          \node[state](l1)[ right=of k1]{$l_1$}; 
           \node[state](k2)[below =of k1]{$n$}; 
           \node[state](l2)[right=of k2]{$l_2$};
 \path[to]
       (k1) edge [thick](l1)
       (k2) edge[thick] (l2);
 \path[tofrom]
       (k1) edge [thick](k2);
   \begin{pgfonlayer}{background}
    \node [background,fit=(k1) (k2) ] {};
    \end{pgfonlayer}
    \end{tikzpicture}
&& 
   \begin{tikzpicture}
   \node[state](m){$m$};
   \node[state](l1)[above right= of m]{$l_1$};
   \node[state](l2)[below right= of m]{$l_2$};
   \path[to]
   (m) edge[thick] (l1)
   (m) edge[thick] (l2);
    \begin{pgfonlayer}{background}
    \node [background,fit=(m) ] {};
    \end{pgfonlayer}
    \end{tikzpicture}
&&
     \begin{tikzpicture}
          \node[state](l1){$l_1$}; 
          \node[state](o)[ below right =of l1]{$o$}; 
           \node[state](l2)[below left =of o]{$l_2$}; 
 \path[to]
       (l1) edge [thick](o)
       (l2) edge[thick] (o);
   \begin{pgfonlayer}{background}
    \node [background,fit= (l1) (l2) (o)] {};
    \end{pgfonlayer}
    \end{tikzpicture}  
\\
 \calM = \Gamma_M \with M
&&
 \calN = \Gamma_N \with N
&&
{\mathcal T} = \Gamma_T \with T 
\end{align*}

 \caption{Ensuring soundness}
  \label{fig:counterex2}
\end{figure}

\begin{exa}
Soundness requires that the extensional output actions records the set of target nodes, rather than single nodes.
Consider the networks $\calM, \calN$ depicted in Figure \ref{fig:counterex2}, where 
$M = \Cloc{c!\pc{v} + c?\pa{x}.\Cnil}{m} \Cpar \Cloc{c!\pc{v} + c?\pa{x}.\Cnil}{n}$ and 
$N = \Cloc{c!\pc{v}}{m}$. 
Intuitively, network $\calM$ can broadcast value $v$ to either node $l_1$ or node $l_2$, 
but it cannot broadcast the message to both nodes. On the other hand, in $\calN$ node $m$ can 
broadcast value $v$ simultaneously to both nodes $l_1$ and $l_2$.

Note that 
${\mathcal N} \not\Mayleq {\mathcal M}$ because of the test
${\mathcal T}$ given in Figure \ref{fig:counterex2}, where $T = \Cloc{c?\pa{x}.c!\pc{x}}{l_1} \Cpar \Cloc{c?\pa{x}.c!\pc{x}}{l_2} 
\Cpar \Cloc{c?\pa{x}.c?\pa{y}.\omega}{o}$. 
In fact, in $(\calN \testP \calT)$ both nodes $l_1$ and $l_2$ will receive the broadcast of value $v$ along channel 
$c$ performed by node $m$; 
each of these nodes will forward the received value to node $o$. Therefore, node $o$ will receive two values, one 
from node $l_1$ and one from node $l_2$, after which it will reach a successful state. That is, 
$1 \in \Results{\calN \testP \calT}$.

On the other hand, none of the 
computations of $(\calM \testP \calT)$ leads to a successful configuration. 
There are in fact two possibilities; either node $m$ broadcasts value $v$ 
to nodes $n$ and $l_1$, or node $n$ broadcasts value $v$ to nodes $m$ 
and $l_2$.  Note that one of the effects of  node $m$ (respectively $n$) broadcasting value 
$v$ is that of preventing node $n$ (respectively $m$) from performing a second 
broadcast. 
As a consequence, only one among nodes $l_1, l_2$ will receive value $v$ 
along channel $c$. When node $l_1$ (respectively $l_2$) receives value $v$, it will 
forward it to node $o$. After this broadcast has been performed the network 
reaches a configuration in which node $o$ is still waiting to receive a 
value along channel $c$ before entering a successful state; further, 
the computation of the network cannot proceed anymore, 
since none of its nodes can perform a broadcast.
There are no other possible behaviours of 
$\calM \testP \calT$, therefore we obtain $\Results{\calM \testP \calT} = \{0\}$. 

Since $1 \in \Results{\calN \testP \calT}$, but $1 \notin \Results{\calM \testP \calT}$, 
it follows that $\calN \not\Mayleq \calM$.
%

We also have ${\mathcal N} \not\forsim {\mathcal M}$ because ${\mathcal N}$ can perform the output action labelled
$\eout{c!v}{\sset{l_1,l_2}}$, which can not be matched by ${\mathcal M}$. 

However suppose we were to restrict $\eta$ in the definition of
extensional output actions, part (3) of Definition~\ref{def:sea}, to be
singleton sets of node names. Then in the resulting pLTS it is easy to
check that ${\mathcal M}$ can simulate ${\mathcal N}$. 
The broadcast of value $v$ in network $\calN$, which can be detected by both 
nodes $l_1$ and $l_2$, can be matched by either the broadcast performed by node $m$ 
(which can be detected by node $l_1$) or by the broadcast 
performed by node $n$ (which can be detected by node $l_2$) in $\calM$.
In other words, with the proposed 
simplification the resulting simulations would not be sound; that
is, Theorem~\ref{thm:may.sound} would no longer hold.
\end{exa}

Let us now turn to the proof of Theorem \ref{thm:may.sound};
it relies on the following two technical results, whose proofs are developed
in Section~\ref{sec:technical}. 
\begin{thm}[Compositionality]\label{thm:composition}
Let $\calM, \calN$ be finitary networks such that $
\inp{\calM} = \inp{\calN}$, $\outp{\calM} = \outp{\calN}$. 
Also, suppose $\calL$ is a network such that both 
$\calM \testP \calL$ and $\calN \testP \calL$ are defined. 
Then $\calM \forsim  \calN$ implies  
$(\calM \testP \calL)  \forsim (\calN \testP \calL)$.
\end{thm}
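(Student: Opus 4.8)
The plan is to show that $\forsim$ is preserved by $\testP$ by first establishing the special case in which the added network is a \emph{generating} network $\calG \in \mathbb{G}$, and then bootstrapping to an arbitrary $\calL$ by induction on the number of occupied nodes $|\nodes{\calL}|$. For the base case $\nodes{\calL} = \emptyset$, well-formedness of $\calL$ forces $\calL$ to be the empty network: a node with no connection would violate clause (ii) of Definition~\ref{def:well.formed} since $\calL$ has no internal node to anchor it, so $(\calL)_V = \emptyset$ and $\calM \testP \calL \equiv \calM$, $\calN \testP \calL \equiv \calN$, making the claim immediate. For the inductive step, when $\nodes{\calL} \neq \emptyset$ Proposition~\ref{prop:network.decomp} lets me write $\calL \equiv \calL' \testP \calG$ with $\calG \in \mathbb{G}$ and $|\nodes{\calL'}| < |\nodes{\calL}|$; associativity (Proposition~\ref{prop:testp.assoc}) then gives $\calM \testP \calL \equiv (\calM \testP \calL') \testP \calG$ and similarly for $\calN$, with all the compositions defined. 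The induction hypothesis applied with $\calL'$ yields $(\calM \testP \calL') \forsim (\calN \testP \calL')$, and since a simulation pair automatically has matching interfaces, the generating case applies to produce $(\calM \testP \calL') \testP \calG \forsim (\calN \testP \calL') \testP \calG$, which is exactly $\calM \testP \calL \forsim \calN \testP \calL$.

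For the generating case I would exhibit the witnessing relation
\[
  \mathcal R \;=\; \{\, (\calM \testP \calG,\ \calN \testP \calG) \;:\; \calM \forsim \calN,\ \calG \in \mathbb{G},\ \text{both compositions defined}\,\}
\]
and prove that $\mathcal R \cup {\forsim}$ is a simulation; since $\forsim$ is the largest one this gives $\mathcal R \subseteq {\forsim}$. Writing $\calG = \Gamma_G \with \Cloc{s}{n}$, the interface clause of Definition~\ref{def:sim} follows from $\testP$ being interface preserving together with $\inp{\calM}=\inp{\calN}$, $\outp{\calM}=\outp{\calN}$. The $\omega$-success clause is straightforward: if success is located in the component $\Cloc{s}{n}$ it is present verbatim on the right, while if it is located in $\calM$ I invoke the corresponding clause of $\calM \forsim \calN$ and propagate the silent move through $\testP$.

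The heart of the argument is matching a single strong extensional action $\calM \testP \calG \extar{\lambda} \Lambda$. I feed this to the Strong Decomposition result (Proposition~\ref{prop:decomp}), which writes $\Lambda = \Delta \testP (\Gamma_G \with \Cloc{\Theta}{n})$ and returns, in each of its cases, an extensional action of the left component (e.g.\ an output $\calM \extar{\eout{c!v}{\{n\}}} \Delta$) paired with a process move of $s$ (e.g.\ $s \ar{c?v} \Theta$). Each such action of $\calM$ is in particular a weak action, so $\calM \forsim \calN$ supplies a matching $\calN \extAr{\lambda'} \Delta''$ with $\Delta \lift{(\forsim)^e} \Delta''$. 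I then re-assemble the move on the right using the Weak/Strong Composition result (Proposition~\ref{prop:wea.comp}), whose cases are deliberately in bijection with those of Proposition~\ref{prop:decomp}: combining $\calN \extAr{\lambda'} \Delta''$ with the same process move of $s$ yields $\calN \testP \calG \extAr{\lambda} \Delta'' \testP (\Gamma_G \with \Cloc{\Theta}{n})$. Finally $\Delta \lift{(\forsim)^e} \Delta''$ is preserved by composing both with the fixed distribution $\Gamma_G \with \Cloc{\Theta}{n}$ on the right (a routine property of lifted relations under $\testP$), giving $\Lambda \lift{\mathcal R^e} (\Delta'' \testP (\Gamma_G \with \Cloc{\Theta}{n}))$ and closing the diagram.

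Two points require care and constitute the main obstacle. First, Definition~\ref{def:sim} quantifies over \emph{weak} actions $\extAr{\lambda}$ of the left-hand network, whereas Proposition~\ref{prop:decomp} decomposes \emph{strong} ones; I therefore need the standard bootstrap showing that, once single strong actions are matched up to $\lift{\mathcal R^e}$, the interleaved hyper-derivations $\extAr{\;\;\;}$ are matched too, which rests on reflexivity, transitivity and the linearity of hyper-derivations recorded in Theorem~\ref{thm:hyper}. Second, and most delicate, is the output case: by the non-standard Definition~\ref{def:wea}(3) a weak output action is a concatenation of strong outputs whose detecting sets partition the overall target set $\eta$. Matching it requires splitting $\eta$ consistently on both sides and checking that the disjointness constraint $\eta_1 \cap \eta_2 = \emptyset$ survives when the moves of $\calM$ are replayed by $\calN$ and then recomposed with $\calG$. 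Getting this target-node bookkeeping exactly right --- so that every node in $\eta$ detects the value exactly once --- is where the real work of the generating case lies.
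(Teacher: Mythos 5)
Your proposal is correct and takes essentially the same route as the paper's proof: the paper likewise handles the generating case by exhibiting the witnessing relation $\{((\calM \testP \calG),(\calN \testP \calG)) \mid \calM \sforsim \calN\}$ and closing it with the strong decomposition result (Proposition~\ref{prop:decomp}) and the weak/strong composition result (Proposition~\ref{prop:wea.comp}), then extends to arbitrary $\calL$ by induction on the size of $\nodes{\calL}$ using Proposition~\ref{prop:network.decomp} and associativity (Theorem~\ref{thm:single.comp} and Corollary~\ref{cor:composition}). The only organisational difference is that the ``strong-to-weak bootstrap'' you inline is factored out in the paper as an explicit alternative characterisation (Theorem~\ref{thm:altchar}: for finitary networks $\forsim$ coincides with the simple-simulation preorder $\sforsim$, which matches strong actions of the left network by weak actions of the right), and it is there --- relying on finite branching rather than just Theorem~\ref{thm:hyper} --- that the $\tau$-case and the partition induction for weak outputs that you flag are actually discharged.
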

\begin{proof}
  See Corollary~\ref{cor:composition} in Section~\ref{sec:technical.may.comp}
\end{proof}

\begin{thm}[Outcome preservation]\label{thm:results}
  In $\pLTSnets$, $\Delta \lift{\forsim} \Theta$ implies
  $\Results{\Delta} \sqsubseteq_{H} \Results{\Theta}$.
\end{thm}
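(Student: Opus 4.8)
The plan is to unfold Definition~\ref{def:relsets} and reduce the claim to a matching problem on extreme derivatives: it suffices to show that for every $\Delta \RedE{} \Delta'$ there is an extreme derivative $\Theta \RedE{} \Theta'''$ with $\Val{\Delta'} \leq \Val{\Theta'''}$, since this produces, for each $p = \Val{\Delta'} \in \Results{\Delta}$, a dominating $p' = \Val{\Theta'''} \in \Results{\Theta}$, which is exactly $\Results{\Delta} \sqsubseteq_{H} \Results{\Theta}$. The bridge to the simulation is the observation of Section~\ref{sec:relating}: inside the testing structure $\langle \nets, \red, \omega \rangle$ a non-successful network reduces, $\calM \red \Lambda$, precisely when $\calM \extar{\tau} \Lambda$ or $\calM \extar{\eout{c!v}{\eta}} \Lambda$. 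Hence the individual steps driving a hyper-derivation $\Red{}$ are exactly the internal and output extensional moves, which are precisely the moves a simulation is obliged to match by Definition~\ref{def:sim}.

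The technical core is a preservation lemma: \emph{if $\Delta \lift{\forsim} \Theta$ and $\Delta \Red{} \Delta'$, then $\Theta \Red{} \Theta'$ for some $\Theta'$ with $\Delta' \lift{\forsim} \Theta'$.} I would establish it first for a single reduction $\Delta \red \Delta^{(1)}$ and then promote it to hyper-derivations. For the single step, write $\Delta = \sum_i p_i \cdot \pdist{\calM_i}$ and $\Theta = \sum_i p_i \cdot \pdist{\calN_i}$ with $\calM_i \forsim \calN_i$ (the defining shape of $\lift{(\forsim)^e}$), reconcile this decomposition with the one coming from $\red$ by the usual left-decomposability of lifted relations, and match each summand: a summand doing $\extar{\tau}$ is matched by $\calN_i \extAr{\tau} \Theta_i$ and one doing $\extar{\eout{c!v}{\eta}}$ by $\calN_i \extAr{\eout{c!v}{\eta}} \Theta_i$, with $\Delta_i^{(1)} \lift{\forsim} \Theta_i$ in both cases. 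Since $\extAr{\tau}$ and the weak output $\extAr{\eout{c!v}{\eta}}$ are assembled solely from $\extar{\tau}$ and strong output steps, each of which is a $\red$-step, every matching move is itself a testing-structure hyper-derivation $\calN_i \Red{} \Theta_i$; recombining the summands with Theorem~\ref{thm:hyper}(iii) gives $\Theta \Red{} \Theta^{(1)}$ with $\Delta^{(1)} \lift{\forsim} \Theta^{(1)}$. Promoting to a full hyper-derivation $\Delta \Red{} \Delta'$ means assembling these single-step matches along the defining sequence $\Delta_k^{\rightarrow}, \Delta_k^{\Stop}$ and passing to the limit, and I expect this to be the main obstacle: one step of $\Delta$ may be matched only by an unbounded $\Red{}$ of $\Theta$, so the matching derivation has to be built in tandem with the given one and its convergence argued from the closure properties of $\Red{}$ collected in Theorem~\ref{thm:hyper}, following the weak-derivation machinery of \cite{DGHM09full}.

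Granting the preservation lemma, the value comparison is short. Apply it to the hyper-derivation underlying $\Delta \RedE{} \Delta'$ to get $\Theta \Red{} \Theta'$ with $\Delta' \lift{\forsim} \Theta'$, and decompose $\Delta' = \sum_i p_i \cdot \pdist{\calM_i}$, $\Theta' = \sum_i p_i \cdot \pdist{\calN_i}$ with $\calM_i \forsim \calN_i$, so that $\Val{\Delta'} = \sum \setof{p_i}{\omega(\calM_i) = \ttrue}$. For every $i$ with $\omega(\calM_i) = \ttrue$ the success clause of Definition~\ref{def:sim} steers $\calN_i$ to an all-successful distribution $\Theta_i$ via $\dar{\tau}$, hence via $\Red{}$; for the remaining $i$ choose any extreme derivative $\calN_i \RedE{} \Theta_i$, which exists by Theorem~\ref{thm:hyper}(iv). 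Recombining by Theorem~\ref{thm:hyper}(iii) and then passing to an extreme derivative by Theorem~\ref{thm:hyper}(iv) --- noting that successful states are frozen inside every hyper-derivation and so their mass is never lost --- yields $\Theta''$ and finally $\Theta \RedE{} \Theta'''$ with $\Val{\Theta'''} \geq \Val{\Delta'}$, using transitivity (Theorem~\ref{thm:hyper}(i),(ii)) to chain $\Theta \Red{} \Theta' \Red{} \Theta'' \RedE{} \Theta'''$. The single delicate point here is that each successful summand must contribute its \emph{full} weight $p_i$, i.e.\ that no probability escapes through divergence while $\calN_i$ is driven to success; this is precisely where the careful hyper-derivation accounting of \cite{DGHM09full}, and (for the soundness application) the finitariness hypothesis of Theorem~\ref{thm:may.sound}, are needed.

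A cleaner route to the preservation lemma may be available through closure. By the distribution-wise form of Corollary~\ref{cor:closure.results} one has $\Results{\Delta} = \Results{\closure{\Delta}}$ and $\Results{\Theta} = \Results{\closure{\Theta}}$, while Proposition~\ref{prop:ext.closure} turns every output move into an internal one, so that in closed networks $\red$ coincides with $\extar{\tau}$ and no output labels need to be tracked during the derivation. This would reduce the matching to the purely internal case, at the cost of first verifying that $\lift{\forsim}$ is compatible with $\closure{\cdot}$ on related pairs; I would pursue the direct argument above as primary and keep the closure reduction in reserve.
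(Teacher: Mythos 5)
Your primary route has a genuine gap at exactly the point you yourself flag as ``the main obstacle'': promoting the single-step preservation lemma to full hyper-derivations. This is not a step that can be discharged by citation. Matching each strong reduction of the reducing parts $\Delta_k^{\rightarrow}$ by a \emph{weak} move of the simulating side produces a countable sequence of hyper-derivations $\Theta \Red{} \Theta^{(1)}$, $\Theta^{(1)} \Red{} \Theta^{(2)}, \ldots$, built in tandem with the given derivation, and to conclude you must pass to the limit and argue that the limiting target is itself a hyper-derivative of $\Theta$. That is a limit-closure property of the set of hyper-derivatives which holds in finitary pLTSs --- this is precisely the machinery the paper invokes (Lemma 6.12 of \cite{DGHM09full}) when it proves Theorem~\ref{thm:altchar}, and the paper notes there that it fails for infinitary pLTSs. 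You acknowledge needing finitariness, but Theorem~\ref{thm:results} carries no finitariness hypothesis: it is stated for arbitrary $\Delta \lift{\forsim} \Theta$ in $\pLTSnets$. So even if you completed the tandem construction, you would have proved a strictly weaker statement than the one asked for.

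The route you keep ``in reserve'' is the paper's actual proof, and you undersell what it buys. The point of the closure operator is not merely that ``no output labels need to be tracked'': it is that after closure the step-by-step matching problem disappears entirely. For a closed distribution, an extreme derivative in the testing structure \emph{is} a single weak extensional action --- $\Delta \RedE{} \Delta'$ becomes $\Delta \extAr{\tau} \Delta'$ (Corollary~\ref{cor:extreme.reductions}), because the only states permitted to move inside a hyper-derivation are non-successful ones, and on non-successful closed networks $\red$ and $\extar{\tau}$ coincide. Since Definition~\ref{def:sim} matches weak actions by weak actions, one application of the simulation (after decomposing the weak move along the convex decomposition of $\Delta$ by linearity of hyper-derivations, and using the success clause for successful summands) hands you $\Theta \extAr{\tau} \Theta''$ with $\Delta' \lift{(\forsim)^e} \Theta''$ in one shot: no induction over steps, no limit, no finitariness. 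What remains is exactly your final value-comparison paragraph, which is the paper's Lemma~\ref{lem:results}: steer the summands of $\Theta''$ simulating successful networks to all-successful distributions, take arbitrary extreme derivatives elsewhere (Theorem~\ref{thm:hyper}), and recombine; your worry about mass escaping through divergence during the steering is settled by the fact that the simulation clauses match with \emph{full} distributions in $\dist{\nets}$, so no mass can be lost there. The compatibility of $\forsim$ with $\closure{\cdot}$ that you correctly identify as an outstanding obligation is the paper's Lemma~\ref{lem:sim.closure}, and together with $\Results{\Delta} = \Results{\closure{\Delta}}$ (Corollary~\ref{cor:closure.results}) it transfers the closed case to arbitrary networks. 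In short: invert your priorities, make the closure reduction the proof, and delete the preservation lemma --- it is both the hardest and the least necessary part of your plan.
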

\begin{proof}
  See  Corollary~\ref{cor:results} in Section~\ref{sec:technical.result.pres}. 
\end{proof}

\textbf{Proof of Theorem~\ref{thm:may.sound}}: 
This is now a straightforward application of  Compositionality and Theorem~\ref{thm:results}. 

Let us assume that $\calN_1 \forsim \calN_2$.
To prove the conclusion,  ${\mathcal N}_1 \Mayleq  {\mathcal N}_2$, we must show that 
$\Results{{\mathcal N}_1 \testP {\mathcal T}} \sqsubseteq_{H}  
\Results{{\mathcal N}_2 \testP {\mathcal T}}$
for an arbitrary testing network ${\mathcal T}$ 
such that both  ${\mathcal N}_1 \testP {\mathcal T}$ and ${\mathcal N}_2 \testP {\mathcal T}$ are defined. 
For such a ${\mathcal T}$ Compositionality entails  
$({\mathcal N}_1 \testP {\mathcal T}) \forsim ({\mathcal N}_2 \testP {\mathcal T})$, and now we can apply 
Theorem~\ref{thm:results}. \qed

\subsection{The Must Case}
\label{sec:must.sound}
In this Section we give a sound proof method for the must-testing 
preorder. It has already been observed that, for standard 
probabilistic process calculi such as pCSP \cite{DGHM09full}, 
the must-testing can be characterised by looking at the 
set of actions which are not enabled in processes. 
This is because outputs in such a calculus are blocking 
actions; in order for an action to be performed, a synchronisation 
(either within the process or with the external environment) 
must occur. This leads to the notion of \emph{failure simulations}.

This is not true for broadcast systems, where the nature of 
the broadcast action is non-blocking. It has been observed 
in \cite{Ene02} that, if broadcast communication is assumed, 
then the must-testing relation can be used to observe only if  
a computation of a process cannot proceed (that is, no 
internal actions nor broadcasts are possible). 

Following this intuition, we readapt the notion of \emph{failure 
simulation} given in \cite{DGHM09full}. 
\begin{defi}[Deadlocked Networks, Terminal Distributions]
\label{def:deadlock}
The predicate $\delta : \nets \rightarrow \{\ttrue, \ffalse\}$ 
is defined by letting $\delta(\calM) = \ttrue$ whenever the 
following conditions are met:
\begin{enumerate}[label=(\roman*)]
\item $\omega(\calM) = \ffalse$,
\item $\calM \extar{\tau}\hspace{-10pt}\not\;\hspace{10pt}$,
\item $\calM \extar{\eout{c!v}{\eta}}\hspace{-18pt}\not\;\hspace{18pt}$ 
for any $c,v, \eta$. \qed
\end{enumerate}
\noindent
Networks for which the predicate $\delta$ is true 
are called \emph{deadlock networks}, or \emph{deadlocked}. 
Note that the term \emph{deadlock} network makes sense only 
in the reduction semantics. Deadlock networks 
are those whose  computation cannot proceed 
autonomously; however, it could be the case that an 
input from the external environment makes the network 
evolve in a distribution where the computation can 
proceed, thus resolving the deadlock.

A distribution $\Delta$ is said to be \emph{terminal} if 
any network in its support is either deadlocked or 
successful.
\end{defi}
\noindent
Next we present a notion of simulation which is sensitive 
to deadlocked networks:

\begin{defi}[Deadlock Simulations]
\label{def:ds}
The relation $\deadsim \subseteq \nets \times \subdist{\nets}$ is 
the largest relation such that whenever $\calM \deadsim \Theta$
\begin{enumerate}[label=(\roman*)]
\item if $\delta(\calM) = \ttrue$ then $\Theta \extAr{\tau} \Theta'$ 
for some $\Theta'$ such that $\delta(\calN) = \ttrue$ for any 
$\calN \in \support{\Theta'}$, 
\item if $\calM \extAr{\lambda} \Delta$ for some 
$\Delta \in \subdist{\nets}$ then $\Theta \extAr{\lambda} \Theta'$ 
for some $\Theta'$ such that $\Delta \lift{\deadsim} \Theta'$.\qed
\end{enumerate}
\end{defi}
\noindent
We use the notation $\Theta \invdeadsim \calM$ for 
$\calM \deadsim \Theta$.
Note that deadlock simulations are sensitive to divergence. 
That is, whenever $\calM \deadsim \Theta$ and $\calM \extAr{\tau} \varepsilon$, 
then $\Theta \extAr{\tau} \varepsilon$. 
To prove this, note first that for any relation 
$\calR \subseteq \nets \times \subdist{\nets}$ then 
whenever $\Delta \lift{\calR} \Theta$ we have that  
$\size{\Delta} \geq \size{\Theta}$; this follows at once 
from Definition \ref{def:lift}. 
Thus if $\calM \deadsim \Theta$ and $\calM \extAr{\tau} \varepsilon$, 
by definition it follows that $\Theta \extAr{\tau} \Theta'$ 
for some $\Theta'$ such that $\varepsilon \lift{\deadsim} \Theta'$; 
but this means that $0 = \size{\varepsilon} \geq \size{\Theta'}$, 
or equivalently $\Theta' = \varepsilon$.

Before discussing the soundness of deadlock simulations for must testing
let us discuss briefly the definition of 
deadlock simulations.
First, note that the deadlock simulation relation $\deadsim$ is 
lifted to a relation between sub-distributions, rather than 
to a relation between (full) distributions. 
This is needed, since the Must-testing preorder is 
sensitive to divergence.

\begin{exa}[Divergence]
\label{ex:divergence}
Let $\calM = \Gamma \with \Cloc{\Cnil}{m}$, 
$\calN = \Gamma \with \Cloc{\text{Div}}{m}$, 
where $\Gamma$ is the connectivity graph containing 
the sole node $m$ and no connections and 
$\text{Div} \Leftarrow \tau.\text{Div}$. 

It is immediate to show that $\calM \not\invdeadsim \calN$, 
since the move $\calN \extAr{\tau} \varepsilon$ cannot 
be matched by $\calM$. 

Even more, it is straightforward to 
note that $\calM \not\Mustleq \calN$. Consider in fact the test 
$\calT = \Gamma_T \with \Cloc{\tau.\omega}{n}$,  
where $\Gamma_T$ is the connectivity graph consisting of the 
sole node $n$. Note that, since $\calN \extAr{\tau} \varepsilon$ 
we also have $\calN \testP \calT \extAr{\tau} \varepsilon$, 
and therefore $0 \in \Results{\calN \testP \calT}$. Intuitively 
the last hyper-derivation can be inferred by always letting 
process $\text{Div}$ perform a $\tau$-action in 
$\calN \testP \calT$. On the other hand we have 
that $\Results{\calM \testP \calT} = \{1\}$, since the only possible 
transition for  $\calM \testP \calT$ is 
$(\calM \testP \calT) \extar{\tau} \pdist{\calM \testP (\Gamma_T \with \Cloc{\omega}{n})}$, 
and the latter is $\omega$-successful.  
Since $0 \in \Results{\calN \testP \calT}$, but $0 \notin \Results{\calN \testP \calT}$, 
it follows that $\calM \not\Mustleq \calN$.

However, suppose that Definition \ref{def:ds} is changed 
by only considering hyper-derivations of the form 
$\calM \extAr{\lambda} \Delta$, where $\Delta$ is 
a distribution. 
In this case we would have that the only possible (weak) move 
for the network $\calN$ above is $\calN \extAr{\tau} \pdist{\calN}$, 
which can be matched by $\calM \extAr{\tau} \pdist{\calM}$. 
Therefore we would have that $\calM \invdeadsim \calN$, and 
since we already proved that $\calM \not\Mustleq \calN$ Theorem 
\ref{thm:must.sound} would no longer hold.
\end{exa}

Also, deadlock simulation is defined as a 
relation between networks and sub-distributions of 
states, rather than a relation between networks. 
This is in contrast with the definition of simulation 
given in Section \ref{sec:may.sound}, which 
has been defined as a relation between 
networks.
In fact, since deadlock simulation also considers 
sub-distributions the latter approach would have led to 
a less discriminating relation.

\begin{remark}
Note that, for any sub-distribution $\Delta$ 
we have that $\Delta \lift{\deadsim} \varepsilon$; 
in fact, it is straightforward to show that 
$\varepsilon \extAr{\lambda} \varepsilon$ for 
any extensional action $\lambda$, and 
since $\support{\varepsilon} = \emptyset$ we 
also have that $\delta(\calN) = \ttrue$ 
for any $\calN \in \support{\varepsilon}$.

Now suppose that deadlock simulation had 
been defined as a relation between networks, 
by letting $\calM \deadsim' \calN$ be the largest 
relation such that 
\begin{enumerate}[label=(\roman*)] 
\item if $\delta(\calM) = \ttrue$ then $\calN \extAr{\tau} \Theta$ 
for some $\Theta$ such that $\delta(\calL) = \ttrue$ for any 
$\calL \in \support{\Theta}$, 
\item whenever $\calM \extAr{\lambda} \Delta$ then 
$\calN \extAr{\lambda} \Theta$ with $\Delta \lift{(\deadsim')^e} \Theta$. 
\end{enumerate}\medskip

\noindent While Theorem \ref{thm:must.sound} would still 
hold with this definition of deadlock simulations, 
it is straightforward to show that if 
$\Delta \lift{(\deadsim')^e} \Theta$ then $\size{\Delta} = 
\size{\Theta}$. As a consequence, whenever 
$\Delta \lift{(\deadsim')^e} \varepsilon$ it would 
follow that $\Delta = \varepsilon$. 

Therefore we have that, for any non-empty sub-distribution 
$\Delta$, $\Delta \lift{\deadsim} \varepsilon$, but not
$\Delta \lift{(\deadsim')^e} \varepsilon$;
that is, the definition of deadlock simulation proposed 
above is less discriminating than the one given in 
Definition \ref{def:ds}.
\end{remark}

The proof of soundness of deadlock simulations follows the same structure as
the corresponding proof for simulations in Section~\ref{sec:may.sound}. It relies on
the following two technical results. 

\begin{thm}[Outcome preservation]
\label{thm:ds.outcomes}
If $\Delta \lift{\deadsim} \Theta$ then 
$\Results{\Delta} \sqsubseteq_S \Results{\Theta}$. 
\end{thm}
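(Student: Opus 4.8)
The plan is to mirror the may-case argument (Theorem~\ref{thm:results}, obtained as Corollary~\ref{cor:results}), adapting each step to the deadlock- and divergence-sensitivity of $\deadsim$. The result sets $\Results{\cdot}$ are defined through the reduction relation $\red$ and its extreme derivatives $\RedE{}$, whereas $\deadsim$ is phrased with the weak extensional actions of $\pLTSnets$; so the first task is to bridge the two semantics. For this I would use the identity recorded at the start of Section~\ref{sec:relating}, that $\calM \red \Delta$ holds exactly when $\calM \extar{\tau}\Delta$ or $\calM \extar{\eout{c!v}{\eta}}\Delta$. Hence a maximal reduction computation $\RedE{}$ is precisely a maximal alternation of the internal and output actions that clause~(ii) of Definition~\ref{def:ds} is designed to match, and it terminates exactly at the deadlocked-or-successful distributions governed by clause~(i) and the success predicate $\omega$. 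If it is convenient to reason with a single action, the closure operator of Definition~\ref{def:net.closure} turns every output reduction into an internal one while preserving result sets, by Corollary~\ref{cor:closure.results}.

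The core of the proof is a matching lemma for maximal computations: from $\Delta \lift{\deadsim} \Theta$ I would show that an extreme derivative on one side can be transferred across the relation to an extreme derivative on the other, ending in terminal distributions that are still $\lift{\deadsim}$-related and whose success values are ordered as the Smith preorder demands. I would build the transferred computation layer by layer along the hyper-derivation of Definition~\ref{def:hypder}: clause~(ii) of Definition~\ref{def:ds} matches the step taken by each network in the support, the linearity of the lifting (Definition~\ref{def:lift}) recombines these per-component matches into one weak move of the whole sub-distribution, and Theorem~\ref{thm:hyper}(iii) glues the successive layers into a single derivation. Crucially, divergence is handled by the observation following Definition~\ref{def:ds}: since $\Delta \lift{\deadsim} \Theta$ forces $\size{\Delta}\ge\size{\Theta}$ and $\varepsilon \extAr{\tau}\varepsilon$, a computation that leaks mass to the empty sub-distribution is matched by one that leaks at least as much; this is exactly why $\deadsim$ is lifted to \emph{sub}-distributions rather than full distributions.

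It then remains to compare values at the terminal distributions $\Delta'$ and $\Theta'$ with $\Delta' \lift{\deadsim} \Theta'$. Decomposing them through Definition~\ref{def:lift}, I may assume each point $\calM$ of $\support{\Delta'}$ is related to a sub-distribution $\Theta_{\calM}$ with $\calM \deadsim \Theta_{\calM}$, and that $\Theta'$ is the corresponding convex combination. By terminality each such $\calM$ is deadlocked or $\omega$-successful. If $\calM$ is deadlocked, clause~(i) forces $\Theta_{\calM}$ to be all-deadlocked, since a terminal sub-distribution cannot evolve a successful network into a deadlocked one; such a $\Theta_{\calM}$ carries no $\omega$-success. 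If $\calM$ is successful, it is matched by an all-successful $\Theta_{\calM}$. Summing $\Val{\cdot}$ over the components, and recalling that any mass lost to $\varepsilon$ by divergence contributes $0$, yields the required ordering between $\Val{\Delta'}$ and $\Val{\Theta'}$; quantifying over all extreme derivatives then gives the Smith comparison $\Results{\Delta}\sqsubseteq_{S}\Results{\Theta}$.

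The step I expect to be the main obstacle is the matching lemma, specifically the interaction of maximality with divergence. Matching a single hyper-derivation layer is routine, but proving that the transferred computation is genuinely \emph{extreme}---that it cannot be continued rather than merely being a hyper-derivative that stops short---requires a delicate limiting construction, presumably leaning on the finitary/compactness machinery used for the may case in Section~\ref{sec:technical} and on the convergence results of Section~\ref{sec:background}. A secondary complication is the non-standard weak output action of Definition~\ref{def:wea}(3): because a broadcast detected by a set $\eta$ may be simulated by a sequence of multicasts partitioning $\eta$, matching an output step can require several extensional steps on the opposite side, so the per-layer book-keeping of target sets and of probability mass must be arranged to tolerate this refinement without losing exactness.
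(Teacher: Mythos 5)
Your overall architecture---bridge the reduction and extensional semantics via the closure operator, transfer extreme derivatives across the relation, then compare values on terminal distributions---is the same as the paper's (Lemma~\ref{lem:ds.closure}, Lemma~\ref{lem:ds.outcomes} and Corollary~\ref{cor:ds.outcomes}). There is, however, one genuine error in your terminal comparison: the claim that an $\omega$-successful network $\calM$ ``is matched by an all-successful $\Theta_{\calM}$''. Nothing in Definition~\ref{def:ds} yields this. Unlike the simulation preorder of Definition~\ref{def:sim}, the deadlock simulation has no $\omega$-clause, and a successful network has no strong extensional actions at all (Definition~\ref{def:sea} only grants actions to non-successful networks), so its only weak move is the vacuous $\calM \extAr{\tau} \pdist{\calM}$ and clause (ii) gives nothing. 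The step as you state it therefore cannot be carried out. Fortunately it is also unnecessary: the inequality to be established runs downward from the simulated side, i.e.\ one needs $\Theta_{\calM} \RedE{} \Theta'_{\calM}$ with $\Val{\Theta'_{\calM}} \leq \Val{\pdist{\calM}}$, and when $\omega(\calM) = \ttrue$ we have $\Val{\pdist{\calM}} = 1$, so \emph{any} extreme derivative of $\Theta_{\calM}$---which exists by Theorem~\ref{thm:hyper}(iv)---suffices trivially. This is exactly how Lemma~\ref{lem:ds.outcomes} disposes of the case $\delta(\calM) = \ffalse$. Your deadlocked case is also phrased imprecisely: clause (i) does not force $\Theta_{\calM}$ itself to be all-deadlocked, only that $\Theta_{\calM} \extAr{\tau} \Theta'_{\calM}$ with every network in $\support{\Theta'_{\calM}}$ deadlocked; one then observes that deadlocked networks cannot reduce, so this hyper-derivation is automatically an \emph{extreme} derivative of value $0$.

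Two smaller points. First, the obstacles you flag at the end largely dissolve once you commit to the closure route: after applying $\closure{\cdot}$---which requires the lemma, missing from your outline, that $\calM \deadsim \Theta$ implies $\closure{\calM} \deadsim \closure{\Theta}$ (Lemma~\ref{lem:ds.closure})---every reduction is a $\tau$-action, so the partition bookkeeping for weak outputs in Definition~\ref{def:wea}(3) never arises in this theorem; and extremality of the transferred computation is not obtained by a limiting construction, but falls out of the terminal analysis above. Second, your layer-by-layer matching of the hyper-derivation is heavier than needed: since Definition~\ref{def:ds}(ii) already matches \emph{weak} actions, a closed extreme derivative $\Delta \extAr{\tau} \Delta'$ is matched by a single application of the lifted simulation (decompose $\Delta$ along its support, match each component, recombine), which is how Corollary~\ref{cor:ds.outcomes} proceeds; your construction in effect re-proves the decomposition property of lifted weak moves that the paper invokes implicitly, which is legitimate but not the shortest path.
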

\begin{proof}
  See Corollary~\ref{cor:ds.outcomes} in Section~\ref{sec:technical.result.pres}. 
\end{proof}

\begin{thm}[Compositionality]
\label{thm:ds.comp}
Let $\calM$ be a network and $\Theta$ be a stable sub-distribution 
such that $\calM \deadsim \Theta$. Then, for any network $\calN$ such 
that both $\calM \testP \calN$ and $\Theta \testP \calN$ are defined 
it follows that $(\calM \testP \calN) \deadsim (\Theta \testP \pdist{\calN})$.
\qed
\end{thm}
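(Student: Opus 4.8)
The plan is to establish $(\calM \testP \calN) \deadsim (\Theta \testP \pdist{\calN})$ by exhibiting a suitable relation $\calR$ and showing it is a deadlock simulation, i.e.\ satisfies the two clauses of Definition~\ref{def:ds}. Since $\deadsim$ is the \emph{largest} such relation, it suffices to produce any $\calR$ containing the pair $((\calM \testP \calN), (\Theta \testP \pdist{\calN}))$ and closed under the simulation conditions. By Proposition~\ref{prop:network.decomp} every network $\calN$ with a nonempty set of occupied nodes decomposes as $\calN' \testP \calG$ with $\calG \in \mathbb{G}$; combined with associativity of $\testP$ (Proposition~\ref{prop:testp.assoc}), this lets me reduce the general case to the case where $\calN$ is a \emph{single} generating network $\calG = (\Gamma_N \with \Cloc{s}{n})$, and then iterate. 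So the core of the argument is the one-node case, after which the full statement follows by induction on $|\nodes{\calN}|$.

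\textbf{The candidate relation and its obstacle.} For the one-node case I would take
\[
  \calR \;=\; \setof{\bigl((\calM' \testP (\Gamma_N \with \Cloc{s}{n})),\; (\Theta' \testP (\Gamma_N \with \Cloc{\pdist{s}}{n}))\bigr)}{\calM' \deadsim \Theta'}
\]
lifted appropriately so that it relates networks to sub-distributions. The verification of the two clauses of Definition~\ref{def:ds} is where the real work lies, and it is driven entirely by the decomposition/composition machinery of Section~\ref{sec:comp.decomp}. Given a move $(\calM' \testP \calG) \extAr{\lambda} \Lambda$, I would first apply Strong Decomposition (Proposition~\ref{prop:decomp}) to split it into a weak-or-strong action of $\calM'$ and a process transition of $s$; then use $\calM' \deadsim \Theta'$ to find a matching weak action $\Theta' \extAr{\lambda'} \Theta''$; and finally reassemble via Weak/Strong Composition (Proposition~\ref{prop:wea.comp}) to obtain a matching move of $\Theta' \testP \calG$. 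The main obstacle is that Proposition~\ref{prop:decomp} produces many cases (the internal, output, and input sub-cases, each with five or six variants depending on whether $n \in \inp{\calM}$, $s$ can receive, $\outp{\calG}$ is empty, etc.), and each of these must be matched up correctly with the corresponding clause of Proposition~\ref{prop:wea.comp}; keeping the bookkeeping of the node sets $\eta$, $\inp{\calG}$, $\outp{\calG}$ consistent across the split-and-reassemble is delicate.

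\textbf{The deadlock clause.} Separately I must handle clause~(i) of Definition~\ref{def:ds}: if $\delta(\calM' \testP \calG) = \ttrue$, I must exhibit $\Theta' \testP (\Gamma_N \with \Cloc{\pdist{s}}{n}) \extAr{\tau} \Xi$ with every network in $\support{\Xi}$ deadlocked. The key point here is that a composite network $\calM' \testP \calG$ is deadlocked exactly when neither component can drive an autonomous reduction and no internal--external synchronisation is possible; using the decomposition results I would argue that deadlock of the composite forces a compatible ``deadlock-like'' condition on $\calM'$ (and on the guard $s$), which via $\calM' \deadsim \Theta'$ yields a $\tau$-closure of $\Theta'$ into deadlocked networks, and this lifts through composition with $\calG$ to the required $\Xi$. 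I expect this clause to require the same careful case analysis on the shape of $s$ (whether it offers an input, output, or $\tau$) as the main clause, and to lean on Proposition~\ref{prop:static.topology} to guarantee that composition with $\calG$ remains well-defined throughout the weak derivation, so that $\testP$ may legitimately be lifted to the sub-distributions involved.
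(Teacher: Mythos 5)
Your overall architecture --- reduce to a single generating network via Proposition~\ref{prop:network.decomp} and associativity of $\testP$, induct on the number of nodes in $\nodes{\calN}$, and verify the one-node case with the decomposition/composition machinery of Section~\ref{sec:comp.decomp} --- is exactly the paper's. But the central verification step, as you describe it, does not go through: you propose to take a \emph{weak} move $(\calM' \testP \calG) \extAr{\lambda} \Lambda$ and ``apply Strong Decomposition (Proposition~\ref{prop:decomp})'' to it. That proposition is stated only for single strong transitions $(\calM' \testP \calG) \extar{\lambda} \Lambda$; a weak move is a hyper-derivation (an unbounded scheme of lifted $\tau$-steps with probabilistic splitting, plus, for outputs, the multicast clause of Definition~\ref{def:wea}(3)(b)), and no decomposition result for such moves exists in the paper --- Section~\ref{sec:technical.may.comp} says explicitly that providing one ``would be a difficult undertaking''. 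Since Definition~\ref{def:ds} quantifies over weak moves of the left-hand network, your coinductive check is blocked at its first step. The missing idea is the paper's alternative characterisation: define \emph{simple} deadlock simulations $\deadsim^s$, whose transfer clause only quantifies over \emph{strong} moves $\calM \extar{\lambda} \Delta'$, prove $\deadsim \,=\, \deadsim^s$ for finitary networks (Theorem~\ref{thm:ds.altchar}), and then verify that your relation $\calR$ is a simple deadlock simulation --- for which Proposition~\ref{prop:decomp} is now the right tool. This is also where the finitary hypothesis, absent from your write-up, enters the argument.

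Passing to strong premises has a cost you would then have to pay: divergence sensitivity is no longer automatic. With the weak-move definition, $\calM \extAr{\tau} \varepsilon$ is just another move, and $\varepsilon \lift{\deadsim} \Theta'$ forces $\Theta' = \varepsilon$; with strong premises this is lost, so $\deadsim^s$ carries an explicit extra clause (if $\calM \extAr{\tau} \varepsilon$ then $\Theta \extAr{\tau} \varepsilon$), and your candidate relation must be shown to satisfy it. Concretely, you must show that a diverging computation of $\calM' \testP \calG$ yields a diverging computation of $\Theta' \testP (\Gamma_N \with \Cloc{\pdist{s}}{n})$, by inspecting the infinite sequence of $\tau$-steps and rebuilding it on the right through the composition results --- this is the first of the three extra statements the paper checks in Theorem~\ref{thm:ds.single.comp}, and your proposal never mentions divergence at all. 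Your treatment of the deadlock clause, by contrast, is essentially the paper's ($\delta(\calM' \testP \calG) = \ttrue$ implies $\delta(\calM') = \ttrue$ and $\delta(\calG) = \ttrue$, then compose the two deadlock-reaching $\tau$-derivations), so once the characterisation step and the divergence clause are added, the remainder of your plan matches the paper's proof.
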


\begin{proof}

See Corollary~\ref{cor:ds.comp} in Section~\ref{sec:technical.must.comp}
\end{proof}

\begin{thm}[Soundness for Must-testing]
\label{thm:must.sound}
Let $\calM, \calN$ be two finitary networks 
such that $\inp{\calM} = \inp{\calN}$, $\outp{\calM} = 
\outp{\calN}$. If $\pdist{\calM} \invdeadsim \calN$ 
then $\calM \Mustleq \calN$.
\end{thm}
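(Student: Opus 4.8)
The plan is to mirror exactly the structure of the may-testing soundness proof (Theorem~\ref{thm:may.sound}), substituting the deadlock simulation $\deadsim$ for the simulation $\forsim$, the Smith preorder $\sqsubseteq_S$ for the Hoare preorder $\sqsubseteq_H$, and the must-testing preorder $\Mustleq$ for $\Mayleq$. The two technical pillars are already in place: Outcome preservation (Theorem~\ref{thm:ds.outcomes}), which converts a deadlock-simulation relationship between distributions into a $\sqsubseteq_S$ relationship between their result sets, and Compositionality (Theorem~\ref{thm:ds.comp}), which says that deadlock simulation is preserved by composition with an arbitrary test under $\testP$.

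First I would fix an arbitrary testing network $\calT \in \nets$ such that both $\calM \testP \calT$ and $\calN \testP \calT$ are defined; by Definition~\ref{def:maytest} it suffices to show
\begin{math}
\Results{\calM \testP \calT} \sqsubseteq_S \Results{\calN \testP \calT}.
\end{math}
Starting from the hypothesis $\pdist{\calM} \invdeadsim \calN$, which unfolds to $\calN \deadsim \pdist{\calM}$, I would first need $\pdist{\calN}$ in the role required by the compositionality result: here a small care point is that $\deadsim$ relates a network to a \emph{sub}-distribution, so I would use that $\calN \deadsim \pdist{\calM}$ with $\pdist{\calM}$ stable (it is a point distribution, hence trivially node-stable). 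Applying Compositionality (Theorem~\ref{thm:ds.comp}) with the network $\calT$ then yields
\begin{math}
(\calN \testP \calT) \deadsim (\pdist{\calM} \testP \pdist{\calT}),
\end{math}
and since $\pdist{\calM}\testP\pdist{\calT} = \pdist{\calM \testP \calT}$ this reads $(\calN \testP \calT) \deadsim \pdist{\calM \testP \calT}$, i.e.\ $\pdist{\calM \testP \calT} \invdeadsim (\calN \testP \calT)$.

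Next I would lift this single-network relationship to point distributions so as to feed it to Outcome preservation. From $(\calN \testP \calT) \deadsim \pdist{\calM \testP \calT}$ the lifting rules of Definition~\ref{def:lift} give $\pdist{\calN \testP \calT} \lift{\deadsim} \pdist{\calM \testP \calT}$. Theorem~\ref{thm:ds.outcomes} then delivers
\begin{math}
\Results{\pdist{\calN \testP \calT}} \sqsubseteq_S \Results{\pdist{\calM \testP \calT}},
\end{math}
which, since the result set of a point distribution coincides with the result set of the underlying network, is precisely $\Results{\calN \testP \calT} \sqsubseteq_S \Results{\calM \testP \calT}$. Note the orientation: because $\invdeadsim$ is the \emph{inverse} of $\deadsim$ and the theorem asserts $\pdist{\calM}\invdeadsim\calN$, the simulation runs from $\calN \testP \calT$ toward $\calM \testP \calT$, so the $\sqsubseteq_S$ inequality comes out with $\Results{\calN \testP \calT}$ on the left and $\Results{\calM \testP \calT}$ on the right --- exactly the direction required by part~(2) of Definition~\ref{def:maytest} for concluding $\calM \Mustleq \calN$.

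The genuinely delicate points, which are discharged by the cited technical results rather than by the skeleton above, are twofold. The first is the sensitivity of $\deadsim$ to divergence: Example~\ref{ex:divergence} shows that the sub-distributional formulation is essential, and the proof must therefore track that a divergent summand $\varepsilon$ on the specification side forces a matching $\varepsilon$ (hence the possibility of the outcome $0$) on the implementation side --- this is where the Smith comparison, with its universal quantifier over the right-hand set, interacts correctly with partial results. The second, and the part I expect to be the main obstacle, lies hidden inside Theorem~\ref{thm:ds.comp}: establishing that deadlock simulation survives composition with a test requires the decomposition and weak-composition machinery for extensional actions of Section~\ref{sec:comp.decomp} (Propositions~\ref{prop:decomp} and~\ref{prop:wea.comp}), and in particular a careful argument that a \emph{deadlocked} configuration of $\calM \testP \calN$ decomposes into a deadlock of the $\calM$-part together with a stuck state of the generating node --- the non-blocking nature of broadcast making the deadlock predicate $\delta$ subtle to propagate through $\testP$. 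Since all of this is relegated to Section~\ref{sec:technical}, the proof at this level reduces to the clean three-line composition-then-outcome-preservation argument sketched above.
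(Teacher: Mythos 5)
Your overall strategy is exactly the paper's: unfold $\pdist{\calM} \invdeadsim \calN$ to $\calN \deadsim \pdist{\calM}$, note that $\pdist{\calM}$ is stable, apply Compositionality (Theorem~\ref{thm:ds.comp}) with the test $\calT$ to obtain $(\calN \testP \calT) \deadsim \pdist{\calM \testP \calT}$, lift, and close with Outcome preservation. Up to the lifting step every move you make matches the paper's proof. The genuine error is in the last step, and it is a direction error that the asymmetry of the Smith preorder makes fatal as written. You end with $\Results{\calN \testP \calT} \sqsubseteq_{S} \Results{\calM \testP \calT}$ and assert this is ``exactly the direction required by part (2) of Definition~\ref{def:maytest}''. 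It is not: part (2) requires $\Results{\calM \testP \calT} \sqsubseteq_{S} \Results{\calN \testP \calT}$, with $\calM$ on the left --- which is also the goal you correctly announced in your own first paragraph. By Definition~\ref{def:relsets} the two inequalities say different things (yours asks that every outcome of $\calM \testP \calT$ dominate some outcome of $\calN \testP \calT$, i.e.\ it is what one would need for $\calN \Mustleq \calM$), so your proposal derives the converse of its stated goal and then papers over the mismatch with a false identification.

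The confusion is inherited from the paper, but it has to be resolved rather than absorbed: the literal statement of Theorem~\ref{thm:ds.outcomes} (and Corollary~\ref{cor:ds.outcomes}) is mis-oriented with respect to what its proof actually establishes. The proof of Corollary~\ref{cor:ds.outcomes} shows that whenever $\Delta \lift{\deadsim} \Theta$ and $\Delta \RedE{} \Delta'$, there exists $\Theta \RedE{} \Theta'$ with $\Val{\Theta'} \leq \Val{\Delta'}$; by Definition~\ref{def:relsets} that is $\Results{\Theta} \sqsubseteq_{S} \Results{\Delta}$, i.e.\ the \emph{simulating} side appears on the left. (This orientation is also the only sound one: the simulating side need not match successes, and deadlocks of the simulated side are matched by deadlocks, so the simulating side collects the smaller outcomes.) Instantiating $\Delta = \pdist{\calN \testP \calT}$ (simulated) and $\Theta = \pdist{\calM \testP \calT}$ (simulating) yields precisely $\Results{\calM \testP \calT} \sqsubseteq_{S} \Results{\calN \testP \calT}$, which is how the paper's own proof of Theorem~\ref{thm:must.sound} invokes the result and is what closes the argument. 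So the repair is to appeal to the content of outcome preservation rather than its literal statement, and to delete the incorrect ``note the orientation'' justification; with that change your proof coincides with the paper's.
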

\begin{proof}
Suppose that $\calM \invdeadsim \calN$, 
and suppose that $\inp{\calM} = \inp{\calN}$, 
$\outp{\calM} = \outp{\calN}$. Note also that 
$\pdist{\calM}$ is a stable distribution. 

Let $\calT$ be a network such that both 
$\calM \testP \calT$ and $\calN \testP \calT$ 
are defined. 
Compositionality, Corollary \ref{cor:ds.comp} 
gives that $(\calM \testP \calT) \invdeadsim 
(\calN \testP \calT)$, while Theorem \ref{thm:ds.outcomes} 
states that $\Results{\pdist{\calM \testP \calT}} \sqsubseteq_S 
\Results{\pdist{\calN \testP \calT}}$. Since the 
testing network $\calT$ has been chosen arbitrarily, it 
follows that $\calM \Mustleq \calN$.
\end{proof}

\subsection{Proof Methods for Convergent Networks}
\label{sec:convergent}

One of the main drawbacks of deadlock simulations 
is that they require the use of probability sub-distributions. 
As we have seen in example \ref{ex:divergence}, 
using sub-distributions is necessary for ensuring 
the validity of Theorem \ref{thm:must.sound}. We 
have also emphasised that this constraint is necessary since 
the must-testing preorder is sensitive to divergence. 

However, sub-distributions are no longer needed if we 
focus on convergent networks, that this those whose 
generated pLTS (with respect to the strong extensional semantics) 
does not contain a state $\calM$ for which $\calM \extAr{\tau} \varepsilon$ 
holds. 

\begin{defi}[Divergence-free Deadlock Simulations]
\label{def:dfdeadsim}
The relation $\dfdeadsim \subseteq \nets \times \nets$ is defined 
as the largest relation such that whenever 
$\calM \dfdeadsim \calN$
\begin{enumerate}[label=(\roman*)]
\item if $\delta(\calM) = \ttrue$ then $\calN \extAr{\tau} \Theta$ 
for some $\Theta$ such that $\delta(\calL) = \ttrue$ whenever 
$\calL \in \support{\Theta}$,
\item if $\calM \extAr{\lambda} \Delta$ then 
$\calN \extAr{\lambda} \Theta$ for some $\Theta$ 
such that $\Delta \lift{\dfdeadsim^e} \Theta$.
\end{enumerate}
\end{defi}

\noindent We write $\calN \invdfdeadsim \calM$ for $\calM \dfdeadsim \calN$. 

\begin{thm}[Soundness for Convergent Networks, Must-testing]
\label{thm:must.dfsound}
Let $\calM, \calN$ be two convergent 
networks such that $\inp{\calM} = \inp{\calN}$ 
and $\outp{\calM} = \outp{\calN}$; if $\calM \invdfdeadsim \calN$ then 
$\calM \Mustleq \calN$.
\end{thm}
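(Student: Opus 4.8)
The plan is to reduce the statement to the already-established Theorem~\ref{thm:must.sound} by showing that, on convergent networks, the divergence-free deadlock simulation $\dfdeadsim$ embeds into the ordinary deadlock simulation $\deadsim$. Unfolding the notational conventions, the hypothesis $\calM \invdfdeadsim \calN$ is by definition $\calN \dfdeadsim \calM$, whereas the hypothesis required to invoke Theorem~\ref{thm:must.sound} is $\pdist{\calM} \invdeadsim \calN$, that is $\calN \deadsim \pdist{\calM}$. Hence it is enough to prove the implication: if $\calM$ and $\calN$ are convergent and $\calN \dfdeadsim \calM$, then $\calN \deadsim \pdist{\calM}$. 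The interface requirements $\inp{\calM} = \inp{\calN}$ and $\outp{\calM} = \outp{\calN}$ are part of the hypotheses and are passed on unchanged to Theorem~\ref{thm:must.sound}.

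The single crucial ingredient is the following consequence of convergence, obtained from Proposition~\ref{prop:convergent.moves} applied in $\pLTSnets$: if a network $\calN$ is convergent, then every weak extensional action $\calN \extAr{\lambda} \Delta$ leads to a \emph{full} distribution $\Delta$, and moreover every network in $\support{\Delta}$ is again convergent. The first half holds because each weak action is built from internal hyper-derivations $\extAr{\;\;\;}$ surrounding at most one strong action (see Definition~\ref{def:wea}), and convergence forbids any probability mass from escaping during the hyper-derivation phases; the second half holds because a divergent network reachable from $\calN$ would itself induce a divergence $\calN \extAr{\tau} \varepsilon$, contradicting convergence. This is exactly the point at which convergence is indispensable: $\dfdeadsim$ matches moves through the mass-preserving lifting $\lift{\dfdeadsim^e}$ and thereby ignores the divergent, $\varepsilon$-reaching computations that $\deadsim$ is designed to detect, and for convergent networks there are simply no such computations — dropping convergence reintroduces the unsoundness exhibited in Example~\ref{ex:divergence}.

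I would then verify that the relation
\[
  \calR \;=\; \setof{(\calN,\pdist{\calM})}{\calM,\calN\ \text{convergent and}\ \calN \dfdeadsim \calM}
\]
is a deadlock simulation, whence $\calR \subseteq \deadsim$ by the maximality in Definition~\ref{def:ds}. Fix a pair $(\calN,\pdist{\calM}) \in \calR$. Clause (i) of Definition~\ref{def:ds} transfers verbatim from clause (i) of Definition~\ref{def:dfdeadsim}, since $\calM \extAr{\tau}\Theta$ is literally $\pdist{\calM}\extAr{\tau}\Theta$ and the predicate $\delta$ is the same in both. For clause (ii), consider any weak action $\calN \extAr{\lambda}\Delta$; by the convergence facts above $\Delta$ is full and supported on convergent networks, so clause (ii) of $\dfdeadsim$ is applicable and produces $\calM \extAr{\lambda}\Theta$ with $\Delta \lift{\dfdeadsim^e}\Theta$, where $\Theta$ is again full and supported on convergent networks. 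Because $\calR$ and $\dfdeadsim^e$ agree on pairs of convergent networks, the decomposition witnessing $\Delta \lift{\dfdeadsim^e}\Theta$ uses only pairs lying in $\calR$, so in fact $\Delta \lift{\calR}\Theta$, which is precisely what clause (ii) of the deadlock simulation requires.

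Putting the pieces together, $\calR \subseteq \deadsim$ applied to the hypothesis gives $\calN \deadsim \pdist{\calM}$, i.e. $\pdist{\calM}\invdeadsim\calN$, and Theorem~\ref{thm:must.sound} then delivers $\calM \Mustleq \calN$. I expect the main obstacle to be the second paragraph: proving that convergence propagates along weak extensional actions and rules out mass loss, so that the mass-preserving matching of $\dfdeadsim$ really does serve as a legitimate matching in the sub-distribution setting of $\deadsim$. The remaining work — unwinding the $\invdfdeadsim$ and $\invdeadsim$ conventions and checking that the lifting $\lift{\dfdeadsim^e}$ restricts to $\lift{\calR}$ — is routine bookkeeping.
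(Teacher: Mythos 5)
Your proposal is correct and follows essentially the same route as the paper's own proof: the paper likewise reduces to Theorem~\ref{thm:must.sound} by showing that $\dfdeadsim$ is included in $\deadsim$ on convergent networks, citing Proposition~\ref{prop:convergent.moves} to ensure that weak extensional actions from convergent networks yield full distributions; you merely spell out the coinductive bookkeeping that the paper's two-line proof leaves implicit. One small remark: your justification that convergence propagates to the networks in $\support{\Delta}$ (pulling a divergence back to $\calN \extAr{\tau} \varepsilon$) does not literally work when the connecting action is not a $\tau$, but the fact is immediate anyway, since the paper defines convergence as a property of the generated pLTS, which is closed under reachability.
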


\begin{proof}
It suffices to show that $\dfdeadsim$ is included in 
$\deadsim$. To this end, note that if 
$\calM \extAr{\lambda} \Delta$ and $\calM$ is a convergent 
network, then $\size{\Delta} = 1$, by Proposition 
\ref{prop:convergent.moves}.
\end{proof}

Having a simpler sound proof technique is not the only 
advantage that we gain by focusing on convergent networks. 
In fact, if we make a further restriction and we compare networks 
whose codes running at nodes do not contain the success clause 
$\omega$, it follows that the relation $\dfdeadsim$ is also included 
in the may-testing preorder $\Mayleq$. This restriction is justified 
since in general we require the tests applied to a network, 
rather than the networks to be tested, to contain the clause 
$\omega$ to denote the success of an experiment. 

\begin{thm}[Soundness for Convergent Networks, May-testing]
\label{thm:may.dfsound}
A network $\Gamma \with M$ is proper if the term $M$ does not 
contain any occurrence of the special clause $\omega$. 

Let $\calM, \calN$ be convergent, proper networks such that 
$\inp{\calM} = \inp{\calN}$, $\outp{\calM} = \outp{\calN}$. 
If $\calM \dfdeadsim \calN$ 
then $\calM \Mayleq \calN$.
\end{thm}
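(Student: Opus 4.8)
The plan is to reduce this statement to the already-established soundness of the simulation preorder, Theorem~\ref{thm:may.sound}, by showing that on convergent proper networks the divergence-free deadlock simulation $\dfdeadsim$ is contained in the simulation preorder $\forsim$. Concretely, I would consider the relation
\[
\calR = \setof{(\calM,\calN)}{\calM \dfdeadsim \calN,\ \inp{\calM} = \inp{\calN},\ \outp{\calM} = \outp{\calN}}
\]
restricted to convergent proper networks, and prove that $\calR$ satisfies the defining clauses of $\forsim$ in Definition~\ref{def:sim}; since $\forsim$ is the largest such relation, this yields $\calR \subseteq \forsim$. Granting the hypotheses $\inp{\calM} = \inp{\calN}$ and $\outp{\calM} = \outp{\calN}$, the pair $(\calM,\calN)$ then lies in $\calR \subseteq \forsim$, so $\calM \forsim \calN$, and Theorem~\ref{thm:may.sound} gives $\calM \Mayleq \calN$ (inheriting its finitary hypothesis, exactly as the must case in Theorem~\ref{thm:must.dfsound} invokes Theorem~\ref{thm:must.sound}).

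To see that $\calR$ is a simulation, I would take $(\calM,\calN) \in \calR$ and check the three clauses of Definition~\ref{def:sim}. The interface clause holds by the very definition of $\calR$. For the success clause, properness is used in an essential way: since $\calM$ contains no occurrence of $\omega$ we have $\omega(\calM) = \ffalse$, so the premise $\omega(\calM) = \ttrue$ never fires and this clause is vacuous; as the operational rules never create an $\omega$ not already present, properness is preserved under extensional transitions, so the restriction to proper networks is stable under the matched pairs. For the transition clause, suppose $\calM \extAr{\lambda} \Delta$; Definition~\ref{def:dfdeadsim}(ii) supplies a $\Theta$ with $\calN \extAr{\lambda} \Theta$ and $\Delta \lift{\dfdeadsim^e} \Theta$, which is precisely the matching transition $\forsim$ demands once the lifted relation is shown to land back in $\calR$. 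Note that the deadlock clause~(i) of $\dfdeadsim$ has no counterpart in $\forsim$; it is simply an additional requirement on $\dfdeadsim$-pairs and therefore does not obstruct the inclusion.

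The step that needs care, and which I expect to be the main obstacle, is upgrading $\Delta \lift{\dfdeadsim^e} \Theta$ to $\Delta \lift{\calR^e} \Theta$, i.e. verifying that every pair $(\calL,\calL')$ with $\calL \in \support{\Delta}$ and $\calL' \in \support{\Theta}$ arising in the decomposition that witnesses the lifting actually has matching interfaces. Here I would invoke Proposition~\ref{prop:static.topology}: extensional actions leave the connectivity graph, and hence $\inp{\cdot}$ and $\outp{\cdot}$, unchanged, so every $\calL \in \support{\Delta}$ shares the interface of $\calM$ and every $\calL' \in \support{\Theta}$ shares that of $\calN$; combined with $\inp{\calM}=\inp{\calN}$ and $\outp{\calM}=\outp{\calN}$ this forces $(\calL,\calL') \in \calR$, so the same decomposition witnesses $\Delta \lift{\calR^e} \Theta$. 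Convergence is what makes the two definitions live in the same full-distribution world: by Proposition~\ref{prop:convergent.moves} every hyper-derivation out of a full distribution again has mass $1$, so the weak actions $\extAr{\lambda}$ produce genuine distributions and the full-distribution liftings appearing in both $\dfdeadsim$ and $\forsim$ are coherent — in particular no divergent move $\calM \extAr{\tau} \varepsilon$ can arise to spoil the matching. With $\calR \subseteq \forsim$ thus established, the conclusion $\calM \Mayleq \calN$ follows from Theorem~\ref{thm:may.sound}.
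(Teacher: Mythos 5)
Your proposal is correct and follows exactly the paper's route: the paper's own proof simply observes that $\dfdeadsim$ is included in $\forsim$ when restricted to convergent, proper networks, and then invokes Theorem~\ref{thm:may.sound}. You have merely spelled out the details the paper dismisses as trivial — the vacuity of the success clause under properness, the preservation of interfaces via Proposition~\ref{prop:static.topology}, and the role of convergence (Proposition~\ref{prop:convergent.moves}) in keeping all liftings over full distributions — all of which are the intended justifications.
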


\begin{proof}
It is trivial to note that the relation $\dfdeadsim$ is included in 
$\forsim$ when restricted to convergent, proper networks. 
The result follows then from Theorem \ref{thm:may.sound}.
\end{proof}

\section{Technical development}
\label{sec:technical}
In this section we collect the proofs of some technical results underlying our soundness
results; it may safely be skipped by the uninterested reader. 
The first to results concern the compositionality of the simulation preorders. The last 
outlines the proofs about \emph{Outcome preservation}. 
\subsection{Compositionality for  simulations}
\label{sec:technical.may.comp}

This section is devoted to the proof of Theorem~\ref{thm:composition}, namely
that the simulation preorder is preserved by  the extension operator $\testP$. 
In general such compositionality results depends on \emph{decomposition} and \emph{(re-)composition}
results for the actions used in the definition of simulations. Definition~\ref{def:sim} 
uses \emph{weak} extensional actions, and providing \emph{decomposition} results for 
these would be a  difficult undertaking. Instead we  first give an alternative 
characterisation of the simulation preorder, for which \emph{decomposition results} for
\emph{strong} actions is sufficient. These (strong) decomposition results, and 
\emph{(re)-composition} results for \emph{weak} actions have already been given in 
Section~\ref{sec:comp.decomp}.

\begin{defi}[Simple simulations]\rm\label{def:sims}
In  $\pLTSnets$ we let 
 $\sforsim$ denote the largest relation in $\nets\times {\nets}$
such that if ${\calM}  \sforsim \calN$ then: 
\begin{itemize}
\item $\inp{\calM} = \inp{\calN}, \outp{\calM} = \outp{\calN}$,
\item 
if $\omega(\calM) = \ttrue$  then $\calN \extAr{\tau} \Theta $
such that $\omega(\calL) = \ttrue$ for any $\calL \in \support{\theta}$,

\item otherwise, 
 \begin{enumerate}[label=(\roman*)]
  \item whenever $\calM\extar{\lambda}\Delta$
  there is a $\Theta \in \dist{\nets}$ with
  $\calN \extAr{\lambda}\Theta$ and $\Delta\lift{(\sforsim)^e}\Theta$.
 \end{enumerate}
  
\end{itemize}
\end{defi}

\begin{thm}[Alternative characterisation]\label{thm:altchar}
  In  $\pLTSnets$,
  $\calM \forsim \calN$ if and only if $\calM \sforsim \calN$, 
  provided that $\calM$ and $\calN$ are finitary networks.
\end{thm}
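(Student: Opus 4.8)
The plan is to establish the two inclusions $\forsim \subseteq \sforsim$ and $\sforsim \subseteq \forsim$ separately, in each case by the usual coinductive principle: since both relations are defined as largest relations satisfying a list of clauses, to prove $\calR \subseteq \sforsim$ it suffices to check that $\calR$ satisfies the defining clauses of Definition~\ref{def:sims} with $\calR$ substituted for the relation being defined, and symmetrically for $\forsim$. The inclusion $\forsim \subseteq \sforsim$ is the easy one and uses no finiteness assumption. The interface clauses coincide, and the two $\omega$-success clauses coincide as well (both require $\calN$ to reach by internal activity a distribution all of whose networks are $\omega$-successful), so only the transfer clause must be checked. Here the key observation is that every \emph{strong} extensional action is in particular a \emph{weak} one: a step $\calM \extar{\tau} \Delta$ yields $\calM \extAr{\tau} \Delta$ by reflexivity of hyper-derivations, and $\calM \extar{n.c?v} \Delta$, $\calM \extar{\eout{c!v}{\eta}} \Delta$ give the corresponding weak actions by taking the two surrounding hyper-derivations in Definition~\ref{def:wea} to be reflexive. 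Hence if $\calM \forsim \calN$ and $\calM \extar{\lambda} \Delta$ is a strong challenge, the weak-challenge clause of $\forsim$ already supplies a $\Theta$ with $\calN \extAr{\lambda} \Theta$ and $\Delta \lift{(\forsim)^e} \Theta$, so $\forsim$ is a simple simulation.

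The inclusion $\sforsim \subseteq \forsim$ is the substantial direction; I would show that $\sforsim$ is itself a weak simulation. As the interface and $\omega$-clauses again transfer verbatim, the entire content is to answer a \emph{weak} challenge $\calM \extAr{\lambda} \Delta$ using only the strong-matching power of $\sforsim$. I would isolate this as a \emph{transfer lemma}: for any simple simulation $\calR$, if $\Delta \lift{(\calR)^e} \Theta$ and $\Delta \extAr{\lambda} \Delta'$, then there is $\Theta'$ with $\Theta \extAr{\lambda} \Theta'$ and $\Delta' \lift{(\calR)^e} \Theta'$. Instantiating $\calR = \sforsim$ and starting from $\pdist{\calM} \lift{(\sforsim)^e} \pdist{\calN}$, this lemma discharges the simulation clause of $\forsim$ at once.

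The transfer lemma I would prove in stages, following the shape of Definition~\ref{def:wea}. The internal case $\lambda = \tau$ is the heart: unfolding the hyper-derivation as $\Delta = \Delta_0^\rightarrow + \Delta_0^\Stop$, $\Delta_k^\rightarrow \extar{\tau} \Delta_{k+1}^\rightarrow + \Delta_{k+1}^\Stop$, with $\Delta' = \sum_k \Delta_k^\Stop$, I would construct a matching hyper-derivation from $\Theta$ layer by layer, maintaining the invariant that the current residual of the $\Theta$-side is related by $\lift{(\calR)^e}$ to $\Delta_k^\rightarrow$. Each strong step is matched, through the simple-simulation clause of $\calR$ and the linearity of lifting, by a \emph{weak} response of $\Theta$, which is then split along the $\rightarrow/\Stop$ decomposition using that $\lift{(\calR)^e}$ respects sums; the individual weak responses are assembled into a single hyper-derivation $\Theta \extAr{} \Theta'$ by the reflexivity, transitivity and linearity of $\extAr{}$ recorded in Theorem~\ref{thm:hyper}(i),(iii), with $\Delta' \lift{(\calR)^e} \Theta'$ holding in the limit. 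The input case and the base output case then follow by writing $\extAr{\lambda}$ as $\extAr{}\,\extar{\lambda}\,\extAr{}$, applying the internal case to the two hyper-derivation segments and the lifted single-step strong-matching clause of $\calR$ to the middle visible action, and recomposing by transitivity of $\extAr{}$. Finally the composite output actions are handled by induction on clause~(3)(b) of Definition~\ref{def:wea}: match $\extAr{\eout{c!v}{\eta_1}}$, then $\extAr{\eout{c!v}{\eta_2}}$ for disjoint $\eta_1,\eta_2$, and recombine into $\extAr{\eout{c!v}{\eta_1 \cup \eta_2}}$.

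The main obstacle is precisely the internal case of the transfer lemma: matching an \emph{infinitary} object, an arbitrary hyper-derivation, by a hyper-derivation on the simulating side, while guaranteeing that the probability mass is not lost to infinity and that the limiting sub-distribution still lies in $\lift{(\calR)^e}$. This is exactly where the finitary hypothesis on $\calM$ and $\calN$ is needed: finite branching together with finite state provide the compactness and closure properties that let the infinite interleaving of weak responses converge to a genuine hyper-derivation and let the lifted simulation be preserved under the limit. This analysis mirrors the one carried out for the probabilistic simulation preorder in \cite{DGHM09full}, whose arguments I would adapt to the extensional pLTS $\pLTSnets$.
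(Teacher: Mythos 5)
Your proposal is correct and follows essentially the same route as the paper's own proof: the easy inclusion $\forsim \subseteq \sforsim$ via strong-actions-are-weak-actions, and the hard inclusion $\sforsim \subseteq \forsim$ by case analysis on $\lambda$, with the $\tau$ case as the heart (deferred, as in the paper, to the hyper-derivation techniques of \cite{DGHM09full}, which is exactly where the finitary hypothesis enters), the input case handled by sandwiching a strong step between two hyper-derivations, and the composite output case by induction on clause (3)(b) of Definition~\ref{def:wea}. Your packaging of the matching step as a transfer lemma over lifted relations merely makes explicit what the paper does implicitly (cf.\ its footnote on decomposing $\Delta''$ into state-based summands), so the two arguments are substantively identical.
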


\begin{proof}(Outline)
The proof is similar in style to the one of Theorem 7.20 of \cite{DGHM09full}; however, there are some extra complications, mainly because of the more complicated definition of weak extensional actions. 
Here we report a detailed outline of the proof; we first prove that 
$\forsim$ is included in $\sforsim$, then we show that, for finitary networks, 
the converse inclusion also holds.

Showing that the relation $\forsim$ is included in $\sforsim$ is straightforward. 
We only need to show that $\forsim$ satisfies the constraints of Definition \ref{def:sims}.
Suppose that $\calM \forsim \calN$. Then this hypothesis ensures that 
\begin{itemize}
\item 
$\inp{\calM} = \inp{\calN}, \outp{\calM} = \outp{\calN}$ and 
\item if $\omega(\calM) = \ttrue$ then $\calN \extAr{\tau} \Theta$ such that 
$\omega(\calL) = \ttrue$ for any $\calL \in \support{\Theta}$.
\end{itemize}
Suppose however that $\omega(\calM) = \ffalse$ and $\calM \extar{\lambda} \Delta$ 
for some $\Delta$. Then we also have that $\calM \extAr{\lambda} \Delta$, from which 
it follows from the hypothesis $\calM \forsim \calN$ that there exists $\Theta$ 
such that $\calN \extAr{\tau} \Theta$ and $\Delta \lift{(\forsim)^e} \Theta$, which is exactly 
what we wanted to show.

It remains to show that, for finitary networks, the relation $\sforsim$ is included in $\forsim$. 
Here the main difficulty consists in showing that, whenever $\calM \sforsim \calN$, 
$\omega(\calM) = \ffalse$ and $\calM \extAr{\lambda} \Delta$, then $\calN \extAr{\lambda} \Theta$ 
for some $\Theta$ such that $\Delta \lift{(\sforsim)^e} \Theta$. 
The proof of this statement is performed by a case analysis on the action $\lambda$;

\begin{enumerate}
\item First suppose that $\lambda = \tau$.
This case can be proved in the analogous way of 
Theorem 7.20 of \cite{DGHM09full}. Note that we require networks to be finitary in this 
case, since the proof requires properties of hyper-derivations 
which in general are not satisfied by infinitary plTSs; see Lemma 6.12 of \cite{DGHM09full}.

\item Now suppose that $\lambda = i.c?v$ for some $i, c, v$; then $\calM \extAr{i.c?v} \Delta$ 
implies that there exist $\Delta', \Delta''$ such that $\calM \extAr{\tau} \Delta'\extar{i.c?v} \Delta'' \extAr{\tau}$. 
Since $\calM \sforsim \calN$ and $\calM \extAr{\tau} \Delta'$, by the previous case $\calN \extAr{\tau} \Theta'$ 
for some $\Theta'$ such that $\Delta' \lift{(\sforsim)^e} \Theta'$. Since $\Delta' \extar{i.c?v} \Delta''$, we can 
conclude that $\Theta' \extAr{i.c?v} \Theta''$ for some $\Theta''$ such that 
$\Delta'' \lift{(\sforsim)^e} \Theta''$. Finally, since $\Delta'' \extAr{\tau} \Delta$, 
by the previous case\footnote{Note that here it is necessary to decompose $\Delta''$ as a sum 
of state-based networks, each of which can perform a weak $\tau$-action.} we obtain that $\Theta'' \extAr{\tau} \Theta$ for some 
$\Theta$ such that $\Delta \lift{(\sforsim)^e} \Theta$. 
Therefore we have shown that $\calN \extAr{\tau} \Theta' \extAr{i.c?v}\Theta'' \extAr{\tau} \Theta$, 
or equivalently $\calN \extAr{i.c?v} \Theta$, and $\Delta \lift{(\sforsim)^e} \Theta$, which 
is exactly what we needed to prove.

\item Finally, suppose that $\lambda = \eout{c!v}{\eta}$ for some $c, v$ and non-empty set of 
nodes $\eta$. We perform an inner induction on the proof of the derivation $\calM \extAr{\eout{c!v}{\eta}} \Delta$;
\begin{itemize}
\item $\calM \extAr{\eout{c!v}{\eta}} \Delta$ because $\calM \extAr{\tau} \extar{\eout{c!v}{\eta}} \extAr{\tau} \Delta$. 
This case is identical to the one $\lambda = i.c?v$,
\item $\calM \extAr{\eout{c!v}{\eta}} \Delta$ because $\calM \extAr{\eout{c!v}{\eta_1}} \Delta' \extAr{\eout{c!v}{\eta_2}} \Delta$, 
where $\eta_1 \cup \eta_2 = \eta$, 
$\eta_1 \cap \eta_2 = \emptyset$. Since $\calM \sforsim \calN$, by the (inner) inductive hypothesis we have that 
$\calN \extAr{\eout{c!v}{\eta_1}} \Theta'$ for some $\Theta'$ such that $\Delta' \lift{(\sforsim)^e} \Theta'$. 
A second application of the inductive hypothesis to the last statement gives that $\Theta' \extAr{\eout{c!v}{\eta_2}} \Theta$ 
for some $\Theta$ such that $\Delta \lift{(\sforsim)^e} \Theta$. Therefore we have shown that 
$\calN \extAr{\eout{c!v}{\eta_1}} \Theta' \extAr{\eout{c!v}{\eta_2}} \Theta$, or equivalently 
$\calN\extAr{\eout{c!v}{\eta}} \Theta$ (recall that $\eta_1 \cup \eta_2 = \eta$ and $\eta_1 \cap \eta_2 = \emptyset$), 
and $\Delta \lift{(\sforsim)^e} \Theta$, as we wanted to prove.\qedhere
\end{itemize} 
\end{enumerate}
\end{proof}\smallskip

\noindent Theorem \ref{thm:altchar} enables us to exploit the results 
developed in Section \ref{sec:comp.decomp} for proving 
the compositionality of $\forsim$ with respect to the 
extension operator $\testP$. Since such results are valid 
only in the case that a network $\calM$ is composed
 with a generating network 
$\calG$, we first focus on compositionality with respect 
to a generating network.

\begin{thm}\label{thm:single.comp}
  Suppose $\inp{\calM} = \inp{\calN}$, $\outp{\calM} = \outp{\calN}$ 
  and both $\calM \testP \calG$ and 
$\calN \testP \calG$ are defined. Then 
$\calM \sforsim \calN $ implies 
$\calM \testP \calG  \sforsim \calN \testP \calG$.
\end{thm}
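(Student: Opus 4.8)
The plan is to proceed by coinduction, exhibiting a relation contained in $\sforsim$; working with simple simulations (justified by the Alternative Characterisation, Theorem~\ref{thm:altchar}) is exactly what lets me match a \emph{strong} action on the left, so that the Strong Decomposition of Proposition~\ref{prop:decomp} becomes applicable. Concretely I would set
\[
  \calR = \setof{(\calM \testP \calG,\; \calN \testP \calG)}{\calM \sforsim \calN,\; \calG \in \mathbb{G},\; \calM \testP \calG \text{ and } \calN \testP \calG \text{ defined}}
\]
and verify that $\calR$ satisfies the three clauses of Definition~\ref{def:sims} with $\calR$ in place of $\sforsim$; since $\sforsim$ is the largest such relation this gives $\calR \subseteq \sforsim$, and the desired conclusion follows because the pair $(\calM \testP \calG, \calN \testP \calG)$ lies in $\calR$. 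Matching of interfaces is immediate from interface preservation of $\testP$: the interface of $\calM \testP \calG$ is determined by those of $\calM$ and $\calG$, so $\inp{\calM} = \inp{\calN}$ and $\outp{\calM} = \outp{\calN}$ (with $\calG$ common) force $\inp{\calM \testP \calG} = \inp{\calN \testP \calG}$ and likewise for outputs.

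For the success clause, write $\calG = (\Gamma_N \with \Cloc{s}{n})$ and note that $\omega(\calM \testP \calG)$ holds iff $\omega(\calG)$ or $\omega(\calM)$ holds. If $\omega(\calG) = \ttrue$ then $\calN \testP \calG$ is itself successful and I match by reflexivity; if instead $\omega(\calG) = \ffalse$ and $\omega(\calM) = \ttrue$, then $\calM \sforsim \calN$ yields $\calN \extAr{\tau} \Theta$ with every element of $\support{\Theta}$ successful, and Proposition~\ref{prop:wea.comp}(1)(i) lifts this to $\calN \testP \calG \extAr{\tau} \Theta \testP (\Gamma_N \with \Cloc{\pdist{s}}{n})$, whose support remains successful because each element retains a successful $\calN$-component.

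The core is the transition clause, under the assumption $\omega(\calM \testP \calG) = \ffalse$. Given $\calM \testP \calG \extar{\lambda} \Lambda$, I would apply Strong Decomposition (Proposition~\ref{prop:decomp}) to write $\Lambda = \Delta \testP (\Gamma_N \with \Cloc{\Theta}{n})$ and obtain, according to the subcase, a strong extensional action $\calM \extar{\lambda_1} \Delta$ (or, in the degenerate subcases, $\Delta = \pdist{\calM}$ with $\calM$ idle) together with a strong process move $s \ar{\mu} \Theta$ (or, when $s$ is unable to react, $\Theta = \pdist{s}$). From $\calM \sforsim \calN$ I then get a weak action $\calN \extAr{\lambda_1} \Delta'$ with $\Delta \lift{(\sforsim)^e} \Delta'$ (taking $\Delta' = \pdist{\calN}$ and using $\pdist{\calM} \lift{(\sforsim)^e} \pdist{\calN}$ in the idle subcases). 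Feeding this weak $\calN$-action and the strong $s$-move into the matching clause of Weak Composition (Proposition~\ref{prop:wea.comp}) produces exactly $\calN \testP \calG \extAr{\lambda} \Delta' \testP (\Gamma_N \with \Cloc{\Theta}{n}) =: \Xi$; the side conditions of those clauses ($n \notin \inp{\calM}$, $m \notin \inp{\calM}$, $\outp{\calG} = \emptyset$, $m \in \inp{\calG}$, etc.) transfer between the two sides precisely because $\inp{\calM} = \inp{\calN}$, $\outp{\calM} = \outp{\calN}$ and $\calG$ is fixed. To close the coinduction I would prove the distributivity fact that $\Delta \lift{(\sforsim)^e} \Delta'$ implies $\Delta \testP (\Gamma_N \with \Cloc{\Theta}{n}) \lift{\calR^e} \Delta' \testP (\Gamma_N \with \Cloc{\Theta}{n})$: decomposing $\Delta \lift{(\sforsim)^e} \Delta'$ and $\Theta$ into their defining convex combinations, every resulting point pair has the form $(\calM_i \testP \calG_j, \calN_i \testP \calG_j)$ with $\calM_i \sforsim \calN_i$ and $\calG_j = (\Gamma_N \with \Cloc{s_j}{n}) \in \mathbb{G}$, hence lies in $\calR$, and $\testP$ distributes over the combinations defining $\lift{\cdot}$; applied to $\Lambda$ and $\Xi$ this yields $\Lambda \lift{\calR^e} \Xi$.

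The main obstacle is the bookkeeping in this case analysis: Propositions~\ref{prop:decomp} and~\ref{prop:wea.comp} each branch into many subcases, and one must check that every decomposition subcase for a given $\lambda$ is answered by a composition subcase returning the same label $\lambda$ on the right. The delicate point is the output action $\lambda = \eout{c!v}{\eta}$: a decomposition may charge the extra target $n$ to $\calM$, producing $\calM \extar{\eout{c!v}{\eta \cup \{n\}}} \Delta$ together with an input move of $s$, and the matching recomposition (Proposition~\ref{prop:wea.comp}(2)(ii)) must then remove $n$ again so that the label witnessed for $\calN \testP \calG$ is once more $\eout{c!v}{\eta}$. Verifying that the partition discipline built into the weak output relation (Definition~\ref{def:wea}(3)) is respected throughout, and that the hypothesis $\omega(\calM \testP \calG) = \ffalse$ correctly confines these cases to non-successful configurations, is the most error-prone part and merits the most care.
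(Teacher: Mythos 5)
Your proposal is correct and follows essentially the same route as the paper's own proof: it exhibits the relation $\{((\calM \testP \calG),(\calN \testP \calG)) \mid \calM \sforsim \calN\}$, verifies the clauses of Definition~\ref{def:sims} by applying Strong Decomposition (Proposition~\ref{prop:decomp}) to the left-hand action, matching via $\calM \sforsim \calN$, and rebuilding the right-hand weak action with Proposition~\ref{prop:wea.comp}, using $\inp{\calM}=\inp{\calN}$, $\outp{\calM}=\outp{\calN}$ to transfer the side conditions. Your explicit treatment of the success clause and of the distributivity of $\testP$ over the lifting is detail the paper leaves implicit (it works the output case and leaves $\tau$ and input cases to the reader), but the underlying argument is the same.
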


\begin{proof}
It suffices to show that the relation 
\[
{\mathcal S} = \{ ((\calM \testP \calG), (\calN \testP \calG) \;|\; \calM \sforsim \calN) \} 
\]
\noindent
satisfies the requirements of Definition \ref{def:sims}. 
We denote the network $\calG$ with $\Gamma_G \with \Cloc{s}{n}$. 

Suppose that $\calM \testP \calG \extar{\eout{c!v}{\eta}} \Delta$; 
note that the definition of extensional output ensures that 
$\eta \neq \emptyset$.
We need to show that $\calN \testP \calG \extAr{\eout{c!v}{\eta}} 
\Theta$, and $\Delta \lift{{\mathcal S}^e} \Theta$. By 
Proposition \ref{prop:decomp} we know that 
$\Delta = \Delta_M \testP (\Gamma_G \with \Cloc{\Theta_n}{n})$, 
where $\Delta_M$, $\Theta_n$ are determined according to 
six different cases, which are considered below.

\begin{enumerate}[label=(\roman*)]
\item $\calM \extar{\eout{c!v}{\eta}} \Delta_M$, $n \notin \eta$ and 
$\Theta_n = \pdist{s}$; 
since $\calM \sforsim \calN$ it follows that 
$\calN \extAr{\eout{c!v}{\eta}} \Theta_N$ for 
some $\Theta_N$ such that  
$\Delta_M \lift{(\sforsim)^e} \Theta_N$. Since 
$n \notin \eta$, By Proposition 
\ref{prop:wea.comp} it follows that $\calN \testP 
(\Gamma_G \with \Cloc{s}{n}) \extAr{\eout{c!v}{\eta}} 
\Theta_N \testP (\Gamma_G \with \Cloc{\pdist{s}}{n})$. 
Now it suffices to note that 
$(\Delta_M \testP \Gamma_G \with \Cloc{\pdist{s}}{n}) \lift{({\mathcal S})^e} 
(\Delta_N \testP \Gamma_G \with \Cloc{\pdist{s}}{n})$, 
as we wanted to prove
\item $\calM \extar{\eout{c!v}{\eta \cup \{n\}}} \Delta_M$, 
$n \notin \eta$ and $s \ar{c?v} \Theta_n$.  
Since $\calM \sforsim \calN$ it follows that 
$\calN \extAr{\eout{c!v}{\eta \cup \{n\}}} \Theta_N$, 
where $\Delta_M \lift{(\sforsim)^e} \Theta_N$. 
Since $\eta \neq \emptyset$, that is  
$\{n\} \subset (\eta \cup \{n\})$, we can apply Proposition \ref{prop:wea.comp}(2)(ii)
to the weak extensional transition $\calN \extAr{\eout{c!v}{\eta \cup \{n\}}} 
\Theta_N$ and the process transition $s \ar{c?v} \Theta_n$ 
to infer $\calN \testP (\Gamma_G \with \Cloc{s}{n}) 
\extAr{\eout{c!v}{(\eta \cup \{n\}) \setminus \{n\}}} 
\Theta_N \testP (\Gamma_G \with \Cloc{\Theta_n}{n})$; since 
we are assuming that $n \notin \eta$, the 
latter can be rewritten as  $\calN \testP (\Gamma_G \with \Cloc{s}{n} 
\extAr{\eout{c!v}{\eta}} 
\Theta_N \testP (\Gamma_G \with \Cloc{\Theta_n}{n})$; 
again, since $\Delta_M \lift{(\sforsim)^e} \Theta_N$, we have that 
$(\Delta_M \testP \Gamma_G \with \Cloc{\Theta_n}{n}) 
\lift{{\mathcal S}^e} \Theta_N \testP (\Gamma_G \with \Cloc{\Theta_n}{n})$

\item $\calM \extar{\eout{c!v}{\eta \cup \{n\}}} \Delta_M$, 
$s \nar{c?v}$ and $\Theta_n = \pdist{s}$. This case is 
similar to the previous one, this time employing Proposition 
\ref{prop:wea.comp}(2)(iii)

\item $\calM \extar{m.c?v} \Delta_M$, $s \ar{c!v} \Theta_n$ and 
$\eta = \outp{(\Gamma_G \with \Cloc{s}{n})}$. 
Since $\calM \sforsim \calN$ we have that 
$\calN \extAr{m.c?v} \Theta_N$ for some $\Theta_N$ 
such that $\Delta_M \lift{(\sforsim)^e} \Theta_N$. 
It follows from Proposition \ref{prop:wea.comp}(2)(iv) 
that $\calN \testP (\Gamma_G \with \Cloc{s}{n}) 
\extAr{\eout{c!v}{\eta}} \Theta_N \testP (\Gamma_G \with \Cloc{\Theta_n}{n})$. 
Now note that $(\Delta_M \testP \Gamma_G \with \Cloc{\Theta_n}{n}) \lift{(\mathcal S)^e} 
(\Theta_N \testP \Gamma_G \with \Cloc{\Theta_n}{n})$

\item $m \notin \inp{\calM}, \Delta = \pdist{\calM}$ and 
$s \ar{c!v} \Theta_n$. Since $\calM \forsim \calN$ it follows that 
$m \notin \inp{\calN}$, hence by Proposition 
\ref{prop:wea.comp} we have that $\calN \testP (\Gamma_G \with \Cloc{s}{n}) 
\extAr{m.c?v} \pdist{\calN} \testP (\Gamma_G \with \Cloc{\Theta_n}{n})$. 
Now the hypothesis $\calM \sforsim \calN$ ensures that 
$\pdist{\calM} \testP (\Gamma_G \with \Cloc{\Theta_n}{n}) lift{({\mathcal S})^e} 
\pdist{\calN} \testP (\Gamma_G \with \Cloc{\Theta_n}{n})$.

\end{enumerate}

\noindent The cases $(\calM \testP \calG) \extar{\tau}$ and 
$(\calM \testP \calG) \extar{i.c?v}$ are 
treated similarly, and are therefore left as an 
exercise for the reader.
\end{proof}

However, since generating networks can be used to generate 
all the networks in $\nets$, Proposition  
\ref{prop:network.decomp}, we can 
easily generalise Theorem \ref{thm:single.comp} to arbitrary 
networks.

\begin{cor}[Compositionality: Theorem~\ref{thm:composition}]\label{cor:composition}
Let $\calM, \calN$ be finitary networks such that $
\inp{\calM} = \inp{\calN}$, $\outp{\calM} = \outp{\calN}$. 
Also, suppose $\calL$ is a network such that both 
$\calM \testP \calL$ and $\calN \testP \calL$ are defined. 
Then $\calM \forsim  \calN$ implies  
$(\calM \testP \calL)  \forsim (\calN \testP \calL)$.
\end{cor}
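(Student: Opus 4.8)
The plan is to bootstrap the single-generator compositionality result, Theorem~\ref{thm:single.comp}, up to an arbitrary right-hand component $\calL$ by peeling $\calL$ apart into generating networks. Since Theorem~\ref{thm:single.comp} is phrased for the simple simulation $\sforsim$ and carries no finitary hypothesis, I would first pass to that relation: by the easy inclusion $\forsim \subseteq \sforsim$ of Theorem~\ref{thm:altchar} (which needs no finiteness assumption), the hypothesis $\calM \forsim \calN$ gives $\calM \sforsim \calN$. I would then prove, by induction on the number $|\nodes{\calL}|$ of occupied nodes of $\calL$, the following stronger statement: for all networks $\mathcal{A}, \mathcal{B}$ with $\mathcal{A} \sforsim \mathcal{B}$, $\inp{\mathcal{A}} = \inp{\mathcal{B}}$ and $\outp{\mathcal{A}} = \outp{\mathcal{B}}$, and every $\calL$ for which $\mathcal{A} \testP \calL$ and $\mathcal{B} \testP \calL$ are defined, one has $\mathcal{A} \testP \calL \sforsim \mathcal{B} \testP \calL$. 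Generalising $\calM,\calN$ to arbitrary $\mathcal{A},\mathcal{B}$ is what lets the induction go through, since the inductive step must apply the hypothesis to the enlarged networks $\mathcal{A} \testP \calL'$ and $\mathcal{B} \testP \calL'$.

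For the base case $|\nodes{\calL}| = 0$ I would observe that well-formedness forces $\calL$ to be the empty network: once there are no occupied nodes, clause (i) of Definition~\ref{def:well.formed} rules out every edge, and then clause (ii) forbids any isolated node, so $\calL$ has no vertices at all. Hence $\mathcal{A} \testP \calL \equiv \mathcal{A}$ and $\mathcal{B} \testP \calL \equiv \mathcal{B}$, and the claim collapses to $\mathcal{A} \sforsim \mathcal{B}$, using that $\sforsim$ respects $\equiv$ (extensional actions being preserved by structural congruence, Corollary~\ref{cor:reduction}).

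For the inductive step, with $|\nodes{\calL}| = k > 0$, Proposition~\ref{prop:network.decomp} yields $\calL \equiv \calL' \testP \calG$ with $\calG \in \mathbb{G}$ and $|\nodes{\calL'}| = k-1$. A short bookkeeping argument on the definedness condition $\nodes{\cdot}\cap(\cdot)_V = \emptyset$ shows that $\mathcal{A} \testP \calL'$, $\mathcal{B} \testP \calL'$, $(\mathcal{A} \testP \calL')\testP\calG$ and $(\mathcal{B} \testP \calL')\testP\calG$ are all defined, so that associativity (Proposition~\ref{prop:testp.assoc}), together with the compatibility of $\testP$ with $\equiv$, gives $\mathcal{A} \testP \calL \equiv (\mathcal{A} \testP \calL')\testP\calG$ and likewise for $\mathcal{B}$. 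Interface preservation of $\testP$ supplies $\inp{\mathcal{A} \testP \calL'} = \inp{\mathcal{B} \testP \calL'}$ and $\outp{\mathcal{A} \testP \calL'} = \outp{\mathcal{B} \testP \calL'}$, so the induction hypothesis applies to $\mathcal{A},\mathcal{B},\calL'$ and yields $\mathcal{A} \testP \calL' \sforsim \mathcal{B} \testP \calL'$. Feeding this, the interface equalities, and the definedness of the two compositions with $\calG$ into Theorem~\ref{thm:single.comp} produces $(\mathcal{A} \testP \calL')\testP\calG \sforsim (\mathcal{B} \testP \calL')\testP\calG$, i.e.\ $\mathcal{A} \testP \calL \sforsim \mathcal{B} \testP \calL$, closing the induction. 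Instantiating at $\mathcal{A} = \calM$, $\mathcal{B} = \calN$ and translating back via the converse inclusion $\sforsim \subseteq \forsim$ of Theorem~\ref{thm:altchar} then gives $\calM \testP \calL \forsim \calN \testP \calL$.

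The only genuinely delicate point is this last translation: the converse direction of Theorem~\ref{thm:altchar} requires the networks to be finitary, so I must verify that $\calM \testP \calL$ and $\calN \testP \calL$ are finitary. This is where I expect the main (though essentially routine) obstacle to lie; it holds as soon as $\calL$ itself is finitary, with extra care needed only if one insists on admitting non-finitary $\calL$. Everything else is deliberately kept inside $\sforsim$, precisely so that the finitary hypothesis is invoked exactly twice — at the two uses of the alternative characterisation — and never inside the induction.
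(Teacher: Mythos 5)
Your proof is correct and follows the same skeleton as the paper's: induction on the number of occupied nodes of $\calL$, peeling off a generating network with Proposition~\ref{prop:network.decomp}, and combining associativity (Proposition~\ref{prop:testp.assoc}) with the single-generator result (Theorem~\ref{thm:single.comp}). The genuine difference lies in where Theorem~\ref{thm:altchar} is used. The paper runs the induction on $\forsim$ itself, so it converts $\forsim$ to $\sforsim$ and back \emph{inside every inductive step}; the return direction $\sforsim \subseteq \forsim$ needs finitariness, so the paper silently assumes every intermediate composite $(\calM \testP \calL') \testP \calG$ is finitary --- something it never verifies, and which is not guaranteed for an arbitrary $\calL$. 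Your version keeps the entire induction inside $\sforsim$ (whose compositionality, Theorem~\ref{thm:single.comp}, carries no finiteness hypothesis), using the cheap inclusion $\forsim \subseteq \sforsim$ once at the start and the finitary-dependent inclusion once at the very end; this localises the finiteness obligation to the two final composites $\calM \testP \calL$ and $\calN \testP \calL$, which you rightly flag as the residual obligation (discharged, for instance, when $\calL$ is itself finitary). So your argument is, if anything, tighter than the paper's, and the gap you identify is shared by --- indeed amplified in --- the original proof. One small correction to your motivation: the generalisation to arbitrary pairs $\mathcal{A} \sforsim \mathcal{B}$ is harmless but not actually needed, because in the inductive step it is Theorem~\ref{thm:single.comp}, not the inductive hypothesis, that is applied to the enlarged networks $\mathcal{A} \testP \calL'$ and $\mathcal{B} \testP \calL'$; the induction therefore goes through with $\calM$, $\calN$ held fixed, exactly as in the paper.
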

\begin{proof}
 First note that the only elements used to define simulations 
 are network interfaces, the set of outcomes of networks and the extensional 
 transitions that networks can perform. These definitions are preserved by the structural congruence between networks
defined on page \pageref{sec:properties}.
 As a  consequence, 
 simulations are identified up-to structural congruence; 
 if $\calM \equiv \calM'$, $\calM' \forsim \calN'$ and 
 $\calN' \equiv \calN$, it follows that $\calM \forsim \calN$.
 
 Suppose then that $\calM \forsim \calN$, and let $\calL$ be 
 a network such that both $(\calM \testP \calL)$ and $(\calN \testP 
 \calL)$ are defined. We show that 
 $(\calM \testP \calL) \forsim (\calN \testP \calL)$ by induction on 
 $\nodes{\calL}$.
 \begin{itemize}
 \item $\nodes{\calL} = 0$. In this case we have that 
 \[ 
 (\calM \testP \calL) \equiv \calM \forsim \calN \equiv (\calN \testP \calL),
 \]
 \item $\nodes{\calL} > 0$. By Proposition \ref{prop:network.decomp} 
there exist two networks $\calL'$ and $\calG$ such that 
$\calL \equiv (\calL' \testP \calG)$. 
In particular $(\calM \testP \calL) \equiv \calM \testP (\calL' \testP \calG) = 
(\calM \testP \calL') \testP \calG$, where we have used the associativity 
of the operator $\testP$, Proposition \ref{prop:testp.assoc}. 
Note that $\nodes{\calL'} < \nodes{\calL}$, hence by the inductive 
hypothesis it follows that $(\calM \testP \calL') \forsim (\calN \testP \calL')$. 
By Theorem \ref{thm:altchar} we obtain that $(\calM \testP \calL') \sforsim (\calN \testP \calL')$, 
and Theorem \ref{thm:single.comp} gives that 
$((\calM \testP \calL') \testP \calG) \sforsim (\calN \testP \calL') \testP \calG)$.
 Another application of \ref{thm:altchar} and the associativity of the operator $\testP$ lead to 
 $(\calM \testP( \calL' \testP \calG)) \forsim \calN \testP (\calL' \testP \calG)$. 
 Therefore we have that 
 \[
 (\calM \testP \calL) \equiv (\calM \testP( \calL' \testP \calG)) \forsim \calN \testP (\calL' \testP \calG)
 \equiv (\calN \testP \calL)
 \]
 as we wanted to prove.\qedhere
 \end{itemize}
\end{proof}

\subsection{Compositionality for deadlock simulations}
\label{sec:technical.must.comp}

As we have already done in Section \ref{sec:technical.may.comp},
here we also rely on an alternative characterisation of the 
$\deadsim$ preorder, which is more amenable to \emph{decomposition/composition}
results.

\begin{defi}[Simple Deadlock Simulation]
Let $\deadsim^s \subseteq \nets \times \subdist{\nets}$ be 
the largest relation such that whenever $\calM \deadsim^s \Theta$ 
then $\inp{\calM} = \inp{\calN}$, $\outp{\calM} = \outp{\calN}$ 
for any $\calN \in \support{\Theta}$; further 
\begin{itemize}
\item if $\calM \extAr{} \varepsilon$ then $\calN \extAr{}\varepsilon$, 
\item otherwise, 
\begin{enumerate}[label=(\roman*)]
	\item if $\delta(\calM) = \ttrue$ then $\Theta \extAr{} \Theta'$ 
	such that $\delta(\calN) = \ttrue$ for any $\calN \in \support{\Theta'}$, 
	\item if $\calM \extar{\lambda} \Delta'$, then $\Theta \extAr{\lambda} 
	\Theta'$ and $\Delta' \lift{\deadsim^s} \Theta'$.
\end{enumerate}
\end{itemize}
\end{defi}

\begin{thm}[Alternative Characterisation of Deadlock Simulations]
\label{thm:ds.altchar}
If $\calM$ is a finitary network and $\Theta$ is a finitary 
sub-distribution of networks then 
$\calM \deadsim \Theta$ if and only if $\calM \deadsim^s \Theta$.
\end{thm}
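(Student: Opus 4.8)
The plan is to follow the structure of the proof of Theorem~\ref{thm:altchar}, establishing the two inclusions $\deadsim \subseteq \deadsim^s$ and $\deadsim^s \subseteq \deadsim$ separately. The first is straightforward and holds for arbitrary networks, whereas the second requires finitariness, exactly as in the may case. Both inclusions are proved coinductively: to show that one of these relations is contained in the greatest relation satisfying the clauses defining the other, I would verify that the smaller relation itself satisfies those clauses when its own lifting is supplied as the continuation relation, and then appeal to monotonicity of lifting together with the coinduction principle.

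For $\deadsim \subseteq \deadsim^s$ I would check that $\deadsim$ meets the defining requirements of the simple relation. The interface conditions carry over directly. If $\calM \extAr{\tau} \varepsilon$, then clause~(ii) of Definition~\ref{def:ds} applied to $\lambda = \tau$ and $\Delta = \varepsilon$ forces a matching move $\Theta \extAr{\tau} \Theta'$ with $\varepsilon \lift{\deadsim} \Theta'$, and since $\size{\varepsilon} \ge \size{\Theta'}$ (the remark following Definition~\ref{def:ds}) we get $\Theta' = \varepsilon$; this discharges the divergence clause. When $\calM$ is not divergent, the deadlock clause is precisely clause~(i) of Definition~\ref{def:ds}, and for a strong move $\calM \extar{\lambda} \Delta'$ I would use that every strong action is a weak action, so clause~(ii) yields $\Theta \extAr{\lambda} \Theta'$ with $\Delta' \lift{\deadsim} \Theta'$, as required. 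No finitariness is needed here.

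The inclusion $\deadsim^s \subseteq \deadsim$ is the substantial part. Here I must upgrade the strong-action matching guaranteed by $\deadsim^s$ to matching of the weak extensional actions demanded by Definition~\ref{def:ds}, by the same case analysis on $\lambda$ used in Theorem~\ref{thm:altchar}. The base case is a $\tau$-hyper-derivation: given $\calM \deadsim^s \Theta$ and $\calM \extAr{\tau} \Delta$, I would assemble a matching hyper-derivation $\Theta \extAr{\tau} \Theta'$ with $\Delta \lift{\deadsim^s} \Theta'$ out of the step-wise strong matchings, reusing the finitary machinery of \cite{DGHM09full} (the failure-simulation analogue of Theorem~7.20 and Lemma~6.12); finitariness is what guarantees that the infinite sums involved converge to a genuine sub-distribution. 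The input case, where $\calM \extAr{i.c?v} \Delta$ unfolds as $\calM \extAr{\tau} \extar{i.c?v} \extAr{\tau} \Delta$, then follows by sandwiching the strong input match between two applications of the $\tau$-case; and the output case is handled by an inner induction on the derivation of $\calM \extAr{\eout{c!v}{\eta}} \Delta$ in Definition~\ref{def:wea}(3), where the base clause reduces to the input-style argument, while the composite clause splits $\eta = \eta_1 \cup \eta_2$ with $\eta_1 \cap \eta_2 = \emptyset$, applies the inductive hypothesis twice, and recomposes via Definition~\ref{def:wea}(3)(b).

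A point that needs care, and which distinguishes this from the may case, is the treatment of divergent $\calM$ under the two-clause (\emph{if divergent \dots otherwise \dots}) shape of $\deadsim^s$: when $\calM \extAr{\tau} \varepsilon$, the simple relation only requires $\Theta \extAr{\tau} \varepsilon$ and imposes no constraint on the remaining weak moves of $\calM$, whereas $\deadsim$ constrains all of them. I would reconcile this by observing that $\varepsilon$ is a universal simulator, $\Delta \lift{\deadsim^s} \varepsilon$ for every $\Delta$, and that from $\Theta \extAr{\tau} \varepsilon$ one obtains $\Theta \extAr{\lambda} \varepsilon$ for every $\lambda$ (using $\varepsilon \extar{\lambda} \varepsilon$); hence any weak move of a divergent $\calM$ is matched by $\Theta \extAr{\lambda} \varepsilon$, so the weaker requirement of $\deadsim^s$ on divergent states already entails the uniform requirement of $\deadsim$. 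The main obstacle is the $\tau$-hyper-derivation case: carrying out the construction of \cite{DGHM09full} for \emph{sub-distributions} while respecting divergence-sensitivity (so that mass is neither created nor spuriously preserved) and the deadlock predicate $\delta$, rather than for full distributions as in the may analysis, is where essentially all of the technical effort lies.
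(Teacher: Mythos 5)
Your proposal is correct and takes essentially the same approach as the paper: the paper's own proof of Theorem~\ref{thm:ds.altchar} is a one-line deferral to the proof of Theorem~\ref{thm:altchar}, which is exactly the two-inclusion argument you reconstruct — the easy direction holding for arbitrary networks, and the converse proved by case analysis on $\lambda$ with finitariness invoked only for the $\tau$-hyper-derivation case via the machinery of \cite{DGHM09full}. Your explicit reconciliation of the divergence clause using $\varepsilon$ as a universal simulator spells out a subtlety the paper leaves implicit (it is stated in the remark following Definition~\ref{def:ds}), but it is a refinement of the same argument rather than a different route.
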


\begin{proof}
The proof is identical to that of Theorem \ref{thm:altchar}.
\end{proof}

\begin{thm}[Single Node Compositionality for Deadlock Simulations]
\label{thm:ds.single.comp}
Let $\calM$ be a network and $\Theta$ be a stable sub-distribution 
such that $\calM \deadsim^s \Theta$. Then, for any network 
$\calG$ such that $\calM \testP \calG$ and 
$\Theta \testP \pdist{\calG}$ are well-defined it follows 
that $(\calM \testP \calG) \deadsim (\Theta \testP \pdist{\calG})$. 
\end{thm}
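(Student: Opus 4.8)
The plan is to follow the proof of Theorem~\ref{thm:single.comp} and exhibit the relation
\[
{\mathcal S} \;=\; \setof{(\calM' \testP \calG',\ \Theta' \testP \pdist{\calG'})}{\calM' \deadsim^s \Theta',\ \calG' \in \mathbb{G}}
\]
where $\calG' = \Gamma_G \with \Cloc{s}{n}$ ranges over generating networks (sharing the connectivity and node $n$ of $\calG$ but varying in the running state, to absorb residuals). I would show that ${\mathcal S}$ satisfies the defining clauses of a simple deadlock simulation, so that $(\calM \testP \calG) \deadsim^s (\Theta \testP \pdist{\calG})$, and then invoke the alternative characterisation Theorem~\ref{thm:ds.altchar} to upgrade this to $(\calM \testP \calG) \deadsim (\Theta \testP \pdist{\calG})$ (the finitariness needed here is inherited from $\calM,\Theta,\calG$). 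The interface clause is immediate: $\calM' \deadsim^s \Theta'$ gives $\inp{\calM'} = \inp{\calN'}$, $\outp{\calM'} = \outp{\calN'}$ for $\calN' \in \support{\Theta'}$, and the interface-preservation of $\testP$ (Proposition~\ref{prop:compmay}) propagates this to the composites.

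The strong-action clause is essentially identical to Theorem~\ref{thm:single.comp}, the only difference being that here $\Theta'$ is a sub-distribution and matching proceeds through $\lift{\deadsim^s}$. Given $(\calM' \testP \calG') \extar{\lambda} \Lambda$, I apply the strong decomposition Proposition~\ref{prop:decomp} to write $\Lambda = \Delta_M \testP (\Gamma_G \with \Cloc{\Theta_n}{n})$ and to expose the corresponding component action of $\calM'$ (or input fed to $\calM'$) together with the process move of $s$. The hypothesis $\calM' \deadsim^s \Theta'$ supplies a matching weak action $\Theta' \extAr{\lambda'} \Theta_N$ with $\Delta_M \lift{\deadsim^s} \Theta_N$, and the weak/strong composition Proposition~\ref{prop:wea.comp} reassembles these into a weak action $\Theta' \testP \pdist{\calG'} \extAr{\lambda} \Theta_N \testP (\Gamma_G \with \Cloc{\Theta_n}{n})$. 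Since the residual $(\Gamma_G \with \Cloc{\Theta_n}{n})$ is shared, $\Lambda \lift{{\mathcal S}} (\Theta_N \testP (\Gamma_G \with \Cloc{\Theta_n}{n}))$ follows from $\Delta_M \lift{\deadsim^s} \Theta_N$ and the lifting axioms. The bookkeeping across the six/five/five cases of Proposition~\ref{prop:decomp} is routine.

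For the deadlock clause, suppose $\delta(\calM' \testP \calG') = \ttrue$. Reading Proposition~\ref{prop:decomp} backwards, i.e.\ negating every case that would produce a $\tau$- or output-action, forces $\calM'$ to admit neither a silent move nor any output, and forces the state $s$ at node $n$ to admit neither $\tau$ nor broadcast; moreover $\omega(\calM' \testP \calG') = \ffalse$ forbids $\omega$ at $\calM'$ and at $s$. Hence $\delta(\calM') = \ttrue$ and $s$ is reduced to inputs only. Clause~(i) of $\deadsim^s$ then yields $\Theta' \extAr{} \Theta''$ with every network in $\support{\Theta''}$ deadlocked, and Proposition~\ref{prop:wea.comp}(1)(i) lifts this to $\Theta' \testP \pdist{\calG'} \extAr{} \Theta'' \testP \pdist{\calG'}$; each $\calN' \testP \calG'$ in its support is deadlocked because $\calN'$ is deadlocked and $s$ is stuck.

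The main obstacle is the divergence clause: from $(\calM' \testP \calG') \extAr{} \varepsilon$ I must derive $(\Theta' \testP \pdist{\calG'}) \extAr{} \varepsilon$. Unlike the single-step clauses this concerns an infinite internal computation, so per-step decomposition does not suffice. The plan is to analyse the $\tau$-reductions sustaining the divergence through Proposition~\ref{prop:decomp} (with $\lambda = \tau$) and separate the contributions of $\calM'$ from those of the single node $n$: a divergence projecting onto $\calM'$ is matched via the divergence-sensitivity of $\deadsim^s$, which gives $\Theta' \extAr{} \varepsilon$, whereupon composition by Proposition~\ref{prop:wea.comp}(1)(i) together with $\varepsilon \testP \pdist{\calG'} = \varepsilon$ finishes the case; a divergence carried by the node $n$ alone is matched by letting $s$ diverge on the $\Theta'$-side through Proposition~\ref{prop:wea.comp}(1)(ii). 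The delicate subcase is a genuine interaction-divergence sustained by infinitely many synchronisations between $\calM'$ and $s$; there one matches each $\calM'$-move through $\deadsim^s$ and assembles the matched moves into a hyper-derivation of $\Theta' \testP \pdist{\calG'}$ to $\varepsilon$ by a limiting argument, using the finite branching of the networks. Once all clauses are verified, Theorem~\ref{thm:ds.altchar} delivers the stated conclusion.
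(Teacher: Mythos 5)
Your proposal follows essentially the same route as the paper: it replays the relation construction of Theorem~\ref{thm:single.comp} via Propositions~\ref{prop:decomp} and~\ref{prop:wea.comp}, adds exactly the extra checks the paper itself lists (decomposition and composition of deadlock across $\testP$, and preservation of divergence to $\varepsilon$), and concludes via Theorem~\ref{thm:ds.altchar}. Your case analysis for the divergence and deadlock clauses is in fact more explicit than the paper's one-line sketches, but it is the same argument.
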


\begin{proof}
The proof is analogous to that of Theorem \ref{thm:single.comp}.
There are, however, some extra statements that we need to 
check. 

\begin{itemize}
\item If $\calM \testP \calG \extAr{\tau} \varepsilon$ 
then $\calN \testP \calG \extAr{\tau} \varepsilon$. 
This can be proved by rewriting  
by inspecting the infinite 
sequence of $\tau$-moves which constitute the 
hyper-derivation $\calM \testP \calG \extAr{\tau} \varepsilon$ 
to build an hyper-derivation of the form $\calM \testP \calG \extAr{\tau} 
\varepsilon$. This step requires employing the composition and 
decomposition results developed in Section \ref{sec:comp.decomp}.

\item For any $\calM \in \nets, \calG \in \mathbb{G}$ such that 
$\calM \testP \calG$ is defined,  $\delta(\calM \testP \calG) = \ttrue$ 
implies $\delta(\calM) = \ttrue$ and $\delta(\calG) = \ttrue$,
\item For any stable sub-distribution $\Delta$ and network $\calG \in \mathbb{G}$, 
suppose that $\Delta \extAr{\tau} \Delta'$, where $\Delta'$ is such that 
$\delta(\calM) = \ttrue$ for any $\calM \in \support \Delta'$; further, 
suppose that  
$\calG \extAr{\tau} \Theta$ for some $\Theta$ such that 
$\delta(\calN) = \ttrue$ for any $\calN \in \support{\Theta}$. 
Then $\Delta \testP \pdist{\calG} \extAr{\tau} \Lambda$, and 
$\delta(\calL) = \ttrue$ for any $\calN \in \support{\Lambda}$.
\end{itemize}
The validity of the two statements above can be easily proved 
using the results developed in Section \ref{sec:comp.decomp}.
\end{proof}

\begin{cor}
\label{cor:ds.comp}[Theorem~\ref{thm:ds.comp}]
Let $\calM$ be a network and $\Theta$ be a stable sub-distribution 
such that $\calM \deadsim \Theta$. Then, for any network $\calN$ such 
that both $\calM \testP \calN$ and $\Theta \testP \calN$ are defined 
it follows that $(\calM \testP \calN) \deadsim (\Theta \testP \pdist{\calN})$.
\qed
\end{cor}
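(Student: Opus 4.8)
The plan is to follow verbatim the strategy used for the may-case in Corollary~\ref{cor:composition}: reduce the general compositionality statement to the single-node version already established in Theorem~\ref{thm:ds.single.comp} by an induction on the number of occupied nodes $|\nodes{\calN}|$, using the decomposition of networks into generating networks provided by Proposition~\ref{prop:network.decomp} together with the associativity of $\testP$ (Proposition~\ref{prop:testp.assoc}). The first thing I would record is that $\deadsim$ is invariant under structural congruence on both sides: since the relation is defined purely in terms of the interfaces $\inp{\cdot}$, $\outp{\cdot}$, the deadlock predicate $\delta$, and the weak extensional actions $\extAr{\lambda}$ --- all of which are preserved by $\equiv$ (Corollary~\ref{cor:reduction}, Proposition~\ref{prop:static.topology}) --- it follows that $\calM \equiv \calM'$, $\calM' \deadsim \Theta'$ and $\Theta' \lift{\equiv^e} \Theta$ together imply $\calM \deadsim \Theta$. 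This licenses replacing networks and distributions by congruent ones freely throughout the induction.

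For the base case $\nodes{\calN} = \emptyset$, well-formedness forces $\calN$ to be the empty network (an isolated external node would violate clause (ii) of Definition~\ref{def:well.formed}, and any edge would violate clause (i)), so $\calM \testP \calN \equiv \calM$ and $\Theta \testP \pdist{\calN} \equiv \Theta$, and the conclusion is just the hypothesis $\calM \deadsim \Theta$ read through congruence-invariance. For the inductive step $|\nodes{\calN}| > 0$, Proposition~\ref{prop:network.decomp} gives $\calN \equiv \calN' \testP \calG$ with $\calG \in \mathbb{G}$ and $|\nodes{\calN'}| < |\nodes{\calN}|$. Associativity then yields $\calM \testP \calN \equiv (\calM \testP \calN') \testP \calG$, and the induction hypothesis gives $(\calM \testP \calN') \deadsim (\Theta \testP \pdist{\calN'})$. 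Passing to the simple characterisation $\deadsim^s$ via Theorem~\ref{thm:ds.altchar}, applying the single-node result Theorem~\ref{thm:ds.single.comp} with the generating network $\calG$, and then using the lifted associativity of $\testP$ together with the point-distribution identity $\pdist{\calN'} \testP \pdist{\calG} = \pdist{\calN' \testP \calG} \equiv \pdist{\calN}$, I would obtain $((\calM \testP \calN') \testP \calG) \deadsim (\Theta \testP \pdist{\calN})$; one final appeal to congruence-invariance closes the case.

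The steps I expect to require the most care are the bookkeeping around Theorem~\ref{thm:ds.altchar}, which is stated only for finitary networks and finitary sub-distributions: I would either carry finitariness as a standing hypothesis (as in the may-case) or verify that $\testP$-composition with a generating network preserves it. The second delicate point is ensuring that the right-hand sub-distribution remains \emph{stable} at every stage, so that Theorem~\ref{thm:ds.single.comp} applies --- here $\Theta \testP \pdist{\calN'}$ is stable because $\Theta$ is and because $\testP$ preserves node-stability of distributions (the remark following Proposition~\ref{prop:static.topology}). The genuinely hard work has already been discharged inside Theorem~\ref{thm:ds.single.comp}, whose proof leans on the composition and decomposition results of Section~\ref{sec:comp.decomp}; the present corollary is essentially a packaging argument, and the only real subtlety is lining up the congruences, the associativity, and the identity $\pdist{\calN'}\testP\pdist{\calG}\equiv\pdist{\calN}$ so that the single-node theorem can be invoked with $\Theta\testP\pdist{\calN'}$ playing the role of its stable sub-distribution.
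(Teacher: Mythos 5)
Your proposal is correct and takes essentially the same route as the paper, whose proof of Corollary~\ref{cor:ds.comp} is precisely an induction on the number of nodes in $\nodes{\calN}$ using Theorem~\ref{thm:ds.altchar}, the decomposition into generating networks (Proposition~\ref{prop:network.decomp}), associativity of $\testP$, and the single-node result Theorem~\ref{thm:ds.single.comp}, carried out in direct analogy with Corollary~\ref{cor:composition}. Your extra attention to finitariness (needed for Theorem~\ref{thm:ds.altchar}) and to stability of $\Theta \testP \pdist{\calN'}$ fills in details the paper leaves implicit.
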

\begin{proof}
By induction on the number of nodes contained in 
$\nodes{\calN}$, using Theorem 
\ref{thm:ds.altchar}. The proof is analogous to that of Corollary \ref{cor:composition}.
\end{proof}

\subsection{Outcome preservation}
\label{sec:technical.result.pres}

The aim of this section is to prove Theorem~\ref{thm:results} and 
Theorem~\ref{thm:ds.outcomes}, namely 
that simulations preserve, in the sense of Definition~\ref{def:relsets}, 
the outcomes  generated by networks.
To this end, we first need two technical results.

\begin{lem}
\label{lem:sim.closure}
If $\calM \forsim \calN$, then $\closure{\calM} \forsim \closure{\calN}$.
\end{lem}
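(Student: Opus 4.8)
The plan is to package the claim as a single relation and to reduce its verification to \emph{strong} actions by invoking the alternative characterisation of Theorem~\ref{thm:altchar}. Concretely, I would show that
\[
\calR = \{(\closure{\calM}, \closure{\calN}) \mid \calM \forsim \calN\}
\]
is a simple simulation in the sense of Definition~\ref{def:sims}; since $\sforsim$ is the largest such relation, this gives $\calR \subseteq \sforsim$, and Theorem~\ref{thm:altchar} then upgrades $\closure{\calM} \sforsim \closure{\calN}$ to $\closure{\calM} \forsim \closure{\calN}$. Two observations make $\calR$ tractable. First, by Definition~\ref{def:net.closure} the operator $\closure{\cdot}$ deletes exactly the interface, so $\type{\closure{\calM}} = \emptyset$ and hence $\inp{\closure{\calM}} = \outp{\closure{\calM}} = \emptyset$; consequently the \emph{only} strong extensional action a closed network can perform is $\tau$ (any broadcast reaches internal nodes only, so it is internal, and there are neither input nor output actions). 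Second, $\closure{\cdot}$ leaves the system term untouched, so $\omega(\closure{\calM}) = \omega(\calM)$.

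For the interface clause of Definition~\ref{def:sims} there is nothing to prove, both interfaces being empty. For the success clause, if $\omega(\closure{\calM}) = \ttrue$ then $\omega(\calM) = \ttrue$, so $\calM \forsim \calN$ provides $\calN \extAr{\tau} \Theta'$ with every network in $\support{\Theta'}$ successful; I would apply $\closure{\cdot}$ pointwise to this hyper-derivation, invoking the forward direction of Proposition~\ref{prop:ext.closure} at each $\extar{\tau}$ step, to obtain $\closure{\calN} \extAr{\tau} \closure{\Theta'}$, and $\omega$-preservation makes every network of $\support{\closure{\Theta'}}$ successful.

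The heart of the argument is the action clause, which by the first observation concerns only a strong step $\closure{\calM} \extar{\tau} \Delta$. Here I would apply the converse direction of Proposition~\ref{prop:ext.closure}: this step arises either from $\calM \extar{\tau} \Delta_0$ or from some $\calM \extar{\eout{c!v}{\eta}} \Delta_0$, with $\Delta = \closure{\Delta_0}$ in both cases. Since a strong action is in particular a weak one, $\calM \forsim \calN$ supplies a matching weak move $\calN \extAr{\tau} \Theta_0$ (respectively $\calN \extAr{\eout{c!v}{\eta}} \Theta_0$) with $\Delta_0 \lift{(\forsim)^e} \Theta_0$. The crucial point is that \emph{both} kinds of move collapse to a weak $\tau$ move of the closure: a weak output is, by Definition~\ref{def:wea}(3), a finite alternation of $\extar{\tau}$ and $\extar{\eout{c!v}{\eta_i}}$ steps, and the forward direction of Proposition~\ref{prop:ext.closure} turns every such step into an $\extar{\tau}$ step of the closed network, so $\closure{\calN} \extAr{\tau} \closure{\Theta_0}$. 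Finally, because $\closure{\cdot}$ distributes over the convex combinations defining lifted relations, $\Delta_0 \lift{(\forsim)^e} \Theta_0$ yields $\closure{\Delta_0} \lift{\calR^e} \closure{\Theta_0}$, that is $\Delta \lift{\calR^e} \Theta$ with $\Theta = \closure{\Theta_0}$, as required.

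The main obstacle is bookkeeping rather than conceptual: one must check that $\closure{\cdot}$ commutes with the two structural operations used throughout — lifting of relations to sub-distributions and the formation of hyper-derivations — so that Proposition~\ref{prop:ext.closure}, which speaks only of single strong steps, can be transported across entire weak derivations. Both commutations follow from the fact that $\closure{\cdot}$ acts linearly on the node-stable distributions involved and preserves $\omega$; the cleanest presentation isolates and states them once, before the case analysis. The only further point to keep in mind is that the reduction to strong actions, and hence the final passage $\sforsim = \forsim$, rests on the finitary hypothesis of Theorem~\ref{thm:altchar}, which is available for all the networks to which this lemma is ultimately applied.
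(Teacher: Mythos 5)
Your proposal is correct in its core reasoning, but it takes a genuinely different route from the paper's proof, and the difference is worth spelling out. The paper verifies directly that $\mathcal{S} = \{(\closure{\calM}, \closure{\calN}) \mid \calM \forsim \calN\}$ meets Definition~\ref{def:sim}: it makes the same two observations you do (closed networks perform only $\tau$-extensional actions; $\closure{\cdot}$ leaves the system term, hence $\omega$, untouched), and then checks \emph{only} the success clause, transferring $\calN \extAr{\tau} \Theta'$ to $\closure{\calN} \extAr{\tau} \closure{\Theta'}$ exactly as you do; the matching of weak $\tau$-moves is left implicit. Your detour through simple simulations (Definition~\ref{def:sims}) is precisely a way of taking that omitted clause seriously: reducing to a single strong step $\closure{\calM} \extar{\tau} \Delta$ lets Proposition~\ref{prop:ext.closure} do all the work, and your observation that a matching weak output of $\calN$ collapses, step by step, to a weak $\tau$-move of $\closure{\calN}$ is exactly right. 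So on the action clause your argument is more rigorous than the paper's. What the detour costs you is the appeal to Theorem~\ref{thm:altchar}, which holds only for \emph{finitary} networks, whereas Lemma~\ref{lem:sim.closure} carries no such hypothesis; moreover your closing justification --- that finitariness ``is available for all the networks to which this lemma is ultimately applied'' --- is not automatic, since the lemma feeds (via Lemma~\ref{lem:results} and Corollary~\ref{cor:results}) into outcome preservation for compositions $\calN_i \testP \calT$ where $\calT$ ranges over \emph{arbitrary} testing networks, which need not be finitary. You would also need to record the small supporting fact that $\closure{\cdot}$ preserves finitariness, since Theorem~\ref{thm:altchar} is applied to the closures rather than to $\calM, \calN$ themselves. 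In short: the paper's direct verification of Definition~\ref{def:sim} avoids any finitariness assumption but leaves the weak $\tau$-matching unargued (making it complete would require essentially the limit machinery behind Theorem~\ref{thm:altchar}); your proof discharges that obligation cleanly but proves, strictly speaking, a statement restricted to finitary networks. The two proofs thus pay the same technical debt in different places.
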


\begin{proof}
It suffices to show that the relation 
\[
{\mathcal S} = \{ (\closure{\calM}, \closure{\calN}) \;|\; \calM \forsim \calN\}
\]
satisfies the requirements of Definition \ref{def:sim}.
Let $\calM, \calN$ such that $\calM \forsim \calN$, 
Since the only possibility for networks of the form 
$\closure{\calM}, \closure{\calN}$ 
is that of performing $\tau$-extensional actions, we only have 
to check that if $\omega(\closure{\calM}) = \ttrue$ then 
$\closure{\calN} \extAr{\tau} \Theta$ for some $\Theta$ 
such that $\omega(\calL) = \ttrue$ for any $\calL \in \support{\Theta}$. 

Suppose then that $\omega(\closure{\calM}) = \ttrue$. It follows 
immediately that $\omega(\calM) = \ttrue$, and since 
by hypothesis $\calM \forsim \calN$, we have that $\calN 
\extAr{\tau} \Theta'$ for some $\Theta'$ such that 
$\omega(\calL') = \ttrue$ for any $\calL' \in \support{\Theta'}$. 
Therefore we have that $\closure{\calN} \extAr{\tau} \closure{\Theta'}$. 
It follows that $\support{\closure{\Theta'}} = \{\closure{\calL'} \;|\; 
\calL' \in \support{\Theta'}\}$, hence any network $\calL \in 
\support{\closure{\Theta'}}$ satisfies the predicate $\omega$.
\end{proof} 

\begin{lem}
 \label{lem:results}
  Let $\Delta, \Theta$ be stable distributions in  $\pLTSnets$ such that $\Delta \lift{(\forsim)^e} \Theta$; 
  then $\Theta \RedE{\;\;\;\;} \Theta'$ such that $\Val{\Delta} \leq \Val{\Theta'}$. 
\end{lem}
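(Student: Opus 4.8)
The plan is to reduce the statement to a pointwise analysis of the simulation relation and then recombine, exploiting the fact that the only mass of $\Delta$ contributing to $\Val{\Delta}$ sits on $\omega$-successful networks, which $\Theta$ is forced to match.

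First I would unfold the lifting. By Definition~\ref{def:lift} applied to $(\forsim)^e$, the hypothesis $\Delta\lift{(\forsim)^e}\Theta$ gives an index set $I$, weights $\{p_i\}_{i\in I}$ with $\sum_{i\in I}p_i = 1$, and networks $\calM_i,\calN_i$ with $\calM_i\forsim\calN_i$, such that $\Delta = \sum_{i\in I}p_i\cdot\pdist{\calM_i}$ and $\Theta = \sum_{i\in I}p_i\cdot\pdist{\calN_i}$. Splitting $I$ into $I_{\mathit{succ}} = \{i : \omega(\calM_i)=\ttrue\}$ and its complement $I_{\mathit{fail}}$, the definition of $\ValN$ yields $\Val{\Delta} = \sum_{i\in I_{\mathit{succ}}}p_i$, since non-successful points contribute nothing to the value.

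Next I would build an extreme derivative of each $\pdist{\calN_i}$ separately. For $i\in I_{\mathit{succ}}$, the clause of Definition~\ref{def:sim} governing $\omega(\calM_i)=\ttrue$ provides a full distribution $\Theta_i$ with $\calN_i\extAr{\tau}\Theta_i$ and $\omega(\calL)=\ttrue$ for every $\calL\in\support{\Theta_i}$. Since $\extar{\tau}\subseteq\red$ (by the characterisation of $\red$ in Section~\ref{sec:relating}), this internal extensional hyper-derivation is also a testing-structure hyper-derivation $\pdist{\calN_i}\Red{}\Theta_i$; and as every network in $\support{\Theta_i}$ is $\omega$-successful, $\Theta_i$ is already terminal for $\RedE{}$, so $\pdist{\calN_i}\RedE{}\Theta_i$ with $\Val{\Theta_i}=\size{\Theta_i}=1$. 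For $i\in I_{\mathit{fail}}$ I invoke Theorem~\ref{thm:hyper}(iv) to obtain some extreme derivative $\pdist{\calN_i}\RedE{}\Theta_i$, whose value is at least $0$. Combining these via the linearity of hyper-derivations, Theorem~\ref{thm:hyper}(iii), gives $\Theta\Red{}\Theta'$ for $\Theta' = \sum_{i\in I}p_i\cdot\Theta_i$; since $\support{\Theta'}$ is a union of terminal supports, $\Theta'$ is again terminal, whence $\Theta\RedE{}\Theta'$. Finally $\Val{\Theta'} = \sum_{i\in I}p_i\cdot\Val{\Theta_i} \geq \sum_{i\in I_{\mathit{succ}}}p_i\cdot 1 = \Val{\Delta}$, which is the desired inequality.

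The delicate point is the successful case: the value bound requires that the distribution $\Theta_i$ furnished by the simulation be a \emph{full} distribution on $\omega$-successful networks, so that $\Val{\Theta_i}=1$ rather than merely $\size{\Theta_i}\le 1$. This is exactly the force of the $\omega(\calM)=\ttrue$ clause of Definition~\ref{def:sim}: a divergent $\calN$ able only to reach a partial successful distribution must not simulate a successful $\calM$, on pain of breaking soundness. The remaining obligations are routine, namely the inclusion $\extar{\tau}\subseteq\red$ and the fact that hyper-derivations assembled from per-index pieces remain extreme derivatives, both immediate from the definitions in Example~\ref{ex:ts} and from Theorem~\ref{thm:hyper}. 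Note that the matching-move clause of the simulation is \emph{not} needed here, since only the $\omega$-successful mass of $\Delta$ influences $\Val{\Delta}$; that clause is instead used in the companion result which transports a given extreme derivative of $\Delta$ across $\lift{\forsim}$ before this lemma is applied.
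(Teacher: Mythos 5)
Your proof is correct, but it takes a genuinely different route from the paper's. The paper first proves the lemma for \emph{closed} distributions (networks with empty interface), using the same case split you use — the unsuccessful mass handled by Theorem~\ref{thm:hyper}(iv), the successful mass by the $\omega$-clause of Definition~\ref{def:sim}, and recombination by Theorem~\ref{thm:hyper}(iii) — and then transfers the result to arbitrary stable distributions through the closure operator $\closure{\cdot}$, invoking Lemma~\ref{lem:sim.closure} together with Corollaries~\ref{cor:extreme.reductions} and~\ref{cor:closure.results}. You bypass that detour entirely by observing that only one direction of the correspondence between the two semantics is needed here: $\extar{\tau}\subseteq\red$ holds for \emph{all} networks, closed or not, so the hyper-derivation $\calN_i \extAr{\tau} \Theta_i$ supplied by the $\omega$-clause is already a $\Red{}$ hyper-derivation, and its terminality for $\RedE{}$ is immediate because every network in $\support{\Theta_i}$ is $\omega$-successful. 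This is a real simplification: the closure machinery is forced on the paper only by Corollary~\ref{cor:results}, which needs the \emph{converse} transfer (turning a given $\RedE{}$ derivative of $\Delta$ into an $\extAr{\tau}$ move the simulation can answer), and that direction genuinely fails for non-closed networks, where a reduction may be an extensional output rather than an extensional $\tau$; your argument shows the present lemma never needs it. Finally, note that you and the paper rest on the same reading of the $\omega$-clause, namely that the matching $\Theta'$ with $\omega$-successful support is a \emph{full} distribution so that $\Val{\Theta_i}=1$: the paper asserts this silently, whereas you flag it and justify it by soundness, which is the more careful treatment given that Definition~\ref{def:sim} states that clause with a hyper-derivation, and hyper-derivatives may in general lose mass.
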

\begin{proof}
	Note that in the statement of Lemma \ref{lem:results} 
	we use the extreme derivative associated with the reduction relation 
  $\red$ defined for the testing structures associated with networks. 
  We have seen in Section \ref{sec:relating} that such extreme derivatives 
  do not coincide with weak $\tau$-actions in the extensional semantics, 
  except in the case of closed distributions.

	Therefore, let us first prove the statement for closed distributions, 
	that is those whose networks in the support have an empty interface. 
	Let then $\Delta, \Theta$ be such that $\type{\Delta} = \type{\Theta} = \emptyset$.
	
  We have two different cases.
  \begin{enumerate}[label=(\roman*)]
  \item First suppose $\Delta$ is a point distribution $\pdist{\calM}$. If the predicate $\omega(\calM)$ 
    is equal to $\ffalse$, $\Val{\pdist{\calM}} = 0$. 
    In this case, we recall that Theorem \ref{thm:hyper} (4) ensures that 
    there exists at least one extreme derivation $\Theta \RedE \Theta'$, for which 
    $0 \leq \Val{\Theta'}$ trivially holds. 
    
    Otherwise the predicate $\omega(\calM)$ is satisfied and $\Val{\pdist{\calM}}$ has to be 1. Since 
    $\pdist{\calM} \lift{\forsim}
    \Theta$ we know $\Theta \extAr{\;\;\;} \Theta'$, such that for all $\calN \in \Theta', 
    \omega(\calN) = \ttrue$. Also, since $\Theta$ is closed, the hyper-derivation 
    above can be rewritten as $\Theta \Red{\;\;\;} \Theta'$. This means that $\Val{\Theta'} = 1$; 
    moreover, as every state in $\support{\Theta'}$ is a successful state, we also have that 
    $\Theta \RedE{} \Theta'$.

\item
Otherwise $\Theta$ can be written as 
\begin{math}
  \sum_{\calM \in \support{\Delta}} \Delta(\calM) \cdot \Theta_{\calM}
\end{math}
where $\calM (\forsim^e) \Theta_{\calM}$ for each ${\calM}$ in the support of $\Delta$. 
By part (i) each $\Theta_{\calM} \RedE{} \Theta'_{\calM}$ such that $\Val{\pdist{\calM}} 
\leq \Val{\Theta'_{\calM}}$.
As an extreme derivative is also a hyper-derivative, we can combine these to obtain a hyper 
derivation for $\Theta$, using Theorem \ref{thm:hyper} (3). This leads to 
\[
\Theta = \sum_{\calM \in \support{\Delta}} \Delta(\calM) \cdot \Theta_{\calM} \dar{} 
 \sum_{\calM \in \support{\Delta}} \Theta'_{\calM} = \Theta'
\]
As for every $\calM \in \support{\Delta}$, $\calN \in \support{\Theta'_s}$ we have that 
$\calN \red$ implies $\omega{\calN} = \ttrue$, this condition is respected 
also by all states in $\support{\Theta'}$. Thus, the hyper derivation 
$\Theta \Red{} \Theta'$ is also an extreme derivation. 
Finally, the quantity $\Val{\Delta} = \sum \{ \Delta(\calM)\;|\; \omega(\calM) = \ttrue\}$ 
can be rewritten as $\sum_{\calM \in \support{\Delta}} \Val{\pdist{\calM}}$, leading to 
\[
 \Val{\Delta} = \sum_{\calM \in \support{\Delta}} \Val{\pdist{\calM}} \leq 
 \sum_{\calM \in \support{\Delta}} \Val{\Theta'_{\calM}} = \Val{\Theta'}.
\]
 
  \end{enumerate}
  
\noindent For more general distributions $\Delta, \Theta$ it suffices to note 
that if $\Delta \lift{(\forsim)^e} \Theta$ then $\closure{\Delta} 
\lift{(\forsim^e)} \closure{\Theta}$. For such (closed) distributions 
of networks we have that $\closure{\Theta} \RedE{} \Theta''$, 
and $\Val{\closure{\Delta}} \leq \Val{\Theta''}$. Now Corollary 
\ref{cor:extreme.reductions} ensures that $\Theta'' = \closure{\Theta'}$ 
for some $\Theta'$ such that $\Theta \RedE{} \Theta'$. 
Finally, by Corollary \ref{cor:closure.results} we obtain that 
\[
\Val{\Delta} = \Val{\closure{\Delta}} \leq \Val{\closure{\Theta'}} = \Val{\Theta'}
\]
which concludes the proof.
\end{proof}

\begin{cor}[Theorem~\ref{thm:results}]\label{cor:results}
   In $\pLTSnets$, $\Delta \lift{\forsim} \Theta$ implies
  $\Results{\Delta} \sqsubseteq_{H} \Results{\Theta}$.
\end{cor}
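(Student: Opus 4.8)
The plan is to establish the Hoare inequality outcome by outcome: I must show that for every $p \in \Results{\Delta}$ there is some $q \in \Results{\Theta}$ with $p \leq q$. So fix $p \in \Results{\Delta}$; by Definition~\ref{def:resultsets} there is an extreme derivative $\Delta \RedE{} \Delta'$ with $p = \Val{\Delta'}$. The first step is to pass to closed distributions. Decomposing $\Delta \lift{\forsim} \Theta$ as $\Delta = \sum_i p_i\pdist{\calM_i}$, $\Theta = \sum_i p_i\pdist{\calN_i}$ with $\calM_i \forsim \calN_i$, Lemma~\ref{lem:sim.closure} gives $\closure{\calM_i} \forsim \closure{\calN_i}$, and since $\closure{\cdot}$ acts pointwise we get $\closure{\Delta} \lift{\forsim} \closure{\Theta}$; meanwhile Corollary~\ref{cor:closure.results} yields $\Results{\Delta} = \Results{\closure{\Delta}}$ and $\Results{\Theta} = \Results{\closure{\Theta}}$. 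The gain from closing is that, by Proposition~\ref{prop:ext.closure}, a network with empty interface has only $\tau$ extensional actions, which coincide with its reductions; hence on $\closure{\Delta}$ the relation $\RedE{}$ underlying $\closure{\Delta} \RedE{} \Delta'$ is the reduction $\closure{\Delta} \Red{} \Delta'$, which is exactly the extensional weak move $\closure{\Delta} \extAr{\tau} \Delta'$.

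Next I would transfer this internal derivative across the simulation and then quantify it. From $\closure{\Delta} \lift{\forsim} \closure{\Theta}$ and $\closure{\Delta} \extAr{\tau} \Delta'$, matching the weak action $\extAr{\tau}$ (lifted from single networks to distributions by decomposing $\closure{\Delta}$ into its point masses via the linearity of hyper\nobreakdash-derivations, Theorem~\ref{thm:hyper}, applying Definition~\ref{def:sim} to each component, and recombining) produces $\Theta''$ with $\closure{\Theta} \extAr{\tau} \Theta''$ and $\Delta' \lift{\forsim} \Theta''$; as $\closure{\Theta}$ is closed this is $\closure{\Theta} \Red{} \Theta''$. Both $\Delta'$ and $\Theta''$ are node-stable (being reachable by reductions, Corollary~\ref{cor:stable.dist}), so Lemma~\ref{lem:results} applies to $\Delta' \lift{\forsim} \Theta''$ and delivers an extreme derivative $\Theta'' \RedE{} \Theta'$ with $\Val{\Delta'} \leq \Val{\Theta'}$. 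Composing $\closure{\Theta} \Red{} \Theta''$ with $\Theta'' \RedE{} \Theta'$ through Theorem~\ref{thm:hyper}(ii) gives $\closure{\Theta} \RedE{} \Theta'$, so $q := \Val{\Theta'} \in \Results{\closure{\Theta}} = \Results{\Theta}$ and $p = \Val{\Delta'} \leq q$. Since $p$ was arbitrary, $\Results{\Delta} \sqsubseteq_{H} \Results{\Theta}$.

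The main obstacle is the transfer step, together with the reconciliation of the two semantics on which it rests. The preorder $\forsim$ is defined through the \emph{extensional} weak actions of $\pLTSnets$, whereas $\Results{\cdot}$ is computed from the \emph{reduction} relation $\red$ of the testing structure; these coincide only after the interface is removed, which is precisely why the closure operator and the results of Section~\ref{sec:relating} must be invoked before anything else. The delicate point within the transfer is the lifting of the matching of $\extAr{\tau}$ to distributions: one must treat the $\omega$\nobreakdash-successful networks in the support of $\Delta'$ separately, since for these Definition~\ref{def:sim} supplies a match through its success clause rather than its action\nobreakdash-matching clause. This can be organized either as a standalone transfer lemma or by splitting the successful and non-successful mass of $\Delta'$ and appealing to the two cases already isolated in the proof of Lemma~\ref{lem:results}; once it is in place, Lemma~\ref{lem:results} performs the quantitative work and the composition of derivatives is routine.
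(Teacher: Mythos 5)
Your proof is correct and follows essentially the same route as the paper's: reduce to closed distributions via the closure operator (Lemma~\ref{lem:sim.closure}, Corollary~\ref{cor:closure.results}), identify reductions with weak extensional $\tau$-moves on closed distributions (Corollary~\ref{cor:extreme.reductions}), transfer the move across the lifted simulation, and let Lemma~\ref{lem:results} do the quantitative work before recomposing derivatives with Theorem~\ref{thm:hyper}. The only differences are cosmetic --- you close at the outset whereas the paper proves the closed case first and reduces the general case to it at the end, and your explicit treatment of the $\omega$-successful mass in the transfer step spells out a point the paper leaves implicit in the phrase ``we can use the definition of $\forsim$''.
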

\begin{proof}
	We first prove the result for closed distributions $\Delta, \Theta$.
  Suppose $\Delta \RedE{} \Delta'$. We have to find  a derivation $\Theta \RedE{} \Theta'$ such
  that $\Val{\Delta'} \leq \Val{\Theta'}$. 
  Since we are assuming that $\Delta$ is closed, then $\Delta \RedE{} \Delta'$ implies 
  $\Delta \extArE{} \Delta'$, Corollary \ref{cor:extreme.reductions}, 
  which in turn gives $\Delta \extAr{\tau} \Delta'$.
  We can use the definition of $\forsim$ to find a 
        derivation $\Theta \extAr{\tau} \Theta''$ such that $\Delta' \lift{\forsim^e} \Theta''$. 
        Applying the previous lemma we obtain $\Theta'' \extArE{} \Theta'$ such that 
        $\Val{\Delta'} \leq \Val{\Theta'}$. By Theorem \ref{thm:hyper} we have that 
        $\Theta \extArE{} \Theta'$, and Corollary \ref{cor:extreme.reductions} 
        gives $\Theta \RedE{} \Theta'$. 

	Suppose now that $\Delta, \Theta$ are not closed distributions. 
	In this case Corollary \ref{cor:closure.results} ensures that 
	\[
	\Results{\Delta} = \Results{\closure{\Delta}} \sqsubseteq_{H} 
	\Results{\closure{\Theta}} = \Results{\Theta}
	\]
	\noindent and there is nothing left to prove. 
\end{proof}

\bigskip

We now repeat the above argument to show that outcomes are also preserved by
deadlock simulations; the details are quite similar. 
\begin{lem}
\label{lem:ds.closure}
Whenever $\calM \deadsim \Theta$ it follows that 
$\closure{\calM} \deadsim \closure{\Theta}$.
\end{lem}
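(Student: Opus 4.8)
The plan is to proceed exactly as in the proof of Lemma~\ref{lem:sim.closure}, exhibiting the relation
\[
{\mathcal S} = \{\, (\closure{\calM}, \closure{\Theta}) \;\mid\; \calM \deadsim \Theta \,\}
\]
and showing that it satisfies the two clauses of Definition~\ref{def:ds}; since $\deadsim$ is the largest such relation, $\mathcal S \subseteq \deadsim$, which is the claim. Two preliminary observations do most of the work. First, $\closure{\calM}$ has empty interface, so $\inp{\closure{\calM}} = \outp{\closure{\calM}} = \emptyset$ and the only extensional actions available to a closed network are $\tau$-actions; thus in clause~(ii) we need only treat $\lambda = \tau$. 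Second, the deadlock predicate is invariant under closure, i.e.\ $\delta(\closure{\calM}) = \ttrue$ iff $\delta(\calM) = \ttrue$: closure leaves the system term (hence $\omega$) unchanged, a closed network can perform no output action at all, and by Proposition~\ref{prop:ext.closure} the closure performs a $\tau$-action exactly when $\calM$ performs a $\tau$- or an output-action, so the three requirements defining $\delta(\closure{\calM})$ collapse precisely to those defining $\delta(\calM)$.

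For clause~(i), suppose $\delta(\closure{\calM}) = \ttrue$. By the second observation $\delta(\calM) = \ttrue$, so $\calM \deadsim \Theta$ gives a derivation $\Theta \extAr{\tau} \Theta'$ with $\delta(\calN) = \ttrue$ for every $\calN \in \support{\Theta'}$. Applying the forward direction of Proposition~\ref{prop:ext.closure} along each step of this hyper-derivation (and using that closure commutes with the countable sums and the $\omega$-constraints occurring in Definition~\ref{def:hypder}) yields $\closure{\Theta} \extAr{\tau} \closure{\Theta'}$; the second observation again gives $\delta(\closure{\calN}) = \ttrue$ for every $\closure{\calN} \in \support{\closure{\Theta'}}$, as required. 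This step is routine and mirrors the corresponding argument of Lemma~\ref{lem:sim.closure}.

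The substantive work is clause~(ii). Here I would take a move $\closure{\calM} \extAr{\tau} \Delta$ and must produce $\closure{\Theta} \extAr{\tau} \Xi$ with $\Delta \lift{\mathcal S} \Xi$. Since in a closed network the reduction relation $\red$ coincides with $\extar{\tau}$, the move $\closure{\calM} \extAr{\tau} \Delta$ is the same as $\closure{\calM} \Red{} \Delta$, and Corollary~\ref{cor:extreme.reductions} identifies this with a $\red$-hyper-derivation $\calM \Red{} \Delta'$ of the original (open) network, where $\Delta = \closure{\Delta'}$. The obstacle is that this $\red$-hyper-derivation interleaves genuine $\tau$-moves with broadcast moves that become invisible only after closure, so it cannot be matched by a single application of the simulation clause. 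The plan is therefore to establish the auxiliary fact that $\deadsim$ is preserved by $\red$-hyper-derivations: decompose $\calM \Red{} \Delta'$ into its constituent strong steps, each of which is either $\extar{\tau}$ or $\extar{\eout{c!v}{\eta}}$ and hence an extensional action tracked by clause~(ii) of Definition~\ref{def:ds}, match each step against $\Theta$ up to $\lift{\deadsim}$, and reassemble the matching moves into a derivation $\Theta \Red{} \Theta''$ with $\Delta' \lift{\deadsim} \Theta''$ using the linearity and transitivity properties of hyper-derivations from Theorem~\ref{thm:hyper}.

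Transporting this back through closure via Corollary~\ref{cor:extreme.reductions} gives $\closure{\Theta} \extAr{\tau} \closure{\Theta''}$, and taking $\Xi = \closure{\Theta''}$ finishes the clause, since $\Delta' \lift{\deadsim} \Theta''$ transports pointwise through closure to $\closure{\Delta'} \lift{\mathcal S} \closure{\Theta''}$ by the very definition of $\mathcal S$ (write $\Delta' = \sum_i p_i \pdist{\calM_i}$, $\Theta'' = \sum_i p_i \Theta_i$ with $\calM_i \deadsim \Theta_i$; then $\closure{\calM_i}\,\mathcal S\,\closure{\Theta_i}$). The divergence case $\Delta = \varepsilon$ is subsumed by this argument, as $\deadsim$ is divergence sensitive and closure preserves reduction-divergence; this is exactly where the sub-distribution formulation of $\deadsim$ is needed.
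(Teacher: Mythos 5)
Your proposal is correct and follows the same route as the paper: the paper's entire proof consists of noting that $\delta(\calM) = \ttrue$ iff $\delta(\closure{\calM}) = \ttrue$ and then proceeding ``as in Lemma~\ref{lem:sim.closure}'', i.e.\ showing that the closure relation is itself a deadlock simulation, with moves transferred through $\closure{\cdot}$ by Proposition~\ref{prop:ext.closure} and Corollary~\ref{cor:extreme.reductions}; these are precisely your two preliminary observations and your clause~(i) argument. You in fact supply more than the paper does, since the proof of Lemma~\ref{lem:sim.closure} that it defers to checks only the $\omega$-clause and never addresses the matching of $\tau$-moves of $\closure{\calM}$, which you rightly identify as the substantive point (a weak $\tau$-move of $\closure{\calM}$ is the closure of a $\red$-hyper-derivation of $\calM$ interleaving $\tau$- and output-steps, so it is not a single weak extensional action of $\calM$). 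One caveat: the final reassembly of your stepwise matchings into a single derivation $\Theta \Red{} \Theta''$ does not follow from the reflexivity, transitivity and linearity of Theorem~\ref{thm:hyper} alone --- those yield only finite compositions, and the passage to the limit is exactly the property that the paper, in the proof of Theorem~\ref{thm:altchar}, attributes to \emph{finitary} pLTSs by appeal to Lemma 6.12 of \cite{DGHM09full}; without finitariness the limit derivation need not exist (the stepwise matching moves may all be trivial, in which case no single weak move has the required residual). So your auxiliary fact should carry the finitary hypothesis, which is harmless here since the lemma is only applied to finitary networks in Theorem~\ref{thm:must.sound}, and since the paper's own two-line proof silently relies on the same reassembly, this is a refinement of its argument rather than a deviation from it.
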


\begin{proof}
If suffices to note that, for any network 
$\calM$, $\delta(\calM) = \ttrue$ 
if and only if $\delta(\closure{\calM}) = \ttrue$. 
Using this fact the result can be proved as in Lemma 
\ref{lem:sim.closure}.
\end{proof}

The main use of Lemma \ref{lem:ds.closure} is that of 
showing that whenever $\Delta \deadsim \Theta$ then 
the sets of outcomes of $\Delta$ and $\Theta$ are 
related in some appropriate manner. 

\begin{lem}  
\label{lem:ds.outcomes}
Suppose that $\Delta \lift{\deadsim} \Theta$ 
for some terminal distribution $\Delta$. 
Then $\Theta \RedE{} \Theta'$ for some 
$\Theta'$ such that $\Val{\Theta'} \leq 
\Val{\Delta}$.
\end{lem}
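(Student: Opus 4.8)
The plan is to mirror the proof of Lemma~\ref{lem:results}, the analogous statement for ordinary simulations, adapting it to deadlock simulations and to the reversed inequality $\Val{\Theta'} \leq \Val{\Delta}$. As in that proof, the first move is to reduce to \emph{closed} distributions, i.e.\ those whose networks have empty interface, for which the internal extensional relation $\extAr{\tau}$ and the testing-structure reduction $\Red{}$ coincide, and likewise their extreme versions. Concretely, given $\Delta \lift{\deadsim} \Theta$ with $\Delta$ terminal, I would pass to $\closure{\Delta} \lift{\deadsim} \closure{\Theta}$ using Lemma~\ref{lem:ds.closure}. Since the closure operator changes only connectivity and not the running code, it preserves both $\omega$ and $\delta$; hence $\closure{\Delta}$ is again terminal and $\Val{\closure{\Delta}} = \Val{\Delta}$. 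After proving the closed case to obtain $\closure{\Theta} \RedE{} \Theta''$ with $\Val{\Theta''} \leq \Val{\closure{\Delta}}$, Corollary~\ref{cor:extreme.reductions} yields $\Theta'' = \closure{\Theta'}$ for some $\Theta'$ with $\Theta \RedE{} \Theta'$, and Corollary~\ref{cor:closure.results} (together with the $\omega$-invariance of closure) gives $\Val{\Theta'} = \Val{\Theta''} \leq \Val{\Delta}$, as required.

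For the closed case I would first treat a point distribution $\Delta = \pdist{\calM}$ with $\calM$ terminal. If $\calM$ is deadlocked then $\Val{\pdist{\calM}} = 0$, and clause~(i) of Definition~\ref{def:ds} provides $\Theta \extAr{\tau} \Theta'$ with every network in $\support{\Theta'}$ deadlocked; since a deadlocked network admits no reduction, this hyper-derivation is in fact extreme and $\Val{\Theta'} = 0$. If $\calM$ is successful then $\Val{\pdist{\calM}} = 1$, and any extreme derivative $\Theta \RedE{} \Theta'$, which exists by Theorem~\ref{thm:hyper}(iv), trivially satisfies $\Val{\Theta'} \leq 1$. It is exactly here that the reversed inequality makes the successful case the easy one, in contrast to Lemma~\ref{lem:results}; the genuine content now lies in the deadlocked case.

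I would then handle a general closed $\Delta$ by decomposing it as $\sum_{\calM \in \support{\Delta}} \Delta(\calM) \cdot \pdist{\calM}$, so that the defining property of lifting supplies a matching decomposition $\Theta = \sum_{\calM \in \support{\Delta}} \Delta(\calM) \cdot \Theta_{\calM}$ with $\calM \deadsim \Theta_{\calM}$. Each $\calM$ is terminal, so the point-distribution case gives $\Theta_{\calM} \RedE{} \Theta'_{\calM}$ with $\Val{\Theta'_{\calM}} \leq \Val{\pdist{\calM}}$. Combining these via Theorem~\ref{thm:hyper}(iii) produces a single hyper-derivation $\Theta \Red{} \Theta'$ where $\Theta' = \sum_{\calM} \Delta(\calM) \cdot \Theta'_{\calM}$; because every network in $\support{\Theta'}$ lies in some $\support{\Theta'_{\calM}}$ and is therefore either successful or unable to reduce, $\Theta'$ meets the extremality condition, so in fact $\Theta \RedE{} \Theta'$. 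Summing the pointwise bounds then yields $\Val{\Theta'} = \sum_{\calM} \Delta(\calM)\,\Val{\Theta'_{\calM}} \leq \sum_{\calM} \Delta(\calM)\,\Val{\pdist{\calM}} = \Val{\Delta}$.

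The step I expect to require the most care is the bookkeeping around the closure operator: one must check that closing disturbs neither terminality nor success values, and, crucially, that the extensional hyper-derivation supplied by clause~(i) of the deadlock simulation genuinely corresponds to an \emph{extreme} reduction once we have passed to closed networks, since for a non-closed network an $\extAr{\tau}$-derivative need not be $\RedE{}$-extreme (its support could still fire output reductions). Verifying the extremality of the combined derivative $\Theta'$ in the general case, and confirming that it survives the translation back through Corollary~\ref{cor:extreme.reductions}, is the technical heart of the argument.
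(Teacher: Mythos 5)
Your proposal is correct and follows essentially the same route as the paper's proof: reduce to closed distributions via Lemma~\ref{lem:ds.closure} together with Corollaries~\ref{cor:extreme.reductions} and~\ref{cor:closure.results}, split the closed point-distribution case into the deadlocked subcase (value $0$, handled by clause~(i) of Definition~\ref{def:ds}, with extremality following because deadlocked networks cannot reduce) and the successful subcase (value $1$, trivial via Theorem~\ref{thm:hyper}(iv)), and then treat general closed sub-distributions by the lifting decomposition combined through Theorem~\ref{thm:hyper}(iii), exactly as in Lemma~\ref{lem:results}. The only difference is presentational — the paper states the closed case first and appends the closure bookkeeping, while you frame the closure reduction up front — and your extra care about extremality and preservation of $\omega$, $\delta$ under closure matches what the paper relies on implicitly.
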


\proof
First suppose that $\Delta$ is a closed 
distribution, that is for any network $\calM 
\in \support{\Delta}$ we have that 
$\type{\Delta} = \emptyset$. 

\begin{enumerate}
\item if $\Delta = \pdist{\calM}$ we have two 
possible cases:
\begin{enumerate}[label=(\roman*)]
\item $\delta(\calM) = \ffalse$; since 
we are assuming that $\pdist{\calM}$ is a 
terminal distribution then it has to be 
$\omega{\calM} = \ttrue$, which implies 
$\Val{\pdist{\calM}} = 1$. In this case 
we are ensured that $\Theta \RedE{} \Theta'$ 
for some $\Theta'$, and $\Val{\Theta'} \leq 1$, 
trivially holds.
\item $\delta(\calM) = \ttrue$; in this 
case $\omega(\calM) = \ffalse$, 
hence $\Val{\pdist{\calM}} = 0$. Therefore we have 
to show that $\Theta \RedE{} \Theta'$ for some $\Theta'$ 
such that $\Val{\Theta'} = 0$. Since $\calM \deadsim \Theta$, 
we have that $\Theta \extAr{} \Theta'$ for some $\Theta'$ 
such that $\delta(\calN) = \ttrue$ for any $\calN \in \support{\Theta'}$, 
which is equivalent to $\Val{\Theta'} = 0$. 
Further, $\Theta \extAr{} \Theta'$ implies $\Theta \Red{} \Theta'$. 
Since $\calN \extar{\tau}\hspace{-10pt}\not\;\hspace{10pt}$, 
$\calN \extar{\eout{c!v}{\eta}}\hspace{-17pt}\not\;\hspace{17pt}$ 
for any $\calN \in \support{\Theta'}$, which is equivalent to 
$\calN \not\red$ for any $\calN \in \support{\Theta'}$, we 
also have $\Theta \RedE{} \Theta'$.
\end{enumerate}

\item otherwise, $\Delta = \sum_{i \in I} p_i \cdot \pdist{\calM_i}$, 
$\Theta = \sum_{i \in I} p_i \cdot \Theta_i$ with 
$\sum_{i \in I} p_i \leq 1$ and $\calM_i \deadsim \Theta_i$ 
for any $i \in I$. This part of the Lemma can be 
proved as in Lemma \ref{lem:results}.
\end{enumerate}
Now let $\Delta$ be a general distribution, 
not necessarily closed. 
By Lemma \ref{lem:ds.closure} it follows that 
$\closure{\Delta} \deadsim \closure{\Theta}$, 
therefore there exists a sub-distribution 
$\Theta'$ such that $\closure{\Theta} \RedE{} 
\closure{\Theta'}$, and $\Val{\closure{\Theta'}} 
\leq \Val{\closure{\Delta}}$. This in turn 
implies that $\Theta \RedE{} \Theta'$, and 
\[
\Val{\Theta'} = \Val{\closure{\Theta'}} \leq 
\Val{\closure{\Delta}} = \Val{\Delta}.\eqno{\qEd}
\]

\begin{cor}[Theorem~\ref{thm:ds.outcomes}]
\label{cor:ds.outcomes}
If $\Delta \lift{\deadsim} \Theta$ then 
$\Results{\Delta} \sqsubseteq_S \Results{\Theta}$. 
\end{cor}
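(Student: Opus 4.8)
The plan is to mirror the proof of Corollary~\ref{cor:results} for the may-case, replacing the simulation machinery by its deadlock-sensitive analogue: Lemma~\ref{lem:ds.outcomes} plays the role of Lemma~\ref{lem:results}, while Lemma~\ref{lem:ds.closure} together with Corollaries~\ref{cor:extreme.reductions} and~\ref{cor:closure.results} lets me pass between reductions $\RedE{}$ in the testing structure and weak extensional derivations $\extArE{}$. Unwinding Definition~\ref{def:relsets}, the goal $\Results{\Delta} \sqsubseteq_S \Results{\Theta}$ amounts to matching extreme derivatives while tracking the success value in the prescribed direction: each extreme derivative under consideration must be matched by a corresponding one on the other side whose value is bounded exactly as Lemma~\ref{lem:ds.outcomes} guarantees.

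First I would reduce to the case of closed distributions, i.e. those all of whose networks have empty interface. This is the same device used in Lemma~\ref{lem:results} and Corollary~\ref{cor:results}: by Lemma~\ref{lem:ds.closure} the hypothesis $\Delta \lift{\deadsim} \Theta$ gives $\closure{\Delta} \lift{\deadsim} \closure{\Theta}$, and by Corollary~\ref{cor:closure.results} the operator $\closure{\cdot}$ leaves result sets unchanged, so it suffices to establish the comparison for $\closure{\Delta}$ and $\closure{\Theta}$. The gain is that for closed distributions the reduction relation $\RedE{}$ and the extensional relation $\extArE{}$ coincide (Corollary~\ref{cor:extreme.reductions}), so an extreme derivative in the testing structure can be read as a weak $\tau$-derivation in $\pLTSnets$ and conversely.

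For closed distributions I would then argue as in Corollary~\ref{cor:results}. Taking an extreme derivative, Corollary~\ref{cor:extreme.reductions} turns it into a weak $\extAr{\tau}$-derivation whose endpoint $\Delta'$ is terminal, since every network in $\support{\Delta'}$ admits no further reduction and is therefore either successful or deadlocked. Using the defining clause of $\deadsim$ for the label $\tau$, this is matched by $\Theta \extAr{\tau} \Theta''$ with the endpoints related by $\lift{\deadsim}$; since the endpoint is terminal, Lemma~\ref{lem:ds.outcomes} yields $\Theta'' \RedE{} \Theta'$ with $\Val{\Theta'} \leq \Val{\Delta'}$. Combining the two derivations through transitivity of hyper-derivations (Theorem~\ref{thm:hyper}(1)--(2)) and re-reading the composite as an extreme reduction via Corollary~\ref{cor:extreme.reductions} produces a matched extreme derivative with its value correctly bounded. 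Assembling these matched derivatives over all the relevant choices then establishes the required comparison $\Results{\Delta} \sqsubseteq_S \Results{\Theta}$, with the quantifiers aligned as discussed next.

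The delicate point, and the step I expect to be the main obstacle, is keeping the quantifier order of Smith's preorder aligned with the direction of the value inequality $\Val{\Theta'} \leq \Val{\Delta'}$ delivered by Lemma~\ref{lem:ds.outcomes}, while simultaneously respecting divergence. Because $\deadsim$ is lifted to a relation on \emph{sub}-distributions, a matching derivation may carry strictly less mass, $\size{\Theta''} \leq \size{\Delta'}$, the lost mass modelling divergence and contributing nothing to the success value. I must therefore check that the endpoint $\Theta'$ is genuinely terminal, so that the combined hyper-derivation is extreme and not merely a hyper-derivation, and that the empty sub-distribution $\varepsilon$ is treated as the limiting zero-value case. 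These are precisely the phenomena absorbed by the point-distribution base cases of Lemma~\ref{lem:ds.outcomes}, so once the reduction to closed distributions and the extreme-derivative bookkeeping are in place, the remaining verification is routine.
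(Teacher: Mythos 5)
Your proposal matches the paper's own proof essentially step for step: the same reduction to closed sub-distributions via $\closure{\cdot}$ (Lemma~\ref{lem:ds.closure} and Corollary~\ref{cor:closure.results}), the same passage between $\RedE{}$ and $\extArE{}$ via Corollary~\ref{cor:extreme.reductions}, the same matching of an extreme derivative of $\Delta$ through the $\tau$-clause of $\deadsim$, followed by Lemma~\ref{lem:ds.outcomes} applied to the terminal endpoint and recombination of the two derivations into a single extreme one. Your flagged ``delicate point'' about quantifier orientation and lost mass is resolved exactly as the paper resolves it --- for each extreme derivative $\Delta'$ of $\Delta$ one exhibits an extreme derivative $\Theta'$ of $\Theta$ with $\Val{\Theta'} \leq \Val{\Delta'}$, with divergence absorbed by the sub-distribution (possibly empty) endpoints --- so there is nothing further to add.
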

\begin{proof}
First suppose that $\Delta$ is a closed sub-distribution; 
let $\Delta'$ be a sub-distribution such that 
$\Delta \RedE{} \Delta'$; we have to show that 
$\Theta \RedE{} \Theta'$ for some $\Theta'$ such that 
$\Val{\Theta'} \leq \Val{\Delta'}$.
Note that, since $\Delta$ is 
closed, this is equivalent to $\Delta \extArE{} \Delta'$. 
Further, it is straightforward to note that $\Delta'$ is 
terminal. Since $\Delta \lift{\deadsim} \Theta$ it follows 
that $\Theta \extAr{} \Theta''$ for some $\Theta''$ 
(which also implies $\Theta \Red{} \Theta''$)
such that $\Delta' \lift{\deadsim} \Theta''$. 
Since $\Delta'$ is terminal, by Lemma 
\ref{lem:ds.outcomes} we also have that 
$\Theta'' \RedE{} \Theta'$ and $\Val{\Theta'} \leq 
\Val{\Delta'}$. Now we just need to combine the 
reductions $\Theta \Red{} \Theta''$ and 
$\Theta'' \RedE{} \Theta'$ to obtain $\Theta \RedE{} \Theta'$.

If $\Delta$ is not a closed sub-distribution, we have 
that $\closure{\Delta} \lift{\deadsim} \closure{\Theta}$; 
since $\closure{\Delta}$ is closed it follows that 
$\Results{\Delta} = \Results{\closure{\Delta}} \sqsubseteq_{S} 
\Results{\closure{\Theta}} = \Results{\Theta}$.
\end{proof}

\section{Failure of Completeness}
\label{sec:completeness}

Although the simulation preorder $\forsim$ provides a proof
methodology for establishing 
 that two networks are be related via
the testing preorder $\Mayleq$, it is not complete.
 
That is, it is possible to find two networks $\calM, \calN$
such that $\calM \Mayleq \calN$ holds, but $\calM$ cannot be simulated
by $\calN$. 
Similarly, for the must-testing preorder, we have that 
it is possible to exhibit two networks $\calM, \calN$ 
such that $\calM \Mustleq \calN$, but $\pdist{\calM} \not\invdeadsim 
\calN$. 
This results are quite surprising, as simulation preorder has been
already proved to provide a characterisation of the may-testing
preorder for more standard process calculi such as pCSP, 
while the must-testing preorder has been proved to be characterised 
by failure simulations \cite{DGHM09full}. 
Here, for simplicity, we discuss the failure of completeness 
for the sole $\Mayleq$ preorder; however, the examples discussed 
here can be used to show that the relation $\Mustleq$ is also 
incomplete.
  
  The main problem that arises in our setting is that the  
  mathematical basis of simulation preorders rely on (full) probability distributions, which are a 
  suitable tool in a framework where a weak action from a process term has to be matched with the same 
  action performed by a distribution of processes.
  
  This is not true in our calculus; we have already shown that, due to
  the presence of local broadcast communication, it is possible to
  match a weak broadcast action with a sequence of outputs whose sets
  of target nodes are pairwise disjoint. This behaviour has been
  formalised by giving a non-standard  definition of weak extensional
  actions in Definition \ref{def:wea}. 
  
  Such a definition 
  captures the possibility of simulating a broadcast through a
  multicast only when the former action is performed with probability
  $1$.
  
  However, when comparing distributions of networks we have to
also match actions which are performed with probabilities less than 1, at least
informally; here the simulation of broadcast using multicast runs into problems,
as the following example shows.

  
  
  \begin{exa}
  \label{ex:compfail}
  \begin{figure}[t]
  \begin{align*}
     \begin{tikzpicture}
          \node[state](m){$m$}; 
          \node[state](o1)[above right=of m]{$o_1$}; 
          \node[state](o2)[below right=of m]{$o_2$};
 \path[to]
       (m) edge[thick] (o1)
       (m) edge[thick] (o2);
   \begin{pgfonlayer}{background}
    \node [background,fit=(m)] {};
    \end{pgfonlayer}
    \end{tikzpicture}
&&
    \begin{tikzpicture}
          \node[state](m){$m$}; 
          \node[state](n)[below=of m]{$n$};
          \node[state](o1)[right=of m]{$o_1$}; 
          \node[state](o2)[right=of n]{$o_2$};  
 \path[to]
       (m) edge[thick] (o1)
       (m) edge[thick] (n)
       (n) edge[thick] (o2);
   \begin{pgfonlayer}{background}
    \node [background,fit=(m) (n)] {};
    \end{pgfonlayer}
    \end{tikzpicture}
\\
\calM = \Gamma_M \with \Cloc{\tau.(c!\pc{v} \probc{0.9} \Cnil)}{m} 
&&
\calN = \Gamma_N \with \Cloc{c!\pc{v}}{m} \Cpar \Cloc{c?\pa{x}.(c!\pc{v} \probc{0.9} \Cnil)}{n}  \\  
&&   
\end{align*}

  \caption{Two testing related networks}
  \label{fig:compfail}
  \end{figure}
  
  Consider the two networks $\Gamma_M \with M$, $\Gamma_N \with N$ depicted in Figure \ref{fig:compfail}; 
  let 
  \begin{eqnarray*} 
  M &=& \Cloc{\tau.(c!\pc{v} \probc{0.9} \Cnil}{m}\\ 
  N &=& \Cloc {c!\pc{v}}{m} \Cpar \Cloc{c?\pa{x}.(c!\pc{x} \probc{0.9} \Cnil)}{n}
  \end{eqnarray*}
  \noindent 
  In $\Gamma_M \with M$ a message is 
  broadcast to nodes $o_1, o_2$ with probability $0.9$, while in $\Gamma_N \with N$ two different broadcasts happen in 
  sequence. The first broadcast, which can be detected by node $o_1$, happens with probability $1$. 
  The second broadcast, detectable by node $o_2$ happens with probability $0.9$. As a result,  
  the overall probability of message $v$ to be detected by both nodes $o_1, o_2$ is again $0.9$.
  
  We first show that $\Gamma_M \with M \Mayleq \Gamma_N \with N$, then we prove that 
  $\Gamma_M \with M \not\forsim \Gamma_N \with N$.\\ 
  For the first statement, we only supply informal details, as a complete proof would be 
  rather long and technical. 
  Consider a test $\Gamma_T \with T$, such that both 
  $(\Gamma_M \with M) \testP (\Gamma_T \with T)$ and $\Gamma_N \with N \testP 
  (\Gamma_T \with T)$ are defined. Without loss of generality, suppose that both $o_1, o_2 \in 
  \nodes{\Gamma_T \with T}$, that is $T \equiv \Cloc{t_1}{o_1} \Cpar \Cloc{t_2}{o_2} \Cpar T'$. 
  We consider only the most interesting case, that is when the testing component 
  reaches (with some probability $p$) an $\omega$-successful configuration after network 
  $\Gamma_M \with M$ broadcasts the message $v$. In this case, 
  a computation fragment of $(\Gamma_M \with M) \testP (\Gamma_T \with T)$ can 
  be summarised as follows:
  \begin{enumerate}
  	\item The testing component $\Gamma_T \with T$ performs some internal activity, thus leading to 
  	$\Gamma_T \with T \extAr{\tau} \Gamma_T \with \Cloc{\Lambda_1}{o_1} \Cpar \Cloc{\Lambda_2}{o_2} \Cpar \Lambda_T$,
  	\item At this point, the network $\Gamma_M \with M$ 
  	performs a $\tau$-extensional action, specifically $\Gamma_M \with M \extar{\tau} \Gamma_M \with 
  	\Delta$, where 
  	\begin{eqnarray*}
  	\Delta &=& 0.9 \cdot \pdist{M_1} + 0.1 \cdot \pdist{M_2}\\
  	M_1 &=& \pdist{\Cloc{c!\pc{v}}{m}}\\
  	M_2 &=& \pdist{\Cloc{\Cnil}{m}}
  	\end{eqnarray*}
  	
  \item The distribution $\Gamma_T \with \Cloc{\Lambda_1}{o_1} \Cpar \Cloc{\Lambda_2}{o_2} \Cpar \Lambda_T$ 
  performs some other internal activity, that is  
  \[
  \Gamma_T \with \Cloc{\Lambda_1}{o_1} \Cpar \Cloc{\Lambda_2}{o_2} \Cpar \Lambda_T \extAr{\tau} 
  \Gamma_T \with \Cloc{\Lambda'_1}{o_1} \Cpar \Cloc{\Lambda'_2}{o_2} \Cpar \Lambda'_T 
  \]
  \item At this point, the distribution 
 $\Gamma_M \with \Delta$ broadcasts the message $v$ with probability $0.9$, causing 
  	the testing component to evolve in $\Gamma_T \with \Cloc{\Lambda''_1}{o_1} \Cpar \Cloc{\Lambda''_2}{o_2} 
  	\Cpar \Lambda'_T$; note that only nodes $o_1$ and $o_2$ are affected by the broadcast performed by node $m$. 
  	After performing the broadcast, the tested network becomes deadlocked.
  \end{enumerate}
  	
	\noindent Consider now the network $(\Gamma_N \with N) \testP \Gamma_T \with T$. 
  	For such a network, a matching computation will proceed as follows:
  	\begin{enumerate}
  	\item The testing component $\Gamma_T \with T$ performs the two sequences of internal activities as before, ending 
  	up in the distribution $\Gamma_T \with \Cloc{\Lambda'_1}{o_1} \Cpar \Cloc{\Lambda'_2}{o_2} \Cpar \Lambda'_T$.
  	\item At this point, the network $\Gamma_N \with N$ performs a broadcast, $\Gamma_N \with N 
  	\extar{\tau} \Gamma_N \with \Theta$, where 
		\begin{eqnarray*} 
		\Theta &=& 0.9 \cdot \pdist{N_1} + \cdot \pdist{N_2}\\
		N_1 &=& \Cloc{\Cnil}{m} \Cpar \Cloc{c!\pc{v}}{n}\\
		N_2 &=& \Cloc{\Cnil}{m} \Cpar \Cloc{\Cnil}{n}
  	\end{eqnarray*}
  	Here note that, since the broadcast of message $v$ fired by the network $\Gamma_M \with M$ 
  	can be detected by the sole node $o_1$, only the code running at this node is affected 
  	in the test. Further, the resulting distribution at this node is again $\Lambda''_{1}$; 
  	the test component after the broadcast of message $v$ to node $o_1$ is then in the 
  	distribution 
  	$\Gamma_T \with \Cloc{\Lambda''_1}{o_1}  \Cpar \Cloc{\Lambda'_2}{o_2} \Cpar \Lambda'_T$.
    \item Before allowing the testing component to perform any
        activity, we require the distribution $\Gamma_N \with \Theta$ 
        to perform the second
        broadcast, which will be heard by node $o_2$; this 
        happens with probability $0.9$. Further, such a broadcasts 
        affects the probability distribution of processes running at the sole node
        $o_2$. Thus, after the second message has been broadcast by
        the tested network, the testing component will have the form
        $\Gamma_T \with \Cloc{\Lambda''_1}{o_1} \Cpar \Cloc{\Lambda''_2}{o_2}
        \Cpar \Lambda'_T$. This is exactly the same configuration
        obtained in the first experiment, after $\Gamma_M \with
        M$ has broadcast the message to both nodes $o_1, o_2$.  Further,
        note that the overall probability  $\Gamma_N \with N$  delivering the 
        message to both the external nodes is again $0.9$. 
        Finally, after the broadcast has been fired, the tested network reaches 
        a deadlocked configuration.
  \end{enumerate}
  
  \noindent We have shown that, whenever the broadcast of message
  $v$ by $\Gamma_M \with M$ affects the testing network $\Gamma_T
  \with T$ in some way, then $\Gamma_N \with N$ is able to
  multicast the message to both $o_1$ and $o_2$, causing $\Gamma_T
  \with T$ to behave in the same way. Note also that 
  $\inp{\Gamma_M \with M} = \inp{\Gamma_N \with N} = \emptyset$, 
  so that the behaviour of the testing component $\Gamma_T \with T$ 
  does not affect that of the tested networks. 
  Now the reader should 
  be convinced that $\Gamma_M \with M 
  \Mayleq \Gamma_N \with N$.
  
  Next we show that it is the case that $\Gamma_M \with M$ cannot
  be simulated by $\Gamma_N \with N$. The proof is obtained by
  contradiction. Suppose that $\Gamma_M \with M
  \forsim \Gamma_N \with N$. 
  Since $\Gamma_M \with M \extar{\tau} \Gamma_M \with \Delta$, 
  we have that $\Gamma_N \with N \extAr{\tau} \Gamma_N \with \Theta'$ 
  for some distribution $\Theta'$ such that $\Delta \lift{\forsim^e} \Theta'$. 
  It is straightforward to note that whenever $\Gamma_N \with N 
  \extAr{\tau} \Gamma_N \with \Theta'$ then $\Theta' = \pdist{\calN}$. 
  
  Recall that $\Delta = 0.9 \cdot \pdist{M_1} + 0.1 \cdot \pdist{M_2}$. 
  Since $\Gamma_M \with \Delta \lift{\forsim^e} \Gamma_N \with \pdist{N}$, 
  the decomposition property of lifted relations, 
  Definition \ref{def:lift} ensures that we can rewrite 
  $\pdist{N}$ as $0.9 \cdot \pdist{N} + 0.1 \cdot{\pdist{N}}$, 
  and $\Gamma_M \with M_1 \forsim \Gamma_N \with N$, 
  
  Let us focus on the network $\Gamma_M \with M_1$. 
  This network is equipped with the extensional transition 
  $\Gamma_M \with M \extar{\eout{c!v}{\{o_1,o_2\}}} \pdist{\Gamma_M \with \Cloc{c!\pc{v}}{m}}$. 
  Since $\Gamma_M \with M_1 \forsim \Gamma_N \with N$, it follows that 
  $\Gamma_N \with N \extAr{\eout{c!v}{\{o_1,o_2\}}} \Gamma_N \with \Theta''$ 
  for some distribution $\Theta''$. We show that this is not possible. 
  
  This is because the only action that can be performed by 
  $\Gamma_N \with N$ is $\Gamma_N \with N \extar{\eout{c!v}{\{o_1\}}} 
  \Gamma_N \with \Theta$; in order for $\Gamma_N \with N$ to be 
  able to perform the weak action $\extAr{\eout{c!v}{\{o_1,o_2\}}}$ 
  we require the distribution $\Gamma_N \with \Theta$ 
  to perform a weak broadcast to node $o_2$. 
  However, this is possible if every network in $\support{\Gamma_N \with 
  \Theta}$ can perform such an action; this is not true, since 
  $N_2 \in \support{\Theta}$, and the network $\Gamma_N \with N_2$ 
  is deadlocked. 
  
  We have shown that $\Gamma_M \with M_1 \not\forsim \Gamma_N \with N$, 
  which in turn gives $\Gamma_M \with \Delta \lift{\forsim^e}\hspace{-20pt}\not\;\hspace{20pt}
  \Gamma_M \with \pdist{N}$. Since $\Gamma_M \with M \extar{\tau} 
  \Gamma_M \with \Delta$, and $\Gamma_N \with \pdist{N}$ is 
  the only hyper-derivative of $\Gamma_N \with N$, we conclude 
  that $\Gamma_M \with M \not\forsim \Gamma_N \with N$. 
  \end{exa}
  
  Note that the example above can be readapted to show that 
  deadlock simulations are incomplete with respect to the 
  must-testing relations. 
  In fact, for the networks $\Gamma_M \with M$, $\Gamma_N \with N$ of 
  Example \ref{ex:compfail} it is easy to show that 
  $(\Gamma_N \with N) \Mustleq (\Gamma_M \with M)$, but 
  $(\Gamma_N \with N) \not\invdeadsim (\pdist{\Gamma_M \with M})$. 
  
 Example \ref{ex:compfail} has more serious consequences than just showing that simulation 
  preorder is not complete with respect to the may testing preorder. 
  One could in fact expect that the notion of simulation can be modified, 
  leading to a less discriminating preorder for networks which characterises 
  the $\Mayleq$ preorder. We show that this is not the case. 
  
  \begin{defi}[$\tau$-Simulations]
  A relation $\calR \subseteq \nets \times \dist{\nets}$ is 
  a $\tau$-simulation if whenever $\calM \calR \calN$ then 
  $\inp{\calM} = \inp{\calN}$, $\outp{\calM} = \outp{\calN}$ 
  and whenever $\calM \extar{\tau} \Delta$ it follows that 
  $\calN \extAr{\tau} \Theta$ for some $\Theta$ such that 
  $\Delta \lift{\calR} \Theta$. \qed
  \end{defi}
  Note that the definition of $\tau$-simulations is very general, 
  since the only constraints that we have placed on them, 
  apart from the standard checks on the input and output nodes 
  in the interface of networks, is that a strong $\tau$-action 
  has to be matched with a weak one. It follows at once 
  that $\forsim$ is a $\tau$-simulation.
    
  \begin{thm}
  \label{thm:nolift}
  There exists no $\tau$-simulation $\calR \subseteq \nets \times \dist{\nets}$ such that
  $\calM \Mayleq \calN$ iff $\calM \calR \pdist{\calN}$.
  \end{thm}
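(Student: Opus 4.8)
The plan is to argue by contradiction, re-using the two networks $\Gamma_M \with M$ and $\Gamma_N \with N$ of Example~\ref{ex:compfail}. So suppose, for contradiction, that there is a $\tau$-simulation $\calR \subseteq \nets \times \dist{\nets}$ such that $\calM \Mayleq \calN$ holds exactly when $\calM \;\calR\; \pdist{\calN}$. Example~\ref{ex:compfail} already establishes $\Gamma_M \with M \Mayleq \Gamma_N \with N$, so the assumed characterisation gives $(\Gamma_M \with M) \;\calR\; \pdist{\Gamma_N \with N}$. The strategy is to push this related pair through a single $\tau$-move and then isolate the point network $\Gamma_M \with M_1$, which will turn out not to be may-tested below $\Gamma_N \with N$.

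First I would use the unique internal move of the left-hand network. Recall from Example~\ref{ex:compfail} that $\Gamma_M \with M \extar{\tau} \Gamma_M \with \Delta$, where $\Delta = 0.9 \cdot \pdist{M_1} + 0.1 \cdot \pdist{M_2}$ with $M_1 = \Cloc{c!\pc{v}}{m}$ and $M_2 = \Cloc{\Cnil}{m}$. Since $\calR$ is a $\tau$-simulation, there is some $\Theta'$ with $\pdist{\Gamma_N \with N} \extAr{\tau} \Gamma_N \with \Theta'$ and $(\Gamma_M \with \Delta) \lift{\calR} (\Gamma_N \with \Theta')$. The key observation is that $\Gamma_N \with N$ has \emph{no} $\tau$-extensional action at all: its only enabled activity is the broadcast of node $m$, which is detected by the interface output node $o_1$ and is therefore classified, by Definition~\ref{def:sea}(1), as an output $\eout{c!v}{\{o_1\}}$ rather than as a $\tau$. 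Consequently the only $\tau$-hyper-derivative of $\Gamma_N \with N$ is the trivial one, so $\Gamma_N \with \Theta' = \pdist{\Gamma_N \with N}$.

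Next I would apply the decomposition property of the lifting operator, exactly as in Example~\ref{ex:compfail}. From $(\Gamma_M \with \Delta) \lift{\calR} \pdist{\Gamma_N \with N}$, writing the left side as $0.9 \cdot \pdist{\Gamma_M \with M_1} + 0.1 \cdot \pdist{\Gamma_M \with M_2}$, Definition~\ref{def:lift} forces a matching convex decomposition of $\pdist{\Gamma_N \with N}$; but a point distribution admits only the trivial decomposition into full distributions, namely $\pdist{\Gamma_N \with N} = 0.9 \cdot \pdist{\Gamma_N \with N} + 0.1 \cdot \pdist{\Gamma_N \with N}$. Extracting the component of weight $0.9$ yields $(\Gamma_M \with M_1) \;\calR\; \pdist{\Gamma_N \with N}$. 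Invoking the assumed characterisation in the other direction now forces $\Gamma_M \with M_1 \Mayleq \Gamma_N \with N$.

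Finally I would refute this last statement with an explicit test, obtaining the contradiction. The network $\Gamma_M \with M_1$ delivers $v$ to both $o_1$ and $o_2$ with probability $1$, whereas in $\Gamma_N \with N$ the node $o_2$ only receives $v$ with probability $0.9$. Taking $\calT = \Gamma_T \with \Cloc{c?\pa{x}.c!\pc{x}}{o_1} \Cpar \Cloc{c?\pa{x}.c!\pc{x}}{o_2} \Cpar \Cloc{c?\pa{x}.c?\pa{y}.\omega}{o}$ with $\Gamma_T \vdash \rconn{o_1}{o}$ and $\Gamma_T \vdash \rconn{o_2}{o}$, both composites are well-defined, and (noting $\inp{\Gamma_M \with M_1} = \inp{\Gamma_N \with N} = \emptyset$, so the test cannot feed the networks) a routine computation gives $\Results{(\Gamma_M \with M_1) \testP \calT} = \{1\}$ while every value in $\Results{(\Gamma_N \with N) \testP \calT}$ is at most $0.9$. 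Hence $\{1\} \not\sqsubseteq_{H} \Results{(\Gamma_N \with N) \testP \calT}$, so $\Gamma_M \with M_1 \not\Mayleq \Gamma_N \with N$, contradicting the previous paragraph. I expect the only delicate point to be the decomposition step, together with the observation that $\pdist{\Gamma_N \with N}$ has no non-trivial decomposition into full distributions; the test computation and the absence of a $\tau$-extensional action for $\Gamma_N \with N$ are routine.
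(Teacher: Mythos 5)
Your proof is correct and takes essentially the same route as the paper's own: the same counterexample networks from Example~\ref{ex:compfail}, the same observation that $\Gamma_N \with N$ admits only the trivial $\tau$-hyper-derivative, the same lifting-decomposition step isolating $(\Gamma_M \with M_1) \;\calR\; \pdist{\Gamma_N \with N}$, and a distinguishing test to refute $\Gamma_M \with M_1 \Mayleq \Gamma_N \with N$. The only (cosmetic) difference is in the final test: the paper uses the minimal single-node test $\Gamma_T \with \Cloc{c?\pa{x}.\omega}{o_2}$ with no connections, whereas your three-node forwarding test is more elaborate but works equally well.
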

  
  \begin{proof}
   The proof is carried out by contradiction. 
   Suppose  $\calR \subseteq \nets \times \dist{\nets}$ 
   is a $\tau$-simulation such that $\calM \calR \pdist{\calN}$ 
   if and only if $\calM \Mayleq \calN$, and consider the 
   networks $\Gamma_M \with M, \Gamma_N \with N$ from Example \ref{ex:compfail}.
    We have already proved that $\Gamma_M \with M \Mayleq \Gamma_N \with N$ 
   and so by the hypothesis we have $\Gamma_M \with M \calR \Gamma_N \with N$. 
   Note that $\Gamma_M \with M \extar{\tau} \Gamma_M \with \Delta$, 
   where $\Delta = 0.9 \cdot \pdist{M_1} + 0.1 \cdot \pdist{M_2}$, 
   where $M_1, M_2$ have been already defined in Example \ref{ex:compfail}. 
         
   Since $\calR$ is a $\tau$-simulation, 
   the $\tau$-action performed by $\Gamma_M \with M$ has to 
   be matched by a hyper-derivation in $\Gamma_N \with N$; 
   we have already noted that the only possible hyper-derivation 
   for such a network is given by $\Gamma_N \with N \extAr{\tau} 
   \pdist{\Gamma_N \with N}$. Therefore we have that 
   $\Gamma_M \with \Delta \lift{\calR} \pdist{\Gamma_N \with N}$. 
   The decomposition property of lifted relations, Definition 
   \ref{def:lift} ensures that we can rewrite 
   $\pdist{\Gamma_N \with N}$ as $0.9 \cdot \Theta_1 + 
   0.1 \cdot \Theta_2$, and  
   $\Gamma_M \with M_i \calR \Gamma_N \with \Theta_i$, 
   $i=1,2$. It is trivial to note that here the only 
   possibility is that $\Theta_1 = \Theta_2 = \pdist{N}$. 
   Therefore $\Gamma_M \with M_1 \calR \pdist{\Gamma_N \with N}$, 
   and by hypothesis this implies that $\Gamma_M \with M_1 
   \Mayleq \Gamma_N \with N$. 
   
   However, this is not possible. We show that there is 
   a test that distinguishes the network $\Gamma_M \with M_1$ 
   from $\Gamma_N \with N$.
   Consider the test $\Gamma_T \with T$, 
   where $\Gamma_T$ is the connectivity graph consisting of 
   the sole node $o_2$ with no connections, while 
   $T = \Cloc{c?\pa{x}.\omega}{o_2}$. 
   It is straightforward to note that 
   $1 \in \Results{(\Gamma_M \with M_1) \testP (\Gamma_T \with T)}$. 
   However, for any $p \in \Results{(\Gamma_N \with N) \testP (\Gamma_T \with T)}$ 
   we have $p \leq 0.9$. If follows that 
   $\Results{(\Gamma_M \with M_1) \testP (\Gamma_T \with T)} \not\sqsubseteq_{H} 
   \Results{(\Gamma_N \with N) \testP (\Gamma_T \with T)}$. That is, 
   $\Gamma_M \with M_1 \not\Mayleq \Gamma_N \with N$. Contradiction. 
  \end{proof}

\section{Case Study: Probabilistic Routing}
\label{sec:prob.routing}
While our proof methods for relating probabilistic networks 
via the testing preorders are not complete, they are still 
useful for comparing practical examples of wireless networks. 
Even more, they can be used to perform a model-based verification
of network protocols, showing that their behaviour 
is consistent with respect to some formal specification. 
In this section we show how this can be done by proving the 
correct behaviour of a simple probabilistic routing protocol. 
For the sake of simplicity, we focus on an abstract implementation of 
a geographic routing protocol, in which much of the details are left 
unspecified. However, it is worth mentioning that 
the proposed implementation can be refined, leading to a concrete 
representation of the  \emph{SAMPLE} probabilistic routing protocol \cite{sample}.

By \emph{formal specification} we mean a network $\calM$, while by \emph{network protocol} 
we mean a set of networks $\mathscr{N}$ whose elements share the same input and output nodes.
Proving that the behaviour of a protocol $\mathscr{N}$ is sound with respect 
to a formal verification $\calM$ consists then in showing that 
for any network $\calN \in \mathscr{N}$ it has to be 
$\calM \simeq \calN$.

Let us now turn our attention on how this task can be achieved 
for a probabilistic (connection-less) routing protocol. 
At least intuitively, the routing policy states that 
messages broadcast by a location in a network, 
called source, are 
eventually delivered to a desired node, called 
destination. 
For the sake of simplicity, here we consider a 
situation in which the source and the destination 
of a routing policy are two fixed external nodes, 
$i$ and $o$ respectively. 

Designing a specification for the routing policy is 
easy; however, there are some details that need to be 
taken into account. 
First, we need to introduce some mathematical 
tools that will enable us to equip a node in a 
network with some sort internal memory; this is necessary, 
since in a routing protocol nodes have to store 
the values they have received and which they 
have not yet forwarded to another 
node.

This can be done by relying on \emph{multisets}. 
Roughly speaking, a multiset $\mathbb{A}$ 
is a set which can contain more than one copy of 
the same element. Formally, a multiset 
$\mathbb{A}$ from a set universe $U$ is 
a function $\mathbb{A}: U \rightarrow \mathbb{N}$ 
which assigns to each element $u \in U$ the 
number of copies of $u$ contained in $\mathbb{A}$. 
For our purpose the universe $U$ consists of the 
set of (closed) message values, and we only deal with 
finite multisets, that is those for which 
$\left(\sum_{v \in U} \mathbb{A}(v)\right) < \infty$.

We denote with $\emptyset$ the empty multiset, 
that is the multiset such that $\emptyset(v) = 0$ 
for any value $v$, and we say that 
$\mathbb{A} \subseteq \mathbb{B}$ if $\mathbb{A}(v) \leq 
\mathbb{B}(v)$ for any value $v$. 
We say that $v \in \mathbb{A}$ if $\mathbb{A}(v) > 0$. 
Given a finite collection of multisets $\mathbb{A}_1,\cdots, 
\mathbb{A}_n$, the multiset $(\bigcup_{i=1}^n \mathbb{A}_i)$ 
is defined by letting $(\bigcup_{i=1}^n \mathbb{A}_i)(v) = 
\sum_{i=1}^n \mathbb{A}_i(v)$. 

Finally, for any multiset $\mathbb{A}$ and a value $v$, we denote 
with $\mathbb{A} + v$ the multiset such that 
$(\mathbb{A} + v)(v) = \mathbb{A}(v) + 1$ and 
$(\mathbb{A} + v)(w) = \mathbb{A}(w)$ for any 
$w \neq v$. Similarly, the multiset $\mathbb{A} - v$ 
is defined by letting $(\mathbb{A} - v)(v) = 
\mathbb{A}(v) - 1$ if $\mathbb{A}(v) > 0$, 
$(\mathbb{A} - v)(v) = 0$ if $\mathbb{A}(v) = 0$ 
and $(\mathbb{A} - v)(w) = \mathbb{A}(w)$ for any 
$w \neq v$.

\begin{figure}
\begin{tikzpicture}
          \node[state](i){$i$}; 
          \node[state](m)[right=of i]{$m$}; 
          \node[state](o)[right=of m]{$o$};
 \path[to]
       (i) edge[thick] (m)
       (m) edge[thick] (o);
   \begin{pgfonlayer}{background}
    \node [background,fit=(m)] {};
    \end{pgfonlayer}
    \end{tikzpicture}
\caption{The specification $\Gamma_M \with M$ for the routing policy.}
\label{fig:routing.model}
\end{figure}

The second problem we need to tackle is that of ensuring 
that the specification we define for the routing policy 
is a finitary network. 
This is necessary because our proof techniques 
are valid only for such networks. As we will see, 
this can be accomplished by considering a more restricted 
routing policy, in which only a finite amount of messages 
will be routed from the source to the destination.

Let $k \geq 0$; the specification we propose for the connection-less 
routing policy of $k$ values is given by the network 
$\calM = \Gamma_M \with M^k_{\emptyset}$, where 
$\Gamma_M$ is the connectivity graph depicted 
in Figure \ref{fig:routing.model}
and $M^k_{\mathbb{A}}$ is a system term (parametrised 
by a multiset $\mathbb{A}$ and an integer $k \geq 0$) 
defined as 
\begin{eqnarray*}
M^k_{\mathbb{A}} &=& \Cloc{P^{k}_\mathbb{A}}{m}\\
P^0_{\mathbb{A}} &\Leftarrow& \sum_{v \in \mathbb{A}} c!\pc{v}.P^0_{\mathbb{A} - v}\\
P^{k+1}_{\mathbb{A}} &\Leftarrow& \left(\sum_{v \in \mathbb{A}} c!\pc{v}.P^{k+1}_{\mathbb{A} - v}\right) +  
\left(c?\pa{x}.P^{k}_{\mathbb{A} + x}\right)
\end{eqnarray*}\smallskip

\noindent Let us discuss the intuitive behaviour of a network of the form 
$\calM^k_{\mathbb{A}}$; at any given point, the internal node 
$m$ can either receive a message from node $i$, provided that 
there are still messages to be routed, or it can forward 
one of the messages in the multiset $\mathbb{A}$ to the 
output node $o$, if any. Note that we require the use 
of multisets since any value $v$ 
can be broadcast more than once by the input node $i$.

Formally, the behaviour of a network $\calM^k_{\mathbb{A}}$ 
can be described as follows. 
\begin{prop}
\label{prop:routing.model.transitions}
For any $k \geq 0$ and finite multiset $\mathbb{A}$ 
\begin{enumerate}
\item $\calM^k_{\mathbb{A}}$ is convergent and finitary,
\item $\delta(\calM^k_{\mathbb{A}}) = \ttrue$ if and 
only if $\mathbb{A} = \emptyset$,
\item if $k > 0$ then $v$ 
$\calM^k_{\mathbb{A}} \extar{i.c?v} \Delta$ if and only 
if $\Delta = \pdist{\calM^{k-1}_{\mathbb{A} + v}}$, 
\item if $k=0$ then $\calM^k_{\mathbb{A}} \extar{i.c?v} 
\Delta$ if and only if $\Delta = \pdist{\calM^k_{\mathbb{A}}}$, 
\item $\calM^{k}_{\mathbb{A}}
\extar{\eout{c!v}{\{o\}}} \Delta$ if 
and only if $v \in \mathbb{A}$ and $\Delta = \pdist{\calM^{k}_{\mathbb{A} - v}}$.\qed
\end{enumerate}
\end{prop}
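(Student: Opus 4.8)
The plan is to reduce every clause to the characterisation results for the intensional semantics (Propositions~\ref{prop:tau}, \ref{prop:input} and~\ref{prop:broadcast}) together with the definition of extensional actions (Definition~\ref{def:sea}). First I would record the standing observations: the sole internal node of $\calM^k_{\mathbb{A}}$ is $m$, running $P^k_{\mathbb{A}}$; since the only edges of $\Gamma_M$ are $\Gamma_M \vdash \rconn{i}{m}$ and $\Gamma_M \vdash \rconn{m}{o}$, we have $\inp{\calM^k_{\mathbb{A}}} = \{i\}$ and $\outp{\calM^k_{\mathbb{A}}} = \{o\}$; and no term $P^{k'}_{\mathbb{A}'}$ contains the clause $\omega$, so $\omega(\calM^{k'}_{\mathbb{A}'}) = \ffalse$ throughout.

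The core of the argument is a case analysis of the strong moves of the state $P^k_{\mathbb{A}}$, obtained by unfolding its defining equation up to structural congruence. For $k>0$ we have $P^k_{\mathbb{A}} \equiv (\sum_{v \in \mathbb{A}} c!\pc{v}.P^{k}_{\mathbb{A}-v}) + c?\pa{x}.P^{k-1}_{\mathbb{A}+x}$, whereas $P^0_{\mathbb{A}} \equiv \sum_{v \in \mathbb{A}} c!\pc{v}.P^{0}_{\mathbb{A}-v}$ carries no input summand. Reading off the pre-semantics of Figure~\ref{fig:stsem}, the only moves of $P^k_{\mathbb{A}}$ are a reception $c?v$ to $\pdist{P^{k-1}_{\mathbb{A}+v}}$ (present exactly when $k>0$) and a broadcast $c!v$ to $\pdist{P^{k}_{\mathbb{A}-v}}$ (present exactly when $v \in \mathbb{A}$); crucially there is no $\tau$ summand, so $P^k_{\mathbb{A}} \nar{\tau}$. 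Lifting these through $\Rlts{rec}$/$\Rlts{deaf}$ (Proposition~\ref{prop:input}) and $\Rlts{broad}$ (Proposition~\ref{prop:broadcast}), and then passing to extensional actions via Definition~\ref{def:sea}, yields clauses (iii)--(v) at once: the input from $i$ is detected by $m$ because $\Gamma_M \vdash \rconn{i}{m}$ and $i \in \inp{\calM^k_{\mathbb{A}}}$, giving $\pdist{\calM^{k-1}_{\mathbb{A}+v}}$ when $k>0$ and being discarded by $\Rlts{deaf}$ (since $P^0_{\mathbb{A}} \nar{c?v}$) when $k=0$; a broadcast of $v$ reaches the single output node $o$, so the target set is $\eta = \{o\} \neq \emptyset$ and the move is the extensional output $\extar{\eout{c!v}{\{o\}}}$ with residual $\pdist{\calM^{k}_{\mathbb{A}-v}}$, enabled precisely for $v \in \mathbb{A}$. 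In each case determinism of the residual follows because the sum $\sum_{v \in \mathbb{A}}$ contains one summand per distinct value and reception uses only the single $c?$ summand.

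Clause (ii) is then a direct reading of the definition of $\delta$: we need $\omega = \ffalse$ (always true), no $\tau$-extensional action, and no output action. There is no $\tau$-extensional action because $P^k_{\mathbb{A}}$ has no silent summand and every broadcast of $m$ reaches the external node $o$ (so by Definition~\ref{def:sea}(1)(b) it is an output, never invisible). Hence $\delta(\calM^k_{\mathbb{A}}) = \ttrue$ iff no output is enabled, which by clause (v) happens exactly when $\mathbb{A} = \emptyset$. For the convergence half of clause (i) I would observe that every network reachable under extensional actions is again of the form $\calM^{k'}_{\mathbb{A}'}$, so the extensional pLTS rooted at $\calM^k_{\mathbb{A}}$ carries no $\tau$-transition whatsoever; consequently $\calM^{k'}_{\mathbb{A}'} \extAr{\tau} \varepsilon$ can never hold and the network is convergent.

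It remains to bound the reachable state space for finitariness. Along any path of extensional actions the index $k'$ never increases and strictly decreases on each non-idle input, so at most $k$ genuine inputs occur; since each input adds one element to the multiset and outputs only remove elements, the size of $\mathbb{A}'$ never exceeds $|\mathbb{A}| + k$. Thus, over a fixed finite message alphabet, only finitely many congruence classes $\calM^{k'}_{\mathbb{A}'}$ are reachable (finite state), and from each the clauses (iii)--(v) exhibit only finitely many distinct residual distributions (finite branching). The main obstacle is exactly this last point: distinct input values $v$ yield distinct residuals $\pdist{\calM^{k-1}_{\mathbb{A}+v}}$, so finite branching of the input actions genuinely requires the value domain to be finite. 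I would therefore make the finiteness of the message alphabet explicit as a standing assumption (or restrict attention to the finitely many values relevant to a given test), since otherwise the network would fail to be finite-branching in the literal sense of Section~\ref{sec:background}.
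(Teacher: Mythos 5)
The paper offers no proof of this proposition at all: the statement is closed with the tombstone symbol attached directly to clause (5), i.e.\ the verification is deemed routine, so there is nothing of the paper's to compare your argument against. Your proof is exactly the routine verification that was evidently intended: unfold $P^k_{\mathbb{A}}$ up to structural congruence, read off its strong moves from the pre-semantics (a $c?v$-move to $\pdist{P^{k-1}_{\mathbb{A}+v}}$ precisely when $k>0$, a $c!v$-move to $\pdist{P^{k}_{\mathbb{A}-v}}$ precisely when $v\in\mathbb{A}$, and no $\tau$-move), then push these through Propositions~\ref{prop:input} and~\ref{prop:broadcast} and Definition~\ref{def:sea}, using $\Gamma_M \vdash \rconn{i}{m}$, $\Gamma_M \vdash \rconn{m}{o}$, $\inp{\calM^k_{\mathbb{A}}}=\{i\}$, $\outp{\calM^k_{\mathbb{A}}}=\{o\}$, with Rule $\Rlts{deaf}$ accounting for the idle input at $k=0$; clauses (2)--(5) and convergence follow just as you say, the latter because no reachable network ever has an extensional $\tau$-action, every broadcast being visible at $o$. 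The one place where you go beyond the paper is finitariness, and your caveat is well taken: as literally stated, both finite state and finite branching fail over an infinite value domain, since the inputs $i.c?v$ produce infinitely many distinct residuals $\pdist{\calM^{k-1}_{\mathbb{A}+v}}$ (and hence infinitely many reachable networks $\calM^{k'}_{\mathbb{A}'}$). The paper silently assumes a finite message alphabet here --- its soundness theorems are only invoked for finitary networks --- so making that assumption explicit, as you propose, is a necessary refinement rather than a defect of your argument.
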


\begin{figure}
     \begin{tikzpicture}
          \node[state](i){$i$}; 
          \node[state](n1)[right=of i]{$n_1$};
          \node(vdots)[right=of n1]{$\vdots$};
          \node[state](n2)[right=of vdots]{$n_2$}; 
          \node[state](o)[right=of n2]{$o$};
          \node[state](n3)[above=of vdots]{$n_3$};
          \node[state](nj)[below=of vdots]{$n_j$}; 
 \path[to]
       (i) edge [thick] (n1)
       (n2) edge[thick] (o)
       (n1) edge[thick] (n3)
       (n1) edge[thick] (vdots)
       (n1) edge[thick] (nj);
 \path[tofrom]
 				(n3) edge[thick] (n2)
 				(vdots) edge[thick] (n2)
 				(nj) edge[thick] (n2)
 				(n3) edge[thick] (vdots)
 				(nj) edge[thick] (vdots);   
   \begin{pgfonlayer}{background}
    \node [background,fit=(n1) (n2) (n3) (nj) (vdots)] {};
    \end{pgfonlayer}
    \end{tikzpicture}
\caption{The connectivity graph of the networks in the protocol 
$\mathscr{N}$.}
\label{fig:routing.impl.conngraph}
\end{figure}

Let us now define a protocol which is consistent with 
the specification $\calM^k_{\emptyset}$. 
As we already mentioned, a protocol is a collection of 
networks. 
We consider only the set of networks of the form 
$\calN^k_{\mathbb{A}} = \Gamma_N \with N^k_{\mathbb{A}}$ which satisfy the 
following conditions.

\begin{enumerate}
\item $\inp{\calN^k_{\mathbb{A}}} = \{i\}, \outp{\calN^k_{\mathbb{A}}} = \{o\}$,
\item $\nodes{\calN^k_{\mathbb{A}}} = \{n_1, n_2, \cdots, n_j\}$ for some 
$j \geq 2$, 
\item $\Gamma_N \vdash \rconn{i}{m}$ if and only if $m = n_1$, 
\item $\Gamma_N \vdash \rconn{m}{o}$ if and only if $m = n_2$,
\item $\Gamma_N \vdash \notrconn{n_h}{n_1}$ for any $h=1,\cdots, j$, 
\item for any node $h = 1,\cdots,j$ there exists a path 
from $n_h$ to $n_2$ in $\Gamma_N$,
\item for any node $n_h$, there exists a probability 
distribution $\Lambda_h \in \dist{\{1,\cdots,j\}}$ such that 
$\support{\Lambda_h} = \{h' \;|\; \Gamma_N \vdash \rconn{n_h}{n_{h'}}\}$, 
\item we assume a set of distinct channels $c_1, \cdots, c_j$ such that 
$c_h \neq c$ for any $h=1,\cdots, j$, 
\item The system term $N^k_{\mathbb{A}}$ is in the 
support of a distribution $\Delta^k_{\mathbb{A}}$, defined as 
\[
\Delta^k_{\mathbb{A}} = \interprP{\Cloc{Q^k_{\mathbb{A}_1}}{n_1} \Cpar \Cloc{R_{\mathbb{A}_2}}{n_2} 
\Cpar \prod_{h=3}^j \Cloc{S^h_{\mathbb{A}_h}}{n_h}}
\]
where $\left(\bigcup_{h=1}^j \mathbb{A}_h \right) = \mathbb{A}$ and 
\begin{eqnarray*}
Q^0_{\mathbb{A}} &\Leftarrow& \bigoplus_{h=1}^j \Lambda_1(j) 
\cdot \left(\sum_{v \in \mathbb{A}} c_h!\pc{v}.Q^0_{\mathbb{A} - v}\right)\\
Q^{k+1}_{\mathbb{A}} &\Leftarrow& \bigoplus_{h=1}^j \Lambda_1(h) \cdot 
\left[ c?\pa{x}.Q^{k}_{\mathbb{A} + x} + \left(\sum_{v \in \mathbb{A}} c_h!\pc{v}.Q^{k+1}_{\mathbb{A} - v}\right)\right]\\
R_{\mathbb{A}} &\Leftarrow& c_2?\pa{x}.R_{\mathbb{A} + x} + \left(\sum_{v \in \mathbb{A}} c!\pc{v}.R_{\mathbb{A}-v}\right)\\
S^h_{\mathbb{A}} &\Leftarrow& \bigoplus_{h'=1}^j \Lambda_{h}(h') \cdot 
\left[c_h?\pa{x}.S^h_{\mathbb{A} + x} + \left(\sum_{v \in \mathbb{A}} c_{h'}!\pc{v}.S^h_{\mathbb{A} - v}\right)\right] 
\end{eqnarray*}
Here we use $\bigoplus_{i=1^n}p_i \cdot s_i$ to denote the process such that 
\[
\interprP{\bigoplus_{i=1}^n p_i \cdot s_i} = \sum_{i=1^n} p_i \cdot \pdist{s_i}
\]
\end{enumerate}
\noindent
We denote with $\mathscr{N}^k_{\mathbb{A}}$ the set of networks 
$\calN^k_{\mathbb{A}}$ described above. The connectivity graph 
of such networks is depicted in Figure \ref{fig:routing.impl.conngraph}

\begin{remark}
Note that we committed an abuse of notation in defining the 
distribution $\Delta^k_{\mathbb{A}}$, by associating a 
process definition with a (probabilistic) process, rather than to 
a state. However, a process definition of the form $A \Leftarrow 
\bigoplus_{i=1}^n p_i \cdot s_i$ can be seen as the probabilistic 
process $\bigoplus_{i=1}^n p_i \cdot A_i$, where 
$A_i \Leftarrow s'_i$ and $s'_i$ is obtained from $s_i$ by replacing 
each occurrence of $A$ with $\bigoplus_{i=1}^n p_i \cdot A_i$.
\end{remark}
\noindent
Our aim is to show that for any $\calN \in \mathscr{N}^k_{\emptyset}$ 
we have that $\calM^k_{\emptyset} \simeq \calN$. 

Before supplying the details of the proof of the statement above, 
let us describe informally the behaviour of a distribution 
$\Delta \in \dist{\mathscr{N}^k_{\mathbb{A}}}$; we also discuss the 
requirements that we have placed on the structure of the 
connectivity graph $\Gamma_N$. 
In a distribution $\Delta \in \dist{\mathscr{N}^k_{\mathbb{A}}}$ 
a network is waiting to receive exactly $k$ messages from node $i$, 
and whose multiset of received messages which have been 
received but have not yet been forwarded to the external node $o$ 
is $\mathbb{A}$.
Note that we have placed many requirements in the definition of 
the connectivity graph of such networks; first we require that 
$i$ is their only input node, while $o$ is their only output node. 
This requirement is necessary, since to show that such 
networks are testing equivalent to the specification we 
have to ensure that they share with the latter the same sets 
of input and output nodes.

We require the connectivity graph of the networks in $\Delta$ to have 
a path from $n_h$ to $n_2$ for every $h=1,\cdots,j$. 
This condition is needed to ensure that messages detected by 
node $n_1$ (which in turn have been broadcast by $i$) can flow 
through the network until reaching node $n_2$, which in turn 
can broadcast the message to the output node $o$. 
As we will see this always happens with probability $1$.

The other constraints that we placed on the connectivity graph 
of $\calN \in \support{\Delta}$ are purely technical; we require that 
the only node connected to $i$ is $n_1$, while the only node 
connected to $o$ is $n_2$. As we will see when discussing the 
code running at $N^k_{\mathbb{A}}$, nodes $n_1$ and $n_2$ 
have the role of handling the values broadcast by $i$, and 
which have to be forwarded to $o$, respectively. 
We also require that $\Gamma_N \vdash \notrconn{n_h}{n_1}$ 
for any $h=1,\cdots,j$. This constraint ensures that all 
the messages received by $n_1$ have been broadcast by the 
input node $i$; note in fact that, in general, a node 
cannot detect the name of the node that fired 
a broadcast.

Let us now turn our attention to the code defined for 
the distribution 
$\Delta^k_{\mathbb{A}}$. Here we assume a set of 
channels $c_1,\cdots,c_j$;   
each node $n_h, h \neq 1$ can only detect messages 
broadcast along the channel $c_h$. Intuitively, when a 
message is broadcast along channel $c_h$ by a 
node $n_{h'}$, then it will be delivered to 
node $n_h$. In other words, node $n_{h'}$ has 
selected $n_h$ as the next hop in a routing path. 

We also assume a set of probability distributions 
$\Lambda_1,\cdots,\Lambda_j$. When a node $n_h, h \neq 2$ 
wishes to select the next hop in a routing path, 
it selects it according to the probability distribution 
$\Lambda_h$. Note that we require that $h' \in \support{\Lambda_h}$ 
if and only if $\calN^k_{\mathbb{A}} \vdash \rconn{n_h}{n_{h'}}$, 
that is a node can be selected as the next hop in a routing 
path by $n_h$ if and only if it is in the range of transmission 
of $n_h$. Further, any neighbour of $n_h$ can be selected 
as the next hop in a routing path. As we will see, this constraint 
ensures that, in unbounded time, a message stored in node $n_h$ 
will reach the node $n_2$ with probability $1$.

Any network distribution $\Delta \in \dist{\mathscr{N}^k_{\emptyset}}$ 
can be seen as a probability distribution of networks running a 
(connection-less) routing algorithm of $k$ messages. 
Such an algorithm is designed by letting any node $n_h$, with the exception of 
$n_2$, to select the next hop 
in a routing path probabilistically among its neighbours. 
For node $n_2$, the message is broadcast along channel $c$ 
with probability $1$, thus forwarding it to the only output node 
$o$.
Also, the message to be forwarded to a next-hop in a 
routing path by node $n_h$ is selected non-deterministically 
among those stored in such a node, that is the nodes in 
the multiset $\mathbb{A}_h$. 


Roughly speaking, the behaviour of a network $\Delta \in \
\dist{\mathscr{N}^k_{\mathbb{A}}}$ 
can be described as follows:
\begin{enumerate}
\item node $n_1$ can receive a message $v$ broadcast by node 
$i$ along channel $c$, provided $k \geq 0$. Then it stores it 
in the multiset associated to it,
\item At any given point, any node $n_h, h \neq 2$ 
can select the next hop in a routing path among its neighbours. 
Then it selects the message to be forwarded non-deterministically 
among those stored in its internal multiset
\item At any given point, node $n_2$ can broadcast one of 
the messages stored in its multiset along channel $c$. 
This broadcast is detected by the output node $o$.
\end{enumerate}

\noindent The behaviour of a network $\calN \in \mathscr{N}^k_{\emptyset}$ 
is similar, with the only exception that the first time 
each node receives a message, the next-hop of a routing 
path it chooses is fixed.

Let us now turn our attention to the extensional 
transitions performed by a network $\calN \in \mathscr{N}^k_{\mathbb{A}}$, 
and more generally by a distribution $\Delta \in \dist{\mathscr{N}^k_{\mathbb{A}}}$.
To this end it is useful to introduce 
some notation. 
First we define the (state based) processes
\begin{eqnarray*}
q^{0,h}_{\mathbb{A}} &=& \sum_{v \in \mathbb{A}} c_h!\pc{v}.Q^0_{\mathbb{A} - v}\\
q^{k+1,h}_{\mathbb{A}} &=& c?\pa{x}.Q^{k}_{\mathbb{A} + x} + \left(\sum_{v \in \mathbb{A}} c_j!\pc{v}.Q^{k+1}_{\mathbb{A} - v}\right)\\
s^{h,h'}_{\mathbb{A}} &=& c_h?\pa{x}.S^h_{\mathbb{A} + x} + 
\left(\sum_{v \in \mathbb{A}} c_{h'}!\pc{v}.S^h_{\mathbb{A} - v}\right) 
\end{eqnarray*}
\noindent
and we note that any network $\calN \in \mathscr{N}^k_{\mathbb{A}}$ has the form 
\[
\calN = \Gamma_{N} \with \Cloc{q^{k,h}_{\mathbb{A}_1}} \Cpar \Cloc{R_{\mathbb{A}_2}} 
\Cpar \prod_{h = 3}^j \Cloc{s^{h,h'}_{\mathbb{A}_h}}{n_h} 
\]
\noindent
where $(\bigcup_{h=1}^{j} \mathbb{A}_h) = \mathbb{A}$. 
For such networks, we define $\mbox{Values}_{\calN}(h) = \mathbb{A}_h$. 
Intuitively, this function returns the multiset of values stored at 
node $n_h$ in the network $\calN$.

Finally, let $(\calN^k_{\mathbb{A}})^{x}$ be the unique 
network such that $\mbox{Values}_{(\calN^k_{\mathbb{A}})^{x}}(2) = \mathbb{A}$.
This is the network where all the nodes that have to be routed are 
stored in the node $n_2$; therefore they are ready to be forwarded to 
the destination $o$.

We are now ready to characterise the set of strong extensional 
transitions performed by any network $\calN \in \mathscr{N}^k_{\mathbb{A}}$.

\begin{prop}
\label{prop:routing.impl.strong}
For any network $\calN \in \mathscr{N}^k_{\mathbb{A}}$, 
\begin{enumerate} 
\item 
\begin{enumerate}[label=(\roman*)]
\item $\calN \extar{\tau}\hspace{-10pt}\not\;\hspace{10pt}$ 
iff $\calN = (\calN^{k}_{\mathbb{A}})^{x}$, 
\item otherwise $\calN \extar{\tau} \Delta$ for some 
$\Delta \in \dist{\mathscr{N}^k_{\mathbb{A}}}$, 
\end{enumerate}
\item conversely, whenever $\calN \extar{\tau} \Delta$ 
then $\Delta \in \dist{\mathscr{N}^{k}_{\mathbb{A}}}$,
\item if $k > 0$ then 
\begin{enumerate}[label=(\roman*)]
\item $\calN \extar{i.c?v} \Delta$ 
for some $\Delta \in \dist{\mathscr{N}^{k-1}_{\mathbb{A} + v}}$, 
\item conversely, whenever $\calN \extar{i.c?v} \Delta$ then 
$\Delta \in \dist{\mathscr{N}^{k-1}_{\mathbb{A} + v}}$, 
\end{enumerate}
\item if $k = 0$ then $\calN \extar{i.c?v} \Delta$ iff 
$\Delta = \pdist{\calN}$,
\item for any $v \in \mbox{Values}_{\calN}(2)$ then 
$\calN \extar{\eout{c!v}{\{o\}}} \Delta$ for some 
$\Delta \in \dist{\mathscr{N}^k_{\mathbb{A} - v}}$,
\item conversely, whenever $\calN \extar{\eout{c!v}{\{o\}}} 
\Delta$ then $v \in \mbox{Values}_{\calN}(2)$ and 
$\Delta \in \dist{\mathscr{N}^k_{\mathbb{A} - v}}$.\qed
\end{enumerate}
\end{prop}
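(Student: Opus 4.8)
The plan is to unfold each extensional action through its defining clause in Definition~\ref{def:sea} and then apply the structural characterisations of intensional actions (Propositions~\ref{prop:tau}, \ref{prop:input}, \ref{prop:broadcast}) to the concrete shape of a network $\calN \in \mathscr{N}^k_{\mathbb{A}}$. Throughout I would use the standard form
\[
\calN = \Gamma_N \with \Cloc{q^{k,h}_{\mathbb{A}_1}}{n_1} \Cpar \Cloc{R_{\mathbb{A}_2}}{n_2} \Cpar \prod_{h'=3}^{j} \Cloc{s^{h',h''}_{\mathbb{A}_{h'}}}{n_{h'}},
\]
with $\bigcup_h \mathbb{A}_h = \mathbb{A}$, together with two preliminary observations: (a) no occurrence of $\omega$ appears in the code, so $\calN$ is never $\omega$-successful and the side condition $M \not\equiv \Cloc{\omega+s}{m}\Cpar N$ of Definition~\ref{def:sea} holds automatically; and (b) none of the states $q$, $R$, $s$ contains a $\tau.p$ prefix, so by Proposition~\ref{prop:tau} there are no intensional $m.\tau$ actions at all. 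Since extensional actions leave the connectivity $\Gamma_N$ unchanged (Proposition~\ref{prop:static.topology}), conditions (1)--(8) are preserved automatically, and membership of a residual in $\mathscr{N}^{k'}_{\mathbb{A}'}$ reduces to checking the form of the code (condition (9)).

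The crucial classification step is to sort the possible broadcasts into internal and observable ones using the connectivity constraints (3)--(5) and the distinctness of the channels $c_1,\dots,c_j,c$ from condition (8). A broadcast fired by $n_2$ uses channel $c$ and, since $o$ is the unique node with $\Gamma_N \vdash \rconn{n_2}{o}$ (condition 4), is detected by the interface node $o$; moreover condition (5) gives $\Gamma_N \vdash \notrconn{n_2}{n_1}$ and $n_1$ is the only internal node listening on $c$, so no internal node is affected. Hence such a broadcast yields exactly the output action $\eout{c!v}{\{o\}}$ of clause (3), establishing parts (5)--(6), with residual $R_{\mathbb{A}_2 - v}$ at $n_2$ and target network in $\mathscr{N}^k_{\mathbb{A}-v}$. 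By contrast, a broadcast fired by any $n_{h'}$ with $h' \neq 2$ reaches only internal nodes, since $n_{h'}$ connects neither to $o$ (condition 4) nor to $i$ (as $\outp{\calN}=\{o\}$); it therefore satisfies clause (1b) and is an internal $\tau$ action. Such a node can fire precisely when $\mathbb{A}_{h'} \neq \emptyset$, which gives part (1): $\calN \extar{\tau}\hspace{-10pt}\not\;\hspace{10pt}$ exactly when every stored value sits at $n_2$, i.e.\ $\calN = (\calN^k_{\mathbb{A}})^x$, and otherwise $\calN \extar{\tau}\Delta$.

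For the input actions I would invoke Proposition~\ref{prop:input} with $m = i$: since $\inp{\calN} = \{i\}$ (condition 1) and $i$ reaches only $n_1$ (condition 3), the behaviour under $i.c?v$ is determined solely by $n_1$. When $k > 0$ the state $q^{k,h}_{\mathbb{A}_1}$ offers the branch $c?\pa{x}.Q^{k-1}_{\mathbb{A}_1+x}$ and $n_1$ receives, giving part (3); when $k = 0$ the state $q^{0,h}_{\mathbb{A}_1}$ has no $c$-input, rule $\Rlts{deaf}$ applies, and the action is the identity, giving part (4). The remaining directions are obtained by reading Propositions~\ref{prop:tau}, \ref{prop:input}, \ref{prop:broadcast} from right to left: any intensional action carrying the relevant label must decompose as one of the enumerated node activities, and channel-distinctness together with conditions (3)--(5) rules out any other contributor.

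The main labour, underlying parts (1ii), (2), (3) and (5), is the probabilistic bookkeeping. Whenever $n_1$ or some $n_{h'}$ acts, its residual is the \emph{unresolved} process $Q^{k'}_{\cdot}$ or $S^{h'}_{\cdot}$, whose interpretation $\interprP{\cdot}$ re-spreads over the next-hop branches according to $\Lambda_1$ or $\Lambda_{h'}$, while the target node $n_h$ absorbs the transmitted value into its multiset (here the distinctness of $c_1,\dots,c_j$ guarantees that no node other than $n_h$ reacts). Using the definition of $\interprP{\cdot}$ on system terms and Equation~\eqref{eq:twoargexpected}, I would verify that the resulting distribution is supported on standard-form networks whose index is again $k'$ and whose combined multiset is $\mathbb{A}'$, so that $\Delta \in \dist{\mathscr{N}^{k'}_{\mathbb{A}'}}$. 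I expect this distributional accounting --- keeping the re-resolution of the probabilistic next-hop choices in step with the movement of a single value between the multisets $\mathbb{A}_{h'}$ and $\mathbb{A}_h$ --- to be the only genuinely delicate part; once conditions (3)--(5) and (8) are in hand, the classification of broadcasts as internal or observable is immediate.
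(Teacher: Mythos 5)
The paper states this proposition without proof --- it is asserted with a terminal qed symbol as a routine verification --- so there is no argument of its own to compare against, and your proposal supplies precisely the verification the paper intends: unfolding Definition~\ref{def:sea} against the standard form of networks in $\mathscr{N}^k_{\mathbb{A}}$, using the absence of $\tau$-prefixes and of $\omega$ in the code, channel distinctness (condition (8)) and the connectivity constraints (3)--(5) to classify each broadcast as internal or observable, and tracking the re-resolution of the probabilistic next-hop choices under $\interprP{\cdot}$. Your account is correct on the points where the claim could actually fail --- in particular that condition (5) is what prevents a value broadcast by $n_2$ on channel $c$ from re-entering at $n_1$ (which is exactly what makes the $\mathbb{A}-v$ bookkeeping in parts (5)--(6) sound), and that only $n_1$ is reachable from $i$, so the input cases (3)--(4) are determined by $n_1$ alone.
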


However, in order to show that the specification 
$\calM^k_{\emptyset}$ is testing equivalent to any network 
in $\mathscr{N}^k_{\emptyset}$ we have to characterise 
also the set of weak extensional actions performed 
by such networks. To this end, we first analyse 
the structure of any $\tau$-extensional transition performed 
by any distribution $\Delta \in \dist{\mathscr{N}^k_{\mathbb{A}}}$. 

\begin{prop}
\label{prop:stop.reachable}
Let $k \geq 0$, $\mathbb{A}$ be a finite multiset and 
suppose $\Delta \in \dist{\mathscr{N}^k_{\mathbb{A}}}$. 
\begin{enumerate}
\item $\Delta \extAr{\tau} \pdist{(\calN^k_{\mathbb{A}})^{x}}$,
\item whenever $\Delta \extAr{\tau} \Delta'$, then 
$\Delta' \extAr{\tau} \pdist{(\calN^k_{\mathbb{A}})^x}$.
\end{enumerate}
\end{prop}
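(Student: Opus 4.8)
The plan is to read both statements as reachability/absorption facts about the finite probabilistic system obtained by restricting attention to $\tau$-moves inside $\dist{\mathscr{N}^k_{\mathbb{A}}}$. First I would collect the ingredients supplied by Proposition~\ref{prop:routing.impl.strong}: every $\tau$-transition of a network in $\mathscr{N}^k_{\mathbb{A}}$ again yields a distribution supported on $\mathscr{N}^k_{\mathbb{A}}$, and $(\calN^k_{\mathbb{A}})^x$ is the \emph{unique} network in this (finite) set which is $\tau$-terminal. Thus $(\calN^k_{\mathbb{A}})^x$ plays the role of the single absorbing state, the whole computation stays inside the fixed finite set $\mathscr{N}^k_{\mathbb{A}}$, and, crucially, $n_2$ never forwards internally (its only broadcast is on channel $c$, which is an output action toward $o$), so mass already piled at $n_2$ is never disturbed by further $\tau$-activity.

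The heart of statement (1) is a one-shot reachability lemma: from every $\calN \in \mathscr{N}^k_{\mathbb{A}}$ there is a schedule of $\tau$-moves of uniformly bounded length $L$ reaching $\pdist{(\calN^k_{\mathbb{A}})^x}$ with probability at least a fixed $p_0 > 0$. I would prove it by routing the messages of $\mathbb{A}$ to $n_2$ one at a time: for a message held at $n_h$ the path condition supplies a route $n_h = m_0, m_1, \dots, m_r = n_2$ in $\Gamma_N$, and since $\support{\Lambda_{m}} = \{m' \mid \Gamma_N \vdash \rconn{m}{m'}\}$, each holder re-randomises on receipt to the next hop $m_{i+1}$ with probability at least some $\lambda_{\min} > 0$; scheduling that holder to broadcast then advances the message one hop. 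Because the distinct channels $c_1,\dots,c_j$ make each internal broadcast a point-to-point delivery, routing one message neither duplicates it nor disturbs messages already resting at $n_2$. Hence one message reaches $n_2$ with probability $\geq \lambda_{\min}^{D}$ ($D$ the diameter) and all $|\mathbb{A}|$ of them in sequence with probability $\geq p_0 := \lambda_{\min}^{D|\mathbb{A}|}$ within $L := D|\mathbb{A}|$ broadcasts, uniformly in the starting network.

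Given this lemma, statement (1) follows by the geometric-decay bookkeeping of Definition~\ref{def:hypder}. I would first treat a point distribution $\pdist{\calN}$: build the hyper-derivation so that at each stage the mass that has reached $(\calN^k_{\mathbb{A}})^x$ is moved into the $\Delta^\Stop$ components (it must be, since the target is $\tau$-inactive) while the remaining mass is driven by the schedule of the lemma. Grouping steps into blocks of length $L$, the surviving running mass satisfies $\size{\Delta^\rightarrow_{mL}} \leq (1-p_0)^{m} \to 0$, so $\sum_m \Delta_m^\Stop$ is the full point distribution on $(\calN^k_{\mathbb{A}})^x$, that is $\pdist{\calN} \extAr{\tau} \pdist{(\calN^k_{\mathbb{A}})^x}$. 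Since every network in $\support{\Delta}$ reaches the \emph{same} target, Theorem~\ref{thm:hyper}(iii) assembles these into $\Delta = \sum_{\calN} \Delta(\calN)\,\pdist{\calN} \extAr{\tau} \pdist{(\calN^k_{\mathbb{A}})^x}$.

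For statement (2) I would note that, by Proposition~\ref{prop:routing.impl.strong}(2) together with closure of hyper-derivations, any $\Delta'$ with $\Delta \extAr{\tau} \Delta'$ is again supported on $\mathscr{N}^k_{\mathbb{A}}$; it then suffices to know $\Delta'$ is a full distribution, and to apply (1) to $\Delta'$. Fullness is exactly where convergence enters: the finite system has $(\calN^k_{\mathbb{A}})^x$ as its only $\tau$-terminal state and reaches it from everywhere with the uniform probability $p_0$, so no computation can keep all its mass $\tau$-active forever (absorbed mass must be retired into $\Delta^\Stop$), whence $\calN \extAr{\tau}\varepsilon$ for no $\calN$; Proposition~\ref{prop:convergent.moves} then forces $\size{\Delta'} = 1$. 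The main obstacle I anticipate is the faithful rendering of this almost-sure-absorption argument in the hyper-derivation format: a single $\tau$-step already yields a \emph{distribution} (through the next-hop re-randomisation) and the forwarded value is chosen nondeterministically, so both the schedule and the split into $\Delta^\rightarrow$ and $\Delta^\Stop$ must be defined carefully, and the bounds $p_0, L$ must be verified to be genuinely uniform over the whole finite support.
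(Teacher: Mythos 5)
Your treatment of statement (1) is sound: the uniform reachability lemma plus the geometric block bookkeeping is a legitimate (and in fact more explicit) rendering of what the paper only sketches, and it does deliver the existential claim $\Delta \extAr{\tau} \pdist{(\calN^k_{\mathbb{A}})^{x}}$. The genuine gap is in statement (2), at the step ``reaches it from everywhere with the uniform probability $p_0$, so no computation can keep all its mass $\tau$-active forever''. Your lemma is a \emph{may}-statement: it exhibits \emph{one} resolution of the nondeterminism (one hyper-derivation) that absorbs. Divergence, $\calN \extAr{\tau} \varepsilon$, would be witnessed by a \emph{different} resolution, in which every network in every support keeps $\tau$-moving and no mass is ever retired into the stopped components. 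Existence of a good scheduler never excludes existence of a bad one: a state with a $\tau$-loop sitting next to a $\tau$-edge into a terminal state absorbs with probability $1$ under one scheduler and diverges under another. So the convergence you need in order to invoke Proposition~\ref{prop:convergent.moves} does not follow from your lemma; it requires a scheduler-\emph{universal} argument (here: every maximal resolution is forced, by the re-randomisation of next hops at every send \emph{and} receive, to leak a uniformly positive fraction of its surviving mass into the absorbing network, no matter which holder it chooses to fire). Note moreover that in the paper the convergence of these networks is Corollary~\ref{cor:impl.convergent}, which is \emph{deduced from} Proposition~\ref{prop:stop.reachable}(2); assuming it while proving (2) is circular unless you supply that independent universal argument.

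The paper's own route sidesteps this by proving (1) in a stronger, universally quantified form: an extreme derivative of $\Delta$ exists by Theorem~\ref{thm:hyper}(4), any such derivative must be supported on the unique $\tau$-terminal network of $\mathscr{N}^k_{\mathbb{A}}$, hence equals $p \cdot \pdist{(\calN^k_{\mathbb{A}})^{x}}$, and finally $p = 1$ --- an argument made for \emph{arbitrary} extreme derivatives, i.e.\ over all maximal schedulers at once. With that in hand, (2) is immediate and needs no separate fullness step: given $\Delta \extAr{\tau} \Delta'$, take any extreme derivative $\Delta' \extArE{} \Theta'$ (Theorem~\ref{thm:hyper}(4)); by Theorem~\ref{thm:hyper}(2) it is also an extreme derivative of $\Delta$, hence $\Theta' = \pdist{(\calN^k_{\mathbb{A}})^{x}}$, and $\size{\Delta'} = 1$ falls out as a by-product (hyper-derivations cannot increase mass) rather than being needed as a hypothesis. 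To salvage your structure you would have to upgrade your lemma from ``some schedule of length $L$ absorbs with probability at least $p_0$'' to ``\emph{every} schedule that keeps all mass $\tau$-active absorbs at least $p_0$ of it per block of length $L$'', which is precisely the universal content your current argument is missing.
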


\begin{proof}[Outline of the Proof]
First note that, for any $\Delta \in \dist{\mathscr{N}^k_{\mathbb{A}}}$ 
Proposition \ref{prop:routing.impl.strong} 
ensures that $\Delta \extAr{\tau} \Delta'$ 
implies $\Delta' \in \subdist{\mathscr{N}^k_{\mathbb{A}}}$. 

Let us focus on the proof of the first statement.
Let $\Delta \in \dist{\mathscr{N}^k_{\mathbb{A}}}$ 
for some $k \geq 0$ and finite multiset $\mathbb{A}$. 
We actually prove a stronger statement than (1), 
that is that $\Delta \extArE{} \pdist{(\calN^k_{\mathbb{A}})^{x}}$. 
First note that Theorem \ref{thm:hyper} (4) 
ensures that there exists a sub-distribution 
$\Theta$ such that $\Delta \extArE{} \Theta$. 
Such a distribution $\Theta$ has to be an element 
of the set $\subdist{\mathscr{N}^k_{\mathbb{A}}}$; 
further, any state in its support should not be 
able to perform an extensional $\tau$-action. 
It follows from Proposition \ref{prop:routing.impl.strong} 
that the only possibility is that $\support{\Theta} \supseteq 
\{(\calN^k_{\mathbb{A}})^{x}\}$, or equivalently that 
$\Theta = p \cdot \pdist{(\calN^k_{\mathbb{A}})^{x}}$ 
for some $0 \leq p \leq 1$. It remains to prove that 
$p = 1$.

This follows because the probability distribution used 
by any node $n_h, h \neq 2$, to select the next-hop in a 
routing path is defined so that any neighbour of 
$n_h$ can be chosen with probability strictly greater 
than $0$; in particular, since we are 
assuming that there exists a path from node $n_h$ to node $n_2$, 
a node $n_{h'}$ whose distance to $n_2$ is less than the 
distance between $n_h$ and $n_2$ can be selected with 
non-negligible probability. As a consequence, in 
the long run the average distance between the node 
where a message $v \in \mathbb{A}$ is stored and 
the node $n_2$ decreases to $0$; that is, 
with probability $1$ message $v$ is stored in 
the node $n_2$. 
Since this line of reasoning is independent from 
the value $v$, we also have that in the long run 
any message in $\mathbb{A}$ 
will be stored in $n_2$ with probability $1$; 
formally, $\Theta = 1 \cdot \pdist{(\calN^{k}_{\mathbb{A}})^{x}}$.

Now statement (2) follows trivially. Whenever $\Delta \extAr{\tau} \Delta'$ 
we have that $\Delta' \in \dist{\mathscr{N}^k_{\mathbb{A}}}$, and 
by (1) above it follows that $\Delta' \extArE{} \pdist{\calN^{k}_{\mathbb{A}}}$. 
\end{proof}

\begin{cor}
\label{cor:impl.convergent}
Any $\Delta \in \dist{\mathscr{N}^k_{\mathbb{A}}}$ is 
convergent.
\end{cor}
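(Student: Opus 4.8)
The plan is to deduce the statement directly from Proposition~\ref{prop:stop.reachable}, together with the transition analysis of Proposition~\ref{prop:routing.impl.strong}. Following the terminology fixed in Theorem~\ref{thm:must.dfsound}, to say that $\Delta \in \dist{\mathscr{N}^k_{\mathbb{A}}}$ is convergent means that no network $\calN$ occurring in the pLTS generated by $\Delta$ (in the strong extensional semantics) satisfies $\calN \extAr{\tau} \varepsilon$. So the first step is to describe the reachable state space: using Proposition~\ref{prop:routing.impl.strong} I would observe that every extensional transition out of a network in some $\mathscr{N}^{k'}_{\mathbb{A}'}$ lands in a distribution over networks of the same family, up to the obvious bookkeeping on $(k',\mathbb{A}')$. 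Indeed a $\tau$-move leaves $(k',\mathbb{A}')$ unchanged (part~(2)), an output move leaves $k'$ fixed and deletes one value from the multiset (parts~(5)--(6)), and an input move decrements $k'$ while inserting a value (parts~(3)--(4)). Hence, by induction on the length of a derivation, every network $\calN$ reachable from $\support{\Delta}$ lies in $\mathscr{N}^{k'}_{\mathbb{A}'}$ for some $k' \le k$ and some finite multiset $\mathbb{A}'$.

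The core of the argument is then a short contradiction. Suppose some reachable $\calN \in \mathscr{N}^{k'}_{\mathbb{A}'}$ admitted the divergent hyper-derivation $\pdist{\calN} \extAr{\tau} \varepsilon$. Since $\pdist{\calN} \in \dist{\mathscr{N}^{k'}_{\mathbb{A}'}}$, Proposition~\ref{prop:stop.reachable}(2), applied with the hyper-derivative taken to be $\varepsilon$, would force $\varepsilon \extAr{\tau} \pdist{(\calN^{k'}_{\mathbb{A}'})^{x}}$. But hyper-derivations never increase mass, so the only hyper-derivative of $\varepsilon$ is $\varepsilon$ itself, whereas $\size{\pdist{(\calN^{k'}_{\mathbb{A}'})^{x}}} = 1$; this is the desired contradiction. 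Consequently no reachable network diverges, which is exactly convergence of $\Delta$.

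The bulk of the work --- the genuinely probabilistic content that the routing random walk drains all mass into node $n_2$ with probability one --- has already been discharged in Proposition~\ref{prop:stop.reachable}, so the present corollary needs no new estimates. The only points requiring care, and the place where I would spend the proof, are (i)~verifying that the reachability closure above really covers the whole generated pLTS, i.e.\ that Proposition~\ref{prop:routing.impl.strong} accounts for transitions under all three kinds of extensional label and not merely $\tau$, so that every reachable state sits inside a family $\mathscr{N}^{k'}_{\mathbb{A}'}$ to which Proposition~\ref{prop:stop.reachable} applies; and (ii)~the elementary but essential observation that $\varepsilon$ has no hyper-derivative other than itself. I do not anticipate a substantial obstacle beyond this bookkeeping, since the hard convergence estimate is inherited wholesale from Proposition~\ref{prop:stop.reachable}.
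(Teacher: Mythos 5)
Your proof is correct and follows essentially the same route as the paper's: both arguments rest on Proposition~\ref{prop:stop.reachable}(2) combined with the fact that hyper-derivations cannot increase mass --- the paper uses this to show that every $\tau$-hyper-derivative $\Delta'$ of $\Delta$ satisfies $\size{\Delta'}=1$, while you instantiate the hyper-derivative at $\varepsilon$ and obtain the same contradiction directly. Your extra bookkeeping, showing via Proposition~\ref{prop:routing.impl.strong} that input and output transitions keep all reachable states inside families $\mathscr{N}^{k'}_{\mathbb{A}'}$, is a slightly more careful reading of ``the generated pLTS'' than the paper's proof (which speaks only of $\support{\Delta}$), but it is not a different method.
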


\begin{proof}
Suppose $\Delta \extAr{\tau} \Delta'$ for some $\Delta'$; 
then $\Delta' \extAr{\tau} \pdist{(\calN^k_{\mathbb{A}})}^{x}$; 
it follows that $ \size{\Delta'} \geq 1$, 
hence $\size{\Delta'} = 1$. As a consequence, 
for no network $\calN \in \support{\Delta}$ we have 
$\calN \extAr{} \varepsilon$.
\end{proof}

The last step that we need to take is that of 
characterising the set of (weak) input and output 
transitions for any distribution $\Delta \in 
\dist{\mathscr{N}^k_{\mathbb{A}}}$. This 
can be done by using both propositions 
\ref{prop:routing.impl.strong} and 
\ref{prop:stop.reachable}.

\begin{prop}
\label{prop:routing.impl.weak}
Let $k \geq 0$ and $\mathbb{A}$ be a multiset. 
Then for any distribution $\Delta \in \dist{\mathscr{N}^k_{\mathbb{A}}}$,

\begin{enumerate}[label=(\roman*)]
\item $\Delta \extAr{\tau} \Delta'$ with 
$\delta(\calN) = \ttrue$ for any $\calN \in \support{\Delta'}$ 
if and only if $\mathbb{A} = \emptyset$, 
\item if $k > 0$ then 
\begin{enumerate}[label=(\roman*)] 
\item $\Delta \extAr{i.c?v} \Delta'$ for some 
$\Delta'$ for some $\Delta' \in \dist{\mathscr{N}^{k-1}_{\mathbb{A} + v}}$, 
\item conversely, whenever $\Delta \extAr{i.c?v} \Delta'$ then 
$\Delta' \dist{\mathscr{N}^{k-1}_{\mathbb{A} + v}}$,
\end{enumerate}

\item if $k=0$ then 
\begin{enumerate}[label=(\roman*)]
\item $\Delta \extAr{i.c?v} \Delta'$ for some 
$\Delta' \in \dist{\mathscr{N}^{0}_{\mathbb{A}}}$, 
\item whenever $\Delta \extAr{i.c?v} \Delta'$ then 
$\Delta' \in \dist{\mathscr{N}^{0}_{\mathbb{A}}}$, 
\end{enumerate}

\item if $\mathbb{A} \neq \emptyset$, 
\begin{enumerate}
\item $\Delta \extAr{\eout{c!v}{\{o\}}} \Delta'$ for 
some $\Delta' \in \dist{\mathscr{N}^{k}_{\mathbb{A} - v}}$, 
\item conversely, whenever $\Delta \extAr{\eout{c!v}{\{o\}}} 
\Delta'$ it follows that $\Delta' \in \dist{\mathscr{N}^{k}_{\mathbb{A} - v}}$.
\qed
\end{enumerate}
\end{enumerate}
\end{prop}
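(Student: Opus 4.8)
The plan is to reduce each weak extensional action to a composite of hyper-$\tau$ moves and single strong actions, and then to track how the index pair $(k,\mathbb{A})$ is modified at each stage, using Proposition~\ref{prop:routing.impl.strong} for the strong steps and Proposition~\ref{prop:stop.reachable} for the $\tau$-fragments. Two preliminary remarks do most of the work. First, a weak $\tau$-move preserves the index pair: whenever $\Delta \in \dist{\mathscr{N}^k_{\mathbb{A}}}$ and $\Delta \extAr{\tau} \Delta'$, then $\Delta' \in \dist{\mathscr{N}^k_{\mathbb{A}}}$, as already observed in the proof of Proposition~\ref{prop:stop.reachable} (lifting Proposition~\ref{prop:routing.impl.strong}(2) along Definition~\ref{def:lift}). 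Second, by Corollary~\ref{cor:impl.convergent} every $\Delta \in \dist{\mathscr{N}^k_{\mathbb{A}}}$ is convergent, so by Proposition~\ref{prop:convergent.moves} all the intermediate and final sub-distributions reached by hyper-derivations have mass $1$, hence are full distributions; this is what lets the targets of the weak actions be members of the sets $\dist{\mathscr{N}^{\cdot}_{\cdot}}$ rather than merely of $\subdist{\mathscr{N}^{\cdot}_{\cdot}}$.

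For part~(i) I would first identify the deadlocked networks inside $\mathscr{N}^k_{\mathbb{A}}$. Since these networks are proper, $\omega$ is always $\ffalse$; by Proposition~\ref{prop:routing.impl.strong}(1)(i) a network admits no $\tau$-action exactly when it equals $(\calN^k_{\mathbb{A}})^{x}$, and by Proposition~\ref{prop:routing.impl.strong}(5) it admits no output exactly when $\mbox{Values}_{\calN}(2) = \emptyset$. As $\mbox{Values}_{(\calN^k_{\mathbb{A}})^{x}}(2) = \mathbb{A}$, the unique deadlocked network in $\mathscr{N}^k_{\mathbb{A}}$ is $(\calN^k_{\emptyset})^{x}$, which exists only when $\mathbb{A} = \emptyset$. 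The equivalence then follows: if $\mathbb{A} = \emptyset$ then Proposition~\ref{prop:stop.reachable}(1) gives $\Delta \extAr{\tau} \pdist{(\calN^k_{\emptyset})^{x}}$, whose support consists of the single deadlocked network; conversely, if $\Delta \extAr{\tau} \Delta'$ with every network in $\support{\Delta'}$ deadlocked, then $\Delta' \in \dist{\mathscr{N}^k_{\mathbb{A}}}$ is a full, hence non-empty, distribution all of whose support networks equal $(\calN^k_{\mathbb{A}})^{x}$ with empty $\mbox{Values}(2)$, forcing $\mathbb{A}=\emptyset$.

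For the input cases, parts~(ii) and~(iii), I would unfold Definition~\ref{def:wea}(2): $\Delta \extAr{i.c?v} \Delta'$ means $\Delta \extAr{\tau} \Delta_1 \extar{i.c?v} \Delta_2 \extAr{\tau} \Delta'$. The outer $\tau$-fragments preserve the index pair by the first preliminary remark, while the middle step is a lifted strong input, which decomposes (via Definition~\ref{def:lift}) into single-network inputs governed by Proposition~\ref{prop:routing.impl.strong}(3) when $k>0$ and~(4) when $k=0$. Hence for $k>0$ one has $\Delta_1 \in \dist{\mathscr{N}^{k}_{\mathbb{A}}}$, $\Delta_2 \in \dist{\mathscr{N}^{k-1}_{\mathbb{A}+v}}$ and $\Delta' \in \dist{\mathscr{N}^{k-1}_{\mathbb{A}+v}}$, proving the membership implications; for $k=0$ the middle step leaves the distribution unchanged, so everything stays in $\dist{\mathscr{N}^{0}_{\mathbb{A}}}$. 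The existence implications are immediate because $i$ is the only input node of each $\calN \in \mathscr{N}^k_{\mathbb{A}}$ and is always ready to receive, so the strong input is enabled from $\Delta$ itself; sandwiching it between reflexive hyper-derivations (Theorem~\ref{thm:hyper}(i)) yields the required weak transition.

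Part~(iv) is where the non-standard output clause of Definition~\ref{def:wea}(3) must be handled, and this is the step I expect to be the main obstacle. The key observation that tames it is that the only output node of every network in $\mathscr{N}^k_{\mathbb{A}}$ is $o$, so by Proposition~\ref{prop:routing.impl.strong}(5) every strong output carries the label $\eout{c!v}{\{o\}}$; consequently the composition rule Definition~\ref{def:wea}(3)(b), which requires splitting the node set into two disjoint non-empty pieces, can never fire, since one piece would have to be empty while weak-output labels are always non-empty. Thus $\Delta \extAr{\eout{c!v}{\{o\}}} \Delta'$ collapses to the simple sandwiched form $\Delta \extAr{\tau} \Delta_1 \extar{\eout{c!v}{\{o\}}} \Delta_2 \extAr{\tau} \Delta'$, and tracking indices with Proposition~\ref{prop:routing.impl.strong}(6) for the output step gives $\Delta' \in \dist{\mathscr{N}^{k}_{\mathbb{A}-v}}$. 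For the existence half (with $\mathbb{A}\neq\emptyset$, so $v\in\mathbb{A}$) I would first drive all stored values to node $n_2$ via Proposition~\ref{prop:stop.reachable}(1), reaching $\pdist{(\calN^k_{\mathbb{A}})^{x}}$, and then fire the strong output of $v$ guaranteed by Proposition~\ref{prop:routing.impl.strong}(5). The remaining care throughout is purely bookkeeping: justifying the passage from the single-network statements of Proposition~\ref{prop:routing.impl.strong} to the lifted distribution-level transitions demanded here, and invoking convergence at each hyper-derivation to keep the masses at $1$.
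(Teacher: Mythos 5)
Your proposal is correct and follows exactly the route the paper intends: the paper states this proposition without a written proof, remarking only that it "can be done by using both propositions \ref{prop:routing.impl.strong} and \ref{prop:stop.reachable}", and your argument fills in precisely those details — reducing weak actions to sandwiched strong steps, tracking the index pair $(k,\mathbb{A})$, and using convergence (Corollary \ref{cor:impl.convergent} with Proposition \ref{prop:convergent.moves}) to keep all masses at $1$. Your two additional observations — that convergence is genuinely needed in part~(i) to rule out the empty sub-distribution, and that the composition clause of Definition \ref{def:wea}(3)(b) can never fire because every output label here is the singleton $\{o\}$ — are exactly the points a careful reader must supply.
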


\noindent We are now ready to show that the protocol $\mathscr{N}^k_{\mathbb{A}}$ 
satisfies the specification $\calM^k_{\mathbb{A}}$.
\begin{thm}
\label{thm:routing.equivalence}
For any $k \geq 0$ and $\calN \in \mathscr{N}^k_{\emptyset}$ 
we have that $\calM^k_{\emptyset} \simeq \calN^k_{\emptyset}$. 
\end{thm}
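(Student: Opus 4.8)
The plan is to reduce the equivalence $\calM^k_\emptyset \simeq \calN^k_\emptyset$ entirely to the existence of two divergence-free deadlock simulations, one in each direction, and then invoke the soundness results for convergent networks. First I would record the side conditions that make those results applicable: by Proposition \ref{prop:routing.model.transitions}(1) the specification is convergent and finitary, by Corollary \ref{cor:impl.convergent} every $\calN \in \mathscr{N}^k_\emptyset$ is convergent (and it is finite-state, hence finitary), both have interface $\inp{\cdot} = \{i\}$, $\outp{\cdot} = \{o\}$, and neither $M^k_\mathbb{A}$ nor $N^k_\mathbb{A}$ contains the clause $\omega$, so both are \emph{proper}. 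Consequently Theorems \ref{thm:may.dfsound} and \ref{thm:must.dfsound} apply, and it suffices to establish $\calM^k_\emptyset \dfdeadsim \calN^k_\emptyset$ together with $\calN^k_\emptyset \dfdeadsim \calM^k_\emptyset$: the former yields $\calM^k_\emptyset \Mayleq \calN^k_\emptyset$ and, read as $\calN^k_\emptyset \invdfdeadsim \calM^k_\emptyset$, also $\calN^k_\emptyset \Mustleq \calM^k_\emptyset$, while symmetrically the latter yields the two reverse inequalities, so that $\calM^k_\emptyset \simeq \calN^k_\emptyset$.

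For the first simulation I would take the candidate relation $\mathcal{R} = \{(\calM^k_\mathbb{A}, \calN) \mid k \geq 0,\ \mathbb{A} \text{ finite},\ \calN \in \mathscr{N}^k_\mathbb{A}\}$, ranging over \emph{all} values of the parameters $k$ and $\mathbb{A}$, and its converse $\mathcal{R}' = \mathcal{R}^{-1}$ for the second. By the argument underlying the alternative characterisations (Theorem \ref{thm:ds.altchar}) it is enough to match \emph{strong} extensional actions on the left, so the verification is driven by the transition characterisations of Proposition \ref{prop:routing.model.transitions} (specification) and Propositions \ref{prop:routing.impl.strong} and \ref{prop:routing.impl.weak} (protocol). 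The two observations that make $\mathcal{R}$ work are: (i) $\calM^k_\mathbb{A}$ has no $\tau$-transition, while the only effect of the internal $\tau$-moves of an implementation network is to shuffle the stored multiset among the nodes without leaving the class $\mathscr{N}^k_\mathbb{A}$; and (ii) $\delta(\calM^k_\mathbb{A}) = \ttrue$ exactly when $\mathbb{A} = \emptyset$ (Proposition \ref{prop:routing.model.transitions}(2)), which is precisely when every network of $\mathscr{N}^k_\emptyset$ is itself deadlocked. For each strong action of $\calM^k_\mathbb{A}$ --- an input $i.c?v$ leading to $\pdist{\calM^{k-1}_{\mathbb{A}+v}}$ (or $\pdist{\calM^0_\mathbb{A}}$ when $k=0$), and an output $\eout{c!v}{\{o\}}$, available when $v \in \mathbb{A}$, leading to $\pdist{\calM^k_{\mathbb{A}-v}}$ --- Proposition \ref{prop:routing.impl.weak} supplies a matching weak action of $\calN$ into a distribution lying in the correct class $\dist{\mathscr{N}^{k'}_{\mathbb{A}'}}$. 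Since every element of the support of that distribution is $\mathcal{R}$-related to the single specification state reached, a routine application of the convex closure in Definition \ref{def:lift} gives $\pdist{\calM^{k'}_{\mathbb{A}'}} \lift{\mathcal{R}^e} \Theta$.

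The converse relation $\mathcal{R}'$ is checked dually: a $\tau$-move of an implementation network is matched by the empty (reflexive) hyper-derivation of the specification, the target distribution staying within $\mathscr{N}^k_\mathbb{A}$ and hence being $\mathcal{R}'$-related to $\calM^k_\mathbb{A}$; the input and output moves of $\calN$ (Proposition \ref{prop:routing.impl.strong}(3)--(6)) are matched by the corresponding atomic moves of the specification, using that any value output by $\calN$ lies in $\mbox{Values}_{\calN}(2) \subseteq \mathbb{A}$ so that the specification can indeed broadcast it; and $\delta$-states correspond in both systems to $\mathbb{A} = \emptyset$.

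The main obstacle is the output-matching clause of the first simulation. There the specification emits a stored value $v$ to $o$ in a single atomic step, whereas the implementation can broadcast to $o$ only from $n_2$; it must therefore realise the same observable as a \emph{weak} output that first transports $v$ hop-by-hop to $n_2$ and only then broadcasts. That this internal routing succeeds with probability $1$ --- so that the weak output $\extAr{\eout{c!v}{\{o\}}}$ of Proposition \ref{prop:routing.impl.weak} is indeed available and lands, with full mass, in $\dist{\mathscr{N}^k_{\mathbb{A}-v}}$ --- is exactly the content of Proposition \ref{prop:stop.reachable}, whose proof rests on each node selecting a next hop toward $n_2$ with strictly positive probability, so that the message reaches $n_2$ almost surely. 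Granting Proposition \ref{prop:stop.reachable}, the remaining work is the bookkeeping that the parameters $(k,\mathbb{A})$ track correctly through every matched transition and that the lifted relation is preserved, which is routine.
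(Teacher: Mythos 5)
Your proposal is correct and follows essentially the same route as the paper's own proof: the same reduction to the two divergence-free deadlock simulations via Theorems \ref{thm:may.dfsound} and \ref{thm:must.dfsound}, the same candidate relation ranging over all pairs $(\calM^k_{\mathbb{A}}, \calN)$ with $\calN \in \mathscr{N}^k_{\mathbb{A}}$ together with its converse, and the same matching of strong specification/implementation moves against weak moves using Propositions \ref{prop:routing.model.transitions}, \ref{prop:routing.impl.strong} and \ref{prop:routing.impl.weak} (with Proposition \ref{prop:stop.reachable} carrying the probability-one routing argument). Your treatment of the side conditions (properness, convergence, finitariness) is in fact slightly more explicit than the paper's.
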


\begin{proof}
Let $k \geq 0$ and $\mathbb{A}$ be a finite multiset. 
We have already noted that the network $\calM^k_{\mathbb{A}}$  
is finitary. Further, it is easy to show that any network 
$\calN \in \mathscr{N}^k_{\mathbb{A}}$ is finite state, 
and by Corollary \ref{cor:impl.convergent} it follows that 
it is also finitary.

Therefore, it suffices to show that for any $\calN \in \mathscr{N}^k_{\emptyset}$ 
we have both $\calM^k_{\emptyset} \dfdeadsim \calN$ and 
$\calN \dfdeadsim \calM^k_{\emptyset}$. Theorem 
\ref{thm:must.dfsound} gives that $\calM^k_{\emptyset} \Musteq 
\calN$, while Theorem \ref{thm:may.dfsound} ensures that 
$\calM^k_{\emptyset} \Mayeq \calN$. 

In fact we prove a stronger statement. 
For any $k \geq 0$, finite multiset $\mathbb{A}$ 
and network $\calN \in \mathscr{N}^k_{\mathbb{A}}$ 
we have that $\calM^k_{\mathbb{A}} \dfdeadsim \calN$, 
and conversely $\calN \dfdeadsim \calN^k_{\mathbb{A}}$.
Theorem \ref{thm:routing.equivalence} follows by 
letting $\mathbb{A} = \emptyset$.

To this end, consider the relation
\[
{\mathcal S} = \{(\calM^k_{\mathbb{A}}, \calN') \;|\; \calN \in \mathscr{N}^k_{\mathbb{A}}\}
\]
\noindent
We show that this relation satisfies the requirements of 
Definition \ref{def:dfdeadsim}. 
First suppose that $\delta(\calM^k_{\mathbb{A}}) = \ttrue$. 
Then $\mathbb{A} = \emptyset$ by Proposition 
\ref{prop:routing.model.transitions}, and
Proposition \ref{prop:routing.impl.weak} ensures that 
$\pdist{\calN^k_{\emptyset}} \extAr{\tau} \Theta$ 
for some $\Theta$ such that for any $\calN' \in \support{\Theta}$ 
we have that $\delta(\calN') = \ttrue$. 

Now, suppose that $\calM^k_{\mathbb{A}} \extar{i.c?v} \Delta$. 
By Proposition \ref{prop:routing.model.transitions} we have 
two possible cases: 

\begin{enumerate}
\item $k = 0$; in this case $\Delta = \pdist{\calM^k_{\mathbb{A}}}$; 
by Proposition \ref{prop:routing.impl.weak} we have 
that $\calN \extAr{i.c?v} \Theta$ for some $\Theta \in 
\dist{\mathscr{N}^k_{\mathbb{A}}}$, and trivially 
$\calM^k_{\mathbb{A}} \;{\mathcal S}^e\; \Theta$. 
\item $k \geq 0$; here $\Delta = \pdist{\calM^{k-1}_{\mathbb{A}}}$. 
The action $\calM^k_{\mathbb{A}} \extar{i.c?v} \pdist{\calM^{k-1}_{\mathbb{A}+v}}$ 
can be matched by $\calN \extAr{i.c?v} \Theta$, 
where $\Theta \in \dist{\mathscr{N}^{k-1}_{\mathbb{A} + v}}$, 
again using Proposition \ref{prop:routing.impl.weak}.
\end{enumerate}

\noindent The last case we need to check is $\calM \extar{\eout{c!v}{\{o\}}} \Delta$. 
This case is handled in the same way of the previous ones, again using 
Propositions \ref{prop:routing.model.transitions} 
and \ref{prop:routing.impl.weak}.

For the opposite implication, $\calN \dfdeadsim \calM^k$,
it is sufficient to consider the converse relation ${\mathcal S}^{-1}$, 
showing that it satisfies the requirements of Definition \ref{def:dfdeadsim}.
The proof is similar to the one above, this time by using 
Proposition \ref{prop:routing.impl.strong} to infer the 
structure of an extensional action of the form 
$\calN \extar{\lambda} \Theta$, and by matching 
it with an action performed by $\calM^k_{\mathbb{A}}$ 
according to Proposition \ref{prop:routing.model.transitions}. 
Here it is important to note that every action of the 
form $\calM^k_{\mathbb{A}} \extar{\lambda} \Delta$ is also a  
weak action, that is $\calM^k_{\mathbb{A}} \extAr{\lambda} \Delta$, 
and that a strong $\tau$-extensional action of the form 
$\calN \extar{\tau} \Theta$ can be matched by the weak 
action $\calM^k_{\mathbb{A}} \extAr{\tau} \pdist{\calM^k_{\mathbb{A}}}$.
\end{proof}

\section{Conclusions}
\label{sec:conclusions}

In this paper we have developed a calculus for wireless systems, which enjoys both probabilistic behaviour 
and local broadcast communication. We have developed a theory based on the probabilistic testing preorders, 
and provided sound proof methods for finitary networks to prove that they can be related via our behavioural preorders.
We have applied our proof techniques to check that a probabilistic routing protocol is consistent 
with a given specification.

While testing theories have been analysed for process algebras \cite{Ene02} with broadcast communication 
over a flat topology, we believe that this is the first work that considers testing theories for 
a calculus which enjoys local broadcast communication.

In the past the development of formal tools for wireless networks has focused either on other forms of 
behavioural theories (such as variants of weak bisimulation) and the analysis of protocols. Here we give a brief 
review of the main works which have inspired our calculus.

To the best of our knowledge, the first paper describing a process calculus for broadcasting systems, \textbf{CBS}, is 
\cite{Prasad95}. In this paper the author presents a simple process calculus 
in which a synchronisation between a sender and a receiver is modelled as an 
output action, rather than an internal activity as in standard 
process calculi such as CCS. This allows multiple receivers to detect 
a message sent by a sender, thus implementing broadcast communication. 
In \cite{HenRat98} different notions of barbed congruence for a variant of CBS are introduced; 
these correspond to strong barbed congruence and weak barbed congruence. For each of 
them, a characterisation result in terms of strong and weak bisimulation, 
respectively, is proved. 

Another calculus to model broadcast systems, known as the $b\pi$-calculus 
and inspired by both CBS and the $\pi$-calculus \cite{pibook}, is introduced in \cite{Ene01}; 
as the author points out, broadcast communication is modelled in the same style of 
CBS. In this paper the authors define three different behavioural equivalences, 
corresponding to barbed congruence, step equivalence and labelled bisimilarity. 
The author proves that such behavioural equivalences coincide. 

In \cite{Ene02} the authors define both the may and must testing preorders for processes of the 
$b\pi$-calculus, and they prove a characterisation result for each of them. 
The main contribution here lies in the characterisation of the must-testing 
preorder; as the authors point out, in fact, broadcast communication leads 
to a non-standard characterisation of the latter. In particular, the non-blocking 
nature of broadcast actions does not allow acceptance sets to be used in their 
characterisation result.

In the last decade, broadcast calculi have been modified in several ways by 
equipping processes with a topological structure, thus modelling wireless 
networks; the idea is that 
of representing a process as a set of locations, running different code 
for broadcasting and receiving messages; the topology defined for a process 
establishes how communication is modelled, for example by letting only some 
locations being able to detect the messages broadcast at another one.

In \cite{NanHan06} the authors propose to model the topological 
structure of a network by using a connectivity graph; a process 
is viewed as a set of locations running code, while a graph 
whose vertices are locations 
is used to determine how communication is carried out. Intuitively, 
a transmission originated at a given location can only be detected 
by those vertices which are connected to the former. The transition 
relation of processes is defined as parametric in a connectivity graph. 
This framework has been proposed by the authors as a basis for the 
analysis of security protocols in wireless networks.

In \cite{nanz} an allocation environment is used to represent the 
topological structure of a wireless networks. A wireless network 
is intended as a parallel composition of processes, each of which 
is associated with a set of locations to which the process belongs 
and a probability distribution 
over locations; intuitively, the latter describes the probability 
with which a message broadcast by the process is detected at a 
given location.

In \cite{restrbroad} the authors propose a \emph{restricted 
broadcast process theory} to model wireless networks. 
Here a network consists of a parallel composition 
of different processes; each process is associated with a location name, 
and a function between locations to sets of locations 
is used to represent the network topology. The authors 
propose the standard notion of weak bisimulation as 
the behavioural equivalence to be used to relate networks 
and they show a case study in which they prove the correctness 
of a routing protocol.

In \cite{GwFM10} an extension of the restricted broadcast process theory, 
the \emph{Computed Network Theory}, is proposed; here the expressive power of a network 
is augmented through different operators. 
For the resulting calculus, a variant of strong bisimulation is 
defined and proved to be a congruence. The main result in the paper is a sound 
axiomatisation of the strong bisimulation, thus 
enabling equational reasoning for wireless networks. The authors 
also show that the proposed axiomatisation is complete in a setting 
where only non-recursive networks are considered.
The \emph{Computed Network Theory} framework is also used in 
\cite{wfmodcheck} to check properties of mobile networks; 
the authors show how both the equational theory and model checking 
can be used to verify the correctness of a routing protocol.

In \cite{omegacalc} the authors view a network as a collection of 
processes, each of which is associated with one or more groups. 
Processes which belong to the same group are assumed to be 
neighbours; as a consequence, a broadcast performed by a 
process can be detected by all the processes which belong 
to at least one group of the broadcaster. 
The authors show that in their framework state reachability is 
a decidable problem; further, they introduce different notions 
of behavioural equivalences, based on late bisimilarity and its weak 
variant, and they show that such equivalences are in fact congruences.
Finally, they apply their calculus by formalising and analysing the 
behaviour of a leader election protocol and 
a routing protocol.

In \cite{LaneseS10} the authors describe wireless 
networks by using metric spaces; they assume that a network consists 
of a set of processes, each of which has an associated location 
and a radius of transmission; a metric distance over the set of locations 
is assumed to determine how communication is modelled. The authors 
describe the behaviour of a wireless network in terms of both a 
reduction semantics and a labelled transition semantics. These 
two semantics are proved to be equivalent up-to a notion of 
structural congruence. We remark that in their paper the authors 
assume that a communication between two stations consists of 
two phases, one for the beginning and one for termination. 
This allows the authors to model collision-prone communication. 

In \cite{wang} the authors present an extension of the calculus described above, 
in which node mobility and timed communication are introduced. The 
authors give both a reduction semantics and a labelled transition semantics, 
and they prove that they are congruent up-to structural congruence.

Another calculus for wireless networks in which collision-prone behaviour is taken 
into account is described in \cite{merro}. In their work, the authors 
describe a network as a set of processes running in parallel, each of 
which has a location name and a semantic tag associated with it; the latter 
consists of a set of locations names and it corresponds to the set 
of locations which can detect messages broadcast by the process. 
The calculus includes a notion of discrete-time, in the style of 
\cite{HenReg95}, and broadcasts of messages start and end at different 
time slots. The authors develop a notion of barbed congruence for 
wireless systems and they 
propose a sound, but not complete, characterisation result in terms 
of weak bisimulation. 

A variant of this calculus which considers only networks with 
flat topology is presented in \cite{CHM12}. Here the 
authors develop a notion of reduction barbed congruence for their 
calculus; they 
also introduce an extensional semantics whose induced weak 
bisimulation principle is proved to be sound and complete 
with respect to the barbed congruence.

In \cite{GodSon} the authors propose a model in which the topological 
structure of a network is represented as a graph whose vertices are 
locations; further, they assign 
to each edge in the graph a (possibly unknown) probability as a likelihood estimate of 
whether a message broadcast by a location at the starting end-point of an 
edge will be delivered to the location at the terminal end-point of the same. 
The proposed model also allows the network topology of a system to 
evolve according to a probabilistic mobility function. 
The authors prove that, in the proposed calculus, the logical equivalence 
defined over a variant of PCTL coincides with weak bisimulation. 

In \cite{songphd} several models for modelling probabilistic ad hoc networks 
are developed; the author first defines a probabilistic process calculus 
where connections between nodes are probabilistic. Behavioural theories 
based on bisimulation and temporal logics are defined for analysing the properties 
of networks in such a calculus. 
The presented calculus is then extended in order to model different features of 
wireless networks, such as exponential time delays and changes in the network topology. 

In \cite{LanotteM11} the authors define a language for wireless networks 
in which the code running at network locations contains both non-probabilistic 
and non-deterministic behaviour. The topological structure of a network 
is defined in the same way of \cite{merro}; the authors introduce a 
notion of simulation, parametrised in a probability value, in order 
to capture the concept of two networks exhibiting the same behaviour 
up-to such a probability. The model used to represent wireless networks 
and define their formal behaviour is that of a pLTS.

In \cite{gallina2011} the authors propose 
a probabilistic, energy-aware process calculus of networks. 
In this calculus nodes can move probabilistically 
among locations of a given metric space. Nodes can also choose the transmission 
radius of a broadcast in an optimal way with respect to energy consumption. 
The authors propose a notion of probabilistic barbed congruence, parametrised in a 
set of schedulers, for which they give a characterisation in terms of bisimulations. 
The authors also introduce a preorder which compare networks which exhibit 
the same behaviour according to the proposed contextual equivalence, but differ 
in terms of energy consumption.

In \cite{bugliesi2012} a variant of the calculus above is proposed, where 
energy consumption is no longer considered and the possibility of interferences in 
communications is introduced. The authors define a contextual equivalence in 
terms of probabilistic reduction barbed congruence, for which they develop a 
sound and complete proof technique based on bisimulations.

In \cite{ghassemi} a different approach is made to formalise a wireless 
network. The authors identify a network as a set of processes associated
 with a location address and a queue, representing the data 
at the datalink layer that a station has not yet broadcast. The calculus 
they use is a probabilistic generalisation of the restricted broadcast process theory 
of \cite{GwFM10}; here the sending primitive consists of a message to be broadcast and a probability 
rate, representing the likelihood that such a message will be sent. The 
model used to describe the behaviour of a system is that of 
Continuous Time Markov Automata.

\appendix

\section{Properties of the Operator $\testP$}
\label{sec:operator.results}

\paragraph{\textbf{Proof of Proposition \ref{prop:testp.assoc}(1)}:} 
\label{proof:testP.closed}
Let $\calM = \Gamma_M \with M$, $\calN = \Gamma_N \with N$, 
and suppose $(\calM \testP \calN)$ is defined. 
Then such a network is equal to $(\Gamma_M \cup \Gamma_N) 
\with (M \Cpar N)$, for which we have to verify two statements:

\begin{itemize}
\item $(\Gamma_M \cup \Gamma_N) \with (M \Cpar N)$ satisfies the constraints we 
have placed over all, possibly non well-formed, networks. These require 
$M \Cpar N \in \sys$ (that is, it does not contain replicated node names),
$\nodes{M \Cpar N} \subseteq \Gamma_V$, and 
$(\Gamma_M \cup \Gamma_N)_E$ being irreflexive.

\begin{enumerate}

\item $M \Cpar N \in \sys$; note that if there were a node name $m$ which appears more 
than once in $M \Cpar N$, then we should have $m \in \nodes{M}, m \in \nodes{N}$. This is 
because $M \in \sys, N \in \sys$, so that $m$ cannot appear more than once in $M$, nor in $N$. 
Thus the statement follows if we can prove that $\nodes{M} \cap \nodes{N} = \emptyset$; 
since $\calM \testP \calN$ is defined, 
it follows that 
$\nodes{M} \cap (\Gamma_N)_V = \emptyset$. 
Since $\nodes{N} \subseteq (\Gamma_N)_V$,
we also have $\nodes{M} \cap \nodes{N} = \emptyset$, and there is nothing left to prove.

\item $\nodes{M \Cpar N} \subseteq (\Gamma_M \cup \Gamma_N)_V$; 
note that $\nodes{M} \subseteq (\Gamma_M)_V$ and $\nodes{N} \subseteq 
(\Gamma_N)_V$. Therefore we have that
 \[ \nodes{M \Cpar N} = \nodes{M} \cup \nodes{N} \subseteq (\Gamma_M)_V \cup (\Gamma_N)_V = 
 (\Gamma_M \cup \Gamma_N)_V.
 \]

\item $(\Gamma_M \cup \Gamma_N)_E$ is irreflexive. Suppose $(\Gamma_M \cup \Gamma_N) \vdash \rconn{m}{n}$; We need to 
show that $m \neq n$.
Note that we have either $\Gamma_M \vdash \rconn{m}{n}$ or $\Gamma_N \vdash \rconn{m}{n}$; 
without loss of generality, assume $\Gamma_M \vdash \rconn{m}{n}$. Since $(\Gamma_M)_E$ is irreflexive, 
it follows that $m \neq n$.
\end{enumerate}

\item $(\Gamma_M \cup \Gamma_N) \with (M \Cpar N)$ satisfies the constraints 
of Definition \ref{def:well.formed}. This amounts to prove the following:

\begin{enumerate}

\item for any $m,n$ such that $\Gamma_M \cup \Gamma_N \vdash \someconn{m}{n}$, 
either $m \in \nodes{M \Cpar N}$ or $n \in \nodes{M \Cpar N}$.

Let $m,n$ be two nodes for which $(\Gamma_M \cup \Gamma_N) \vdash \someconn{m}{n}$; 
that is either $(\Gamma_M \cup \Gamma_N) \vdash \rconn{m}{n}$ or $(\Gamma_M \cup \Gamma_N) \vdash \lconn{m}{n}$. 
Without loss of generality, assume that $(\Gamma_M \cup \Gamma_N) \vdash \rconn{m}{n}$.

In this case either $\Gamma_M \vdash \rconn{m}{n}$ or $\Gamma_N \vdash \rconn{m}{n}$. 
We only give details for the case in which $\Gamma_M \vdash \rconn{m}{n}$, as the proof 
for the second case is analogous. 
Since $\calM \in \nets$, we have 
that either $m \in \nodes{M}$ or $n \in \nodes{N}$. 
If $m \in \nodes{M}$ then $m \in \nodes{M \Cpar N}$, while 
if $n \in \nodes{M}$, then $n \in \nodes{M \Cpar N}$. 
Thus, either $m \in \nodes{M \Cpar N}$ or $n \in \nodes{M \Cpar N}$.

\item Let $m \in (\Gamma_N \cup \Gamma_N)_V$ be a node such that $(\Gamma_M \cup \Gamma_N) \vdash \someconn{m}{n}$ 
for no node $n \in (\Gamma_N)_V$. Then $m \in \nodes{M \Cpar N}$.

Since $m \in (\Gamma_M \cup \Gamma_N)_V$ then either $\Gamma_M \vdash m$ or $\Gamma_N \vdash m$. 
Without loss of generality let $\Gamma_M \vdash m$. Also, since $(\Gamma_M \cup \Gamma_N) \vdash \someconn{m}{n}$ 
for no $n \in (\Gamma_M \cup \Gamma_N)_V$, $(\Gamma_M)_V \subseteq (\Gamma_M \cup \Gamma_N)_V$ and $(\Gamma_M)_E \subseteq 
(\Gamma_M \cup \Gamma_N)_E$, then 
we also have that $\Gamma_M \vdash \someconn{m}{n}$ for no $n \in (\Gamma_M)_V$. 

Thus we have that $\Gamma_M \vdash m$ and $\Gamma_M \vdash \someconn{m}{n}$ for no $n \in (\Gamma_M)_V$. 
Since $(\Gamma_M \with M)$ is well-formed by hypothesis, we must have $m \in \nodes{M}$, from which it 
follows that $m \in \nodes{M \Cpar N}$.\qed
\end{enumerate}
\end{itemize}

\paragraph{\textbf{Proof of Proposition \ref{prop:testp.assoc}(2)}:}
\label{proof:testP.assoc}

It is sufficient to check that 
$\nodes{\calM} \cap (\calN)_V = \emptyset$ and 
$\nodes{\calM \testP \calN} \cap (\calL)_V = \emptyset$ 
if and only if $\nodes{\calN} \cap (\calL)_V= \emptyset$ and 
$\nodes{\calM} \cap (\calN \testP \calL)_V = \emptyset$.
In fact, from this claim it follows that $(\calM \testP \calN) \testP \calL$ is 
defined if and only if $\calM \testP (\calN \testP \calL)$ is defined; the 
equality of these two networks follows from the associativity of both set 
union and parallel composition of system terms.

Let $\calM = \Gamma_M \with M$, 
$\calN = \Gamma_N \with N$ and $\calL = \Gamma_L 
\with L$. We prove the two implications above separately.

Suppose that 
\begin{eqnarray}
(\nodes{M} \cap (\Gamma_N)_V) &=& \emptyset\label{hltest.eq:assoc.1}\\
(\nodes{M \Cpar N} \cap (\Gamma_L)_V) &=& \emptyset\label{hltest.eq:assoc.2}
\end{eqnarray}
We want to show that $\nodes{N} \cap (\Gamma_L)_V = \emptyset$, and 
$\nodes{M} \cap (\Gamma_N \cup \Gamma_L)_V = \emptyset$. The former 
statement is a straightforward consequence of Equation \eqref{hltest.eq:assoc.2}, 
since $\nodes{N} \subseteq \nodes{M \Cpar N}$.
The second statement can be proved as follows: let $m \in \nodes{M}$. By Equation 
\eqref{hltest.eq:assoc.1} we have that $\Gamma_N \not\vdash m$, so that 
it remains to show $\Gamma_L \not\vdash m$. This is a trivial consequence 
of Equation \eqref{hltest.eq:assoc.2}; in fact, since $m \in \nodes{M}$, we 
also have $m \in \nodes{M \Cpar N}$, and therefore $\Gamma_L \not\vdash m$.

Now suppose that 
\begin{eqnarray}
(\nodes{N} \cap (\Gamma_L)_V) &=& \emptyset\label{hltest.eq:assoc.3}\\
\nodes{M} \cap (\Gamma_N \cup \Gamma_L)_V &=& \emptyset\label{hltest.eq:assoc.4}
\end{eqnarray}
We need to show that $\nodes{M} \cap (\Gamma_N)_V = \emptyset$, and 
$\nodes{M \Cpar N} \cap (\Gamma_L)_V = \emptyset$. The first statement is 
an immediate consequence of Equation \eqref{hltest.eq:assoc.4}, by noticing 
that $(\Gamma_N)_V \subseteq (\Gamma_N \cup \Gamma_L)_V$. 
For the second 
statement, let $m$ be a node such that $\Gamma_L \vdash m$. By Equation 
\eqref{hltest.eq:assoc.3} we have that $m \notin \nodes{N}$. Also, 
by Equation \eqref{hltest.eq:assoc.4} it holds that $m \notin \nodes{M}$; 
in fact, since $\Gamma_L \vdash m$, we also have $\Gamma_N \cup \Gamma_L 
\vdash m$, and therefore $m \notin \nodes{M}$. 
Since $m \notin \nodes{M}$ and $m \notin \nodes{N}$, it follows that 
$m \notin \nodes{M \Cpar N}$, as we wanted to prove.\hfill\qed

\paragraph{\textbf{Proof of Proposition \ref{prop:network.decomp}:}}
\label{proof:generators}.
Let $\calM = \Gamma_M \with M$ be a well-formed network, and assume that 
$\nodes{\calM} \neq \emptyset$. That is, there exists a node name $m$, a state $s$ 
and a system term $N$ such that  $M \equiv 
\Cloc{s}{m} \Cpar N$. 

Let $\calG = \Gamma_G \with \Cloc{s}{m}$, where 
$\Gamma_G$ is defined by 
\begin{eqnarray}
(\Gamma_G)_V &=& \{m\} \cup \{n \in (\Gamma_M)_V \;|\; \Gamma_M \vdash m \leftrightarrows n\}\label{hltest.eq:generators.3}\\
(\Gamma_G)_E &=& \{(m,n) \;|\; \Gamma_M \vdash \rconn{m}{n}\} 
\cup \{(n,m)\;|\; \Gamma_M \vdash \rconn{n}{m}\} \label{hltest.eq:generators.4}.
\end{eqnarray}

Let also $\calN = \Gamma_N \with N$, where $\Gamma_N$ is defined by letting 
\begin{eqnarray}
(\Gamma_N)_V &=& \nodes{N} \cup \{n \;|\; n \neq m, \Gamma_M \vdash m' \leftrightarrows n \mbox{ for some } 
m' \in \nodes{N}\}\label{hltest.eq:generators.1}\\
(\Gamma_N)_E &=& (\Gamma_M)_E \setminus (\Gamma_G)_E\label{hltest.eq:generators.2}
\end{eqnarray}

We need to show the following facts:
\begin{enumerate}
\item \label{generators.1} $\calG \in \mathbb{G}$,
\item \label{generators.2} $\calN \in \nets$, 
\item \label{generators.3} $\nodes{\calG} \cap (\calN)_V = \emptyset$,
\item \label{generators.4} $(\calM)_V = (\calG)_V \cup (\calN)_V$,
\item \label{generators.5} $(\calM)_E = (\calG_E) \cup (\calN)_E$,
\item \label{generators.6} $ M \equiv \Cloc{s}{m} \Cpar N$.
\end{enumerate}

Each of the statements above is proved separately. Note that \eqref{generators.6} 
follows by the hypothesis.

\begin{description}
\item[\textbf{Proof of Statement \ref{generators.1}}] $\calG \in \mathbb{G}$.\\
First note that $\size{\nodes{\Cloc{P}{m}}} = 1$, so that it suffices 
to show that $\calG$ is  well-formed.
To this end, we show that $\calG$ satisfies both the constraints that 
we have placed over networks and those given in Definition \ref{def:well.formed}. 
\begin{enumerate}
\item $\Cloc{s}{m} \in \sys$; this is trivial, since no node name can appear more 
than once in a system term which contains only one node name
\item $\nodes{\Cloc{s}{m}} \subseteq (\Gamma_G)_V$; this statement follows from the 
definition of $(\Gamma_G)_V$,  Equation \eqref{hltest.eq:generators.3}, which gives that $\{m\} \subseteq 
(\Gamma_G)_V$.
\item $(\Gamma_G)_E$ is irreflexive; note that $(\Gamma_G)_E \subseteq 
(\Gamma_M)_E$, and the latter is irreflexive. Therefore, $(\Gamma_G)_E$ has 
to be irreflexive as well.

\item Whenever $\Gamma_G \vdash \someconn{l}{k}$ for 
some nodes $l,k$, then either 
$l \in \nodes{\Cloc{s}{m}}$ or $k \in \nodes{\Cloc{s}{m}}$. 
Equivalently, we prove that whenever $\Gamma_G \vdash \someconn{l}{k}$ 
for some nodes $l,k$, then either $l=m$ or $k=m$.

Suppose $\Gamma_G \vdash \someconn{l}{k}$; then either 
$\Gamma_G \vdash \rconn{l}{k}$ or $\Gamma_G \vdash \lconn{l}{k}$. 
Due to the arbitrariness of $l,k$, it is sufficient to consider only the first case. 

By Definition of $(\Gamma_G)_E$, Equation \eqref{hltest.eq:generators.4}, 
either $(l,k) = (m,n)$ for some $n$ such that $\Gamma_M \vdash \rconn{m}{n}$, 
or $(l,k) = (n,m)$ for some $n$ such that $\Gamma_M \vdash \rconn{n}{m}$. 
In the first case we obtain $l = m$, in the second $k = m$, and there is nothing left to prove.

\item If $\Gamma_G \vdash n$ and $\Gamma_G \vdash \someconn{n}{l}$ for 
no $l \in (\Gamma_G)_V$, then $n \in \nodes{\Cloc{s}{m}}$, or 
equivalently $n = m$. 

Note that if $(\Gamma_G \vdash n$ then by Equation \eqref{hltest.eq:generators.3} either $n = m$, 
in which case there is nothing to prove, or  
$\Gamma_M \vdash \someconn{m}{n}$. By Equation \ref{hltest.eq:generators.4} we also 
have that $\Gamma_G \vdash \someconn{m}{n}$, which contradicts the hypothesis.
\end{enumerate}

\item[\textbf{Proof of Statement \ref{generators.2}}] $\calN \in \nets$.\\
We need to show that $\calN$ satisfies the standard requirements we placed over 
all networks, plus the requirements required for a network to be well-formed, that is those 
listed in Definition \ref{def:well.formed}
\begin{enumerate}
\item $N \in \sys$. 
By hypothesis we already know 
that $M \in \sys$, and since $M \equiv \Cloc{s}{m} \Cpar N$, it follows that 
no node name appears in $N$ more than once.

\item $\nodes{N} \subseteq (\Gamma_N)_V$. This follows immediately from  
Equation \eqref{hltest.eq:generators.1}.

\item $(\Gamma_N)_E$ is irreflexive; this follows since, by Equation \eqref{hltest.eq:generators.2}, 
$(\Gamma_N)_E \subseteq (\Gamma_M)_E$; the latter is irreflexive by hypothesis.

\item Whenever $\Gamma_N \vdash \someconn{n}{l}$ for some nodes $n$ and $l$, 
then either $n \in \nodes{N}$ or $l \in \nodes{N}$. 

Due to the arbitrariness of node names $n,l$, it is sufficient to show that 
the property holds whenever $\Gamma_N \vdash \rconn{n}{l}$. 
Let $n,l$ be two nodes such that $\Gamma_N \vdash \rconn{n}{l}$. 
Note that by Equation \eqref{hltest.eq:generators.2} we have that $(\Gamma_N)_E 
\subseteq (\Gamma_M)_E$; since $\calM$ is well-formed, it follows 
that either $n \in \nodes{M}$ or $l \in \nodes{M}$.

However, since $\Gamma_N \vdash \rconn{n}{l}$, we also have that $\Gamma_N \vdash n$ and $\Gamma_N \vdash l$, 
Equation \eqref{hltest.eq:generators.1} also ensures that $n,l \neq m$.

Thus either $n \in \nodes{M} \setminus \{m\}$ or $l \in \nodes{M} \setminus \{m\}$; 
but since $M \equiv \Cloc{s}{n} \Cpar N$, and $M \in \sys$,  
$\nodes{M} \setminus \{m\}$ is exactly $\nodes{N}$.

\item if $\Gamma_N \vdash n$ and $\Gamma_N \vdash \someconn{n}{l}$ for no $l \in (\Gamma_N)_V$, 
then $n \in \nodes{N}$. By Equation \ref{hltest.eq:generators.1} either $n \in \nodes{N}$, in 
which case there is nothing to prove, or there exists $m' \in \nodes{N}$ such that $\Gamma_N \vdash \someconn{n}{m'}$. 
But this last case is not possible, since it contradicts the hypothesis.

\end{enumerate}

\item[\textbf{Proof of Statement \ref{generators.3}}] $\nodes{G} \cap (\Gamma_N)_V = \emptyset$.\\
Let $n \in (\Gamma_N)_V$; we need to show that $n \neq m$. By Equation \ref{hltest.eq:generators.1} 
there are two possible cases:
\begin{enumerate}
\item $n \in \nodes{N}$, in which case $m = n$ would contradict the hypothesis that 
$M \equiv \Cloc{s}{m} \Cpar N \in \sys$, or 
\item $n \in \{l \;|\; ; l \neq m, \Gamma_M \vdash \someconn{m'}{l} \mbox{ for some } m' \in \nodes{N}\}$; 
again $n \neq m$.
\end{enumerate}

\item[\textbf{Proof of Statement \ref{generators.4}}] $(\Gamma_M)_V = (\Gamma_G)_V \cup (\Gamma_N)_V$.\\ 
Note that equations \eqref{hltest.eq:generators.1} and \eqref{hltest.eq:generators.3} ensure that 
$(\Gamma_N)_V \subseteq (\Gamma_M)_V$ and $(\Gamma_G)_V \subseteq (\Gamma_M)_V$, respectively. 
Therefore, it is sufficient to show that $(\Gamma_M)_V \subseteq (\Gamma_G)_V \cup (\Gamma_N)_V$. 

Suppose that $\Gamma_M \vdash n$. There are two possible cases:
\begin{enumerate}
\item $n \in \nodes{M}$; Since $M \equiv \Cloc{s}{m} \Cpar N$, either 
$n = m$, in which case $m \in (\Gamma_G)_V$ by Equation \eqref{hltest.eq:generators.3}, 
or $n \in \nodes{N}$, in which case $n \in (\Gamma_N)_V$ by Equation \eqref{hltest.eq:generators.1},

\item $n \in \type{\calM}$; since $\calM$ is well-formed, there exists a node $l \in \nodes{\calM}$ 
such that $\Gamma_M \vdash \someconn{l}{n}$. Then either $l = m$, in which case 
Equation \eqref{hltest.eq:generators.3} ensures that $n \in (\Gamma_G)_V$, or 
$l \in \nodes{N}$, in which case $n \in (\Gamma_N)_V$ by Equation \eqref{hltest.eq:generators.1}.
\end{enumerate}

\item[\textbf{Proof of Statement \ref{generators.5}}] $(\Gamma_M)_E = (\Gamma_G)_E \cup (\Gamma_N)_E$.\\ 
This follows immediately from Equation \eqref{hltest.eq:generators.2} and the fact that $(\Gamma_G)_E \subseteq 
(\Gamma_N)_E$. \hfill\qed
\end{description}

\section{Decomposition and Composition Results}
\label{sec:decomposition.results}
To prove Propositions \ref{prop:decomp} and \ref{prop:wea.comp}, 
we first need to prove the following statements for actions which 
can be derived in the intensional semantics: 
\begin{prop}[Weakening]
\label{prop:weakening}

Let $\Gamma_1 \with M$ be a network, and let $\Gamma_2$ be such 
that$(\Gamma_2)_V \cap \nodes{M} = \emptyset$.
Then 
\[
\Gamma_1 \with M \ar{\mu} \Delta \mbox{ implies } 
(\Gamma_1 \cup \Gamma_2) \with M \ar{\mu} \Delta
\]
\noindent
where $\mu$ ranges over the actions $m.\tau, m.c!v, m.c?v$.
\end{prop}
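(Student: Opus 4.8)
The plan is to proceed by rule induction on the derivation of $\Gamma_1 \with M \ar{\mu} \Delta$ in the intensional semantics of Figure~\ref{fig:opsem}. The key observation driving the whole argument is that none of the operational rules ever \emph{inspects} the absence of an edge in order to perform a positive action: broadcasts and internal moves depend only on edges present in $\Gamma_1$ (via the side conditions $\Gamma \vdash \rconn{m}{n}$), and the rules that propagate an input to non-receiving nodes, namely $\Rlts{deaf}$ and $\Rlts{disc}$, leave the state unchanged. Since the enlarged graph $\Gamma_1 \cup \Gamma_2$ contains all edges of $\Gamma_1$, every positive side condition that held in $\Gamma_1$ still holds in $\Gamma_1 \cup \Gamma_2$, so each inference step can be replayed verbatim. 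The hypothesis $(\Gamma_2)_V \cap \nodes{M} = \emptyset$ is exactly what is needed to ensure that the extra connectivity introduced by $\Gamma_2$ cannot alter how the nodes occurring in $M$ react.

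First I would dispatch the base cases. For $\Rlts{broad}$ the premise is a pre-semantics judgement $s \ar{c!v} \Delta$ which does not mention the graph at all, so it transfers unchanged; likewise for $\Rlts{\tau}$. For $\Rlts{rec}$ the side condition is $\Gamma_1 \vdash \rconn{m}{n}$, which is preserved under taking the union with $\Gamma_2$. The only subtle base cases are $\Rlts{deaf}$ and $\Rlts{disc}$. For $\Rlts{disc}$ the side condition is the \emph{negative} one $\Gamma_1 \vdash \notrconn{m}{n}$; here I would argue that since $n \in \nodes{M}$ (as the rule applies to a located node $\Cloc{s}{n}$) and $(\Gamma_2)_V \cap \nodes{M} = \emptyset$, the node $n$ does not occur in $\Gamma_2$, hence $\Gamma_2$ contributes no edge into $n$, and therefore $\Gamma_1 \cup \Gamma_2 \vdash \notrconn{m}{n}$ still holds. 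This is precisely the place where the disjointness hypothesis is essential, and I expect it to be the main obstacle: one must be careful that enlarging the graph does not accidentally create an incoming edge that would force a genuine reception where previously the node was disconnected. The hypothesis rules this out because any new edge in $\Gamma_2$ has both endpoints in $(\Gamma_2)_V$, disjoint from $\nodes{M}$.

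For the inductive cases $\Rlts{\tau.prop}$, $\Rlts{prop}$ and $\Rlts{sync}$, the system term splits as $M \equiv M_1 \Cpar M_2$ with $\nodes{M_1}, \nodes{M_2} \subseteq \nodes{M}$, so the disjointness hypothesis $(\Gamma_2)_V \cap \nodes{M} = \emptyset$ restricts to each component. I would apply the induction hypothesis to the premise(s) $\Gamma_1 \with M_i \ar{\mu'} \Delta_i$ to obtain $(\Gamma_1 \cup \Gamma_2) \with M_i \ar{\mu'} \Delta_i$, and then reassemble using the same composition rule; the resulting distribution $\Delta$ is unchanged because the combining operation $(\cdot \Cpar \cdot)$ on distributions does not refer to the graph. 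The case $\Rlts{\Cnil}$ is immediate. Collecting all cases completes the induction and establishes the claim. I would remark that the symmetric counterparts of $\Rlts{\tau.prop}$ and $\Rlts{sync}$, omitted in Figure~\ref{fig:opsem}, are handled identically.
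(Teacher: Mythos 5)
Your proposal is correct and takes essentially the same route as the paper: the paper's entire proof is the single line ``by structural induction on the proof of the derivation $\Gamma_1 \with M \ar{\mu} \Delta$'', and your rule induction is precisely that argument spelled out case by case. In particular, you correctly isolate the only non-trivial point, namely that the disjointness hypothesis $(\Gamma_2)_V \cap \nodes{M} = \emptyset$ is exactly what preserves the negative side condition of rule $\Rlts{disc}$ when passing to the union graph $\Gamma_1 \cup \Gamma_2$.
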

\begin{proof}
By structural induction on the proof of the derivation 
$\Gamma_1 \with M \ar{\mu} \Delta$.
\end{proof}

\begin{prop}[Strengthening]
\label{prop:strengthening}
Let $\Gamma_1 \with M$ be a network, and let $\Gamma_2$ such 
that $(\Gamma_2)_V \cap \nodes{M} = \emptyset$.
Then 
\[
(\Gamma_1 \cup \Gamma_2) \with M \ar{\mu} \Delta \mbox{ implies }
\Gamma_1 \with M \ar{\mu} \Delta 
\]
\noindent
where $\mu$ ranges over the actions $m.\tau, m.c!v, m.c?v$.
\end{prop}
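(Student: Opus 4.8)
The plan is to mirror the proof of Proposition~\ref{prop:weakening} and proceed by structural induction on the derivation of $(\Gamma_1 \cup \Gamma_2) \with M \ar{\mu} \Delta$, with a case analysis on the last rule applied. Since the two statements are exact converses and the inductive shape of the intensional semantics is identical in both directions, the same induction should work. The key observation that makes strengthening go through is that the connectivity graph enters the rules of Figure~\ref{fig:opsem} only through the side conditions of $\Rlts{rec}$ (positively, as $\Gamma \vdash \rconn{m}{n}$) and $\Rlts{disc}$ (negatively, as $\Gamma \vdash \notrconn{m}{n}$); every other rule is insensitive to the ambient graph, and so its conclusion transfers once the induction hypothesis has been applied to its premises.

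First I would dispatch the rules carrying no connectivity condition — $\Rlts{broad}$, $\Rlts{deaf}$, $\Rlts{\Cnil}$ and $\Rlts{\tau}$ — which hold in $\Gamma_1 \with M$ for exactly the same reason they hold in $(\Gamma_1\cup\Gamma_2)\with M$, their premises being judgements about states that are unaffected by the graph. The composite rules $\Rlts{prop}$, $\Rlts{sync}$ and $\Rlts{\tau.prop}$ (and its symmetric counterpart) follow by applying the induction hypothesis to each premise: if $M \equiv M_1 \Cpar M_2$ then $\nodes{M_i} \subseteq \nodes{M}$, so the disjointness hypothesis $(\Gamma_2)_V \cap \nodes{M} = \emptyset$ descends to $(\Gamma_2)_V \cap \nodes{M_i} = \emptyset$, and the conditions needed to invoke the induction hypothesis on the sub-derivations are met.

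The two rules carrying a connectivity side condition require a short graph-theoretic argument. For $\Rlts{disc}$ the condition is negative: from $(\Gamma_1 \cup \Gamma_2) \vdash \notrconn{m}{n}$ we immediately obtain $\Gamma_1 \vdash \notrconn{m}{n}$, since $(\Gamma_1)_E \subseteq (\Gamma_1 \cup \Gamma_2)_E$ and the absence of an edge in the larger graph forces its absence in the smaller one. The delicate case, and the one I expect to be the main obstacle, is $\Rlts{rec}$: here $M$ has the form $\Cloc{s}{n}$ with $n \in \nodes{M}$, the action is $m.c?v$, and the derivation relies on the positive condition $(\Gamma_1 \cup \Gamma_2) \vdash \rconn{m}{n}$. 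To reconstruct the derivation in $\Gamma_1$ I must establish $\Gamma_1 \vdash \rconn{m}{n}$, i.e. that the edge $(m,n)$ does not arise solely from $\Gamma_2$. This is precisely where the hypothesis $(\Gamma_2)_V \cap \nodes{M} = \emptyset$ is essential: since $n \in \nodes{M}$ we have $n \notin (\Gamma_2)_V$, and because every edge of $\Gamma_2$ joins two vertices of $\Gamma_2$, the pair $(m,n)$ cannot belong to $(\Gamma_2)_E$. As $(\Gamma_1 \cup \Gamma_2)_E = (\Gamma_1)_E \cup (\Gamma_2)_E$, the edge $(m,n)$ must already lie in $(\Gamma_1)_E$, giving $\Gamma_1 \vdash \rconn{m}{n}$ and hence the desired judgement $\Gamma_1 \with \Cloc{s}{n} \ar{m.c?v} \Delta$. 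With this case settled the induction closes, and the three instances of $\mu$ (namely $m.\tau$, $m.c!v$ and $m.c?v$) are all covered uniformly by the case analysis on the last rule applied.
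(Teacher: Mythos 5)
Your proposal is correct and follows exactly the approach the paper takes, namely structural induction on the derivation of $(\Gamma_1 \cup \Gamma_2) \with M \ar{\mu} \Delta$ (the paper states only this and omits the case analysis). Your treatment of the delicate $\Rlts{rec}$ case --- using $n \in \nodes{M}$, the disjointness hypothesis, and the fact that $(\Gamma_2)_E \subseteq (\Gamma_2)_V \times (\Gamma_2)_V$ to conclude $(m,n) \in (\Gamma_1)_E$ --- is precisely the argument needed to make that induction go through.
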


\begin{proof}
By structural induction on the proof of the transition 
$(\Gamma_1 \cup \Gamma_2) \ar{\mu} \Delta$.
\end{proof}

\paragraph{\textbf{Proof of Proposition \ref{prop:decomp}}}
\label{proof:decomp}
Let $\calM = \Gamma_M \with M$ be a network and $\calG = \Gamma_N \with \Cloc{s}{n}$ be 
a generating network such that $(\calM \testP \calG)$ is defined. 
We prove only the first statement of Proposition \ref{prop:decomp}.  
The details for the other statements are similar. 

Suppose then that $(\calM \testP \calG) \extar{\tau} \Lambda$. 
By definition of extensional actions we have two possible cases.
\begin{enumerate}
\item $(\calM \testP \calG) \ar{m.\tau} \Lambda$ for some $m \in \nodes{\calM 
\testP \calG} = \nodes{M} \cup \{n\}$. 
We perform a case analysis on whether 
$m \in \nodes{M}$ or $m = n$.
	\begin{itemize}
	\item If $m \in \nodes{M}$ then by Proposition \ref{prop:tau} we have that 
	$\calM \testP \calG \equiv (\Gamma_M \cup \Gamma_N) \with 
	\Cloc{\tau.p+t}{m} \Cpar M' \Cpar \Cloc{s}{n}$ for some $t,p,M'$ such 
	that $M \equiv \Cloc{\tau.p+t}{m} \Cpar M'$ and $\Lambda \equiv  
	(\Gamma_M \cup \Gamma_N) \with (\interprP{\Cloc{p}{m}} \Cpar \pdist{M'} 
	\Cpar \Cloc{\pdist{s}}{n})$. Note that if we let $\Delta = (\Gamma_M \with 
	\interprP{\Cloc{p}{m}} \Cpar \pdist{M'})$ we can rewrite $\Lambda$ as 
	$\Delta \testP \Cloc{\pdist{s}}{n}$; further, Proposition \ref{prop:tau} 
	gives that $\calM \ar{m.\tau} \Delta$, which by definition of extensional 
	actions gives $\calM \extar{\tau} \Delta$.
	\item If $m = n$ then by Proposition \ref{prop:tau} we have that 
	$s \equiv \tau.p+t$ for some $p,t$, while $\Lambda \equiv 
	(\Gamma_M \cup \Gamma_N) \with (\pdist{M} \Cpar \Cloc{\interprP{p}}{n})$. 
	If we let $\Theta = \interprP{p}$ we can rewrite $\Lambda = 
	(\Gamma_M \with \pdist{M}) \testP (\Gamma_N \with \Cloc{\Theta}{n})$. 
	Further, by the definition of the rules in the intensional 
	semantics we have that $s \ar{\tau} \Theta$.
	\end{itemize}
\item $(\calM \testP \calG) \ar{m.c!v} \Lambda$ for some $m\in \nodes{M} \cup \{n\}$ 
such that $\{l \;|\; (\Gamma_M \cup \Gamma_N \vdash \rconn{m}{l})\} \cap 
\outp{\calM \testP \calG} = \emptyset$. Again we have to consider two different cases. 
	\begin{itemize}
	\item $m \in \nodes{M}$. By Proposition \ref{prop:broadcast} we have that 
	$(\calM \testP \calG) \equiv 
	(\Gamma_M \cup \Gamma_N) \with 
	(\Cloc{c!\pc{e}.p + t}{m} \Cpar M' \Cpar \Cloc{s}{n})$ 
	for some $e,p,t,M'$ such that $\interpr{e} = v$ and $M \equiv 
	\Cloc{c!\pc{e}.p+t}{m} \Cpar M'$, while 
	$\Lambda = (\Gamma_M \cup \Gamma_G) \with 
	\interprP{\Cloc{p}{m}} \Cpar \Lambda'$ for some 
	$\Lambda'$ such that $(\Gamma_M \cup \Gamma_N) 
	\with (M' \Cpar \Cloc{s}{n}) \ar{m.c?v} \Lambda'$. 
	It follows from Proposition \ref{prop:input} that 
	$\Lambda' \equiv (\Gamma_M \cup \Gamma_N) \with 
	(\Delta' \Cpar \Cloc{\Theta}{n})$, where $\Delta'$ and 
	$\Theta$ are such that 
	$(\Gamma_M \cup \Gamma_N) \with M \ar{m.c?v} \Delta'$ 
	and $(\Gamma_M \cup \Gamma_N) \with \Cloc{s}{n} \ar{m.c?v} 
	\Cloc{\Theta}{n})$. Now we can apply Proposition \ref{prop:strengthening} 
	to the transitions 
	$(\Gamma_M \cup \Gamma_N) \with \Cloc{c!\pc{e}.p+t}{m} 
	\ar{m.\tau} \interprP{\Cloc{p}{m}}$ and $(\Gamma_M \cup 
	\Gamma_N) \with M' \ar{m.c?v} \Delta'$ to obtain 
	$\Gamma_M \with \Cloc{c!\pc{e}.p+t}{m} \ar{m.c!v} \interprP{\Cloc{p}{m}}$ 
	and $\Gamma_M \with M' \ar{m.c?v} \Delta'$, respectively. 
	These two transitions induce, via an application of rule $\Rlts{Sync}$, the 
	transition $\Gamma_M \with M 
	\ar{m.c!v} \Gamma_M \with \interprP{\Cloc{p}{m}} \Cpar \Delta'$; 
	let $\Delta = \Gamma_M \with \interprP{\Cloc{p}{m}} \Cpar \Delta'$. 
	The extensional transition induced by the broadcast derived for 
	$\Gamma_M \with M$ can be either an internal action or an extensional 
	broadcast, depending on the topology of $\Gamma_M$. 
	First note that, since $\outp{\calM \testP \calG} \cap 
	\{l\;|;\ (\Gamma_M \cup \Gamma_N) 
	\vdash \rconn{m}{l}\}= \emptyset$, we have that 
	$\outp{\calM} \cap \{l\;|\; \Gamma_M \vdash \rconn{m}{l}\} \subseteq 
	\nodes{\calG} = \{n\}$. Therefore we have two possible cases
		\begin{itemize}
		\item If $\Gamma_M \vdash \notrconn{m}{n}$ then we have the 
		transition $\Gamma_M \with M \extar{\tau} \Delta$. 
		Now note that, since $m \notin (\Gamma_N)_V$, we also have 
		that $(\Gamma_M \cup \Gamma_N) \vdash \notrconn{m}{n}$. 
		Then the transition $(\Gamma_M \cup \Gamma_N) \with 
		\Cloc{s}{n} \ar{m.c?v} \Cloc{\Theta}{n}$ could have been derived 
		only via an application of either Rule $\Rlts{deaf}$ or Rule $\Rlts{disc}$. 
		In both cases we have $\Theta = \pdist{s}$. 
		\item If $\Gamma_M \vdash \rconn{m}{n}$ then we have 
		that $\Gamma_M \with M \extar{\eout{c!v}{\{n\}}}\Delta$. 
		In this case the transition $(\Gamma_M \cup \Gamma_N) \with 
		\Cloc{s}{n} \ar{m.c?v} \Cloc{\Theta}{n}$ could have been derived 
		only via an application of either Rule $\Rlts{rec}$ or 
		Rule $\Rlts{rec}$. In the first case we have that 
		$s \ar{c?v} \Theta$, while in the second case we obtain that 
		$s \nar{c?v}$ and $\Theta = \pdist{s}$.	
		\end{itemize}
	Finally, it is now easy to show that $\Lambda = 
	(\Gamma_M \cup \Gamma_N) \with 
	(\interprP{\Cloc{p}{m}} \Cpar \Delta' \Cpar \Cloc{\Theta}{n}) = 
	(\Gamma_M \with \interprP{\Cloc{p}{m}} \Cpar \Delta') 
	\testP (\Gamma_N \with \Cloc{\Theta}{n}) = \Delta \testP 
	(\Gamma_N \with \Cloc{\Theta}{n})$.
	
	\item $m = n$. In this case we have that $\{l \;|\; 
	(\Gamma_M \cup \Gamma_N) \vdash \rconn{n}{l}\} \cap 
	\outp{\calM \testP \calG} = \outp{\calG}$, that is 
	$\outp{\calG} = \emptyset$. By Proposition \ref{prop:broadcast} 
	we have that $s \equiv c!\pc{e}.p+t$ for some $e,p,t$ such that 
	$\interpr{e} =v$. This ensures that $s \ar{c!v} \Theta$, where 
	$\Theta = \interprP{p}$. Further, $(\Gamma_M \cup \Gamma_N) \with M 
	\ar{n.c?v} \Delta_M$ for some $\Delta_M$ such that 
	$\Lambda \equiv (\Gamma_M \cup \Gamma_N) \with 
	(\Delta_M \Cpar \Cloc{\Theta}{n})$. 
	By applying Proposition \ref{prop:strengthening} to the last transition 
	we obtain $\Gamma_M \with M \ar{n.c?v} \Delta$. Whether this intensional 
	transition induces an extensional one depends on the topology 
	$\Gamma_M$. 
		\begin{itemize}
		\item if $n \in \inp{\calM}$ then we have the 
		extensional transition $\Gamma_M \with M \ar{n.c?v} \Delta$, 
		\item otherwise the transition above does not induce an extensional 
		input. However, in this case it is easy to show, using Proposition 
		\ref{prop:input} that $\Delta = \pdist{\calM}$. 
		\end{itemize}
	Finally, let $\Delta = \Gamma_M \with \Delta_M$. Note that 
	$\Lambda \equiv (\Gamma_M \cup \Gamma_N) 
	\with (\Delta_M \Cpar \Cloc{\Theta}{n}) = 
	\Delta \testP (\Gamma_N \with \Cloc{\Theta}{n})$.
	\end{itemize}
\end{enumerate}

\begin{lem}[Strong Composition of tau-actions]
\label{lem:tau.strong}
Let $\calM$ be a network, and $\calG = (\Gamma_N 
\with \Cloc{s}{n})$ be a generating network such that 
$\calM \testP \calG$ is well-defined. 
If $\calM \extar{\tau} \Delta$ then 
$\calM \testP (\Gamma_N \with \Cloc{s}{n}) 
\extar{\tau} \Delta \testP (\Gamma_N \with \Cloc{\pdist{s}}{n})$.
\end{lem}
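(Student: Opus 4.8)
The plan is to verify the two defining clauses of an extensional $\tau$-action (Definition~\ref{def:sea}(1)) for the composite directly, by reconstructing its underlying intensional transition from the one witnessing $\calM \extar{\tau} \Delta$. Write $\calM = \Gamma_M \with M$, so that $\calM \testP \calG = (\Gamma_M \cup \Gamma_N) \with (M \Cpar \Cloc{s}{n})$, and recall that definedness of $\testP$ gives $\nodes{M} \cap (\Gamma_N)_V = \emptyset$; in particular the broadcaster of any move of $M$ lies in $\nodes{M}$ and hence outside $(\Gamma_N)_V$. First I would dispatch the non-success side condition: since $\calM \extar{\tau} \Delta$ we have $\omega(\calM) = \ffalse$, and provided the generating component is not itself successful (i.e.\ $s \not\equiv \omega + t$) we also get $\omega(\calM \testP \calG) = \ffalse$, so that an extensional action may fire from the composite. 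I then split on which clause of Definition~\ref{def:sea}(1) witnesses $\calM \extar{\tau} \Delta$.

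In the first clause $\Gamma_M \with M \ar{k.\tau} \Delta'$ with $\Delta = \Gamma_M \with \Delta'$ and $k \in \nodes{M}$. Here I would apply Weakening (Proposition~\ref{prop:weakening}) with the graph $\Gamma_N$, legitimate because $(\Gamma_N)_V \cap \nodes{M} = \emptyset$, to obtain $(\Gamma_M \cup \Gamma_N) \with M \ar{k.\tau} \Delta'$, and then rule $\Rlts{\tau.prop}$ to place $\Cloc{s}{n}$ in parallel, giving $(\Gamma_M \cup \Gamma_N) \with (M \Cpar \Cloc{s}{n}) \ar{k.\tau} \Delta' \Cpar \pdist{\Cloc{s}{n}}$. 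Since $k \in \nodes{M} \subseteq \nodes{M \Cpar \Cloc{s}{n}}$, clause~(a) of Definition~\ref{def:sea}(1) promotes this to the required $\extar{\tau}$, and $\Delta' \Cpar \pdist{\Cloc{s}{n}}$ is exactly $\Delta \testP (\Gamma_N \with \Cloc{\pdist{s}}{n})$.

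The second clause is the main obstacle. Here $\Gamma_M \with M \ar{k.c!v} \Delta'$ is a broadcast all of whose targets lie in $\nodes{M}$, which is precisely the condition distinguishing clause~(b). The crucial topological observation is that, because $k \notin (\Gamma_N)_V$, no edge out of $k$ lives in $\Gamma_N$, so the set of targets of $k$ in $\Gamma_M \cup \Gamma_N$ coincides with its set of targets in $\Gamma_M$; in particular $n$ is not among them (as $n \notin \nodes{M}$), so $(\Gamma_M \cup \Gamma_N) \vdash \notrconn{k}{n}$ and every target stays internal to $M \Cpar \Cloc{s}{n}$. To rebuild the transition I would decompose the broadcast via Proposition~\ref{prop:broadcast}, writing $M \equiv \Cloc{c!\pc{e}.p + t}{k} \Cpar M'$ with $\interpr{e} = v$, $\Gamma_M \with M' \ar{k.c?v} \Theta_M$ and $\Delta' \equiv \interprP{\Cloc{p}{k}} \Cpar \Theta_M$. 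Weakening lifts the input move of $M'$ to the union graph; rule $\Rlts{disc}$ (applicable precisely because $(\Gamma_M \cup \Gamma_N) \vdash \notrconn{k}{n}$) gives $(\Gamma_M \cup \Gamma_N) \with \Cloc{s}{n} \ar{k.c?v} \pdist{\Cloc{s}{n}}$; rule $\Rlts{prop}$ combines these into an input move of $M' \Cpar \Cloc{s}{n}$; and $\Rlts{broad}$ together with $\Rlts{sync}$ reassembles $(\Gamma_M \cup \Gamma_N) \with (M \Cpar \Cloc{s}{n}) \ar{k.c!v} \interprP{\Cloc{p}{k}} \Cpar \Theta_M \Cpar \pdist{\Cloc{s}{n}}$, using Corollary~\ref{cor:reduction} to pass along $M \equiv \Cloc{c!\pc{e}.p+t}{k} \Cpar M'$.

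To finish I would observe that the residual is $\equiv \Delta' \Cpar \pdist{\Cloc{s}{n}} = \Delta \testP (\Gamma_N \with \Cloc{\pdist{s}}{n})$, and that since every target of $k$ lies in $\nodes{M \Cpar \Cloc{s}{n}}$, clause~(b) of Definition~\ref{def:sea}(1) promotes this broadcast to the desired $\extar{\tau}$ action. The delicate points to get right are the bookkeeping of targets under $\cup$ and the correct choice of reception rule ($\Rlts{disc}$ rather than $\Rlts{rec}$ or $\Rlts{deaf}$) for the node $n$; both hinge entirely on $k \notin (\Gamma_N)_V$ and $n \notin \nodes{M}$, so I expect the write-up to reduce to carefully applying Propositions~\ref{prop:broadcast} and~\ref{prop:weakening} and the intensional rules, with no genuinely hard step beyond this target-tracking.
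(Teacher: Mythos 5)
Your proof is correct and follows essentially the same strategy as the paper's: a case split on the two clauses of Definition \ref{def:sea}(1), Weakening (Proposition \ref{prop:weakening}) to lift $M$'s intensional transition to the union graph, rule $\Rlts{disc}$ for node $n$ (justified by $(\Gamma_M \cup \Gamma_N) \vdash \notrconn{k}{n}$, which follows from $k \in \nodes{M}$ and $\nodes{M} \cap (\Gamma_N)_V = \emptyset$), and preservation of the clause-(b) side condition because the targets of the broadcaster are unchanged under $\cup$. The one divergence is in the broadcast case: the paper does not decompose the broadcast at all --- it weakens the whole transition $\Gamma_M \with M \ar{k.c!v} \Delta_M$ to the union graph and applies $\Rlts{sync}$ exactly once, with $M$ as the broadcasting component and $\Cloc{s}{n}$ as the receiving one; your detour through Proposition \ref{prop:broadcast} (splitting off the sender at $k$, weakening $M'$'s input, then reassembling with $\Rlts{broad}$, $\Rlts{prop}$, $\Rlts{sync}$ and Corollary \ref{cor:reduction}) derives the same intensional transition, just with more bookkeeping and an appeal to structural congruence that the paper's route avoids. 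One point where you are more careful than the paper: the $\omega$-success side condition. The paper's proof never verifies $M \Cpar \Cloc{s}{n} \not\equiv \Cloc{\omega + t}{m'} \Cpar N'$, and the lemma as stated silently presupposes that $s$ is not a successful state --- exactly the proviso you flag; without it the composite is $\omega$-successful and can perform no extensional action at all, so your explicit caveat identifies a genuine (if minor) imprecision in the paper rather than a gap in your own argument.
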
 
\begin{proof}
Let $\calM = \Gamma_M \with M$
If $\calM \extar{\tau} \Delta$; then $\Delta = \Gamma_M \with \Delta_M$ 
for some $\Delta_M$. By definition of extensional 
tau actions, there are two possibilities: 
\begin{enumerate}
\item $\calM \ar{m.\tau} \Delta_M$ for some $m \in \nodes{M}$. 
By Proposition \ref{prop:weakening} we obtain that 
$(\Gamma_M \cup \Gamma_N) \with M \ar{m.\tau} \Delta$, and 
finally $(\Gamma_M \cup \Gamma_N) \with M \Cpar \Cloc{s}{n} 
\ar{m.\tau} (\Delta_M \with \Cloc{\pdist{s}{n}})$ by Rule 
$\Rlts{tau-prop}$.
Note that $(\Gamma_M \cup \Gamma_N) \with ( \Delta_M \with 
\Cloc{\pdist{s}{n}}) = \Delta \testP (\Gamma_N \with \Cloc{\pdist{s}}{n})$; 
\item $\calM \ar{m.c!v} \Delta_M$, and $\Gamma_M \vdash \rconn{m}{l}$ for 
no $l \in \outp{\calM}$; in particular $\Gamma_M \vdash \notrconn{m}{n}$, which 
also gives $(\Gamma_M \cup \Gamma_N) \vdash \notrconn{m}{n}$. Therefore 
we can infer the transition $(\Gamma_M \cup \Gamma_N) \with 
\Cloc{s}{n} \ar{m.c?v} \Cloc{\pdist{s}}{n}$. 
By Proposition \ref{prop:weakening} we have that 
$\Gamma_M \with M \ar{m.c!v} \Delta_M$ implies 
$(\Gamma_M \cup \Gamma_N) \with M \ar{m.c!v} \Delta_M$. 
Now we can apply Rule $\Rlts{Sync}$ to obtain the transition 
$(\Gamma_M \cup \Gamma_N) \with (M \Cpar \Cloc{s}{n}) \ar{m.c!v} 
(\Gamma_M \cup \Gamma_N) \with (\Delta_M \Cpar \Cloc{\pdist{s}}{n})$. 
Note that the last network can be rewritten as $\Delta \testP (\Gamma_N \with 
\Cloc{\pdist{s}}{n})$. Finally, since $\Gamma_M \vdash \rconn{m}{l}$ for no 
$l \in \outp{\calM}$ and $\Gamma_N \not\vdash m$, it follows that 
$(\Gamma_M \cup \Gamma_N) \vdash \notrconn{n}{l}$ for any $l 
\in \outp{\calM \testP (\Gamma_N \with \Cloc{s}{n})}$. Hence we have 
the extensional transition 
$\calM \testP (\Gamma_N \with \Cloc{s}{n}) \extar{\tau} 
\Delta \testP (\Gamma_N \with \Cloc{\pdist{s}}{n}$. 

\end{enumerate}
\end{proof}

\paragraph{\textbf{Proof of Proposition \ref{prop:wea.comp}}}
\label{proof:composition}
We only prove statements (1)(i), (2)(i) and (2)(ii); details for the other statements 
are similar.

\begin{enumerate}
\item Suppose that $\Delta \extAr{\tau} \Delta'$ and let $\Gamma_N, n, s$ be such that 
$\Delta \testP (\Gamma_N \with \Cloc{\pdist{s}}{n})$ is well-defined. We have to show that 
$\Delta \testP (\Gamma_N \with \Cloc{\pdist{s}}{n}) \extAr{\tau} 
\Delta' \testP (\Gamma_N \with \Cloc{\pdist{s}}{n})$. 

We first prove a weaker result: 
if $\Delta \extar{\tau} \Delta'$ then $\Delta \testP (\Gamma_N \with \Cloc{\pdist{s}}{n}) 
\extar{\tau} \Delta' \testP (\Gamma_N \with \Cloc{\pdist{s}}{n})$. 
To see why this is true, rewrite $\Delta$ as 
$\sum_{i \in I} p_i \cdot \pdist{\calM_i}$, where $\sum_{i \in I} p_i \leq 1$. 
Then there exists a collection of distributions $\{\Delta'_i\}_{i \in I}$ such that 
$\calM_i \extar{\tau} \Delta'_i$ and $\Delta' = \sum_{i \in I} p_i \cdot \Delta'_i$. 
We can apply Lemma \ref{lem:tau.strong} to each of such transitions to 
obtain 
$\calM_i \testP (\Gamma_N \with \Cloc{s}{n}) \extar{\tau} \Delta'_i \testP 
(\Gamma_N \with \Cloc{\pdist{s}}{n})$. It follows that 
\begin{eqnarray*}
\Delta \testP (\Gamma_N \with \Cloc{\pdist{s}}{n}) &=& 
\sum_{i \in I} p_i \cdot \pdist{\calM_i \testP (\Gamma_N \with \Cloc{s}{n})}\\ 
&\extar{\tau}& \sum_{i \in I} p_i \cdot \Delta'_i \testP (\Gamma_N \with \Cloc{\pdist{s}}{n})\\
&=& \Delta' \testP (\Gamma_N \with \Cloc{\pdist{s}}{n})
\end{eqnarray*}

Now suppose that $\Delta \extAr{\tau} \Delta'$. Then there exist two collections 
if sub-distributions
$\{\Delta_k^{\rightarrow}\}_{k \geq 0}$ and $\Delta_{k}^{\times}$ such that 
$\Delta' = \sum_{k=0}^{\infty} \Delta_k^{\times}$ and
\begin{align*} 
\Delta &&=&& \Delta_{0}^{\rightarrow} &&+&& \Delta_{0}^{\times}\\
\Delta_0^{\rightarrow} && \extar{\tau} && \Delta_{1}^{\rightarrow} &&+&& 
\Delta_{1}^{\times}\\
\vdots&&&&\vdots&&&&\vdots\\
\Delta_k^{\rightarrow} && \extar{\tau} && \Delta_{k+1}^{\rightarrow} &&+&& 
\Delta_{k+1}^{\times}\\
\vdots&&&&\vdots&&&&\vdots\\
\end{align*}

For any $k\geq 0$, let $\Theta_k^{\rightarrow} = \Delta_k^{\rightarrow} \testP (\Gamma_N \with 
\Cloc{\pdist{s}}{n}$, and define $\Theta_k^{\times}$ analogously. 
Note that $\Theta_0^{\rightarrow} + \Theta_0^{\times} = 
\Delta \testP (\Gamma_N \with \Cloc{\pdist{s}}{n})$. 
Also, from the previous statement we can infer that 
$\Theta_k^{\rightarrow} \extar{\tau} (\Delta_{k+1}^{\rightarrow} + \Delta_{k+1}^{\times})
\testP (\Gamma_N \with \Cloc{\pdist{s}}{n})$. This last sub-distribution is exactly 
$\Theta_{k+1}^{\rightarrow} + \Theta_{k+1}^{\times}$. Therefore we have that 
$\Delta \testP (\Gamma_N \with \Cloc{\pdist{s}}{n}) \extAr{\tau} 
\sum_{k=0}^{\infty} \Theta_{k}^{\times}$. It remains to note that 
\begin{eqnarray*}
\sum_{k=0}^{\infty} \Theta_{k}^{\times} &=&
\Theta_k^{\times} = \Delta_k^{\times} \testP (\Gamma_N \with 
\Cloc{\pdist{s}}{n})\\
&=& \left(\sum_{k=0}^{\infty} \Delta_k^{\times}\right) \testP (\Gamma_N \with 
\Cloc{\pdist{s}}{n})\\
&=& \Delta' \testP (\Gamma_N \with 
\Cloc{\pdist{s}}{n})\\
\end{eqnarray*}

\item Suppose now that $\Delta \extAr{\eout{c!v}{\eta}} \Delta'$ for some 
$\eta$ with $n \notin \eta$; we have to show 
that 
$\Delta \testP (\Gamma_N \with \Cloc{\pdist{s}}{n}) 
\extAr{\eout{c!v}{\eta}} \Delta' \testP (\Gamma_N \with \Cloc{\pdist{s}}{n})$. 

First note that, whenever $\calM \extar{\eout{c!v}{\eta}} \Delta_M$, with 
$n \notin \eta$ then $\calM \testP (\Gamma_N \with \Cloc{s}{n}) 
\extar{\eout{c!v}{\eta}} \Delta \testP (\Gamma_N \with \Cloc{\pdist{s}}{n})$. 
We leave the proof of this result to the reader. 
An immediate consequence of the result above is that whenever 
$\Delta \extar{\eout{c!v}{\eta}} \Delta'$ and $s \ar{c?v} \Theta$, then 
$\Delta \testP (\Gamma_N \with \Cloc{\pdist{s}}{n}) \extar{\eout{c!v}{\eta}} \Delta' 
\testP (\Gamma_N \with \Cloc{\pdist{s}}{n})$. 

Finally, suppose that $\Delta \extAr{\eout{c!v}{\eta}} \Delta'$, where $n \notin \eta$. 
We proceed by induction on the definition of weak extensional outputs. 
\begin{itemize}
\item The base case is $\Delta \extAr{\tau} \Delta_1 \extar{\eout{c!v}{\eta}}
\Delta_2 \extAr{\tau} \Delta'$; 
in this case we have that $\Delta \testP (\Gamma_N \with \Cloc{s}{n}) 
\extAr{\tau} \Delta_1 \testP (\Gamma_N \with \Cloc{\pdist{s}}{n}) 
\extar{\eout{c!v}{\eta}} \Delta_2 \testP (\Gamma_N \with \Cloc{\pdist{s}}{n})
\extAr{\tau} \Delta' \testP (\Gamma_N \with \Cloc{\pdist{s}}{n})$, as we wanted to prove. 
\item Suppose now that $\Delta \extAr{\eout{c!v}{\eta_1}} \Delta_1 
\extAr{\eout{c!v}{\eta_2}} \Delta'$, where $\eta_1 \cap \eta_2 = \emptyset$ and 
$\eta_1 \cup \eta_2 = \eta$. Note that in this case $n \notin \eta_1$ and 
$n \notin \eta_2$. By inductive hypothesis we have that 
$\Delta \testP (\Gamma_N \with \Cloc{\pdist{s}}{n}) 
\extAr{\eout{c!v}{\eta_1}} \Delta_1 \testP (\Gamma_N \with \Cloc{\pdist{s}}{n}) 
\extAr{\eout{c!v}{\eta_2}} \Delta' \testP (\Gamma_N \with \Cloc{\pdist{s}}{n})$, 
and there is nothing left to prove.
\item Suppose that $\Delta \extAr{\eout{c!v}{\eta}} \Delta'$ for some $\eta$ 
such that $\{n\} \subset \eta$. Also, suppose that $s \ar{c?v} \Theta$. 
In this case we want to prove that 
$\Delta \testP (\Gamma_N \with \Cloc{\pdist{s}}{n}) 
\extAr{\eout{c!v}{\eta'}} \Delta' \testP (\Gamma_N \with \Cloc{\pdist{s}}{n})$, 
where $\eta' = \eta \setminus \{n\}$. 
The proof of these statements relies on the following technical result, whose 
proof is left to the reader: if 
$\calM \extar{\eout{c!v}{\eta}} \Delta'$ and $s \ar{c?v} \Theta$, then 
$\calM \testP (\Gamma_N \with \Cloc{\pdist{s}}{n}) 
\extar{\eout{c!v}{\eta'}} \Delta' \testP (\Gamma_N \with \Cloc{\Theta}{n})$, 
where $\eta' = \eta \setminus \{n\}$. 
Then the proof of the main result can be performed as in the previous case, 
by noting that if the transition $\Delta \extAr{\eout{c!v}{\eta}} \Delta'$ is induced by 
$\Delta \extAr{\eout{c!v}{\eta_1}} \Delta_1 \extAr{\eout{c!v}{\eta_2}} \Delta'$, 
where $\eta_1 \cup \eta_2 = \eta$ and $\eta_1 \cap \eta_2 = \emptyset$, then 
it cannot be $n \in \eta_1$ and $n \in \eta_2$. In this case it is necessary to rely 
on Proposition \ref{prop:wea.comp}(2)(i), which has already been proved.
\end{itemize}
\end{enumerate}

\bibliographystyle{plain}
\bibliography{broadcast}

\end{document}